\definecolor{MyDarkBlue}{rgb}{0.15,0.25,0.45}
\let\fn\footnote
\renewcommand{\footnote}[1]{\linespread{1.1}\fn{#1}\linespread{1.29}}
\newcommand{\xRightarrow}[2][]{\ext@arrow 0359\Rightarrowfill@{#1}{#2}}
\renewcommand{\section}{\@startsection
{section}{1}{\z@}{-3.5ex plus -1ex minus
    -.2ex}{2.3ex plus .2ex}{\bf\mathversion{bold} }}
\renewcommand{\subsection}{\@startsection{subsection}{2}{\z@}{-3.25ex
plus -1ex minus
   -.2ex}{1.5ex plus .2ex}{\bf\mathversion{bold} }}
\renewcommand{\subsubsection}{\@startsection{subsubsection}{3}{-2.45ex}{-3.25ex
\makeatother
plus -1ex minus -.2ex}{1.5ex plus .2ex}{\it }}
\renewcommand{\thesection}{\arabic{section}}
\renewcommand{\thesubsection}{\arabic{section}.\arabic{subsection}}
\renewcommand{\@seccntformat}[1]{\@nameuse{the#1}.~~}
\renewcommand{\theequation}{\thesection.\arabic{equation}}
\makeatletter \@addtoreset{equation}{section}
\renewcommand*\l@section{\@dottedtocline{1}{0em}{2em}}
\renewcommand*\l@subsection{\@dottedtocline{2}{2em}{2.4em}}
\renewcommand*\l@subsubsection{\@dottedtocline{4}{3.8em}{3.7em}}
\renewcommand\tableofcontents{%
    \section*{\large\contentsname
        \@mkboth{%
          \MakeUppercase\contentsname}{\MakeUppercase\contentsname}}%
       {\baselineskip=15pt plus 2pt minus 1pt
    \@starttoc{toc}}%
}
\renewenvironment{thebibliography}[1]
     {\baselineskip=16pt plus 2pt minus 1pt
      \section*{\large\refname
        \@mkboth{\MakeUppercase\refname}{\MakeUppercase\refname}}%
     \list{\@biblabel{\@arabic\c@enumiv}}%
           {\settowidth\labelwidth{\@biblabel{#1}}%
            \leftmargin\labelwidth
            \advance\leftmargin\labelsep
            \@openbib@code
            \usecounter{enumiv}%
            \let\p@enumiv\@empty
            \renewcommand\theenumiv{\@arabic\c@enumiv}}%
      \sloppy
      \clubpenalty4000
      \@clubpenalty \clubpenalty
      \widowpenalty4000%
      \sfcode`\.\@m
 \catcode`\^^M=10%
}
\newcommand{\appendices}{
\section*{Appendix}\label{appendices}\setcounter{subsection}{0}
\addcontentsline{toc}{section}{Appendix}
\setcounter{equation}{0}
\setcounter{thm}{0}
\makeatletter
\renewcommand{\theequation}{\Alph{subsection}.\arabic{equation}}
\renewcommand{\thesubsection}{\Alph{subsection}}
\renewcommand{\thethm}{\Alph{subsection}.\arabic{thm}}
\@addtoreset{equation}{subsection}
\@addtoreset{thm}{subsection}
\makeatother
}
\newtheorem{thm}{Theorem}[section]
\renewcommand{\thethm}{\thesection.\arabic{thm}}
\newtheorem{lemma}[thm]{Lemma}
\newtheorem{definition}[thm]{Definition}
\newtheorem{theorem}[thm]{Theorem}
\newtheorem{prop}[thm]{Proposition}
\newtheorem{cor}[thm]{Corollary}
\newtheorem{rem}[thm]{Remark}
\newtheorem{exam}[thm]{Example}
\def\periodb#1{\setbox0=\hbox{$#1$}#1\hskip-\wd0\hbox to\wd0{-}}
\newcommand{\unit}{\mathbbm{1}}   			% identity map/matrix
\newcommand{\id}{\mathrm{id}}   			% identity map/matrix
\newcommand{\CCB}{\mathscr{B}}
\newcommand{\CCC}{\mathscr{C}}
\newcommand{\CCL}{\mathscr{L}}
\newcommand{\CF}{\mathcal{F}}
\newcommand{\CG}{\mathcal{G}}
\newcommand{\CCG}{\mathscr{G}}
\newcommand{\CI}{\mathcal{I}}
\newcommand{\CCK}{\mathscr{K}}
\newcommand{\CN}{\mathcal{N}}
\newcommand{\CO}{\mathcal{O}}
\newcommand{\CS}{\mathcal{S}}
\newcommand{\frg}{\mathfrak{g}}				% frak-letters
\newcommand{\frh}{\mathfrak{h}}				% frak-letters
\newcommand{\frv}{\mathfrak{v}}				% frak-letters
\newcommand{\frw}{\mathfrak{w}}				% frak-letters
\newcommand{\fra}{\mathfrak{a}}				% frak-letters
\renewcommand{\frm}{\mathfrak{m}}			
\newcommand{\frn}{\mathfrak{n}}
\newcommand{\frU}{\mathfrak{U}}
\newcommand{\FR}{\mathbbm{R}}     			% field of real numbers
\newcommand{\FC}{\mathbbm{C}}     			% field of complex numbers
\newcommand{\NN}{\mathbbm{N}}     			% set of natural numbers
\newcommand{\RZ}{\mathbbm{Z}}     			% ring of integers
\newcommand{\PP}{{\mathbbm{P}}}    			% complex projective plane
\newcommand{\dd}{\mathrm{d}}     			% total differential
\newcommand{\dpar}{\partial}     			% partial differential
\newcommand{\embd}{{\hookrightarrow}}     		% embedded
\newcommand{\diag}{{\mathrm{diag}}}     		% diagonal matrix
\newcommand{\eps}{{\varepsilon}}			% antisymmetric tensors
\newcommand{\eand}{{~~~\mbox{and}~~~}}     		% and etc. in equations
\newcommand{\eon}{{~~\mbox{on}~~}}     		% and etc. in equations
\newcommand{\ewith}{{~~~\mbox{with}~~~}}
\newcommand{\efor}{{~~~\mbox{for}~~~}}
\newcommand{\der}[1]{\frac{\dpar}{\dpar #1}}   		% partielle ableitung, 1 argument
\newcommand{\dder}[1]{\frac{\dd}{\dd #1}}   		% partielle ableitung, 1 argument
\newcommand{\au}{\mathfrak{u}}
\newcommand{\sU}{\mathsf{U}}     			% groups
\newcommand{\sSU}{\mathsf{SU}}
\newcommand{\sSO}{\mathsf{SO}}
\newcommand{\sLie}{\mathsf{Lie}}
\newcommand{\sCE}{\mathsf{CE}}
\newcommand{\sG}{\mathsf{G}}
\newcommand{\sH}{\mathsf{H}}
\newcommand{\sL}{\sfL}
\newcommand{\sSp}{\mathsf{Sp}}
\newcommand{\sOSp}{\mathsf{OSp}}
\newcommand{\acton}{\vartriangleright}     			% span
\newcommand{\remark}[1]{}     				% remark
\newcommand{\myxymatrix}[1]{\vcenter{\vbox{\xymatrix{#1}}}}
\def\tyng(#1){\hbox{\tiny$\yng(#1)$}}			% small Young diagram
\def\tyoung(#1){\hbox{\tiny$\young(#1)$}}			% small Young diagram
\newcommand{\sft}{\mathsf{t}}
\newcommand{\sfa}{\mathsf{a}}
\newcommand{\sfr}{\mathsf{r}}
\newcommand{\sfl}{\mathsf{l}}
\newcommand{\sfs}{\mathsf{s}}
\newcommand{\sfe}{\mathsf{e}}
\newcommand{\sfi}{\mathsf{i}}
\newcommand{\sfL}{\mathsf{L}}
\newcommand{\sfS}{\mathsf{S}}
\newcommand{\sfJ}{\mathsf{J}}
\newcommand{\sB}{\mathsf{B}}
\newcommand{\CatSet}{\mathsf{Set}}
\newcommand{\CatDiff}{\mathsf{Diff}}
\newcommand{\CatDiffCat}{\mathsf{DiffCat}}
\begin{document}
\begin{titlepage}

\setcounter{page}{0}
\renewcommand{\thefootnote}{\fnsymbol{footnote}}

\begin{flushright}
 EMPG--14--06\\ DMUS--MP--14/02
\end{flushright}

\begin{center}

{\LARGE\textbf{\mathversion{bold}Semistrict Higher Gauge Theory}\par}

\vspace{1cm}

{\large
Branislav Jur\v co$^{a}$, Christian S\"amann$^{b}$, and Martin Wolf$^{\,c}$
\footnote{{\it E-mail addresses:\/}
\href{mailto:branislav.jurco@gmail.com}{\ttfamily branislav.jurco@gmail.com},
\href{mailto:c.saemann@hw.ac.uk}{\ttfamily c.saemann@hw.ac.uk}, 
\href{mailto:m.wolf@surrey.ac.uk}{\ttfamily m.wolf@surrey.ac.uk}
}}

\vspace{.5cm}

{\it
$^a$ 
Charles University in Prague\\
Faculty of Mathematics and Physics, Mathematical Institute\\
Prague 186 75, Czech Republic\\[.5cm]

$^b$ Maxwell Institute for Mathematical Sciences\\
Department of Mathematics,
Heriot--Watt University\\
Edinburgh EH14 4AS, United Kingdom\\[.5cm]

$^c$
Department of Mathematics,
University of Surrey\\
Guildford GU2 7XH, United Kingdom\\[.5cm]

}

\vspace{.5cm}

{\bf Abstract}
\end{center}
\vspace{-.3cm}
\begin{quote}
We develop semistrict higher gauge theory from first principles. In particular, we describe the differential Deligne cohomology underlying semistrict principal 2-bun\-dles with connective structures. Principal 2-bundles are obtained in terms of weak 2-functors from the \v Cech groupoid to weak Lie 2-groups. As is demonstrated, some of these Lie 2-groups can be differentiated to semistrict Lie 2-algebras by a method due to \v Severa. We further derive the full description of connective structures on semistrict principal 2-bundles including the non-linear gauge transformations. As an application, we use a twistor construction to derive superconformal constraint equations in six dimensions for a non-Abelian $\CN=(2,0)$ tensor multiplet taking values in a semistrict Lie 2-algebra.

\vfill\noindent 29th April 2015

\end{quote}

\setcounter{footnote}{0}\renewcommand{\thefootnote}{\arabic{thefootnote}}

\end{titlepage}

\tableofcontents

\bigskip
\bigskip
\hrule
\bigskip
\bigskip

\section{Introduction, summary, and outlook}

\subsection{Motivation}

Gauge theory is one of the most far-reaching concepts in modern theoretical physics as is exemplified by the impressive success of the standard model of elementary particles as well as many of the more recent developments in string theory such as the gauge/gravity correspondence. From a mathematical point of view, the kinematic data of classical gauge theory is described in terms of principal bundles with connection. Equivalence relations on this data, known as gauge transformations, are captured by a non-Abelian generalisation of the so-called first Abelian Deligne cohomology group. 

By now, there is a well-established way of categorifying gauge theory to what is known as higher gauge theory. Here, the kinematic data lives on non-Abelian gerbes \cite{Breen:math0106083,Aschieri:2003mw} or the more general principal 2-bundles of Bartels \cite{Bartels:2004aa}; higher categorifications leading to $p$-gerbes or principal $(p+1)$-bundles are also known, though explicit details are somewhat limited. The notion of a connection on principal bundles is generalised to so-called connective structures on principal 2-bundles. This is a well established approach albeit with one limitation: instead of featuring the most general, weak Lie 2-group as structure 2-group, the standard formulations employ so-called crossed modules of Lie groups, which are equivalent to strict Lie 2-groups. 

The main aims of this paper are to lift this limitation and to discuss in full detail principal 2-bundles with connective structures that have semistrict Lie 2-groups as structure 2-groups. This involves considerably more technical effort than the strict case, and we would therefore like to give ample motivation for our goal.

The most general notion of a categorified group or 2-group which we shall consider here is what is usually called a weak 2-group. Just as a group is a groupoid with a single object, a weak 2-group is a weak 2-groupoid with a single object. As shown by Baez \& Lauda \cite{Baez:0307200}, every weak 2-group can be enhanced to a coherent 2-group, and, furthermore, coherent 2-groups are categorically equivalent to strict 2-groups. Categorical equivalence, however, seems to be too coarse in many cases. Perhaps a prime example in this regard is the categorified operation of integrating a Lie 2-algebra: it is known that the string Lie 2-algebra cannot be integrated to a topological 2-group \cite{Baez:0307200}. This semistrict Lie 2-algebra, however, is categorically equivalent to an infinite dimensional strict Lie 2-algebra, which can be integrated to a strict Lie 2-group \cite{Baez:2005sn}. Similarly, it is natural to expect that dynamical models of connective structures on principal 2-bundles will not necessarily agree even if the underlying structure 2-groups are categorically equivalent. 

Our motivation for considering categorified differential geometry stems mostly from M-theory. Within M-theory, principal 2-bundles with connective structures arise quite naturally in a non-Abelian generalisation of the effective description of M5-branes. In particular, they capture the kinematic structure of the mysterious $\CN=(2,0)$ superconformal field theory in six dimensions, or $(2,0)$-theory for short. 

The existence of the $(2,0)$-theory has been shown by Witten \cite{Witten:1995zh} a long time ago. However, it remains unclear if this theory should have a classical description in terms of equations of motion or even a Lagrangian. Quite recently, there has been impressive success in the effective description of multiple M2-branes. Contrary to popular belief, it turned out that there are M2-brane models with a Lagrangian formulation, which pass many non-trivial consistency checks, see \cite{Bagger:2012jb} for a review. Spurred by this success, various directions of research have been pursued to try to arrive at an analogous classical description for multiple M5-branes. In fact, much of the current research activities in string theory is devoted to a more detailed understanding of the $(2,0)$-theory.

Since the Abelian tensor multiplet in six dimensions contains a 2-form gauge potential described by a $\sU(1)$-gerbe, it is only natural to expect that the non-Abelian case is described by the connective structure of a principal 2-bundle. Principal 2-bundles with connective structures allow for the parallel transport of one-dimensional objects, which is certainly relevant in the description of the self-dual strings that form the boundaries of the M2-branes mediating M5-brane interactions. A detailed explanation of the higher gauge theory approach to M5-branes can be found in Fiorenza, Sati \& Schreiber \cite{Fiorenza:2012tb}. 

Besides its mathematical appeal, an important argument for the use of higher gauge theory is that principal 2-bundles can indeed yield superspace constraint equations for the $\CN=(2,0)$ tensor multiplet in six dimensions. This was shown recently in Saemann \& Wolf \cite{Saemann:2012uq,Saemann:2013pca}, and the derivation of these equations involved a description of the tensor multiplet in terms of certain holomorphic principal 2- and 3-bundles over a twistor space. Interestingly, such a twistorial description might also yield a Lagrangian formulation of the theory, as was already demonstrated for the Abelian case in \cite{Saemann:2011nb,Mason:2011nw}. 

The constraint equations arising from a twistorial description starting from principal 2-bundles with strict structure 2-groups turned out to be somewhat restrictive. A first reason for considering semistrict principal 2-bundles is therefore to generalise the superconformal constraint equations arising from a twistor description of the (2,0)-theory and we shall present the outcome in Section \ref{sec:PenroseWard}. In particular, we shall see that semistrict principal 2-bundles will allow for incorporating cubic terms in the connection 1-form in the definition of the 3-form curvature.

Another popular approach to deriving a classical description of the $(2,0)$-theory is based on a non-Abelian generalisation of the tensor hierarchy \cite{Samtleben:2011fj,Samtleben:2012mi,Samtleben:2012fb,Bandos:2013jva,Ko:2013dka} with the closely related proposals of \cite{Chu:2011fd,Chu:2012um}. Here, one obtains $\CN=(1,0)$ superconformal equations of motion as well as a Lagrangian description. These (1,0)-models have an underlying gauge algebraic structure which is strongly reminiscent of a semistrict Lie 3-algebra. The detailed analysis of this algebraic structures in \cite{Palmer:2013pka} showed that there is indeed a large overlap. Moreover, it was shown that certain classes of (1,0)-models are reformulations of higher gauge theories with strict Lie 3-groups. To fully compare the (1,0)-models with higher gauge theory, however, it is indispensable  to develop a detailed description of gauge theory based on semistrict principal $n$-bundles. This is a second motivation for studying semistrict principal 2-bundles.

Further motivation for our study stems from the problem of differentiating semistrict Lie 2-groups to semistrict Lie 2-algebras. While there has been some effort to understand the integration of Lie 2-algebras to Lie 2-groups, see for example Getzler \cite{Getzler:0404003} and Henriques \cite{Henriques:2006aa}, the inverse operation does not seem to have attracted the same amount of attention. In the present work, we shall follow a general approach to this problem that was proposed by \v Severa \cite{Severa:2006aa}. In this construction, one considers a simplicial manifold and extracts a corresponding $L_\infty$-algebra as its first jet. A Lie 2-group can be encoded in terms of a simplicial manifold as the so-called Duskin nerve of its delooping. The first jet of this simplicial manifold is then constructed as a functor acting on descent data of a trivial principal 2-bundle. 

Finally, we would like to mention that a different proposal for semistrict higher gauge theory was given previously by Zucchini \cite{Zucchini:2011aa}. In this approach, the higher Maurer--Cartan forms are incorporated abstractly as constrained parameters into the gauge transformation. This is not the case in our approach; our detailed understanding of the differential cohomology underlying semistrict principal 2-bundles with connective structures makes the parameters of gauge transformations explicit.

\subsection{Summary of results}

For the reader's convenience, let us summarise our key results in an easily accessible way. In the following, we let $X$ be a smooth manifold with covering $\frU:=\{U_a\}$. Moreover, we let $\CCG=(M,N)$ be a weak Lie 2-group, which can be equivalently regarded as a smooth weak 2-groupoid with a single 0-cell $e$, $\sB\CCG=(\{e\},M,N)$. We denote the source and target maps by $\sfs$ and $\sft$. Vertical and horizontal composition in this weak 2-groupoid are denoted by $\circ$ and $\otimes$, respectively, $\sfa$ stands for the associator and $\sfl$ and $\sfr$ label the left- and right-unitors.

A weak principal 2-bundle is described by a $\CCG$-valued \v Cech 2-cocycle. Such a cocycle is given by an $M$-valued \v Cech 1-cochain $\{m_{ab}\}$ together with an $N$-valued \v Cech 0-cochain $\{n_{a}\}$ and an $N$-valued \v Cech 2-cochain $\{n_{abc}\}$ which satisfy the following cocycle conditions, cf.\ Definition \ref{def:principal-2-bundle}:
\begin{subequations}
\begin{equation}
\begin{aligned}
 n_{abc}\,:\, m_{ab}\otimes m_{bc}\ &\Rightarrow\ m_{ac}~,\\
   n_{acd}\circ(n_{abc}\otimes\id_{m_{cd}})\ &=\ 
   n_{abd}\circ (\id_{m_{ab}}\otimes n_{bcd})\circ \sfa_{m_{ab},m_{bc},m_{cd}}
\end{aligned}
\end{equation}
and
\begin{equation}
   n_{abb}\circ (\id_{m_{ab}}\otimes n_b)\ =\ \sfl_{m_{ab}}\eand
            n_{aab}\circ (n_a\otimes\id_{m_{ab}})\ =\   \sfr_{m_{ab}}~.
\end{equation}
\end{subequations}
Two weak principal 2-bundles are called equivalent whenever their degree-2 \v Cech cocycles are related by a $\CCG$-valued \v Cech 2-coboundary. This coboundary consists of an $M$-valued \v Cech 0-cochain $\{m_a\}$  and an $N$-valued \v Cech 1-cochain $\{n_{ab}\}$ such that for degree-2 \v Cech cocycles $(\{m_{ab}\},\{n_{abc}\},\{ n_a\})$ and $(\{\tilde m_{ab}\},\{\tilde n_{abc}\},\{\tilde n_a\})$ the following holds, cf.\ Definition \ref{eq:isomorphism-principal-2-bundles}:
\begin{subequations}\label{eq:sum_coboundaries}
\begin{equation}
 \begin{aligned}
    n_{ab}\, :\, m_{ab}\otimes m_b\ &\Rightarrow\ m_a\otimes \tilde m_{ab}~,\\
  n_{ac}\circ (n_{abc}\otimes\id_{m_c})\ &=\ (\id_{m_a}\otimes\tilde n_{abc})\circ\sfa_{m_a,\tilde m_{ab},\tilde m_{bc}}\circ(n_{ab}\otimes\id_{\tilde m_{bc}})\,\circ \\
  &\hspace{2cm}\circ\, \sfa^{-1}_{m_{ab},m_b,\tilde m_{bc}}\circ(\id_{m_{ab}}\otimes n_{bc})\circ\sfa_{m_{ab},m_{bc},m_c}~.
 \end{aligned}
\end{equation}
and
\begin{equation}
 n_{aa}\circ (n_a\otimes \id_{m_a})\ =\ (\id_{m_a}\otimes \tilde n_a)\circ \sfl^{-1}_{m_a}\circ \sfr_{m_a}~.
\end{equation}
\end{subequations}
As demonstrated in Proposition \ref{prop:Normalisation}, every $\CCG$-valued \v Cech 2-cocycle is equivalent to a $\CCG$-valued \v Cech 2-cocycle with all $\{n_a\}$ being trivial.

Furthermore, we define semistrict Lie 2-groups $\CCG$ as weak Lie 2-groups in which left- and right-unitors as well as the unit and counit are all trivial. Following \cite{Severa:2006aa}, we then consider a functor from the category of smooth manifolds to the category of $\CCG$-valued descent data on surjective submersions $\FR^{0|1}\times X\rightarrow X$. This functor is parameterised by a 2-term $L_\infty$-algebra as shown in Theorem \ref{th:differentiate-lie-2}. This 2-term $L_\infty$-algebra is, in turn, equivalent to the semistrict Lie 2-algebra associated with the semistrict Lie 2-group $\CCG$. Deriving the parametrisation of this functor is the higher equivalent of computing the Lie algebra of a Lie group.

Moreover, we demonstrate that local connective structures on principal 2-bundles with semistrict structure 2-group (as well as principal $n$-bundles with semistrict structure $n$-group) are readily derived. To this end, we consider the tensor product of the aforementioned 2-term $L_\infty$-algebra  with the differential graded algebra of differential forms on $X$. This leads to another $L_\infty$-algebra as well as its homotopy Maurer--Cartan equation including infinitesimal gauge transformations as shown in Propositions \ref{prop:hMC-eqn-and-gauge} and \ref{prop:semistrict-gauge-trafos}.

The finite gauge transformations are derived from an equivalence relation among the functors considered in the above differentiation of a Lie 2-group $\CCG=(M,N)$ to a 2-term $L_\infty$-algebra  $\frv\xrightarrow{\,\mu_1\,} \frw$ with $\frw:=T_{\id_e}M$ and $\frv:=\ker(\sft)\subseteq T_{\id_{\id_e}}N$ and higher or homotopy products $\mu_{1,2,3}$. This relation is presented in Theorem \ref{th:SemStrGaugeTrafo}, from which Proposition \ref{prop:semistrict-gauge-trafos} can be gleaned: a connective structure over $U_a\subseteq X$ on a semistrict principal 2-bundle is given locally on a patch $U_a$ in terms of a $\frw$-valued differential 1-form $A_a$ and a $\frv$-valued differential 2-form $B_a$ such that the fake 2-form curvature
\begin{equation}
 \CF_a\ :=\ \dd A_{a}+A_{a}\otimes A_{a}+\sfs(B_{a})
\end{equation}
vanishes. In addition, the curvature 3-form $H_a$ is defined by
\begin{equation}
 H_{a}\  :=\ \dd B_{a}+\id_{A_{a}}\otimes B_{a}-B_a\otimes\id_{A_a}+\mu(A_{a},A_{a},A_{a})~,
\end{equation}
where $\mu(A_a,A_a,A_a):A_a\otimes (A_a\otimes A_a)-(A_a\otimes A_a)\otimes A_a\Rightarrow 0$. Finite gauge transformations $(A_a,B_a)\mapsto (\tilde A_a,\tilde B_a)$ are then parameterised by $M$-valued functions $p_a$ and $T_{p_a}N$-valued 1-forms $\Lambda_{p_a}$ and read explicitly as
\begin{subequations}
\begin{eqnarray}
 \kern-20pt\Lambda_{p_a}\,:\,\tilde A_a\otimes p_a\! &\Rightarrow&\! p_a\otimes A_a-\dd p_a~,\\
 \kern-20pt  \tilde B_a\otimes \id_{p_a}\! &=&\! \mu(\tilde A_a,\tilde A_a,p_a)+ \big[\id_{p_a}\otimes B_a+\mu(p_a,A_a,A_a)\big]\circ\notag\\
    &&\kern2.5cm \circ\, \big[-\dd\Lambda_{p_a}-\Lambda_{p_a}\otimes\id_{A_a}-\mu(\tilde A_a,p_a,A_a)\big]\circ\notag\\
      &&\kern2.5cm \circ\,\big[-\id_{\sfs(\dd \Lambda_{p_a})}-\id_{\tilde A_a}\otimes(\Lambda_{p_a}+\id_{\dd p_a})\big]~.
\end{eqnarray}
\end{subequations}

Eventually, we combine our findings on \v Cech cohomology with values in a semistrict Lie 2-group with those on finite gauge transformations of local connective structures and develop full  semistrict Deligne cohomology of degree 2. The corresponding Deligne cocycle and coboundary relations are concisely listed in Definitions \ref{def:Deligne-cocycles} and \ref{def:Deligne-coboundaries}.

As a first application of our results, we employ semistrict Deligne cohomology of degree 2 in a twistor description of $\CN=(2,0)$ tensor multiplet equations in six dimensions. This is a generalisation of the previous results obtained in \cite{Saemann:2012uq,Saemann:2013pca} from strict to semistrict gauge 2-groups. The main result here is Theorem \ref{th:Penrose-Ward} in which a bijection is established  between equivalence classes of certain holomorphic semistrict principal 2-bundles over a twistor space and equivalence classes of solutions to certain superconformal tensor multiplet equations in six dimensions. We hope that the latter equations may serve as an inspiration for a classical formulation of the $(2,0)$-theory.

\pagebreak

\subsection{Outlook}

There are a number of questions arising from this paper that we plan to address in future work. First of all, there should be an integration operation, inverse to our differentiation of a Lie 2-group to a semistrict Lie 2-algebra. An obvious question is how this integration is related to that of Getzler \cite{Getzler:0404003} and Henriques \cite{Henriques:2006aa}. The answer seems to be similar to that found in \cite{Sheng:1109.4002} for the strict case. Here, straightforward Lie integration of a strict Lie 2-algebra led to a Lie 2-group which is Morita equivalent to the 2-group obtained by the method of Getzler and Henriques. 

As mentioned above, we hope that the detailed description of semistrict principal 2-bundles with connective structure allows for a more detailed understanding of the framework of higher gauge theory. More general theories than those derived in this present work can be considered so that the relation to alternative approaches such as the above-mentioned non-Abelian tensor hierarchies should become clearer.

The most interesting dynamical theories involving connective structures on semistrict principal 2-bundles are certainly the $(2,0)$-theory and its dimensional reductions. As is common to supersymmetric theories, particular attention should be paid to the BPS subsectors of this theory. Higher analogues of instantons and monopoles, such as, for example, self-dual strings, should be studied in more detail from a mathematical perspective. Especially, the relevant topological invariants should be analysed. Some preliminary comments in this direction were already given in \cite{Palmer:2013haa}. General considerations concerning topological invariants in higher gauge theory can be found in \cite{Fiorenza:2010mh} as well as in \cite{Kotov:2007nr} from the perspective of so-called $Q$-manifolds.

An important issue is to couple matter fields satisfyingly to higher gauge theories. Mathematically speaking, we would like to consider 2-vector bundles associated to our semistrict principal 2-bundles. Zucchini has suggested such a coupling in his approach to semistrict gauge theory \cite{Zucchini:2011aa}. However, the existence of so-called gauge rectifiers necessary in his approach could not be proved so far. Our twistor construction gives illuminating insights into how such couplings should be achieved. In particular, our approach yields the explicit example of the matter fields contained in the tensor multiplet, discusses the properties they satisfy, how gauge transformations act on them, and how they couple to connective structures. 

The most important consistency test for a classical (2,0)-theory is to reproduce five-dimensional maximally supersymmetric Yang--Mills theory in a certain limit. This is a requirement from string theory and so far, this has neither been achieved for higher gauge theories nor for the models arising from tensor hierarchies.

\section*{Acknowledgements}
 We are grateful to James Grant for helpful discussions and comments. The work of BJ was supported by grant GA\v CR P201/12/G028. The work of CS was supported by a Career Acceleration Fellowship from the UK Engineering and Physical Sciences Research Council.

\section{Preliminaries}\label{sec:Prelims}

In this paper, we require basics of weak 2-category theory. We shall try to be as self-contained as possible and therefore we present the relevant definitions together with some useful examples in this section.

\subsection{Weak 2-categories}\label{ssec:Weak2Categories}

We assume that the reader is familiar with elementary category theory. In the following, let $\CCC=(C_0,C_1)$ be a category with $C_0$ the objects of $\CCC$ and $C_1$ the morphisms of $\CCC$, respectively.  In addition, the source and target maps in $\CCC$ are denoted by $\sfs$ and $\sft$, that is, $\sfs,\sft:C_1\to C_0$. 

In higher category theory, there is always an issue concerning the level of strictness of the categorification under consideration. For example, 2-categories usually refer to strict 2-categories while weak 2-categories are often called bicategories. We shall exclusively use the terms weak 2-category, weak 2-groupoid etc.~and avoid the notions of bicategory, bigroupoid etc. 

We start off with the definition of a weak 2-category. The original definition stems from Benabou \cite{Benabou:1967:1}, and a good introduction to the topic can be found, for instance, in \cite{Leinster:1998aa} and in particular in \cite{Leinster:2003aa}. The following discussion follows mostly these references. 

\begin{definition} (Benabou \cite{Benabou:1967:1})
A \uline{weak 2-category} $\CCB=(B_0,B_1,B_2)$ consists of a collection  $B_0$ of objects $a,b,\ldots\in B_0$ and, for any pair of objects $a,b\in B_0$, an assignment $(a,b)\to \CCC(a,b)$ where $\CCC(a,b)=(C_0(a,b),C_1(a,b))$ is a category. The objects $B_0$ are called \uline{0-cells}, the objects $C_0(a,b)$ are called \uline{1-cells} or \uline{1-morphisms}, and the morphisms $C_1(a,b)$ are called \uline{2-cells} or \uline{2-morphisms}. Composition of 2-morphisms in $C_1(a,b)$ will be called \uline{vertical composition} and denoted by $\circ$.
 
In addition, $\CCB$ comes equipped with a bifunctor $\otimes:\CCC(a,b)\times \CCC(b,c)\rightarrow \CCC(a,c)$  for all $a,b,c\in B_0$ describing \uline{horizontal composition} in $\CCB$, a functor\footnote{Here, the $1$ is the terminal object in the category $\mathsf{Cat}$, that is, the singleton category consisting of one object $e$ and the corresponding morphism $\id_e$.} $\id:1 \mapsto \id_a \in C_0(a,a)$ for all $a\in B_0$, and natural isomorphisms $\sfa$, $\sfl$, and $\sfr$ defined by the following diagrams:
\begin{subequations}
\begin{equation}
\xymatrixcolsep{5pc}
\myxymatrix{
\CCC(a,b)\times \CCC(b,c)\times \CCC(c,d) \ar@{->}[r]^{\otimes\times 1} \ar@{->}[d]_{1\times\otimes} & \CCC(a,c)\times \CCC(c,d) \ar@{->}[d]^{\otimes}\\
\CCC(a,b)\times \CCC(b,d) \ar@{->}[r]_{\otimes} \ar@{=>}[ur]_{\sfa}& \CCC(a,d)
}
\end{equation}
and
\begin{equation}
\xymatrixcolsep{2pc}
\myxymatrix{
\CCC(a,b)\times 1 \ar@{->}[rrdd]^{\cong} \ar@{->}[dd]_{1 \times \id} & &\\
&&\\
\CCC(a,b)\times \CCC(b,b) \ar@{->}[rr]_{\otimes} \ar@{=>}[ur(0.8)]^{\sfl}& & \CCC(a,b)
}~~~~
\myxymatrix{
1\times\CCC(a,b) \ar@{->}[rrdd]^{\cong} \ar@{->}[dd]_{\id\times 1} & &\\
&&\\
\CCC(a,a)\times \CCC(a,b) \ar@{->}[rr]_{\otimes} \ar@{=>}[ur(0.8)]^{\sfr}& & \CCC(a,b)
}
\end{equation}
\end{subequations}
Here, the $1$ attached to the arrows refers to the identity functor and $\cong$ denotes the natural isomorphisms $1\times \CCC(a,b)\cong\CCC(a,b)\cong\CCC(a,b)\times 1$. The natural isomorphisms $\sfa$, $\sfl$, and $\sfr$ are referred to as the \uline{associator}, \uline{left unitor}, and \uline{right unitor}, and they yield the 2-cells 
\begin{equation}
 \sfa\,:\,(x\otimes y)\otimes z\ \stackrel{\cong}{\Rightarrow}\ x\otimes(y\otimes z)~,~~~\sfl\,:\,x\otimes \id_b\ \stackrel{\cong}{\Rightarrow}\ x~,~~~ \sfr\,:\,\id_a\otimes x\ \stackrel{\cong}{\Rightarrow}\ x
\end{equation}
for $x\in C_0(a,b)$, $y\in C_0(b,c)$, and $z\in C_0(c,d)$. These isomorphisms are required to satisfy the \uline{pentagon} and \uline{triangle identities}, that is, the diagrams
\begin{subequations}
\begin{equation}
\myxymatrix{
((x\otimes y)\otimes u)\otimes v \ar@{=>}[rrrr]^{\sfa\otimes \id}  \ar@{=>}[d]_{\sfa} & &&& (x\otimes(y\otimes u))\otimes v \ar@{=>}[d]^{\sfa}\\
(x\otimes y)\otimes (u\otimes v) \ar@{=>}[rr]^{\sfa}& & x\otimes(y\otimes(u\otimes v))&  &\ar@{=>}[ll]_{\id\otimes \sfa}  x\otimes((y\otimes u)\otimes v) 
}
\end{equation}
and
\begin{equation}
\xymatrixcolsep{4pc}\myxymatrix{
(x\otimes \id_b)\otimes y \ar@{=>}[rr]^{\sfa}  \ar@{=>}[dr]_{\sfl\otimes \id} & & x\otimes(\id_b\otimes y) \ar@{=>}[ld]^{\id\otimes \sfr}\\
& x\otimes y
}
\end{equation}
\end{subequations}
are commutative.
\end{definition}

\begin{rem}
The fact that $\otimes$ is a bifunctor implies the so-called \uline{interchange law}, that is, the diagram
\begin{equation}
 \myxymatrix{
  a 
&& b
  \ar@/_4ex/[ll]_{x_1}="g1"
  \ar[ll]_(0.65){x_2}
  \ar@{}[ll]|{}="g2"
  \ar@/^4ex/[ll]^{x_3}="g3"
  \ar@{=>}^{f_1} "g1"+<0ex,-2ex>;"g2"+<0ex,0.5ex>
  \ar@{=>}^{f_2} "g2"+<0ex,-0.5ex>;"g3"+<0ex,2ex>
&& c
  \ar@/_4ex/[ll]_{y_1}="h1"
  \ar[ll]_(0.65){y_2}
  \ar@{}[ll]|{}="h2"
  \ar@/^4ex/[ll]^{y_3}="h3"
  \ar@{=>}^{g_1} "h1"+<0ex,-2ex>;"h2"+<0ex,0.5ex>
  \ar@{=>}^{g_2} "h2"+<0ex,-0.5ex>;"h3"+<0ex,2ex>
}
\end{equation}
for $x_{1,2,3}\in C_0(a,b)$, $y_{1,2,3}\in C_0(b,c)$ and $f_{1,2}\in C_1(a,b)$, $g_{1,2}\in C_1(b,c)$ and $a,b,c\in B_0$ translates into 
\begin{equation}\label{eq:interchange_law}
 (f_2\otimes g_2)\circ (f_1\otimes g_1)\ =\ (f_2\circ f_1)\otimes (g_2\circ g_1)~,
\end{equation}
where $\circ$ denotes again vertical composition.
\end{rem}

\begin{rem}
The naturalness of the associator $\sfa$ implies that diagrams of the form
\begin{equation}
\xymatrixcolsep{4pc}\myxymatrix{
 (x\otimes y)\otimes z  \ar@{->}[rr]^{(f\otimes g)\otimes h} \ar@{->}[d]_{\sfa} & & (f(x)\otimes g(y))\otimes h(z)  \ar@{->}[d]^{\sfa}\\
x\otimes (y\otimes z) \ar@{->}[rr]_{f\otimes (g\otimes h)} & & f(x)\otimes (g(y)\otimes h(z))
}
\end{equation}
are commutative. There are similar commutative diagrams involving the unitors or a combination of the unitors and the associator.
\end{rem}

\begin{definition}
A \uline{strict 2-category} is a weak 2-category for which the associator and the left- and right-unitors are all trivial.
\end{definition}
 
\begin{exam}
The standard example of a strict 2-category is $\mathsf{Cat}$, regarded as a 2-category, in which the 0-cells are given by small categories, the 1-cells are functors between those, and the 2-cells are natural transformations between the latter. Horizontal composition is then the obvious composition of functors and natural transformations.
\end{exam}

\begin{definition}
A weak 2-category with a single 0-cell can be identified with a \uline{weak monoidal category}. If, in addition, the natural isomorphisms $\sfa,\sfl$, and $\sfr$ are all trivial, then we shall speak of a \uline{strict monoidal category}.  
\end{definition}

The process of identifying $n$-categories with a single object or 0-cell with $(n-1)$-categories is called \uline{looping}. Below, we shall also encounter the inverse operation called delooping, see Example \ref{exa:Delooping}. 
 
\begin{exam}
An example of a strict monoidal category is the category of sets endowed with a monoidal product given either by the Cartesian product or the disjoint union of sets. Here, $B_0=\{e\}$ and $\CCC(e,e)$ is the category $\CatSet$ whose objects $C_0$ are sets and whose morphisms $C_1$ are functions between sets.
\end{exam}

In weak 2-categories with a single 0-cell $e$, that is, in weak monoidal categories, we have the following result.
\begin{prop} (Kelly \cite{Kelly:1964:397})
In a weak monoidal category $\CCB$, the diagrams
\begin{subequations}
\begin{equation}
\xymatrixcolsep{4pc}\myxymatrix{
(x\otimes y)\otimes \id_e \ar@{->}[rr]^{\sfa}  \ar@{->}[dr]_{\sfl} & & x\otimes(y\otimes \id_e) \ar@{->}[ld]^{\id\otimes \sfl}\\
& x\otimes y
}
\end{equation}

\begin{equation}
\xymatrixcolsep{4pc}\myxymatrix{
(\id_e \otimes x)\otimes y \ar@{->}[rr]^{\sfa}  \ar@{->}[dr]_{\sfr\otimes \id} & & \id_e\otimes(x\otimes y) \ar@{->}[ld]^{\sfr}\\
& x\otimes y
}
\end{equation}
\begin{equation}
\xymatrixcolsep{4pc}\myxymatrix{
\id_e\otimes\id_e\ar@/^/[r]^{\sfl} \ar@/_/[r]_{\sfr} & \id_e   }
\end{equation}
\end{subequations}
are commutative.
\end{prop}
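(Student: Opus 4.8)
The plan is to derive each of the three commutative triangles purely diagrammatically from the pentagon identity, the triangle identity, and the naturality of the coherence isomorphisms, specialised to the single-object setting where the only object is $e$ and the only unit $1$-cell is $\id_e$. I would treat the first two diagrams first, since they are the genuine content, and then observe that the third follows almost immediately from the first two.

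\emph{Step 1 (the identity $\sfl\otimes\id$ diagram).} I would start from the pentagon identity with the substitution $x\otimes y\otimes u\otimes v \mapsto x\otimes y\otimes \id_e\otimes z$, i.e.\ set the third slot to $\id_e$. The pentagon then relates five composites built from associators involving $x$, $y$, $\id_e$, $z$. Three of its five edges carry an $\sfa$ with $\id_e$ in a unit-adjacent position, and these can each be rewritten using the triangle identity (which expresses $\sfa_{x,\id_e,y}$ in terms of $\sfl\otimes\id$ and $\id\otimes\sfr$). Collating these substitutions and then using naturality of $\sfa$ to slide the remaining unitor past the outer $\otimes$'s, the pentagon collapses to exactly the asserted triangle $(x\otimes y)\otimes\id_e \Rightarrow x\otimes y$ with the two paths $\sfl$ and $(\id\otimes\sfl)\circ\sfa$. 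The bookkeeping here — keeping track of which factor each $\sfl$, $\sfr$ acts on, and in which order the naturality squares are applied — is the part that requires care, but it is a finite formal manipulation.

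\emph{Step 2 (the identity $\sfr\otimes\id$ diagram).} This is the mirror-image argument: apply the pentagon with the \emph{first} slot specialised to $\id_e$, i.e.\ $x\otimes y\otimes u\otimes v \mapsto \id_e\otimes x\otimes y\otimes z$, rewrite the unit-adjacent associators via the triangle identity, and use naturality of $\sfa$ to obtain the triangle $(\id_e\otimes x)\otimes y\Rightarrow x\otimes y$. One could alternatively invoke a left-right symmetry (the opposite monoidal category), but writing it out explicitly is cleaner and avoids having to set up that duality.

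\emph{Step 3 (the $\sfl=\sfr$ on $\id_e\otimes\id_e$ diagram) and the main obstacle.} Setting $x=\id_e$ in the diagram from Step 1, or $y=\id_e$ in the diagram from Step 2, together with naturality of the unitors applied to the $2$-cell $\sfl_{\id_e}$ (or $\sfr_{\id_e}$) itself, forces $\sfl_{\id_e}=\sfr_{\id_e}$ as endomorphisms of $\id_e\otimes\id_e$; this is the classical Kelly argument. The main obstacle throughout is not conceptual but organisational: the pentagon and triangle identities are \emph{equalities of pasting composites}, and to extract the claimed triangles one must repeatedly insert naturality squares for $\sfa$, $\sfl$, $\sfr$ in exactly the right places, which is easy to get subtly wrong. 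I would manage this by drawing the intermediate pasting diagrams explicitly rather than manipulating strings of symbols, and by checking that every substituted instance of the triangle identity has its unit factor in a position where the triangle identity actually applies (namely between the two non-unit factors). Once the diagrams are laid out, each reduction is forced.
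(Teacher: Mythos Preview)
The paper does not supply a proof of this proposition at all; it simply attributes the result to Kelly and moves on. So there is no ``paper's own proof'' to compare against, and your outline is precisely the classical Kelly argument that the citation points to.

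Your strategy is correct, but one detail in Step~1 is slightly miscounted. In the pentagon for $(x,y,\id_e,z)$ only \emph{two} of the five edges carry an associator with $\id_e$ in the genuine middle slot, namely $\sfa_{x\otimes y,\id_e,z}$ and $\id_x\otimes\sfa_{y,\id_e,z}$; only to these does the triangle identity apply directly. The edge $\sfa_{x,y\otimes\id_e,z}$ is dealt with by \emph{naturality} of $\sfa$ (replace $y\otimes\id_e$ by $y$ via $\sfl_y$), not by the triangle. After these three substitutions the pentagon collapses to the desired equality tensored on the right with $\id_z$; one then concludes either by taking $z=\id_e$ and using naturality of $\sfl$, or by the standard observation that the equality of the two composites before tensoring follows once it holds after tensoring with the unit. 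The same remark applies, mutatis mutandis, to Step~2 with the pentagon for $(\id_e,x,y,z)$. Step~3 is fine as stated: specialising either outer-unit triangle and combining with naturality of the unitors forces $\sfl_{\id_e}=\sfr_{\id_e}$, though be aware that cancelling a factor of $\id_x\otimes(-)$ again requires a short naturality argument rather than mere faithfulness.
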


\vspace{10pt}

Morphisms between categories are called functors. Similarly, morphisms between 2-categories are called 2-functors. These come in a number of variants, the most general of which are the lax 2-functors.

\begin{definition}
Let $\CCB$ and $\tilde \CCB$ be two weak 2-categories. A \uline{lax 2-functor} $\Phi:\CCB\to\tilde\CCB$ is a triple $\Phi=(\Phi_0,\Phi_1,\Phi_2)$ consisting of a function $\Phi_0:B_0\rightarrow \tilde B_0$, a collection $\Phi_1$ of functors 
\begin{subequations}
\begin{equation}
\Phi_1^{ab}\,:\,\CCC(a,b)\ \rightarrow\ \tilde \CCC(\Phi_0(a),\Phi_0(b))~,
\end{equation}
and a collection $\Phi_2$ of 2-cells,
\begin{equation}
\begin{aligned}
\Phi_2^{abc}\,:\,\Phi_1^{ab}(x)\,\tilde \otimes\, \Phi_1^{bc}(y)\ &\Rightarrow\ \Phi_1^{ac}(x\otimes y)~,\\
  \Phi_2^{a}\,:\,\id_{\Phi_0(a)}\ &\Rightarrow\ \Phi_1^{aa}(\id_a)~,
\end{aligned}
\end{equation}
\end{subequations}
where  $a,b,c\in B_0$ and $x\in C_0(a,b)$ and $y\in C_0(b,c)$ such that the following diagrams are commutative:
\begin{subequations}\label{eq:diagrams_2-functors}
\begin{equation}
\xymatrixcolsep{4pc}
\myxymatrix{
& \Phi_1^{ac}(x\otimes y)\,\tilde \otimes\, \Phi_1^{cd}(z) \ar@{=>}[dr]^{\Phi_2^{acd}} & \\
 (\Phi_1^{ab}(x)\,\tilde \otimes\,\Phi_1^{bc}(y))\,\tilde \otimes\, \Phi_1^{cd}(z) \ar@{=>}[ur]^{\Phi_2^{abc}\otimes \id}  \ar@{=>}[d]_{\tilde \sfa} & & \Phi_1^{ad}((x\otimes y)\otimes z) \ar@{=>}[d]^{\Phi_1^{ad}(\sfa)}\\
 \Phi_1^{ab}(x)\,\tilde \otimes\,(\Phi_1^{bc}(y)\,\tilde \otimes\, \Phi_1^{cd}(z))\ar@{=>}[dr]_{\id\otimes \Phi_2^{bcd}} &  & \Phi_1^{ad}(x\otimes(y\otimes z))\\
 & \Phi_1^{ab}(x)\,\tilde \otimes\,\Phi_1^{bd}(y\otimes z)\ar@{=>}[ur]_{\Phi_2^{abd}} & }
\end{equation}
and
\begin{equation}
\xymatrixcolsep{4pc}
\myxymatrix{
&  \Phi_1^{ab}(x)\,\tilde \otimes\, \Phi_1^{bb}(\id_b) \ar@{=>}[dr]^{\Phi_2^{abb}}  & \\
 \Phi_1^{ab}(x)\,\tilde \otimes\, \id_{\Phi_0(b)} \ar@{=>}[ur]^{\id\otimes \Phi_2^{b}}  \ar@{=>}[dr]_{\tilde\sfr} & 
 &  \Phi_1^{ab}(x\otimes \id_b) \ar@{=>}[ld]^{\Phi_1^{ab}(\sfr)}\\
&  \Phi_1^{ab}(x)  &\\
 \id_{\Phi_0(a)}\,\tilde \otimes\, \Phi_1^{ab}(x) \ar@{=>}[dr]_{\Phi_2^{a}\otimes \id}  \ar@{=>}[ur]^{\tilde\sfl} & 
&  \Phi_1^{ab}(\id_a\otimes x) \ar@{=>}[lu]_{\Phi_1^{ab}(\sfl)}\\
&\Phi_1^{aa}(\id_a)\,\tilde \otimes\, \Phi_1^{ab}(x) \ar@{=>}[ur]_{\Phi_2^{aab}}  &
}
\end{equation}
\end{subequations}
\end{definition}

\begin{definition}\label{def:weak-2-functor}
A lax 2-functor $\Phi=(\Phi_0,\Phi_1,\Phi_2)$ for which the 2-cells $\Phi_2$ are natural isomorphisms is called a \uline{weak 2-functor}.\footnote{Weak 2-functors are also known as pseudo-functors.} A lax 2-functor $\Phi=(\Phi_0,\Phi_1,\Phi_2)$ for which the 2-cells $\Phi_2$ are trivial is called a \uline{strict 2-functor}. 
\end{definition}

\begin{rem}
Given two lax 2-functors $\Phi=(\Phi_0,\Phi_1,\Phi_2):\CCB\rightarrow \tilde \CCB$ and $\Psi=(\Psi_0,\Psi_1,\Psi_2):\tilde \CCB\rightarrow \hat \CCB$, their composition $\Phi\circ \Psi$ yields another lax 2-functor $\Xi=(\Xi_0,\Xi_1,\Xi_2)$ with
\begin{equation}
\begin{aligned}
 \Xi_0&\ =\ \Psi_0\circ \Phi_0\,:\,B_0\rightarrow \hat B_0~,\\
 \Xi_1&\ =\ \Psi^{\tilde{a}\tilde{b}}_1\circ \Phi^{ab}_1\,:\,\CCC(a,b)\rightarrow \hat \CCC(\Xi_0(a),\Xi_0(b))~,\\
 \Xi_2^{abc}&\ =\ \Psi^{\tilde a\tilde b}_1(\Phi^{abc}_2)\circ \Psi^{\tilde a\tilde b\tilde c}_2\,:\,\Xi_1^{ab}(x)\,\tilde \otimes\, \Xi_1^{bc}(y)\ \Rightarrow\ \Xi_1^{ac}(x\otimes y)~,\\
  \Xi_2^{a}&\ = \ \Psi^{\tilde a\tilde a}_1(\Phi^{a}_2)\circ \Psi^{\tilde a}_2\,:\,\id_{\Xi_0(a)}\ \Rightarrow\ \Xi_1^{aa}(\id_a)~,
\end{aligned}
\end{equation}
where $a,b,c\in B_0$ and $\tilde a=\Phi_0(a)$ etc.
\end{rem}

As expected, there are also generalisations of the notion of natural transformation to the case of weak 2-categories. Because we shall need these natural 2-transformation when defining coboundary conditions, we shall introduce them now in full detail.

\begin{definition} 
Let $\Phi,\Psi:\CCB\rightarrow \tilde\CCB$ be two lax 2-functors between two weak 2-categories $\CCB$ and $\tilde \CCB$. A \uline{lax natural 2-transformation} $\alpha:\Phi\Rightarrow\Psi$ with $\alpha=(\alpha_1,\alpha_2)$ consists of a family of 1-cells $\alpha_1^a:\Phi_0(a)\rightarrow \Psi_0(a)$ for each $a\in B_0$ together with a family of 2-cells $\alpha_2^{ab}$ defined by
\begin{equation}
 \xymatrixcolsep{5pc}
\myxymatrix{
 \Phi_0(b) \ar@{->}[r]^{\Phi_1^{ab}(x)}  \ar@{->}[d]_{\alpha_1^b} & \Phi_0(a) \ar@{->}[d]^{\alpha_1^a}\\
 \Psi_0(b) \ar@{->}[r]_{\Psi_1^{ab}(x)} \ar@{=>}[ru]^{\alpha_2^{ab}(x)}  & \Psi_0(a)
}
\end{equation}
for each 1-cell $x\in C_0(a,b)$ in $\CCB$, such that for all $x\in C_0(a,b)$, $y\in C_0(b,c)$ and $a,b,c\in B_0$ the diagrams
\begin{subequations}
\begin{equation}\label{eq:L2NT-A}
\xymatrixcolsep{3.5pc}\myxymatrix{
  \Psi_1^{ab}(x)\,\tilde \otimes\,  (\alpha_1^b\,\tilde \otimes\,  \Phi_1^{bc}(y))  \ar@{=>}[r]^{\tilde \sfa^{-1}}  &  (\Psi_1^{ab}(x)\,\tilde \otimes\,  \alpha_1^b)\,\tilde \otimes\,  \Phi_1^{bc}(y)  \ar@{=>}[r]^{\alpha_2^{ab}\,\tilde \otimes\, \id}  & (\alpha_1^a\,\tilde \otimes\, \Phi_1^{ab}(x))\,\tilde \otimes\, \Phi_1^{bc}(y)  \ar@{=>}[d]^{\tilde\sfa}\\
 \Psi_1^{ab}(x)\,\tilde \otimes\, (\Psi_1^{bc}(y)\,\tilde \otimes\,  \alpha_1^c)  \ar@{=>}[u]^{\id\,\tilde \otimes\,  \alpha_2^{bc}} & &   \alpha_1^a\,\tilde \otimes\, (\Phi_1^{ab}(x)\,\tilde \otimes\, \Phi_1^{bc}(y))  \ar@{=>}[d]^{\id\,\tilde \otimes\,  \Phi_2^{abc}}\\
(\Psi_1^{ab}(x)\,\tilde \otimes\, \Psi_1^{bc}(y))\,\tilde \otimes\,  \alpha_1^c   \ar@{=>}[r]_{~~~\Psi_2^{abc}\,\tilde \otimes\, \id}  
 \ar@{=>}[u]^{\tilde \sfa}  & \Psi_1^{ac}(x\otimes y)\,\tilde \otimes\,  \alpha_1^c  \ar@{=>}[r]_{\alpha_2^{ac}} & \alpha_1^a\,\tilde \otimes\,  \Phi_1^{ac}(x\otimes y)
 }
\end{equation}
and
\begin{equation}\label{eq:L2NT-B}
\myxymatrix{
  \id_{\Psi_0(a)}\otimes \alpha_1^a \ar@{=>}[r]^{~~~~~\sfr} \ar@{=>}[d]_{\Psi_2^a\otimes\id} & \alpha_1^a \ar@{=>}[r]^{\sfl^{-1}~~~~~} & \alpha_1^a\otimes \id_{\Phi_0(a)}  \ar@{=>}[d]^{\id\otimes\Phi_2^{a}}  \\
  \Psi_1^{aa}(\id_a)\otimes \alpha_1^a  \ar@{=>}[rr]_{\alpha_2^{aa}}  & & \alpha_1^a\otimes\Phi_1^{aa}(\id_a)
  }
\end{equation}
\end{subequations}
are commutative.
\end{definition}

\begin{definition}
A lax natural 2-transformation $\alpha=(\alpha_1,\alpha_2)$ for which the 2-cells $\alpha_2$ are natural isomorphisms is called a \uline{weak natural 2-transformation}.\footnote{Weak natural 2-transformations are also known as pseudo-natural transformations.} A lax natural 2-transformation $\alpha=(\alpha_1,\alpha_2)$ for which the 2-cells $\alpha_2$ are trivial is called a \uline{strict natural} \uline{2-transformation}.
\end{definition}

The composition of natural 2-transformations is governed by the following proposition. 
 
 \begin{prop}\label{prop:Combine2Lax2NatTrans}
 Given three lax 2-functors $\Phi,\Psi,\Xi:\CCB\to\tilde\CCB$ between two weak 2-categories $\CCB$ and $\tilde\CCB$ and two lax natural 2-transformations $\alpha:\Phi\Rightarrow\Psi$ and $\beta:\Psi\Rightarrow\Xi$, then there is a lax natural 2-transformation $\gamma:\Phi\Rightarrow\Xi$ such that
 \begin{subequations}
 \begin{equation}\label{eq:CompL2NT-A}
 \xymatrixcolsep{5pc}
\myxymatrix{
 \Phi_0(b) \ar@{->}[r]^{\Phi_1^{ab}(x)}  \ar@{->}[d]_{\alpha_1^b} & \Phi_0(a) \ar@{->}[d]^{\alpha_1^a}\\
 \Psi_0(b) \ar@{->}[r]^{\Psi_1^{ab}(x)} \ar@{=>}[ru]^{\alpha_2^{ab}(x)} \ar@{->}[d]_{\beta_1^b}  & \Psi_0(a)  \ar@{->}[d]^{\beta_1^a}\\ 
  \Xi_0(b) \ar@{->}[r]^{\Xi_1^{ab}(x)} \ar@{=>}[ru]^{\beta_2^{ab}(x)} & \Xi_0(a)  
}
\ =\ \xymatrixcolsep{5pc}
\myxymatrix{
 \Phi_0(b) \ar@{->}[r]^{\Phi_1^{ab}(x)}  \ar@{->}[d]_{\gamma_1^b} & \Phi_0(a) \ar@{->}[d]^{\gamma_1^a}\\
 \Xi_0(b) \ar@{->}[r]^{\Xi_1^{ab}(x)} \ar@{=>}[ru]^{\gamma_2^{ab}(x)}  & \Xi_0(a)
}
\end{equation}
with $\gamma_1^a:\Phi_0(a)\to\Xi_0(a)$ and $\gamma_2^{ab}:  \Xi_1^{ab}(x)\,\tilde \otimes\, \gamma_1^b \Rightarrow \gamma_1^a\,\tilde\otimes\, \Phi_1^{ab}(x)$ and 
\begin{equation}\label{eq:CompL2NT-B}
\begin{aligned}
 \gamma_1^a\ &=\ \beta_1^a\,\tilde\otimes\,\alpha_1^a~,\\
 \gamma_2^{ab}\ &=\ \tilde\sfa_{\beta_1^a,\alpha_1^a,\Phi^{ab}(x)}^{-1}\,\tilde\circ\,(\id_{\beta_1^a}\,\tilde\otimes\,\alpha_2^{ab}(x))\,\tilde\circ\,\tilde\sfa_{\beta_1^a,\Psi^{ab}(x),\alpha_1^b}\,\tilde\circ\,(\beta_2^{ab}(x)\,\tilde\otimes\,\id_{\alpha_1^b})\,\tilde\circ\,\tilde\sfa^{-1}_{\Xi^{ab}(x),\beta_1^b,\alpha_1^b}~
 \end{aligned}
\end{equation}
\end{subequations}
for all $a,b\in B_0$ and $x\in C_0(a,b)$.
\end{prop}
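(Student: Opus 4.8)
The plan is to define $\gamma=(\gamma_1,\gamma_2)$ by the only formulas compatible with the pasting equation~\eqref{eq:CompL2NT-A} --- that is, $\gamma_1^a:=\beta_1^a\,\tilde\otimes\,\alpha_1^a$ and $\gamma_2^{ab}$ as spelled out in~\eqref{eq:CompL2NT-B} --- and then to verify that this $(\gamma_1,\gamma_2)$ is a lax natural 2-transformation $\Phi\Rightarrow\Xi$. The first (routine) step is to check that the composite on the right-hand side of~\eqref{eq:CompL2NT-B} is well typed: reading off domains and codomains, $\tilde\sfa^{-1}_{\Xi^{ab}(x),\beta_1^b,\alpha_1^b}$ sends $\Xi_1^{ab}(x)\,\tilde\otimes\,\gamma_1^b$ to $(\Xi_1^{ab}(x)\,\tilde\otimes\,\beta_1^b)\,\tilde\otimes\,\alpha_1^b$; whiskering $\beta_2^{ab}(x)$ on the right by $\alpha_1^b$ lands in $(\beta_1^a\,\tilde\otimes\,\Psi_1^{ab}(x))\,\tilde\otimes\,\alpha_1^b$; one application of $\tilde\sfa$ and then whiskering $\alpha_2^{ab}(x)$ on the left by $\beta_1^a$ lands in $\beta_1^a\,\tilde\otimes\,(\alpha_1^a\,\tilde\otimes\,\Phi_1^{ab}(x))$; and a final $\tilde\sfa^{-1}$ yields $\gamma_1^a\,\tilde\otimes\,\Phi_1^{ab}(x)$, as required. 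With this choice, equation~\eqref{eq:CompL2NT-A} holds by construction: its two stacked naturality squares, whiskered by $\alpha_1^b$ on the right and $\beta_1^a$ on the left respectively and glued with exactly these associators, are by definition $\gamma_2^{ab}(x)$. Moreover $\gamma_2^{ab}$ is natural in the 1-cell $x$, and it is a natural isomorphism as soon as $\alpha_2^{ab}$ and $\beta_2^{ab}$ are (since $\tilde\sfa$ is); hence $\gamma$ is a weak natural 2-transformation whenever $\alpha$ and $\beta$ are.

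The substance of the proof is the verification of the two coherence axioms for $\gamma$. For the hexagon~\eqref{eq:L2NT-A} I would substitute $\gamma_1=\beta_1\,\tilde\otimes\,\alpha_1$ and the expansion of $\gamma_2$, turning it into a large polygon in $\tilde\CCB$ whose edges are components of $\tilde\sfa$, whiskerings of the components $\alpha_2^{ab},\alpha_2^{bc},\alpha_2^{ac}$ and $\beta_2^{ab},\beta_2^{bc},\beta_2^{ac}$, and the 2-cells $\Phi_2^{abc},\Psi_2^{abc},\Xi_2^{abc}$. The strategy is to tile this polygon using: the hexagon~\eqref{eq:L2NT-A} for $\alpha$, whiskered on the left by $\beta_1$; the hexagon~\eqref{eq:L2NT-A} for $\beta$, whiskered on the right by $\alpha_1$; repeated applications of the pentagon identity for $\tilde\sfa$ to move between the bracketings of the four or five 1-cells that occur; naturality of $\tilde\sfa$ to commute whiskered copies of $\alpha_2$ and $\beta_2$ past associators; and the interchange law~\eqref{eq:interchange_law} to exchange the order of the mutually independent whiskerings coming from $\alpha$ and from $\beta$.

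I expect this associator bookkeeping to be the main obstacle: with four nontrivial 1-cells simultaneously present, almost every elementary move must be framed by reassociations, and the polygon closes only if the pentagon is invoked in precisely the right places. A convenient way to organise the argument is to appeal to the coherence theorem for weak 2-categories --- any diagram built solely from components of $\tilde\sfa,\tilde\sfl,\tilde\sfr$ (suitably whiskered) commutes --- so that one may treat all such structural 2-cells as canonically determined and reduce the check to tracking the genuinely non-coherent 2-cells $\alpha_2,\beta_2,\Phi_2,\Psi_2,\Xi_2$; the $\alpha$- and $\beta$-hexagons together with~\eqref{eq:interchange_law} then close the diagram.

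Finally, the unit axiom~\eqref{eq:L2NT-B} for $\gamma$ is handled in the same spirit but is much shorter: after substitution one is left with a diagram built from $\id_{\Xi_0(a)}$, the unitors $\tilde\sfl,\tilde\sfr$, the associator $\tilde\sfa$, and the 2-cells $\Phi_2^a,\Psi_2^a,\Xi_2^a$, which closes upon combining the $\alpha$- and $\beta$-instances of~\eqref{eq:L2NT-B} with the unit coherence laws recorded above (Kelly's Proposition and the unitor versions of the naturality and triangle identities noted in the Remarks). This completes the verification that $\gamma=(\gamma_1,\gamma_2)$ is a lax natural 2-transformation realising the pasting~\eqref{eq:CompL2NT-A}, which is the assertion of the proposition.
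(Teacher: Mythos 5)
Your proposal is correct and follows essentially the same route as the paper: define $\gamma_1^a=\beta_1^a\,\tilde\otimes\,\alpha_1^a$ and $\gamma_2^{ab}$ by the evident pasting, check the typing (your domain/codomain bookkeeping matches), and discharge the coherence axioms \eqref{eq:L2NT-A} and \eqref{eq:L2NT-B} by general 2-categorical machinery. The only difference is presentational: where you sketch the explicit tiling by the $\alpha$- and $\beta$-hexagons, the pentagon, naturality of $\tilde\sfa$ and the interchange law (organised via bicategorical coherence), the paper simply cites Verity's pasting theorem for weak 2-categories to the same effect.
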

 
\vspace{5pt}
\noindent
{\it Proof:} It is straightforward to see that $\gamma=(\gamma_1,\gamma_2)$ given in \eqref{eq:CompL2NT-B} is a  map $\gamma_1^a:\Phi_0(a)\to\Xi_0(a)$ and $\gamma_2^{ab}:  \Xi_1^{ab}(x)\,\tilde \otimes\, \gamma_1^b \Rightarrow \gamma_1^a\,\tilde\otimes\, \Phi_1^{ab}(x)$ between the lax 2-functors $\Phi$ and $\Xi$. That this is indeed a lax natural 2-transformation is a consequence of the pasting theorem for weak 2-categories, see Verity \cite{Verity:2011aa}. \hfill $\Box$ 

\vspace{10pt}
Finally, for 2-categories, it is useful to continue the sequence of 2-categories, 2-functors, 2-transformations to 2-modifications.
 
 \begin{definition}
 Let $\Phi,\Psi:\CCB\to\tilde\CCB$ be two lax 2-functors between two weak 2-categories $\CCB$ and $\tilde\CCB$.  A \uline{2-modification} between two lax natural 2-transformations $\alpha,\beta:\Phi\to\Psi$ is a collection of morphisms ${\varphi_a:\alpha^a_1\Rightarrow \beta^a_1}$ for each $a\in B_0$ such that
\begin{equation}
 \xymatrixcolsep{5pc}
\myxymatrix{
 \Psi_1^{ab}(x)\,\tilde\otimes\, \alpha_1^b \ar@{=>}[r]^{\id \,\tilde\otimes\, \varphi_b}  \ar@{=>}[d]_{\alpha_2^{ab}} & \Psi_1^{ab}(x)\,\tilde\otimes\, \beta_1^b \ar@{=>}[d]^{\beta_2^{ab}}\\
 \alpha_1^a\,\tilde\otimes\, \Phi^{ab}_1(x) \ar@{=>}[r]_{\varphi_a \,\tilde\otimes\, \id} & \beta_1^a\,\tilde\otimes\, \Phi_1^{ab}(x)
}
\end{equation}
is commutative. If the morphisms ${\varphi_a}$ are invertible, we call the 2-modification invertible.
\end{definition}

\noindent
Note that composition of 2-modifications is trivially obtained by concatenation.

\subsection{Weak 2-groupoids}

In this section, we would like to introduce the notion of 2-groupoids as they play key roles in the definition of principal 2-bundles. We begin by recalling the definition of a groupoid first.

\begin{definition}
 A \uline{groupoid} is a small category in which every morphism is invertible. 
 \end{definition}
 
\noindent
Two important examples of groupoids that we shall frequently encounter throughout this work are those of the \v Cech groupoid and the delooping of a group.  

\begin{exam}\label{exa:CechGroupoid} The \uline{\v{C}ech groupoid} relative to a covering $\frU:=\{U_a\}$ of a topological manifold $X$, denoted by $\check\CCC(\frU)$ in the following, is defined to be the groupoid that has the covering sets as objects and the intersection of covering sets as morphisms. Concretely, the set of objects of $\check\CCC(\frU)$ is defined to be the disjoint union $\dot\bigcup_{a} U_a:=\bigcup_{a} \{(x,a)\,|\,x\in U_a\}$ and the set of morphisms of $\check\CCC(\frU)$ is defined to be the disjoint union $\dot\bigcup_{a,b} U_a\cap U_b:=\bigcup_{a,b}\{(x,a,b)\,|\,x\in U_a\cap U_b\}$, together with the structure maps
\begin{equation}
\begin{aligned}
 \sfs(x,a,b)\ :=\ (x,b)~,\quad \sft(x,a,b)\ :=\ (x,a)~,\quad \id_{(x,a)}\ :=\ (x,a,a)~,\\
  (x,a,b)\circ (x,b,c)\ :=\ (x,a,c)~.\kern2.6cm
\end{aligned}
\end{equation}
\end{exam}

\begin{exam} \label{exa:Delooping} Let $\sG$ be a group. The \uline{delooping} of  $\sG$, denoted by $\sB\sG$,  is defined to be the groupoid that has only a single object, denoted by $e$, and the elements of the group $\sG$ as its morphisms, $g:e\to e$ with $g\in\sG$. In  $\sB \sG$, the composition of morphisms is then simply given by the group multiplication on $\sG$, that is, $g_2\circ g_1:=g_2g_1$ for any $g_{1,2}\in \sG$.
\end{exam}

We are interested in the categorification of the notion of a groupoid, which is defined as follows.
 
 \begin{definition} 
A \uline{weak 2-groupoid} is a weak 2-category such that all morphisms are equivalences. A weak 2-groupoid with an underlying strict 2-category is a called a \uline{strict 2-groupoid}.
\end{definition}

\noindent
All morphisms being equivalences implies that the 2-cells are strictly invertible and the 1-cells are invertible up to isomorphisms. Unpacking this definition further\footnote{cf.\ Hardie et al.\ \cite{Hardie:2001aa}}, a weak 2-groupoid is a weak 2-category $\CCB$ such that for every pair of objects $a,b\in B_0$, the category $\CCC(a,b)$ is a groupoid. Moreover, for every pair $a,b\in B_0$ there is a functor $\bar{\cdot}:\CCC(a,b)\rightarrow \CCC(b,a)$ and for every 1-cell $x\in C_0(a,b)$ there are natural isomorphisms $\sfi_x:\id_a\Rightarrow x\otimes \bar x$ and $\sfe_x:\bar x \otimes x \Rightarrow \id_b$ called the \uline{unit} and \uline{counit}. These have to satisfy coherence axioms, which state that for any 1-cell $x\in C_0(a,b)$ and $a,b\in B_0$, the diagrams
\begin{subequations}\label{eq:ZigZag}
\begin{equation}
\myxymatrix{
(x\otimes \bar{x})\otimes x \ar@{=>}[rrrr]^{\sfa}  \ar@{=>}[d]_{\sfi^{-1}\otimes \id} & &&& x\otimes (\bar{x}\otimes x) \ar@{=>}[d]^{\id\otimes\sfe}\\
\id_a\otimes x \ar@{=>}[rr]^{\sfl}& & x&  &\ar@{=>}[ll]_{\sfr}  x\otimes\id_b
}
\end{equation}
and
\begin{equation}
\myxymatrix{
(\bar x\otimes x)\otimes \bar x \ar@{=>}[rrrr]^{\sfa}  \ar@{=>}[d]_{\sfe\otimes \id} & &&& \bar x\otimes ( x\otimes \bar x) \ar@{=>}[d]^{\id\otimes\sfi^{-1}}\\
\id_b\otimes \bar x\ar@{=>}[rr]^{\sfr}& & x&  &\ar@{=>}[ll]_{\sfl}  \bar x\otimes \id_a 
}
\end{equation}
\end{subequations}
are commutative.

\begin{exam}
An example of a strict 2-groupoid important in our subsequent discussion is the so-called \v Cech 2-groupoid. The 0- and 1-cells are given by the objects and morphisms of the \v Cech groupoid (see Example \ref{exa:CechGroupoid}), and all 2-cells defined to be trivial. 
\end{exam}

In Example \ref{exa:Delooping}, we have seen that any group can be viewed as a groupoid with a single object. Analogously, we give the following definition.

\begin{definition}\label{def:weak-2-group}
 A \uline{weak 2-group} $\CCG=(M,N)$ is the looping of a weak 2-groupoid with a single 0-cell $\CCB=(\{e\},M,N)$. 
\end{definition}

\begin{rem}
This definition is equivalent to that given by Baez \& Lauda \cite{Baez:0307200}. In particular, they define weak 2-groups as weak monoidal categories in which all morphisms are invertible and all objects are weakly invertible. They also introduce so-called \uline{coherent 2-groups} as weak monoidal categories in which all morphisms are invertible and all objects come with an adjoint equivalence. Both notions are shown to be equivalent. Our definition \ref{def:weak-2-group} uses the looping, as the weak 2-groups we are interested in will mostly appear as deloopings of coherent 2-groups in the sense of Baez \& Lauda. We shall therefore write $\sB\CCG=(\{e\},M,N)$: the single 0-cell is denoted by $e$ in the following while the 1- and 2-cells are denoted by $M$ and $N$, respectively. The (monoidal) category $\CCC(e,e)$ contained in $\sB\CCG$ is then the actual weak 2-group.
\end{rem}

\begin{definition}
 A \uline{strict 2-group} is the looping of a strict 2-groupoid with a single 0-cell.
\end{definition}

\noindent
Put differently, a strict 2-group is a weak 2-group in which the unitors, the unit and counit, and the associator are all trivial. Furthermore, we will need the notion of a skeletal 2-group which is as follows.

\begin{definition}
 A \uline{skeletal 2-group} is a weak 2-group, in which the underlying category is skeletal.
\end{definition}

\noindent
Recall that a category is skeletal whenever all isomorphic objects are equal: for all morphisms $f$ in the category, $\sfs(f)=\sft(f)$.

One version of Mac Lane's coherence theorem \cite{0387984038} states that every weak monoidal category is equivalent to a strict monoidal category. In the case of weak 2-groups, we have the following proposition from \cite[Sec.\ 8.3]{Baez:0307200}, which can be used to classify weak Lie 2-groups. 

\begin{prop}\label{prop:classify-Lie-2-groups}(Baez \& Lauda \cite{Baez:0307200})
Every weak 2-group is categorically equivalent to a `special' weak 2-group which is skeletal and in which all unitors, units, and counits are identity natural transformations.  In particular, a special weak 2-group can be given in terms of a group $\sG$, an Abelian group $\sH$, a representation $\alpha$ of $\sG$ as automorphisms of $\sH$ and an element $[\sfa]\in H^3(\sG,\sH)$.
\end{prop}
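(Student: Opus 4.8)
The plan is to follow the strategy of Baez \& Lauda \cite{Baez:0307200} and produce the `special' weak 2-group in two stages: first replace the given weak 2-group by a skeletal one using a coherence/skeletonisation argument, and then normalise the unitors, unit, and counit on the skeletal model. For the first stage, I would invoke the equivalence of any weak 2-group with a coherent 2-group (already quoted in the excerpt), so that without loss of generality the underlying monoidal category comes with chosen adjoint equivalences. Then I would pass to a skeleton $\mathsf{sk}(\CCC(e,e))$ of the underlying category: choose one representative in each isomorphism class of objects and, for every object $x$, a chosen isomorphism $\theta_x$ to its representative. The inclusion of the skeleton is an equivalence of categories, and one transports the monoidal structure along it in the standard way (conjugating $\otimes$, $\sfa$, $\sfl$, $\sfr$ by the $\theta_x$); the pentagon and triangle identities are preserved because they are preserved by monoidal equivalence. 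This yields a skeletal weak 2-group categorically equivalent to the original one.

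For the second stage, on the skeletal model the objects form an honest group $\sG$ (since $\sfs(f)=\sft(f)$ forces isomorphic objects to be equal and weak invertibility then gives strict inverses on objects), and $\mathrm{Aut}(\id_e)=:\sH$ is an abelian group by the Eckmann--Hilton argument applied to the two compositions $\circ$ and $\otimes$ on endomorphisms of the monoidal unit. Every hom-set $\CCC(g,g)$ is a torsor over $\sH$, and conjugation by $g$ under $\otimes$ gives the action $\alpha:\sG\to\mathrm{Aut}(\sH)$. I would then show the unitors can be normalised to identities: because the source and target of $\sfl_x$ (resp.\ $\sfr_x$) coincide (skeletality), $\sfl_x$ and $\sfr_x$ are elements of $\sH$-torsors, and the triangle identity plus Kelly's proposition (quoted in the excerpt) let one absorb them by redefining the associator — concretely, replace $\sfa$ by $\sfa$ composed with suitable unitor factors so that in the new structure $\sfl=\sfr=\id$; one checks the pentagon still holds and that $\sfi,\sfe$ can simultaneously be taken trivial using the zig-zag identities \eqref{eq:ZigZag}. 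What remains of the associator is then a function $\sG^3\to\sH$, and the pentagon identity becomes exactly the 3-cocycle condition, while the residual freedom (the choices of $\theta_x$ and of the normalising unitor factors) is precisely a coboundary; hence the associator defines a well-defined class $[\sfa]\in H^3(\sG,\sH)$.

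I expect the main obstacle to be the bookkeeping in the normalisation of the unitors and the unit/counit: one must verify that the redefinition of $\sfa$ needed to trivialise $\sfl$ and $\sfr$ does not destroy the pentagon identity and is compatible with simultaneously trivialising $\sfi$ and $\sfe$ via \eqref{eq:ZigZag}, and that all the resulting cocycle/coboundary identifications are consistent. This is the technical heart of \cite[Sec.\ 8.3]{Baez:0307200}; in the Lie case one additionally checks that all constructions are smooth, which is routine since the skeleton and the normalising isomorphisms can be chosen smoothly on each connected component. I would present the cocycle/coboundary identification carefully and refer to \cite{Baez:0307200,0387984038} for the coherence-theoretic steps rather than reproving Mac Lane coherence.
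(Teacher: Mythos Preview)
The paper does not actually give a proof of this proposition; it is stated as a result of Baez \& Lauda \cite{Baez:0307200} and used without further argument. Your proposal is a reasonable sketch of the Baez--Lauda argument from \cite[Sec.\ 8.3]{Baez:0307200}, and since the paper itself simply cites that reference, your approach is consistent with what the paper does --- indeed it supplies more detail than the paper.
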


\noindent
In addition, we have the following result.

\begin{prop}\label{prop:strictify-Lie-2-groups}(Baez \& Lauda \cite{Baez:0307200})
Every weak 2-group is categorically equivalent to a strict 2-group.
\end{prop}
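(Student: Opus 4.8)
The plan is to reduce to the classification of Proposition~\ref{prop:classify-Lie-2-groups} and then realise the resulting cohomological datum by a crossed module of groups. Concretely, given a weak 2-group $\CCG$, Proposition~\ref{prop:classify-Lie-2-groups} supplies a categorical equivalence to a special weak 2-group determined by a quadruple $(\sG,\sH,\alpha,[\sfa])$, with $\sG$ a group, $\sH$ an Abelian $\sG$-module via $\alpha$, and $[\sfa]\in H^3(\sG,\sH)$. Since categorical equivalence of weak 2-groups is transitive and symmetric, and since the quadruple is a categorical-equivalence invariant (the well-definedness half of the classification), it suffices to produce \emph{some} strict 2-group whose associated quadruple is $(\sG,\sH,\alpha,[\sfa])$: it will then be equivalent to the same special 2-group as $\CCG$, hence to $\CCG$. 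Recall finally that strict 2-groups are the same thing as crossed modules $\mu\colon\sH'\to\sG'$ equipped with an action $\sG'\acton\sH'$, whose invariants are $\sG=\mathrm{coker}\,\mu$, $\sH=\ker\mu$ (a central, hence $\sG$-, module), the induced action $\alpha$, and a Postnikov class in $H^3(\sG,\sH)$.

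The construction of this crossed module is the heart of the argument and is classical (Eilenberg--Mac\,Lane, Mac\,Lane--Whitehead). First I would choose a free presentation $1\to\mathsf{R}\to\mathsf{F}\xrightarrow{q}\sG\to 1$ with $\mathsf{F}$ a free group on a generating set of $\sG$ — in the Lie or topological setting, the free object in the appropriate category of spaces, which is where the strict model is forced to be infinite-dimensional, in accordance with the string 2-group discussion above. Because $\mathsf{F}$ is free it has cohomological dimension one, so the pullback $q^{*}\sfa$ is a coboundary, $q^{*}\sfa=\delta\beta$ for a $2$-cochain $\beta$ on $\mathsf{F}$ with values in $\sH$. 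I would then use $\beta$ to twist the obvious extension of the relation module $\mathsf{R}^{\mathrm{ab}}=\mathsf{R}/[\mathsf{R},\mathsf{R}]$ by $\sH$: set $\sG':=\mathsf{F}$, build $\sH'$ from $\mathsf{R}^{\mathrm{ab}}\times\sH$ with $\mathsf{F}$-action governed by $\alpha$ and the extension data $\beta$, and let $\mu\colon\sH'\to\mathsf{F}$ be induced by the inclusion $\mathsf{R}\hookrightarrow\mathsf{F}$ on the first factor and the zero map on the second. Equivalently, one constructs the $2$-fold crossed extension $0\to\sH\to\sH'\xrightarrow{\mu}\mathsf{F}\xrightarrow{q}\sG\to 1$ representing $[\sfa]$.

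The remaining work is verification: $\ker\mu\cong\sH$ and $\mathrm{coker}\,\mu\cong\sG$ hold by construction, the conjugation action of $\sG'$ descends to $\alpha$ on $\sH$, and — the one genuinely nontrivial cohomological computation, which I expect to be the main obstacle — the Postnikov class of the crossed module $(\sH'\xrightarrow{\mu}\sG',\acton)$ must be shown to equal $[\sfa]$; this is exactly the Eilenberg--Mac\,Lane realisation of degree-three group cohomology by crossed extensions, and amounts to checking that the class $[\beta]$ used to twist the extension pushes forward along $q$ to $[\sfa]$. Having matched the quadruples, the strict 2-group so obtained is categorically equivalent to the special form of $\CCG$, hence to $\CCG$, which is the claim. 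I would also remark, to explain why a more naive route fails, that strictifying only the monoidal structure via Mac\,Lane coherence produces a strict monoidal category whose objects form a free monoid and therefore lack strict inverses, so it is not a strict 2-group; the passage through crossed modules is precisely what repairs this.
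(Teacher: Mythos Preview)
The paper does not supply its own proof of this proposition; it is simply quoted from Baez \& Lauda \cite{Baez:0307200}. Your sketch is correct and is essentially the argument given there (and originating with Mac\,Lane--Whitehead and Joyal--Street): classify weak 2-groups up to equivalence by quadruples $(\sG,\sH,\alpha,[\sfa])$ via Proposition~\ref{prop:classify-Lie-2-groups}, then realise any such quadruple by a crossed module using the identification of $H^3(\sG,\sH)$ with equivalence classes of crossed extensions $0\to\sH\to\sH'\to\sG'\to\sG\to 1$. Your choice of a free presentation $\mathsf{F}\twoheadrightarrow\sG$ to kill $q^*[\sfa]$ and then build the crossed module from a twisted extension of $\mathsf{R}^{\mathrm{ab}}$ by $\sH$ is exactly the standard construction; the ``genuinely nontrivial'' verification you flag --- that the Postnikov invariant of the resulting crossed module is $[\sfa]$ --- is precisely the content of the Eilenberg--Mac\,Lane theorem, so there is no gap there.

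One small caveat: your parenthetical about ``the Lie or topological setting'' and infinite-dimensionality is suggestive but not quite to the point here. Proposition~\ref{prop:strictify-Lie-2-groups} as stated in the paper concerns \emph{weak 2-groups} in the sense of Definition~\ref{def:weak-2-group}, i.e.\ purely categorical objects with no smooth structure, and in that setting the free-group construction goes through without qualification. The obstruction you allude to --- that one cannot in general stay within finite-dimensional Lie groups --- is real, but it belongs to the separate discussion in Section~\ref{sec:Lie-functor} (the Baez--Lauda no-go theorem for the string 2-group) rather than to this proposition. Your closing remark on why naive Mac\,Lane strictification of the monoidal structure alone fails to produce strict inverses is apt and worth keeping.
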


\noindent
The notion of 2-groups relevant for our subsequent discussion will be the following.

\begin{definition}
A \uline{semistrict 2-group} is a weak 2-group in which the unitors and the unit and counit are all trivial.
\end{definition}

\noindent
We would like to emphasise that this notion is weaker than that of a strict 2-group, because the associator remains unrestricted. For semistrict 2-groups, we have the following results.

\vspace{10pt}
\begin{prop}\label{prop:TrivAssocId}
 In the delooping of any semistrict 2-group $\sB\CCG=(\{e\},M,N)$, the associators $\sfa_{\id_e,m,m'}$, $\sfa_{m,m',\id_e}$, $\sfa_{m,\id_e,m'}$, $\sfa_{m,\overline{m},m}$, and $\sfa_{\overline{m}, m,\overline{m}}$ are trivial for all $m,m'\in M$.
\end{prop}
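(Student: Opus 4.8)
The plan is to derive each of the five associators directly from a coherence axiom that the excerpt has already made available — the triangle identity of a weak $2$-category, Kelly's three commuting triangles for a weak monoidal category, and the zig-zag identities \eqref{eq:ZigZag} of a weak $2$-groupoid — specialising to the delooping $\sB\CCG=(\{e\},M,N)$ and using that a semistrict $2$-group has $\sfl$, $\sfr$, $\sfi$ and $\sfe$ all equal to identity $2$-cells while $\sfa$ stays unrestricted. In each case the relevant axiom expresses a composite of $\sfa$ with various whiskered unitors/units/counits as another such composite; after setting the trivial data to the identity and using that $\otimes$ is a bifunctor (so that whiskering an identity $2$-cell is again the identity), the equation collapses to ``$\sfa_{\ldots}=\id$''.

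First I would treat $\sfa_{m,\id_e,m'}$ for $m,m'\in M=C_0(e,e)$. The triangle identity gives $(\id_m\otimes\sfr_{m'})\circ\sfa_{m,\id_e,m'}=\sfl_m\otimes\id_{m'}$; with $\sfl,\sfr$ trivial the right-hand side is $\id$ and the left-hand side is $\sfa_{m,\id_e,m'}$, so $\sfa_{m,\id_e,m'}=\id$. Next, for $\sfa_{m,m',\id_e}$ I would use the first of Kelly's triangles, which reads $(\id_m\otimes\sfl_{m'})\circ\sfa_{m,m',\id_e}=\sfl_{m\otimes m'}$, and for $\sfa_{\id_e,m,m'}$ the second Kelly triangle, $\sfr_{m\otimes m'}\circ\sfa_{\id_e,m,m'}=\sfr_m\otimes\id_{m'}$; in both, triviality of the unitors forces the associator to be the identity. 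Finally, for $\sfa_{m,\overline m,m}$ and $\sfa_{\overline m,m,\overline m}$ I would feed semistrictness into the two zig-zag identities \eqref{eq:ZigZag}: the first says $\sfr\circ(\id\otimes\sfe_m)\circ\sfa_{m,\overline m,m}=\sfl\circ(\sfi_m^{-1}\otimes\id)$, which — once $\sfi,\sfe,\sfl,\sfr$ are replaced by identities (note this also identifies $m\otimes\overline m$ and $\overline m\otimes m$ with $\id_e$, so that every object in the square equals $m$ and $\sfa_{m,\overline m,m}$ is a $2$-cell $m\Rightarrow m$) — gives $\sfa_{m,\overline m,m}=\id$; the second zig-zag identity yields $\sfa_{\overline m,m,\overline m}=\id$ in exactly the same way.

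I do not expect a genuine obstacle here; the one thing to be careful about is bookkeeping: checking that each whiskered $2$-cell of the form $\id_m\otimes\sfr_{m'}$, $\id_m\otimes\sfe_m$, $\sfi_m^{-1}\otimes\id_m$, and so on, is really an identity $2$-cell (which it is, because $\otimes$ is a bifunctor and the functorial image of a pair of identity morphisms is an identity), and noting that there is no circularity — the triangle identity, Kelly's lemma, and the zig-zag identities hold in any weak $2$-category, weak monoidal category, respectively weak $2$-groupoid, so invoking them in the semistrict case is legitimate. In particular, neither the pentagon identity nor the classification of weak $2$-groups (Proposition~\ref{prop:classify-Lie-2-groups}) is needed for this proposition.
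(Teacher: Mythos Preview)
Your argument is correct and follows essentially the same route as the paper, which also invokes the triangle and zig-zag identities; the only cosmetic difference is that where the paper cites the pentagon identity directly, you instead invoke Kelly's triangles (the paper's earlier Proposition), which are themselves consequences of pentagon plus triangle --- so your remark that ``the pentagon identity is not needed'' is true only in the sense that its content has already been packaged into Kelly's lemma.
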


\vspace{10pt}
\noindent
{\it Proof:}
 This follows trivially by combining the pentagon and triangle diagrams together with the diagrams displayed in \eqref{eq:ZigZag}. \hfill $\Box$

\vspace{10pt}
\begin{prop}\label{prop:inverse-concatenation}
 In any semistrict 2-group $\CCG=(M,N)$ and for any 2-cell $n\in N$,
 \begin{equation}
   n^{-1}\ =\  \sfa_{\sfs(n),\overline{\sft(n)},\sft(n)} \circ ((\id_{\sfs(n)}\otimes \bar n)\otimes \id_{\sft(n)})\,:\, \sft(n)\ \Rightarrow\ \sfs(n)
 \end{equation}
 such that $n\circ n^{-1}=\id_{\sft(n)}$ and $n^{-1}\circ n=\id_{\sfs(n)}$.
\end{prop}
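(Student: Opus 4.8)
\textit{Proof strategy.} Write $m:=\sfs(n)$ and $m':=\sft(n)$, so that $n:m\Rightarrow m'$, and abbreviate the proposed expression by $\nu:=\sfa_{m,\overline{m'},m'}\circ\big((\id_m\otimes\bar n)\otimes\id_{m'}\big)$, where $\bar n:\bar m\Rightarrow\overline{m'}$ is the image of $n$ under the duality functor. Since $\CCG=(M,N)$ is a weak $2$-group, the hom-category $\CCC(e,e)$ of $\sB\CCG$ is a groupoid, so $n$ already possesses a unique two-sided vertical inverse; the plan is therefore to establish only the single identity $n\circ\nu=\id_{m'}$ and then to deduce $\nu\circ n=\id_m$ (and with it the fact that $\nu$ deserves the name $n^{-1}$) formally, by composing that relation on the left with the abstract inverse of $n$. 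Before the computation I would record the simplifications forced by semistrictness: since the unitors $\sfl,\sfr$ and the unit and counit $\sfi,\sfe$ are identity natural transformations, one has $\id_e\otimes x=x=x\otimes\id_e$ and $x\otimes\bar x=\id_e=\bar x\otimes x$ on $1$-cells, while naturality of these trivialised structure maps gives $\id_{\id_e}\otimes\theta=\theta=\theta\otimes\id_{\id_e}$ for $2$-cells and, crucially, $\theta\otimes\bar\theta=\id_{\id_e}$ for every $2$-cell $\theta$. A brief check with these rules confirms that $\nu$ is well-typed as a $2$-cell $m'\Rightarrow m$.

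The core computation then runs in three moves. First I would slide $n$ past the associator using the naturality of $\sfa$ applied to the $2$-cells $n$, $\id_{\overline{m'}}$, $\id_{m'}$ (together with $\id_{\overline{m'}}\otimes\id_{m'}=\id_{\id_e}$ and $n\otimes\id_{\id_e}=n$), obtaining $n\circ\sfa_{m,\overline{m'},m'}=\sfa_{m',\overline{m'},m'}\circ\big((n\otimes\id_{\overline{m'}})\otimes\id_{m'}\big)$; here $\sfa_{m',\overline{m'},m'}$ is trivial by Proposition~\ref{prop:TrivAssocId} (it is of the form $\sfa_{m,\overline m,m}$, equivalently it is killed by one of the zig-zag diagrams \eqref{eq:ZigZag}), so the right-hand side collapses to $(n\otimes\id_{\overline{m'}})\otimes\id_{m'}$. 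Second, substituting this into $n\circ\nu$ and applying the interchange law \eqref{eq:interchange_law} twice---once for the outer and once for the inner tensor---turns $n\circ\nu$ into $\big((n\circ\id_m)\otimes(\id_{\overline{m'}}\circ\bar n)\big)\otimes\id_{m'}$, i.e.\ into $(n\otimes\bar n)\otimes\id_{m'}$. Third, $n\otimes\bar n:\id_e\Rightarrow\id_e$ equals $\id_{\id_e}$ by the trivialised-$\sfi$ rule above, whence $n\circ\nu=\id_{\id_e}\otimes\id_{m'}=\id_{\id_e\otimes m'}=\id_{m'}$. Finally, letting $n^{\dagger}$ denote the vertical inverse of $n$ in $\CCC(e,e)$, strict associativity of $\circ$ gives $\nu=(n^{\dagger}\circ n)\circ\nu=n^{\dagger}\circ(n\circ\nu)=n^{\dagger}\circ\id_{m'}=n^{\dagger}$, which simultaneously yields $\nu\circ n=\id_m$ and identifies $\nu$ with $n^{-1}$.

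The only step that is not pure diagram-chasing is the identity $\theta\otimes\bar\theta=\id_{\id_e}$ for $2$-cells, i.e.\ the compatibility of the duality functor $\overline{(\cdot)}$ with the trivialised unit; this is precisely the naturality of $\sfi$ (and symmetrically of $\sfe$), which is part of the adjoint-equivalence data packaged into a weak $2$-groupoid. If one insisted on using only the most minimal form of the coherence axioms, the alternative is to carry the counit $\sfe$ along as well and straighten the resulting composite with the zig-zag identities \eqref{eq:ZigZag}; this merely lengthens the bookkeeping without changing the idea. Everything else---tracking sources and targets through the trivialisations, and observing that the surviving associator $\sfa_{m,\overline{m'},m'}$ (with $m\neq m'$ in general) is genuinely non-trivial whereas the one to be eliminated is not---is routine, so I expect the full verification to be short.
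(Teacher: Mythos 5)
Your proof is correct. Note, however, that the paper does not actually spell out an argument here: its ``proof'' is a one-line citation of the proof of Proposition 20 in Baez \& Lauda, so your self-contained verification is a genuine (and welcome) alternative rather than a reproduction. Your three moves are all sound: the naturality square for $\sfa$ at $(n,\id_{\overline{m'}},\id_{m'})$, combined with $\id_{\overline{m'}}\otimes\id_{m'}=\id_{\id_e}$, $n\otimes\id_{\id_e}=n$ (naturality of the trivialised unitor) and the triviality of $\sfa_{m',\overline{m'},m'}$ from Proposition \ref{prop:TrivAssocId}, correctly yields $n\circ\sfa_{m,\overline{m'},m'}=(n\otimes\id_{\overline{m'}})\otimes\id_{m'}$; the double application of the interchange law \eqref{eq:interchange_law} then collapses $n\circ\nu$ to $(n\otimes\bar n)\otimes\id_{m'}$; and $n\otimes\bar n=\id_{\id_e}$ is exactly naturality of the trivialised unit $\sfi$, which the definition of a weak 2-groupoid supplies (you rightly flag this as the one input beyond pure rewriting, and the zig-zag fallback you mention would indeed work if one refused to assume naturality of $\sfi$). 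Deducing the second identity $\nu\circ n=\id_m$ from the groupoid property of $\CCC(e,e)$ rather than by a second symmetric computation is a clean economy. In short: the paper buys brevity by outsourcing to Baez--Lauda's coherent-2-group machinery, while your argument buys transparency by exploiting the semistrict trivialisations directly; both are valid.
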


\vspace{10pt}
\noindent
{\it Proof:}
 This follows from the proof of Proposition 20 in \cite{Baez:0307200}. \hfill $\Box$

\subsection{Lie 2-groups}\label{ssec:Lie_2_groups}

To restrict the rather general notion of a groupoid, we can regard Lie groupoids as groupoids internal to a certain category $\CCC$. In general, a category internal to $\CCC=(C_0,C_1)$ consists of an object of objects and an object of morphisms, which are both elements in $C_0$. The structure maps $\sfs$, $\sft$, $\id$, and $\circ$ are given in terms of elements of $C_1$ and all commutative diagrams which hold in a category also hold in the internalised category. Internal functors and modifications are defined in an analogous manner. A groupoid internal to a category $\CCC$ is simply a category internal to $\CCC$, in which all the morphisms are strictly invertible.

In this manner, we can define, for instance, topological groupoids as groupoids in $\mathsf{Top}$, the category of topological spaces and continuous functions between them. Similarly, Lie groupoids are defined as follows.

\begin{definition}
 A \uline{Lie groupoid} is a groupoid internal to $\CatDiff$, the category of smooth manifolds and smooth functions between them.
\end{definition}

\noindent
Thus, Lie groupoids are groupoids in which the sets of objects and morphisms are smooth manifolds and all the structure maps are smooth.

\begin{rem}
 Recall that for any category $\CCK$ there exists a strict 2-category $\CCK\mathsf{Cat}$ with objects being categories internal to $\CCK$, morphisms being functors in $\CCK$ and 2-morphisms being natural transformations in $\CCK$. In particular, $\CatDiffCat$ is the strict 2-category with categories in $\CatDiff$ as 0-cells, functors between these as 1-cells and natural transformations between the latter as 2-cells.
\end{rem}

We can now define weak Lie 2-groupoids and weak Lie 2-groups by internalising weak 2-groupoids and weak 2-groups, respectively.

\begin{definition}\label{def:Lie-2-group}
 A \uline{weak Lie 2-groupoid} is a weak 2-groupoid internal to $\CatDiffCat$. A \uline{weak Lie 2-group} is a weak 2-group internal to $\CatDiffCat$.
\end{definition}

\noindent
Equivalently, a weak Lie 2-group is a weak Lie 2-groupoid with a single object. Specifically, such a weak Lie 2-group consists of an object $C$ in $\CatDiffCat$, a multiplication morphism $\otimes:C\times C\rightarrow C$, an identity object $\unit$, and an inverse map $\bar{\cdot}:C\rightarrow C$ with respect to $\otimes$. Furthermore, we have for all objects $x$, $y$, and $z$ in the category $C$ the following natural isomorphisms: an associator $\sfa:(x\otimes y)\otimes z\Rightarrow x\otimes (y\otimes z)$, left- and right-unitors $\sfl_x:\unit\otimes x\rightarrow x$ and $\sfr_x:x\otimes\unit\Rightarrow x$ as well as a unit and counit $\sfi_x:\unit\Rightarrow x\otimes\bar x$ and $\sfe_x:\bar x\otimes x\rightarrow \unit$, such that the pentagon and triangle identities as well as the first and second zig-zag identities are satisfied, cf.\ \cite{Baez:0307200}. 

For our purposes, we wish to restrict the notion of a weak Lie 2-group as given in Definition \ref{def:Lie-2-group} somewhat further.

\begin{definition}
 A \uline{semistrict Lie 2-group} is a weak 2-group internal to $\CatDiffCat$ such that the unitors, the unit, and the counit are all trivial.
\end{definition}
Note that by Proposition \ref{prop:classify-Lie-2-groups}, semistrict Lie 2-groups are still categorically equivalent to weak Lie 2-groups.

\begin{definition}
 A \uline{strict Lie 2-group} is a weak 2-group  in $\CatDiffCat$ such that the associator, the unitors, the unit, and the counit are all trivial.

\end{definition}

\noindent
We recall that there is an equivalent formulation of strict Lie 2-groups in terms of crossed modules of Lie groups.
\begin{definition}
A \uline{crossed module of Lie groups} is a pair of Lie groups $(\sH,\sG)$ together with a Lie group homomorphism\footnote{This homomorphism is often denoted by $\sft$. Here, however, to avoid confusion with the source and target maps of the weak 2-group, we use the symbol $\dpar$.} $\dpar:\sH\rightarrow \sG$ and an action $\acton$ of $\sG$ on $\sH$ by automorphisms. The map $\dpar$ is $\sG$-equivariant and satisfies the Peiffer identity,
\begin{equation}
 \dpar(g\acton h)\ =\ g\dpar(h)g^{-1}\eand \dpar(h_1)\acton h_2\ =\ h_1h_2h_1^{-1}
\end{equation}
for all $g\in \sG$ and $h,h_1,h_2\in \sH$.
\end{definition}

Then we have the following result.

\begin{prop}\label{prop:equiv-strict-Lie-2-CM} 
 A strict Lie 2-group is equivalent to a crossed module of Lie groups.
\end{prop}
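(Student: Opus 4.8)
\vspace{10pt}
\noindent
{\it Proof (sketch):} The plan is to exhibit mutually inverse constructions between strict Lie 2-groups and crossed modules of Lie groups and to check that each lands in the intended class of objects, so that the correspondence — functorial in an obvious way on morphisms — gives the asserted equivalence.

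Start from a strict Lie 2-group $\CCG=(M,N)$, regarded as the delooping $\sB\CCG=(\{e\},M,N)$. First I would collect the structural facts that hold once the associator, the unitors, and the unit and counit are trivial: $(M,\otimes)$ is a Lie group with unit $\id_e$; the 2-cells $(N,\otimes)$ likewise form a Lie group, with unit $\id_{\id_e}$ and inversion $\overline{\,\cdot\,}$; and $\sfs,\sft\colon N\to M$ together with $\id\colon M\to N$ are homomorphisms for $\otimes$. Then I would set $\sG:=M$, $\sH:=\ker\sfs=\{n\in N\mid\sfs(n)=\id_e\}$, $\dpar:=\sft|_{\sH}\colon\sH\to\sG$, with $\sG$ acting on $\sH$ by $g\acton n:=\id_g\otimes n\otimes\id_{g^{-1}}$. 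Here $\sH$ is an embedded Lie subgroup of $(N,\otimes)$ because $\sfs$ is a surjective submersion (this submersivity is part of the internalisation in $\CatDiffCat$: it is what makes the fibred products defining $\circ$ into manifolds in the first place); $\acton$ maps $\sH$ into $\sH$ since $\sfs(g\acton n)=g\,\id_e\,g^{-1}=\id_e$; and $\acton$ is an action by automorphisms because $\id$ is a $\otimes$-homomorphism.

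Next I would verify the two crossed module conditions. Equivariance, $\dpar(g\acton n)=g\,\dpar(n)\,g^{-1}$, is immediate from $\sft$ being a homomorphism. The Peiffer identity is the substantive point, and it is exactly where the interchange law \eqref{eq:interchange_law} enters: using $\sfs(n_1)=\sfs(n_2)=\id_e$ to write $n_1=\id_{\sft(n_1)}\circ n_1$, $n_2=n_2\circ\id_{\id_e}$ and $\overline{n_1}=\id_{\overline{\sft(n_1)}}\circ\overline{n_1}$, the interchange law yields
\begin{equation*}
 n_1\otimes n_2\otimes\overline{n_1}\ =\ \big(\id_{\sft(n_1)}\otimes n_2\otimes\id_{\overline{\sft(n_1)}}\big)\,\circ\,\big(n_1\otimes\id_{\id_e}\otimes\overline{n_1}\big)\ =\ \id_{\dpar(n_1)}\otimes n_2\otimes\id_{\dpar(n_1)^{-1}}\ =\ \dpar(n_1)\acton n_2~,
\end{equation*}
where the middle step uses $n_1\otimes\id_{\id_e}\otimes\overline{n_1}=n_1\otimes\overline{n_1}=\id_{\id_e}$; this is precisely $\dpar(h_1)\acton h_2=h_1h_2h_1^{-1}$ in the notation of the crossed module definition. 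For the converse I would use the semidirect-product model: from a crossed module $(\sH,\sG,\dpar,\acton)$ put $M:=\sG$, $N:=\sH\rtimes\sG$, $\sfs(h,g):=g$, $\sft(h,g):=\dpar(h)\,g$, $\id_g:=(1,g)$, with vertical composition $(h_2,g_2)\circ(h_1,g_1):=(h_2h_1,g_1)$ (defined when $g_2=\dpar(h_1)g_1$) and horizontal composition $(h_1,g_1)\otimes(h_2,g_2):=(h_1\,(g_1\acton h_2),\,g_1g_2)$; here $\sG$-equivariance of $\dpar$ makes $\sfs,\sft,\id$ into homomorphisms and the Peiffer identity reproduces the interchange law, so that $(M,N)$ is a strict Lie 2-group with trivial associator, unitors, unit and counit.

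Finally I would close the loop. The map $(n,g)\mapsto n\otimes\id_g$ is a Lie group isomorphism $\ker(\sfs)\rtimes M\xrightarrow{\ \cong\ }(N,\otimes)$ intertwining $\sfs,\sft,\id$ and $\circ$, so applying the semidirect-product construction to the crossed module extracted from $\CCG$ returns $\CCG$ up to isomorphism; conversely, applying the first construction to $\sH\rtimes\sG$ recovers $(\sH,\sG,\dpar,\acton)$ on the nose. Since both assignments are evidently functorial on morphisms, this gives the equivalence. I expect the main obstacle to be purely organisational: carrying out the Peiffer/interchange translation in both directions and checking that the semidirect-product model really does satisfy all the strict 2-group axioms (strict associativity of $\otimes$, invertibility of every 2-cell under $\circ$, triviality of the coherence cells); the smoothness statements, by contrast, come for free from working internally to $\CatDiffCat$. \hfill $\Box$
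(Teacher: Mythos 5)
Your proof is correct, and it supplies an actual argument where the paper only cites Baez \& Lauda for the proof and then records the identification it intends to use. Your route is the standard Baez--Lauda one: $\sG$ the 1-cells, $\sH:=\ker(\sfs)$ with the horizontal product $\otimes$, $\dpar:=\sft|_{\sH}$, action $g\acton n:=\id_g\otimes n\otimes\id_{\overline g}$, Peiffer identity extracted from the interchange law, and the semidirect-product model for the converse; all the individual verifications you sketch (submersivity of $\sfs$ making $\ker\sfs$ a Lie subgroup, equivariance from functoriality of $\sft$, Peiffer $\Leftrightarrow$ interchange, the round-trip isomorphism $(n,g)\mapsto n\otimes\id_g$) go through. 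The one thing to flag is that the paper deliberately adopts a \emph{different} identification from this one (it says so explicitly after the proposition): there $\sH:=\ker(\sft)$, $\dpar(h):=\sfs(h^{-1})$, the product on $\sH$ is the mixed vertical--horizontal one $h_2h_1:=h_2\circ(h_1\otimes\id_{\sfs(h_2)})$, and $N:=\sG\ltimes\sH$ with $\sfs(g,h)=\dpar(h^{-1})g$, $\sft(g,h)=g$. The two conventions give isomorphic crossed modules, so the proposition itself is unaffected; but since the paper's explicit formulas \eqref{eq:CompositionsStrictLie2Group} and \eqref{eq:definition-CM-ops} are what get used later (e.g.\ in the proofs concerning strict principal 2-bundles and the differentiation of strict Lie 2-groups), your conventions cannot be substituted there verbatim without adjusting signs and orderings. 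What your version buys is a cleaner Peiffer/interchange dictionary, since the $\sH$-product is just $\otimes$; what the paper's version buys is compatibility with its chosen source/target conventions for coboundary data downstream.
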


See Baez \& Lauda \cite{Baez:0307200} for a detailed proof. We shall use an identification between strict Lie 2-groups and crossed modules of Lie groups that slightly differs from that of \cite{Baez:0307200}. Given a crossed module of Lie groups $(\sH\xrightarrow{\,\dpar\,} \sG,\acton)$, we obtain a strict Lie 2-group $\CCG=(M,N)$ by identifying $M:=\sG$ and $N:=\sG\ltimes \sH$ and setting $\sfs(g,h):=\dpar(h^{-1})g$, $\sft(g,h):=g$, and $\id_{g}=(g,\unit_\sH)$ for $h,h_{1,2}\in\sH$ and $g,g_{1,2}\in\sG$ together with
\begin{equation}\label{eq:CompositionsStrictLie2Group}
\begin{aligned}
 g_2\otimes  g_1\ &:=\ g_2g_1~,\\
  (g_2,h_2)\otimes (g_1,h_1)\ &:=\ (g_2g_1,(g_2\acton h_1)h_2)~,\\
  (g,h_2)\circ  (\dpar(h_2^{-1})g,h_1)\ &:=\ (g,h_2h_1)~.
\end{aligned}  
\end{equation}

On the other hand, given a strict Lie 2-group $\CCG=(M,N)$, we define a crossed module $(\sH\xrightarrow{\,\dpar\,}\sG,\acton)$ by putting $\sG:=M$ and $\sH:=\ker(\sft)$ and
\begin{equation}\label{eq:definition-CM-ops}
 \begin{aligned}
  g_2g_1\ & :=\ g_2\otimes g_1~,~~~&h_2 h_1\ & :=\ h_2\circ (h_1\otimes \id_{\sfs(h_2)})~,\\
   \dpar(h)\ & :=\ \sfs(h^{-1})~,~~~
  &g\acton h\ & :=\ \id_g\otimes h\otimes\id_{\overline g}~.
 \end{aligned}
\end{equation}

\subsection{Lie 2-algebras}

Apart from Lie 2-groups, we shall also be dealing with Lie 2-algebras. The most general kind of Lie 2-algebra currently in use has been defined by Roytenberg \cite{Roytenberg:0712.3461} as follows.

\begin{definition}
A \uline{weak Lie 2-algebra} is a linear category $\CCL=(L_0,L_1)$ equipped with 
  \begin{enumerate}[(i)]\setlength{\itemsep}{-1mm}
   \item a bilinear functor $[\cdot,\cdot]: \CCL\times \CCL \rightarrow \CCL$ called the \uline{bracket}, 
 \item a bilinear natural transformation $\sfS:[X,Y]\Rightarrow -[Y,X]$ called the  \uline{alternator}, and
 \item a trilinear natural transformation $\sfJ:[X,[Y,Z]]\Rightarrow [[X,Y],Z]+[Y,[X,Z]]$ called the \uline{Jacobiator}
 \end{enumerate}
for all $X,Y,Z\in L_0$. These structure maps are subject to a number of coherence axioms, cf.\ \cite{Roytenberg:0712.3461}.
\end{definition}

In this paper, we are merely interested in so-called semistrict Lie 2-algebras.
\begin{definition}
 A \uline{semistrict Lie 2-algebra} is a weak Lie 2-algebra in which the alternator is trivial.
\end{definition}

\noindent
Instead of working directly with semistrict Lie 2-algebras and their rather involved coherence axioms, we can switch to a categorically equivalent formulation in terms of 2-term $L_\infty$-algebras, as was shown in \cite{Baez:2003aa}. The general definition of a strong homotopy Lie algebra is given in appendix \ref{app:A}. Here, we just recall the following definition.

\begin{definition}\label{def:2-term-SH-algebra}
 A \uline{2-term $L_\infty$-algebra} consists of a 2-term complex of vector spaces $\frv$ and $\frw$,
\begin{equation}
  \frv~\overset{\mu_1}{\longrightarrow}~\frw~\overset{\mu_2}{\longrightarrow}~0~,
\end{equation}
where we associate gradings $-1$ and $0$ to elements of $\frv$ and $\frw$, respectively. This complex is equipped with higher products $\mu_1$, $\mu_2$, $\mu_3$, which vanish except for
\begin{equation}
\begin{aligned}
 \mu_1\,:\,\frv\ \rightarrow\ \frw~,~~~\mu_2\,:\,\frw\wedge \frw\ \rightarrow\ \frw~,~~~\mu_2\,:\, \frv\wedge \frw\ \rightarrow\ \frv~,\\ \mu_3\,:\,\frw\wedge \frw\wedge \frw\ \rightarrow\ \frv~.\kern3.2cm
 \end{aligned}
\end{equation}
Moreover, these products are required to satisfy the following \uline{higher homotopy Jacobi identities}:
\begin{equation}\label{eq:homotopy_relations}
\begin{aligned}
 \mu_1(\mu_2(w,v))\ &=\ \mu_2(w,\mu_1(v))~,\\
 \mu_2(\mu_1(v_1),v_2)\ &=\ \mu_2(v_1,\mu_1(v_2))~,\\
 \mu_1(\mu_3(w_1,w_2,w_3))\ &=\ -\mu_2(\mu_2(w_1,w_2),w_3)-\mu_2(\mu_2(w_3,w_1),w_2)-\mu_2(\mu_2(w_2,w_3),w_1)~,\\
 \mu_3(\mu_1(v),w_1,w_2)\ &=\ -\mu_2(\mu_2(w_1,w_2),v)-\mu_2(\mu_2(v,w_1),w_2)-\mu_2(\mu_2(w_2,v),w_1)~,\\
 \mu_2(\mu_3(w_1,w_2,w_3),w_4)&-\mu_2(\mu_3(w_4,w_1,w_2),w_3)+\mu_2(\mu_3(w_3,w_4,w_1),w_2)\,-\\
  &-\mu_2(\mu_3(w_2,w_3,w_4),w_1)\ =\\
 &\kern-2.5cm=\ \mu_3(\mu_2(w_1,w_2),w_3,w_4)-\mu_3(\mu_2(w_2,w_3),w_4,w_1)+\mu_3(\mu_2(w_3,w_4),w_1,w_2)\,-\\
 &\kern-2.5cm -\mu_3(\mu_2(w_4,w_1),w_2,w_3)
 -\mu_3(\mu_2(w_1,w_3),w_2,w_4)-\mu_3(\mu_2(w_2,w_4),w_1,w_3)~,
\end{aligned}
\end{equation}
where $v,v_i\in \frv$ and $w,w_i\in \frw$.
\end{definition}

\begin{rem}\label{rem:invert-L-infty}
 Note that for every 2-term $L_\infty$-algebra $\frv\xrightarrow{\,\mu_1\,} \frw$ with products $(\mu_1,\mu_2,\mu_3)$, there is another 2-term $L_\infty$-algebra with the same underlying vector spaces $\tilde \frv:=\frv$ and $\tilde \frw:=\frw$ but with higher products $\tilde \mu_1:=-\mu_1$, $\tilde \mu_2:=\mu_2$, and $\tilde \mu_3:=-\mu_3$.
\end{rem}

\begin{exam}
 A typical example of a semistrict Lie 2-algebra is the string Lie 2-algebra of a Lie algebra $\frg$. Here, $\frw=\frg$, $\frv=\FR$ and the only non-trivial higher products are $\mu_2(w_1,w_2)=[w_1,w_2]$ and $\mu_3(w_1,w_2,w_3)=\langle w_1,[w_2,w_3]\rangle$, where $w_1,w_2,w_3\in \frw$ and $\langle \cdot,\cdot\rangle$ is the Killing form on $\frg$.
\end{exam}

Let us briefly recall the details of the equivalence between semistrict Lie 2-algebras and 2-term $L_\infty$-algebras.\footnote{A similar equivalence exists for weak Lie 2-algebras \cite{Roytenberg:0712.3461}, but the resulting normalised chain complex is less convenient to work with.} We start from a Lie 2-algebra $\CCL=(L_0,L_1)$ and put 
\begin{equation}
 \frv\ :=\ \ker(\sft)\ \subseteq\ L_1~,~~~\frw\ :=\ L_0~,\eand\mu_1\ :=\ -\sfs|_{\frv}~.
\end{equation}
The higher products are defined as follows:
\begin{equation}
\begin{aligned}
\mu_2(w_1,w_2)\ :=\ [w_1,w_2]~,~~~\mu_2(w,v)\ =\ -\mu_2(v,w)\ :=\ [\id_w,v]~,\\
\mu_3(w_1,w_2,w_3)\ :=\ \sfJ(w_1,w_2,w_3)-\id_{[[w_1,w_2],w_3]+[w_2,[w_1,w_3]]}~,\kern1.2cm
\end{aligned}
\end{equation}
where $w_1,w_2,w_3,w\in \frw$ and $v\in \frv$. This map from a semistrict Lie 2-algebra to a 2-term $L_\infty$-algebra can be extended to a functor $\Phi$ between the corresponding categories.

Conversely, given a 2-term $L_\infty$-algebra $\frv\xrightarrow{\,\mu_1\,} \frw$, we obtain a semistrict Lie 2-algebra $\CCL=(L_0,L_1)$ by putting
\begin{equation}
\begin{aligned}
 L_0&\ :=\ \frw~,~~~L_1\ :=\ \frv\oplus \frw~,~~~\sfs(w,v)\ :=\ w-\mu_1(v)~,~~~\sft(w,v)\ :=\ w~,\\
 &\kern.5cm\id_w\ :=\ (w,0)~,~~~(w,v_2)\circ (w-\mu_1(v_2),v_1)\ :=\ (w,v_1+v_2)
\end{aligned}
\end{equation}
for  all $v,v_1,v_2\in \frv$ and $w\in \frw$. In addition, we set
\begin{equation}
\begin{aligned}
 {}[w_1,w_2]\ &:=\ \mu_2(w_1,w_2)~,\\
 [(w_1,v_1),(w_2,v_2)]\ &:=\ \big(\mu_2(w_1,w_2),\mu_2(v_1,w_2)+\mu_2(w_1-\mu_1(v_1),v_2)\big)~,\\
 \sfJ(w_1,w_2,w_3)\ &:=\ \big(-\mu_2(\mu_2(w_1,w_2),w_3)-\mu_2(\mu_2(w_3,w_1),w_2),\mu_3(w_1,w_2,w_3)\big)~.
\end{aligned}
\end{equation}
Again, this map from a 2-term $L_\infty$-algebra  to a semistrict Lie 2-algebra can be extended to a functor $\Psi$ between the corresponding categories. 

We have the following results.

\begin{prop} (Baez \& Crans \cite{Baez:2003aa})
Together, the functors $\Phi$ and $\Psi$ defined above can be shown to form an equivalence, which can even be extended to an equivalence of 2-categories.
\end{prop}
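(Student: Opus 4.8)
\para{Strategy} The plan is to prove the biequivalence via the standard recognition criterion: a $2$-functor is a biequivalence as soon as it is essentially surjective on objects and induces equivalences on all hom-categories. Concretely, I would exhibit (pseudo)natural isomorphisms of $2$-functors $\Psi\circ\Phi\cong\id$ and $\Phi\circ\Psi\cong\id$, with $\Psi$ playing the role of a quasi-inverse to $\Phi$. First I would pin down the extensions of $\Phi$ and $\Psi$ to $1$- and $2$-cells: a morphism of semistrict Lie $2$-algebras is a linear functor together with coherence $2$-cells, and $\Phi$ sends it to the restriction of the functor to the kernels of the target maps, with the coherence data relabelled as the components of an $L_\infty$-morphism, while $\Psi$ inverts this construction on $\frv\oplus\frw$. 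Checking that these assignments preserve identities and both vertical and horizontal composition of $2$-cells is a direct bookkeeping exercise, matching the coherence axioms of a weak Lie $2$-algebra with trivial alternator against the homotopy Jacobi identities \eqref{eq:homotopy_relations}.

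\para{The unit} For a semistrict Lie $2$-algebra $\CCL=(L_0,L_1)$, write $\frv=\ker(\sft)\subseteq L_1$ and $\frw=L_0$; then $\Psi(\Phi(\CCL))$ has $0$-cells $L_0$ and $1$-cells $\ker(\sft)\oplus L_0$. Since $\id\colon L_0\to L_1$ is a linear section of $\sft$, I would use the linear isomorphism
\begin{equation}
 \eta_\CCL\colon\, L_1\ \longrightarrow\ \ker(\sft)\oplus L_0~,\quad \ell\ \longmapsto\ (\ell-\id_{\sft(\ell)},\,\sft(\ell))~,
\end{equation}
with inverse $(v,w)\mapsto v+\id_w$, and check that $(\id_{L_0},\eta_\CCL)$ strictly intertwines source, target, identity, composition, the bracket and the Jacobiator, hence is an isomorphism of semistrict Lie $2$-algebras, and that $\eta$ is natural in $\CCL$. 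This yields $\Psi\circ\Phi\cong\id$.

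\para{The counit} For a $2$-term $L_\infty$-algebra $\frv\xrightarrow{\,\mu_1\,}\frw$, the $0$-cells of $\Phi(\Psi(\frv\to\frw))$ are $\frw$ and the $1$-cells are $\ker(\sft)=\{(0,v)\mid v\in\frv\}$, which I would identify with $\frv$ via $(0,v)\mapsto v$. Under this identification the explicit formulas for $\Psi$ and $\Phi$ give $\mu_1'(v)=-\sfs(0,v)=\mu_1(v)$, $\mu_2'(w_1,w_2)=[w_1,w_2]=\mu_2(w_1,w_2)$, $\mu_2'(w,v)=[\id_w,(0,v)]=(0,\mu_2(w,v))$, and $\mu_3'(w_1,w_2,w_3)=\sfJ(w_1,w_2,w_3)-\id_{\sft(\sfJ(w_1,w_2,w_3))}=(0,\mu_3(w_1,w_2,w_3))$, so all higher products are recovered; here one uses that $\mu_2$ is antisymmetric because the alternator is trivial. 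Thus $\Phi\circ\Psi$ equals the identity up to the canonical natural isomorphism $\frv\cong 0\oplus\frv$, and in particular $\Phi$ is essentially surjective on objects.

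\para{Conclusion} The two families of isomorphisms above show that for all semistrict Lie $2$-algebras $\CCL,\CCL'$ the functor $\Phi\colon\sHom(\CCL,\CCL')\to\sHom(\Phi\CCL,\Phi\CCL')$ admits $\Psi$, suitably conjugated by $\eta$ and its counterpart, as a quasi-inverse, hence is an equivalence of categories; together with essential surjectivity this makes $\Phi$ a biequivalence. Alternatively, I would verify the triangle identities for $\eta$ and the counit to obtain an adjoint $2$-equivalence on the nose. I expect the main obstacle to be purely the sign- and coherence-bookkeeping in the first three steps — especially matching the Jacobiator with $\mu_3$ and checking compatibility of $\Phi$ and $\Psi$ with horizontal composition of $2$-cells — together with the repeatedly used fact that, in the normalised semistrict setting, $\sft$ is split surjective via $\id$.
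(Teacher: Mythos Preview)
The paper does not prove this proposition at all: it is simply stated with attribution to Baez \& Crans \cite{Baez:2003aa} and no argument is given. Your proposal therefore supplies a proof where the paper supplies none, and your strategy --- exhibit $\eta_\CCL\colon L_1\to\ker(\sft)\oplus L_0$ via the splitting $\ell\mapsto(\ell-\id_{\sft(\ell)},\sft(\ell))$ for the unit, verify $\Phi\circ\Psi$ reproduces the higher products on the nose for the counit, and invoke essential surjectivity plus local equivalence --- is precisely the argument Baez \& Crans themselves run. One caution: when you verify $\mu_3'=\mu_3$ you should double-check the sign and target conventions, since the paper's formula $\sfJ(w_1,w_2,w_3)=(-\mu_2(\mu_2(w_1,w_2),w_3)-\mu_2(\mu_2(w_3,w_1),w_2),\mu_3(w_1,w_2,w_3))$ together with $\sft(w,v)=w$ gives $\sft(\sfJ)$ the \emph{opposite} sign to $[[w_1,w_2],w_3]+[w_2,[w_1,w_3]]$, so the subtraction in the definition of $\mu_3$ as written does not obviously land in $\ker(\sft)$; this is a conventions artefact rather than a flaw in your method, but it is exactly the kind of bookkeeping you flagged as the main obstacle.
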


\begin{prop}\label{prop:classify-Lie-2-algebras} (Baez \& Crans \cite{Baez:2003aa})
There is a one-to-one correspondence between equivalence classes of semistrict Lie 2-algebras and `special' 2-term $L_\infty$-algebras  given in terms of a Lie algebra $\frg$, a representation of $\frg$ on a vector space $\frv$, and an element $J$ of $H^3(\frg,\frv)$. Here, $\mu_1=0$, $\mu_2$ is the Lie bracket in $\frg$ or the action on $\frv$, and $\mu_3=J$.
\end{prop}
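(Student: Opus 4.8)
The plan is to transport the problem across the equivalence recalled above between semistrict Lie 2-algebras and 2-term $L_\infty$-algebras, so that it suffices to classify 2-term $L_\infty$-algebras $\frv\xrightarrow{\,\mu_1\,}\frw$ with products $(\mu_1,\mu_2,\mu_3)$ up to equivalence, where equivalence is the notion in the relevant 2-category, which for 2-term $L_\infty$-algebras is the same as being an $L_\infty$-quasi-isomorphism. The argument proceeds in three stages: (i) reduce an arbitrary 2-term $L_\infty$-algebra to an equivalent \emph{skeletal} one, i.e.\ one with $\mu_1=0$; (ii) decode, from the homotopy Jacobi identities \eqref{eq:homotopy_relations} specialised to $\mu_1=0$, that a skeletal 2-term $L_\infty$-algebra is nothing but a Lie algebra, a representation, and a Chevalley--Eilenberg $3$-cocycle; (iii) show that two such models are equivalent if and only if they have isomorphic Lie algebra and representation data and cohomologous $3$-cocycles.

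For stage (i), pick direct-sum splittings $\frv=\ker\mu_1\oplus\frv'$ and $\frw=\mathrm{im}\,\mu_1\oplus\frw'$; then $\mu_1|_{\frv'}$ is an isomorphism onto $\mathrm{im}\,\mu_1$, so as a chain complex $\frv\to\frw$ is the direct sum of the acyclic piece $\frv'\xrightarrow{\ \cong\ }\mathrm{im}\,\mu_1$ and the piece $\ker\mu_1\xrightarrow{\ 0\ }\frw'$. Projection onto and inclusion of the second summand are mutually quasi-inverse and, together with the obvious contracting homotopy on the acyclic summand, constitute transfer data for the standard homotopy transfer theorem for $L_\infty$-algebras. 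Transferring $(\mu_1,\mu_2,\mu_3)$ along it produces an equivalent $L_\infty$-algebra supported on $H_{-1}(\frv\to\frw)\xrightarrow{\ 0\ }H_0(\frv\to\frw)$; a quick degree count shows that the only transferred products landing in these two degrees are a bracket $\mu_2$ and a trilinear $\mu_3:\frw\wedge\frw\wedge\frw\to\frv$, so the result is again a (skeletal) 2-term $L_\infty$-algebra. Because different choices of splitting and homotopy yield $L_\infty$-isomorphic transferred structures, the equivalence class is well defined; this is also the content of the statement, due to Baez \& Crans, that every Lie 2-algebra is equivalent to a skeletal one.

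For stage (ii), set $\mu_1=0$ in \eqref{eq:homotopy_relations}. The first two relations become vacuous. In the third, the term $\mu_1(\mu_3(\cdots))$ drops out and what remains is the Jacobi identity for $\mu_2|_{\frw\wedge\frw}$, so $\frg:=\frw$ is a Lie algebra with bracket $[\cdot,\cdot]:=\mu_2$. In the fourth, the term $\mu_3(\mu_1(v),\cdots)$ drops out and, using the graded antisymmetry $\mu_2(v,w)=-\mu_2(w,v)$, the remainder is exactly the representation axiom for $w\cdot v:=\mu_2(w,v)$, making $\frv$ a $\frg$-module. With the Lie bracket and the action in hand, the fifth relation says precisely that $J:=\mu_3:\frg\wedge\frg\wedge\frg\to\frv$ is a Chevalley--Eilenberg $3$-cocycle, $\delta J=0$: rewriting $\mu_2(\mu_3(\cdots),w)=-w\cdot\mu_3(\cdots)$ turns the left- and right-hand sides of that relation into the ``action'' and ``bracket'' parts of $(\delta J)(w_1,w_2,w_3,w_4)$. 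Hence a skeletal 2-term $L_\infty$-algebra is the same datum as a triple $(\frg,\frv,J)$ with $\frg$ a Lie algebra, $\frv$ a representation, and $J\in Z^3(\frg,\frv)$, with $\mu_1=0$, $\mu_2$ the bracket and the action, and $\mu_3=J$, which is the advertised ``special'' form.

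For stage (iii), an $L_\infty$-morphism between skeletal 2-term $L_\infty$-algebras reduces, for degree reasons, to a chain map $\phi_1$ (a pair of linear maps $\frg\to\frg$ and $\frv\to\frv$) together with a single further component $\phi_2:\frg\wedge\frg\to\frv$. Since the differentials vanish, the $L_\infty$-morphism equation relating the $\mu_2$'s forces $\phi_1$ to intertwine bracket and action strictly, so two skeletal models can be equivalent only if their $(\frg,\frv,\text{action})$ data are isomorphic; after composing with such an isomorphism we may assume $\phi_1=\id$, and then the only surviving $L_\infty$-morphism equation is $J-J'=\delta\phi_2$ (up to the conventional signs), while conversely any $2$-cochain $\phi_2$ defines an equivalence. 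Thus, for fixed $(\frg,\frv,\text{action})$, equivalence classes of special 2-term $L_\infty$-algebras are in bijection with $H^3(\frg,\frv)$; combining with stage (i) and the fact, noted above, that $\frg=H_0$ and $\frv=H_{-1}$ with their induced structures are invariants of the equivalence class, we obtain the claimed one-to-one correspondence between equivalence classes of semistrict Lie 2-algebras and isomorphism classes of triples $(\frg,\frv,[J])$. The main obstacle is stage (i): one must ensure the homotopy transfer genuinely lands in the 2-category of 2-term $L_\infty$-algebras and produces an equivalent object, so that no data is lost and the class of $J$ is well defined; stages (ii) and (iii) are then a direct reading of the defining identities.
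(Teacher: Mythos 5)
The paper does not prove this proposition; it is quoted from Baez \& Crans, so there is no in-text argument to compare against. Your proposal is a correct reconstruction of their classification and follows the same overall strategy: reduce to a skeletal model with $\mu_1=0$, read the Lie algebra, module and Chevalley--Eilenberg $3$-cocycle off the homotopy Jacobi identities \eqref{eq:homotopy_relations}, and then check that morphisms between skeletal models shift $\mu_3$ by exactly a coboundary. Your stages (ii) and (iii) are a faithful unpacking of the identities and of the degree bookkeeping for the components $\phi_1$, $\phi_2$ of an $L_\infty$-morphism, and your observation that $\mu_{n\geq 4}$ and $\phi_{n\geq 3}$ vanish for degree reasons is what keeps everything inside the $2$-term world. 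The one place where you genuinely deviate from Baez \& Crans is stage (i): they perform the skeletal reduction by choosing a skeleton of the underlying category and transporting the Lie 2-algebra structure across the resulting equivalence, whereas you invoke homotopy transfer along a contraction onto the homology of $\frv\xrightarrow{\,\mu_1\,}\frw$. Both work; the transfer argument is cleaner about why the reduced structure is canonical up to $L_\infty$-isomorphism, but it quietly uses two facts you should at least cite rather than assert: that equivalence in the 2-category of semistrict Lie 2-algebras corresponds, under the Baez--Crans functors $\Phi$, $\Psi$, precisely to $L_\infty$-morphisms whose $\phi_1$ is a quasi-isomorphism, and that homotopy transfer produces an object equivalent in that sense. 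Neither is a gap in substance --- both are standard and are proved in the cited reference --- but as written they are the load-bearing unproved steps of your stage (i).
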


Semistrict Lie 2-algebras can be restricted further to obtain strict Lie 2-algebras.

\begin{definition}
 A \uline{strict Lie 2-algebra} is a weak Lie 2-algebra with trivial alternator and trivial Jacobiator.
\end{definition}

Our above discussion immediately implies that strict Lie 2-algebras are equivalent to 2-term $L_\infty$-algebras with trivial product $\mu_3$, which in turn, can be encoded in a differential crossed module. 

\begin{definition}
 The \uline{differential crossed module} of a crossed module of Lie groups is obtained by applying the tangent functor to the crossed module.
\end{definition}

\noindent
In particular, given a crossed module of Lie groups $(\sH\xrightarrow{\,\dpar\,}\sG,\acton)$, the tangent functor yields a differential crossed module\footnote{Our notation does not distinguish between the maps $\dpar$, $\acton$ and their differentials.} $(\frh\xrightarrow{\,\dpar\,}\frg,\acton)$, where $\frh:=\sLie(\sH)$ and $\frg:=\sLie(\sG)$. The maps $\dpar$ and $\acton$ satisfy
\begin{equation}\label{eq:peiffer}
 \dpar(X\acton Y)\ =\ [X,\dpar(Y)]\eand \dpar(Y_1)\acton Y_2\ =\ [Y_1,Y_2]~,
\end{equation}
where $X\in\frg$ and $Y,Y_{1,2}\in \frh$.

The differential crossed module corresponding to a 2-term $L_\infty$-algebra $\frv\xrightarrow{\,\mu_1\,} \frw$ with trivial $\mu_3$ is obtained by identifying $\frh$, $\frg$, and $\dpar$ with $\frv$, $\frw$, and $\mu_1$ as well as 
\begin{equation}
 {}[w_1,w_2]\ :=\ \mu_2(w_1,w_2)~,~~~v\acton w\ :=\ \mu_2(v,w)\eand [v_1,v_2]\ :=\ \mu_2(\mu_1(v_1),v_2)
\end{equation}
for $v_1,v_2,v\in \frv=\frh$ and $w_1,w_2,w\in \frw=\frg$. This identification is readily inverted.

\section{Principal 2-bundles with Lie 2-groups}

We come now to the discussion of principal 2-bundles with weak structure 2-groups over smooth manifolds. An earlier description of general 2-bundles from a slightly different point of view can be found in Bartels \cite{Bartels:2004aa}. In the following, let $X$ be a smooth manifold and let $\frU=\{U_a\}$ be a covering of $X$.  

\subsection{Principal bundles as functors}

Recall that a \v Cech $p$-cochain with values in a group $\sG$ on $X$ relative to the covering $\frU$ is a set of smooth $\sG$-valued functions on all non-empty intersections $U_{a_0}\cap \cdots \cap U_{a_{p}}$.\footnote{If not stated otherwise, we shall always assume that intersections of patches are non-empty from now on.} We then give the following definition.

\begin{definition}
 A  \uline{\v Cech 1-cocycle} is a \v Cech 1-cochain $\{g_{ab}\}$ consisting of smooth maps  $g_{ab}:U_a\cap U_b\to\sG$  such that
\begin{equation}\label{eq:CocycConPB}
g_{ab}g_{bc}\ =\ g_{ac}\eon U_a\cap U_b\cap U_c~.
\end{equation}
 Two \v Cech 1-cocycles $\{g_{ab}\}$ and $\{\tilde g_{ab}\}$ are \uline{cohomologous} or \uline{equivalent} if and only if there is a \v Cech 0-cochain $\{g_a\}$ consisting of smooth maps $g_a:U_a\to\sG$  such that
\begin{equation}\label{eq:EquivCocycPB}
 g_{ab}\ =\ g_a \tilde g_{ab} g_b^{-1}~.
\end{equation}
 The \uline{first \v Cech cohomology set}, denoted by $H^1(\frU,\sG)$, is defined as the set of \v Cech 1-cocycles modulo this equivalence.
\end{definition}

\noindent
\v Cech cohomology sets can be rendered independent of the covering by taking the direct limit over all coverings $\frU$ of $X$. We then write $H^1(X,\sG)$ instead of $H^1(\frU,\sG)$, that is,
\begin{equation}
  H^1(X,\sG)\ =\ \varinjlim_\frU H^1(\frU,\sG)~.
\end{equation}

Elements of $H^1(X,\sG)$ are also known as (sets of) transition functions of principal bundles with structure group $\sG$ (or principal $\sG$-bundles for short), and it is well-known that principal $\sG$-bundles over $X$ can be identified with an elements in $H^1(X,\sG)$. To allow for a categorification of this picture, we switch to a functorial description of principal bundles.

\begin{definition}\label{def:principal-bundle}
 A smooth \uline{principal bundle} $\Phi$ with structure group $\sG$ is a smooth functor $\Phi$ from the \v Cech groupoid to the Lie groupoid $\sB \sG$.\footnote{See Examples \ref{exa:CechGroupoid} and \ref{exa:Delooping} for the relevant definitions.} Any two principal bundles are called \uline{equivalent} if and only if there is a natural isomorphism between their defining functors.
\end{definition}

Definition \ref{def:principal-bundle} is well-known from the description of principal bundles in terms of classifying spaces \cite{Segal1968}. Explicitly, we have a functor
\begin{equation}
  \Phi\,:\, \check\CCC(\frU)\ \to\ \sB\sG
\end{equation}
and we set $e_a:=\Phi(x,a)$ and $g_{ab}:=\Phi(x,a,b)$. Because $\Phi$ is a functor, we immediately arrive at the cocycle conditions \eqref{eq:CocycConPB} as well as $\Phi(x,a,a)=\id_{\Phi(x,a)}=\unit_\sG\in \sG$. In addition, two functors $\Phi$ and $\Psi$ corresponding to principal bundles are equivalent if and only if there is a natural isomorphism $\alpha:\tilde \Phi\to \Phi$. Defining $e_a:=\Phi(x,a)$, $g_{ab}=\Phi(x,a,b)$, and $g_a:=\alpha_{(x,a)}:\tilde\Phi(x,a)\to \Phi(x,a)$, the following diagram is commutative:
\begin{equation}\label{eq:NatTrafoPB}
\xymatrixcolsep{4pc}
\myxymatrix{
\tilde e_b
\ar[r]^{\tilde g_{ab}}
\ar[d]_{g_b}
& \tilde e_a 
\ar[d]^{g_a}
\\
e_b
\ar[r]^{g_{ab}}
& e_a
} 
\end{equation}
In formul{\ae}, this is
\begin{equation}\label{eq:comm_rel_nt}
g_a \tilde g_{ab}\ =\ g_{ab} g_b~,
\end{equation}
which amounts to \eqref{eq:EquivCocycPB}. We thus arrive at the following statement, which motivates our Definition \ref{def:principal-bundle}.

\begin{prop}
 Denoting the set of equivalence classes of smooth functors between $\check\CCC(\frU)$ and $\sB \sG$ by $[\check\CCC(\frU)\to\sB \sG]$, we have
 \begin{equation}
  H^1(\frU,\sG)\ \cong\ [\check\CCC(\frU)\to\sB\sG]~.
 \end{equation}
\end{prop}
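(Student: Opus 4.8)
The plan is to exhibit explicit maps in both directions between \v Cech 1-cocycles and smooth functors $\check\CCC(\frU)\to\sB\sG$, to verify that each is compatible with the relevant equivalence relation, and then to observe that the two constructions are mutually inverse at the level of equivalence classes. This is essentially the content already foreshadowed in the discussion around \eqref{eq:CocycConPB}--\eqref{eq:comm_rel_nt}, so the argument is mostly a matter of assembling and checking those observations; I do not expect a genuine obstacle.

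First, the passage from functors to cocycles. Given a smooth functor $\Phi:\check\CCC(\frU)\to\sB\sG$, the target $\sB\sG$ has a single object $e$, so necessarily $\Phi(x,a)=e$ for all $(x,a)$, and the only data is $g_{ab}(x):=\Phi(x,a,b):e\to e$, i.e.\ a family of maps $g_{ab}:U_a\cap U_b\to\sG$ which are smooth precisely because $\Phi$ is declared smooth. Functoriality applied to $(x,a,b)\circ(x,b,c)=(x,a,c)$ gives $g_{ab}g_{bc}=g_{ac}$, so $\{g_{ab}\}$ is a \v Cech 1-cocycle in the sense of \eqref{eq:CocycConPB}; preservation of identities ($\Phi(x,a,a)=\id_e=\unit_\sG$) follows, and is anyway implied by the cocycle condition. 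Conversely, a \v Cech 1-cocycle $\{g_{ab}\}$ defines a functor sending every object to $e$ and every morphism $(x,a,b)$ to $g_{ab}(x)$; the cocycle identity is exactly functoriality, and smoothness of the $g_{ab}$ makes the functor smooth. These two assignments are visibly inverse to one another before passing to equivalence classes.

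Second, one checks that the assignments intertwine the two notions of equivalence. A natural isomorphism $\alpha:\tilde\Phi\Rightarrow\Phi$ is a family $g_a(x):=\alpha_{(x,a)}:e\to e$, i.e.\ a \v Cech 0-cochain $\{g_a\}$, and the naturality square \eqref{eq:NatTrafoPB} evaluated on the morphism $(x,a,b)$ reads $g_a\tilde g_{ab}=g_{ab}g_b$, which is precisely the coboundary relation \eqref{eq:EquivCocycPB}. Conversely, a \v Cech 0-cochain relating two cocycles as in \eqref{eq:EquivCocycPB} defines, componentwise, a natural transformation between the corresponding functors. Here one uses that $\sB\sG$, being the delooping of a group, is a groupoid, so every natural transformation between functors into it is automatically a natural isomorphism; hence Definition \ref{def:principal-bundle}'s requirement ``natural isomorphism'' imposes nothing beyond ``natural transformation''. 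Passing to equivalence classes therefore yields the claimed bijection of sets $H^1(\frU,\sG)\cong[\check\CCC(\frU)\to\sB\sG]$.

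The only points demanding a little care, none of which is more than a routine check, are: (i) recording that the single-object structure of $\sB\sG$ forces functors to be constant on objects, so the functorial data reduces exactly to the $\sG$-valued 1-cochain $\{g_{ab}\}$; (ii) the smoothness bookkeeping, i.e.\ matching ``smooth functor'' with the condition that the $g_{ab}$ (and, for equivalences, the $g_a$) be smooth maps on the relevant overlaps; and (iii) the groupoid remark above ensuring ``natural isomorphism'' and ``natural transformation'' coincide for this target.
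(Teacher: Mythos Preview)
Your proof is correct and follows essentially the same approach as the paper: the discussion preceding the proposition already unpacks functors $\check\CCC(\frU)\to\sB\sG$ into \v Cech 1-cocycles via \eqref{eq:CocycConPB} and natural isomorphisms into \v Cech coboundaries via \eqref{eq:NatTrafoPB}--\eqref{eq:comm_rel_nt}, and the proposition is stated as the immediate conclusion of that discussion. Your write-up is slightly more explicit in noting the inverse construction and the groupoid observation that natural transformations into $\sB\sG$ are automatically isomorphisms, but there is no substantive difference.
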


Other conventional definitions are now also straightforwardly rephrased.
\begin{definition}
 A principal bundle is called \uline{trivial} if and only if its defining functor is equivalent to the functor
 \begin{equation}
  \Phi(x,a)\ =\ e_a\eand \Phi(x,a,b)\ =\ \unit_\sG~.
 \end{equation}
\end{definition}

\noindent
Concretely, a principal bundle is trivial whenever there is a natural isomorphism $\alpha=\{g_a\}$ such that
\begin{equation}
 g_a\ =\ g_{ab} g_b~.
\end{equation}

Finally, let  $\phi:X\rightarrow Y$ be a smooth map between two smooth manifolds $X$ and $Y$. Let $\frU_Y$  be a covering of $Y$. Then we can construct a covering $\frU_X$ of $X$ from the pre-images of the  patches in $\frU_Y$ under $\phi$. This yields a morphism of groupoids $\check\CCC(\frU_X)\rightarrow \check\CCC(\frU_Y)$.

\begin{definition}\label{def:pullback-1}
 The \uline{pullback} of a principal bundle $\Phi$ over $Y$ with respect to an open covering $\frU_Y$ along a smooth map $\phi:X\rightarrow Y$ is the composition $\Phi\circ \phi_{\frU}$, where $\phi_{\frU}$ is the groupoid morphism induced by $\phi$.
\end{definition}

\begin{definition}\label{def:restriction-1}
 The \uline{restriction} of a principal bundle $\Phi$ over a manifold $X$ to a submanifold $Y$ of $X$ is the pullback of $\Phi$ along the embedding map $Y\embd X$.
\end{definition}

\subsection{Principal 2-bundles as 2-functors}

The reformulation of principal bundles with structure group $\sG$ in terms of functors between the \v Cech groupoid and the Lie groupoid $\sB \sG$ is a good starting point for categorifying the notion of principal bundles. We can simply regard the \v Cech groupoid as an $n$-groupoid and take an $n$-functor to a Lie $n$-groupoid with a single 0-cell. In the following, we shall develop the case $n=2$ in detail. Note that our discussion will first centre around weak principal 2-bundles which we shall define in terms of weak 2-functors. In the following, we shall consider the delooping $\sB\CCG=(\{e\},M,N)$ of a weak Lie 2-group $\CCG=(M,N)$, which is a weak Lie 2-groupoid with a single object $e$. As in Section \ref{sec:Prelims}, we shall denote horizontal and vertical composition in $\sB\CCG$ by $\otimes$ and $\circ$, respectively. 

Principal 2-bundles will be described by \v Cech cocycles with values in $\CCG$. We therefore start by giving the following definition. 
\begin{definition}
 A \uline{degree-$p$ \v Cech cochain} with values in a weak Lie 2-group $\CCG=(M,N)$ consists of a smooth $M$-valued degree-$(p-1)$ \v Cech cochain $\{m_{a_0\cdots a_{p-1}}\}$, a smooth $N$-valued degree-$p$ \v Cech cochain $\{n_{a_0\cdots a_{p}}\}$, and a smooth $N$-valued degree-$(p-2)$ \v Cech cochain $\{n_{a_0\cdots a_{p-2}}\}$.
\end{definition}
In the following, we are interested in the case $p=2$, for which we have a triple
\begin{equation}
 (\{m_{ab}\},\{n_{abc}\},\{n_a\})~.
\end{equation}
These cochains generalise the usual \v Cech cochains appearing in the definition of an ordinary principal bundle in the way that is familiar from strict principal 2-bundles: the $\{m_{ab}\}$ are generalised transition functions on overlaps, the $\{n_{abc}\}$ are the gluing isomorphisms on triple overlaps and the $\{n_a\}$ are the isomorphisms between the unit in $M$ and the transition functions $\{m_{aa}\}$.

To derive the explicit cocycle and coboundary conditions appropriate for weak Lie 2-groups, we again employ the functorial approach.

\begin{definition}\label{def:principal-2-bundle}
A smooth \uline{weak principal 2-bundle} $\Phi$ with structure 2-group $\CCG$ relative to the covering $\frU$ is a smooth weak 2-functor $\Phi$ from the \v Cech 2-groupoid $\check\CCC(\frU)$ to $\sB\CCG$.
\end{definition}

\noindent
Let us be more specific. We have a weak 2-functor\footnote{cf.\ Definition \ref{def:weak-2-functor}}
\begin{equation}
  \Phi\,:\, \check\CCC(\frU)\ \to\ \sB\CCG
\end{equation}
consisting of a function $\Phi_0(x,a)$, functors $\Phi_1(x,a,b)$ and 2-cells $\Phi_2$. Note that the 0-cells of $\sB\CCG=(\{e\},M,N)$ and the 2-cells of $\check\CCC(\frU)$ are trivial and we shall denote them by $e$. We can therefore specify $\Phi$ in terms of constant functions $e_a:=\Phi_0(x,a):U_a\rightarrow e$, functions $m_{ab}:=\Phi_1(x,a,b)|_M: U_a\cap U_b\rightarrow M$, and constant functions $e_{ab}:=\Phi_1(x,a,b)|_N:e \rightarrow \id_{\id_e}$ together with invertible functions $n_{abc}:U_a\cap U_b\cap U_c\rightarrow N$ and $n_a:U_a\rightarrow N$ describing the 2-cell $\Phi_2$. Because $\id_{(x,a)}=(x,a,a)$, we have by definition $\Phi_1(\id_{(x,a)})=\Phi_1(x,a,a)=m_{aa}$. The fact that $\Phi_1$ is a functor implies $\id_{m_{ab}}=\id_{\Phi_1(x,a,b)}=\Phi_1(\id_{(x,a,b)})$. Finally, with $\Phi_1\big((x,a,b)\circ (x,b,c)\big)\ =\ \Phi_1(x,a,c)=m_{ac}$, we have the natural isomorphisms
\begin{equation}\label{eq:CocycConWeakPB-I}
\begin{aligned}
n_{abc}\,:\, m_{ab}\otimes m_{bc}\ &\Rightarrow\ m_{ac}~,\\
n_a\,:\, \id_{e_a} \ &\Rightarrow\  m_{aa}~,
\end{aligned}
\end{equation}
with $\id_{e_a} \in M$. 

The following diagrams, which arise from \eqref{eq:diagrams_2-functors} with $\sfa$, $\sfr$, and $\sfl$ being trivial in $\check \CCC(\frU)$, are commutative:
\begin{subequations}
\begin{equation}
\myxymatrix{
m_{ab}\otimes (m_{bc} \otimes m_{cd}) \ar@{=>}[rr]^{\sfa^{-1}_{m_{ab},m_{bc},m_{cd}}}  \ar@{=>}[d]_{\id_{m_{ab}}\otimes n_{bcd}} & & (m_{ab}\otimes m_{bc}) \otimes m_{cd}  \ar@{=>}[d]^{n_{abc}\otimes\id_{m_{cd}}}\\
m_{ab}\otimes m_{bd} \ar@{=>}[r]_{n_{abd}} & m_{ad} & m_{ac}\otimes m_{cd}  \ar@{=>}[l]_{n_{acd}} 
}
\end{equation}
and
\begin{equation}\label{diag:CocycConWeakPB-b}
\xymatrixcolsep{4pc}\myxymatrix{
m_{ab}\otimes\id_{e_b}  \ar@{=>}[r]^{\id_{m_{ab}}\otimes n_b}  \ar@{=>}[dr]_{\sfl_{m_{ab}}}  & m_{ab}\otimes m_{bb}  \ar@{=>}[d]^{n_{abb}}\\
& m_{ab}
} \eand
\myxymatrix{
\id_{e_a} \otimes m_{ab} \ar@{=>}[r]^{n_a\otimes\id_{m_{ab}}}  \ar@{=>}[dr]_{\sfr_{m_{ab}}}  & m_{aa}\otimes m_{ab}  \ar@{=>}[d]^{n_{aab}}\\
& m_{ab}
}
\end{equation}
\end{subequations}
In formul{\ae}, this reads as
\begin{subequations}\label{eq:CocycConWeakPB-II}
\begin{equation}\label{eq:CocycConWeakPB-a}
   n_{acd}\circ(n_{abc}\otimes\id_{m_{cd}})\circ \sfa^{-1}_{m_{ab},m_{bc},m_{cd}}\ =\ 
   n_{abd}\circ (\id_{m_{ab}}\otimes n_{bcd})
  \end{equation} 
and
\begin{equation}\label{eq:CocycConWeakPB-b}
           n_{abb}\circ (\id_{m_{ab}}\otimes n_b)\ =\ \sfl_{m_{ab}}\eand
            n_{aab}\circ (n_a\otimes\id_{m_{ab}})\ =\   \sfr_{m_{ab}}~.
\end{equation}
\end{subequations}

\begin{definition}
 A $\CCG$-valued degree-2 \v Cech cochain $(\{m_{ab}\},\{n_{abc}\},\{n_a\})$ that satisfies the equations \eqref{eq:CocycConWeakPB-I} and \eqref{eq:CocycConWeakPB-II} is called a $\CCG$-valued \uline{degree-2 \v Cech cocycle}. The equations \eqref{eq:CocycConWeakPB-I} and \eqref{eq:CocycConWeakPB-II}  are called the \uline{cocycle conditions} of a weak principal 2-bundle $\Phi$ defined by $(\{m_{ab}\},\{n_{abc}\},\{n_a\})$ and the degree-2 \v Cech cocycle $(\{m_{ab}\},\{n_{abc}\},\{n_a\})$ is called its \uline{transition functions}.
\end{definition}

Pushing the analogy with the case of principle bundles further, we derive equivalence relations between weak principal 2-bundles from natural 2-transformations.

\begin{definition}\label{eq:isomorphism-principal-2-bundles}
 Any two weak principal 2-bundles are called \uline{equivalent} if and only if there is a smooth weak natural 2-transformation between their defining weak 2-functors.
\end{definition}

\noindent
Explicitly, for weak principal 2-bundles $\Phi$ and $\tilde \Phi$, such a natural 2-transformation $\alpha:\tilde\Phi\to\Phi$ is given by the following data: we have 1-cells $\{m_a\}$ and 2-cells $\{n_{ab}\}$,
\begin{equation}\label{eq:EquivCocycWeakPB-I}
\begin{aligned}
 m_a\, :\, \tilde e_a\ &\to\ e_a~,\\
 n_{ab}\, :\, m_{ab}\otimes m_b\ &\Rightarrow\ m_a\otimes \tilde m_{ab}~,
\end{aligned}
\end{equation}
defined by the diagram 
\begin{equation}
 \xymatrixcolsep{5pc}
\myxymatrix{
 e_b \ar@{->}[r]^{m_{ab}} \ar@{=>}[rd]^{n_{ab}~} & e_a \\
 \tilde e_b \ar@{->}[u]^{m_b} \ar@{->}[r]_{\tilde m_{ab}}  & \tilde e_a\ar@{->}[u]_{m_a} 
}
\end{equation}
 The coherence conditions for natural 2-transformations require also the diagrams
\begin{subequations}
\begin{equation}\label{diag:EquivCocycWeakPB-a}
\xymatrixcolsep{3.5pc}\myxymatrix{
  m_{ab}\otimes  (m_{b}\otimes  \tilde m_{bc})  \ar@{=>}[r]^{\sfa^{-1}_{m_{ab},m_b,\tilde m_{bc}}}  & (m_{ab}\otimes  m_{b})\otimes  \tilde m_{bc}   \ar@{=>}[r]^{n_{ab}\otimes \id_{\tilde m_{bc}}}  & (m_a\otimes \tilde m_{ab})\otimes \tilde m_{bc}  \ar@{=>}[d]^{\sfa_{m_{a},\tilde m_{ab},\tilde m_{bc}}}\\
 m_{ab}\otimes (m_{bc}\otimes  m_c)  \ar@{=>}[u]^{\id_{m_{ab}}\otimes    n_{bc}} & &  m_a\otimes (\tilde m_{ab}\otimes \tilde m_{bc})  \ar@{=>}[d]^{\id_{m_a}\otimes \tilde  n_{abc}}\\
 (m_{ab}\otimes m_{bc})\otimes  m_c  \ar@{=>}[r]_{~~~~~~n_{abc}\otimes \id_{m_c}}  
 \ar@{=>}[u]^{\sfa_{m_{ab},m_{bc},m_c}}  & m_{ac}\otimes  m_c  \ar@{=>}[r]_{n_{ac}} & m_a\otimes  \tilde m_{ac}
 }
\end{equation}
and
\begin{equation}\label{diag:EquivCocycWeakPB-b}
\myxymatrix{
  \id_{e_a}\otimes  m_a  \ar@{=>}[r]^{~~~~~\sfr_{m_a}}  \ar@{=>}[d]_{n_a\otimes \id_{m_a}} & m_a  \ar@{=>}[r]^{\sfl^{-1}_{m_a}~~~~~} & m_a\otimes  \id_{e_a}  \ar@{=>}[d]^{\id_{m_a}\otimes \tilde n_a}  \\
  m_{aa}\otimes  m_a  \ar@{=>}[rr]_{n_{aa}}  & & m_a\otimes \tilde m_{aa}
  }
\end{equation}
\end{subequations}
to be commutative. In formul{\ae}, this amounts to
\begin{subequations}\label{eq:EquivCocycWeakPB-II}
\begin{equation}\label{eq:EquivCocycWeakPB-a}
\begin{aligned}
  n_{ac}\circ  (n_{abc}\otimes \id_{m_c})\ =\ (\id_{m_a}\otimes \tilde n_{abc})\circ \sfa_{m_a,\tilde m_{ab},\tilde m_{bc}}\circ (n_{ab}\otimes \id_{\tilde m_{bc}})\,\circ  \\
  \circ \, \sfa^{-1}_{m_{ab},m_b,\tilde m_{bc}}\circ (\id_{m_{ab}}\otimes  n_{bc})\circ \sfa_{m_{ab},m_{bc},m_c}
\end{aligned}
\end{equation}
and
\begin{equation}\label{eq:EquivCocycWeakPB-b}
 n_{aa}\circ (n_a\otimes \id_{m_a})\ =\ (\id_{m_a}\otimes \tilde n_a)\circ \sfl^{-1}_{m_a}\circ \sfr_{m_a}~.
\end{equation}
\end{subequations}

\begin{definition}
 Any two $\CCG$-valued degree-2 \v Cech cocycles $(\{m_{ab}\},\{n_{abc}\},\{n_a\})$ and\linebreak $(\{\tilde m_{ab}\},\{\tilde n_{abc}\},\{\tilde n_a\})$ are called \uline{equivalent} or \uline{cohomologous} if and only if there is a $\CCG$-valued degree-1 \v Cech cochain $(\{m_{a}\},\{n_{ab}\})$ such that the equations \eqref{eq:EquivCocycWeakPB-I} and \eqref{eq:EquivCocycWeakPB-II} are satisfied. These equations are called the \uline{coboundary conditions} for a weak principal 2-bundle $\Phi$ defined by $(\{m_{ab}\},\{n_{abc}\},\{n_a\})$, and, slightly deviating from the usual nomenclature, the degree-1 \v Cech cochain $(\{m_{a}\},\{n_{ab}\})$ is called a \uline{degree-2 \v Cech coboundary}.
\end{definition}

\begin{definition}
 A weak principal 2-bundle that is  equivalent to the weak principal 2-bundle specified by the functor
  \begin{equation}
  \{m_{ab}=\id_{e_a}\}~,~~~\{n_{abc}=\sfl_{\id_{e_a}}=\sfr_{\id_{e_a}}\}~,\eand \{n_a=\id_{\id_{e_a}}\}
 \end{equation}
  is called \uline{trivial}.
\end{definition}

\noindent
We shall give explicit formul{\ae} for the transition functions of trivial bundles in the case of {\it semistrict} principal 2-bundles later on. 

Note that for strict 2-bundles, the 2-cells $\{n_a\}$ can always be chosen to be trivial, as was done, for instance, in \cite{Saemann:2012uq,Saemann:2013pca}. The same is true here, as we verify now.

\begin{lemma}\label{lem:construct_equiv_2-bundle}
 Consider transition functions $(\{m_{ab}\},\{n_{abc}\},\{n_a\})$ of a weak principal 2-bundle $\Phi$. The triple $(\{\tilde m_{ab}\},$ $\{\tilde n_{abc}\},\{\tilde n_a\})$ which agrees with that of $\Phi$ except for 
 \begin{equation}
  \{\tilde m_{aa}=\id_{e_a}\}~,~~~\{\tilde n_{aab}=\sfr_{\tilde m_{ab}}\}~,~~~
  \{\tilde n_{abb}=\sfl_{\tilde m_{ab}}\}~,\eand \{\tilde n_a=\id_{\id_{e_a}}\}~,
 \end{equation}
 defines another weak principal 2-bundle $\tilde \Phi$. In addition, these equations imply
 \begin{equation}
  \{\tilde n_{aaa}=\sfr_{\id_{e_a}}=\sfl_{\id_{e_a}}\}  
 \end{equation}
\end{lemma}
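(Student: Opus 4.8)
The plan is to realise the triple $(\{\tilde m_{ab}\},\{\tilde n_{abc}\},\{\tilde n_a\})$ as the \emph{transport} of the defining data of $\Phi$ along the unit-comparison $2$-isomorphisms $n_a\colon\id_{e_a}\Rightarrow m_{aa}$, and then to verify the cocycle conditions \eqref{eq:CocycConWeakPB-I}--\eqref{eq:CocycConWeakPB-II} for the transported data. To this end, for each pair $(p,q)$ I set $\nu_{pq}:=n_p$ if $p=q$ and $\nu_{pq}:=\id_{m_{pq}}$ otherwise, together with $\tilde m_{pp}:=\id_{e_p}$ and $\tilde m_{pq}:=m_{pq}$ for $p\neq q$; in each case $\nu_{pq}$ is an invertible smooth $2$-cell $\tilde m_{pq}\Rightarrow m_{pq}$. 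I then define
\begin{equation}
 \tilde n_{abc}\ :=\ \nu_{ac}^{-1}\circ n_{abc}\circ(\nu_{ab}\otimes\nu_{bc})\eand\tilde n_a\ :=\ n_a^{-1}\circ n_a\ =\ \id_{\id_{e_a}}~,
\end{equation}
so that $\tilde n_{abc}$ is a (smooth, invertible) $2$-cell $\tilde m_{ab}\otimes\tilde m_{bc}\Rightarrow\tilde m_{ac}$. For pairwise distinct $a,b,c$ all occurring $\nu$'s are identities, so $\tilde n_{abc}=n_{abc}$; for a repeated index among $a,b,c$ the corresponding $n_a$ merely renormalises the source, resp.\ target, of $n_{abc}$, which is forced by the requirement that $\tilde n_{abc}$ have the displayed type. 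Smoothness of all data is inherited.

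The formulas stated in the lemma then drop out of the unit cocycle conditions \eqref{eq:CocycConWeakPB-b} for $\Phi$: for $a\neq b$ one has $\nu_{aa}=n_a$ and $\nu_{ab}=\nu_{ba}=\id_{m_{ab}}$, whence $\tilde n_{aab}=n_{aab}\circ(n_a\otimes\id_{m_{ab}})=\sfr_{m_{ab}}=\sfr_{\tilde m_{ab}}$ and $\tilde n_{abb}=n_{abb}\circ(\id_{m_{ab}}\otimes n_b)=\sfl_{m_{ab}}=\sfl_{\tilde m_{ab}}$, while $\tilde m_{aa}=\id_{e_a}$ and $\tilde n_a=\id_{\id_{e_a}}$ hold by construction. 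It remains to check that $\tilde\Phi$ is a weak $2$-functor, i.e.\ that \eqref{eq:CocycConWeakPB-I} and \eqref{eq:CocycConWeakPB-II} hold for the tilded data. Condition \eqref{eq:CocycConWeakPB-I} is immediate, and the unit conditions \eqref{eq:CocycConWeakPB-b} for $\tilde\Phi$ hold tautologically, since $\id_{\tilde m_{ab}}\otimes\tilde n_b=\id_{\tilde m_{ab}\otimes\id_{e_b}}$ ($\otimes$ being a bifunctor and $\tilde n_b$ trivial), so that $\tilde n_{abb}\circ(\id_{\tilde m_{ab}}\otimes\tilde n_b)=\tilde n_{abb}=\sfl_{\tilde m_{ab}}$, and symmetrically for the $\sfr$-equation.

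The single substantive point is the associativity condition \eqref{eq:CocycConWeakPB-a} for $\tilde\Phi$. My plan here is: substitute the transport formula for $\tilde n_{acd},\tilde n_{abc},\tilde n_{abd},\tilde n_{bcd}$ into both sides; use the bifunctoriality of $\otimes$ and the interchange law \eqref{eq:interchange_law} to cancel the interior pair $\nu_{ac}^{\pm1}$ (on the left) and $\nu_{bd}^{\pm1}$ (on the right) and to collect the remaining $\nu$'s onto the three outer legs, reaching $\nu_{ad}^{-1}\circ n_{acd}\circ(n_{abc}\otimes\id_{m_{cd}})\circ\big((\nu_{ab}\otimes\nu_{bc})\otimes\nu_{cd}\big)\circ\sfa^{-1}_{\tilde m_{ab},\tilde m_{bc},\tilde m_{cd}}$ on the left and $\nu_{ad}^{-1}\circ n_{abd}\circ(\id_{m_{ab}}\otimes n_{bcd})\circ\big(\nu_{ab}\otimes(\nu_{bc}\otimes\nu_{cd})\big)$ on the right; then apply the naturality of the associator, $\big((\nu_{ab}\otimes\nu_{bc})\otimes\nu_{cd}\big)\circ\sfa^{-1}_{\tilde m_{ab},\tilde m_{bc},\tilde m_{cd}}=\sfa^{-1}_{m_{ab},m_{bc},m_{cd}}\circ\big(\nu_{ab}\otimes(\nu_{bc}\otimes\nu_{cd})\big)$, and cancel the invertible factors $\nu_{ad}^{-1}$ and $\big(\nu_{ab}\otimes(\nu_{bc}\otimes\nu_{cd})\big)$; what remains is exactly \eqref{eq:CocycConWeakPB-a} for $\Phi$, which holds by hypothesis. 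I expect the only real obstacle to be the bookkeeping in this reduction---repeatedly tracking, via \eqref{eq:interchange_law}, which of the structural $2$-cells cancel in pairs and which must be slid through the associator. (As a sanity check one may instead verify \eqref{eq:CocycConWeakPB-a} for $\tilde\Phi$ according to the coincidence pattern of $(a,b,c,d)$: the distinct case is \eqref{eq:CocycConWeakPB-a} for $\Phi$, the case $b=c$ is the triangle identity, and the cases $a=b$ and $c=d$ reduce to the naturality of $\sfr$, resp.\ $\sfl$, together with Kelly's coherence identities from the proposition above.)

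Finally, the asserted consequence for $\tilde n_{aaa}$ follows by taking $b=a$ in the two formulas above, giving $\tilde n_{aaa}=\sfr_{\tilde m_{aa}}=\sfr_{\id_{e_a}}$ and $\tilde n_{aaa}=\sfl_{\tilde m_{aa}}=\sfl_{\id_{e_a}}$; these are consistent because $\sfr_{\id_{e_a}}=\sfl_{\id_{e_a}}$ by Kelly's coherence. Equivalently, starting from $\tilde n_{aaa}=n_a^{-1}\circ n_{aaa}\circ(n_a\otimes n_a)$, one writes $n_a\otimes n_a=(n_a\otimes\id_{m_{aa}})\circ(\id_{\id_{e_a}}\otimes n_a)$ by \eqref{eq:interchange_law}, uses $n_{aaa}\circ(n_a\otimes\id_{m_{aa}})=\sfr_{m_{aa}}$ from \eqref{eq:CocycConWeakPB-b}, and then the naturality of $\sfr$, $\sfr_{m_{aa}}\circ(\id_{\id_{e_a}}\otimes n_a)=n_a\circ\sfr_{\id_{e_a}}$, to obtain $\tilde n_{aaa}=n_a^{-1}\circ n_a\circ\sfr_{\id_{e_a}}=\sfr_{\id_{e_a}}$; the analogous computation with the $\sfl$-equation of \eqref{eq:CocycConWeakPB-b} gives $\tilde n_{aaa}=\sfl_{\id_{e_a}}$.
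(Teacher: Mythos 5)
Your proof is correct. The paper itself disposes of this lemma in one line --- ``one readily checks that the cocycle conditions are satisfied for any possible doubling of indices'' --- i.e.\ it implicitly delegates a case-by-case verification over the coincidence patterns of the indices to the reader. You instead package all cases into a single uniform argument: you exhibit the tilded data as the conjugate $\tilde n_{abc}=\nu_{ac}^{-1}\circ n_{abc}\circ(\nu_{ab}\otimes\nu_{bc})$ of the original data by the invertible $2$-cells $\nu_{pq}$ (equal to $n_p$ on the diagonal and to $\id_{m_{pq}}$ off it), so that the pentagon-type cocycle condition for $\tilde\Phi$ reduces, via the interchange law and naturality of $\sfa$, to that for $\Phi$ pre- and post-composed with invertible $2$-cells, while the unit conditions for $\tilde\Phi$ become tautologies and the stated formulas for $\tilde n_{aab}$, $\tilde n_{abb}$, $\tilde n_{aaa}$ drop out of the unit cocycle conditions of $\Phi$, naturality of the unitors, and Kelly's identity $\sfl_{\id_e}=\sfr_{\id_e}$. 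I checked the interchange-law manipulations and the sources and targets of all $2$-cells involved; they are consistent with the paper's conventions (in particular with its convention $\sfr:\id_a\otimes x\Rightarrow x$, $\sfl:x\otimes\id_b\Rightarrow x$). What your route buys is that it handles all index coincidences at once and makes transparent \emph{why} the modified data is again a cocycle (it is the original cocycle transported along unit-comparison isomorphisms); it also foreshadows the coboundary used in Proposition \ref{prop:Normalisation}. The cost is only the interchange/naturality bookkeeping you already flag, all of which is sound.
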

\begin{proof}
 One readily checks that the cocycle conditions \eqref{eq:CocycConWeakPB-I} and \eqref{eq:CocycConWeakPB-II} are satisfied for any possible doubling of indices.
\end{proof}

\begin{definition}\label{def:normalisation}
 For every weak principal 2-bundle $\Phi$, the weak principal 2-bundle $\tilde \Phi$ obtained from the construction of Lemma \ref{lem:construct_equiv_2-bundle} is called the \uline{normalisation of $\Phi$}.
\end{definition}

\begin{prop}\label{prop:Normalisation}
 Every weak principal 2-bundle is equivalent to its normalisation.
\end{prop}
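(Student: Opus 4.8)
The plan is to exhibit an explicit $\CCG$-valued degree-$2$ \v Cech coboundary $(\{m_a\},\{n_{ab}\})$ witnessing the equivalence of $\Phi=(\{m_{ab}\},\{n_{abc}\},\{n_a\})$ with its normalisation $\tilde\Phi$, and then to verify the coboundary conditions \eqref{eq:EquivCocycWeakPB-I} and \eqref{eq:EquivCocycWeakPB-II}. Since $\tilde\Phi$ has the same $0$-cells, $\tilde e_a=e_a$, and coincides with $\Phi$ on all off-diagonal data, $\tilde m_{ab}=m_{ab}$ and $\tilde n_{abc}=n_{abc}$ for pairwise distinct indices, the natural candidate is $m_a:=\id_{e_a}$ together with
\begin{subequations}
\begin{equation}
 n_{ab}\ :=\ \sfr_{m_{ab}}^{-1}\circ\sfl_{m_{ab}}\,:\,m_{ab}\otimes\id_{e_b}\ \Rightarrow\ \id_{e_a}\otimes m_{ab}\efor a\neq b
\end{equation}
and
\begin{equation}
 n_{aa}\ :=\ n_a^{-1}\otimes\id_{\id_{e_a}}\,:\,m_{aa}\otimes\id_{e_a}\ \Rightarrow\ \id_{e_a}\otimes\id_{e_a}~.
\end{equation}
\end{subequations}
These manifestly have the sources and targets prescribed by \eqref{eq:EquivCocycWeakPB-I}, and since they are built from the unitors of $\CCG$ and the invertible $2$-cells $n_a$, they are themselves invertible; hence $\alpha=(\{m_a\},\{n_{ab}\})$ is a smooth weak natural $2$-transformation, and it only remains to check the two equations in \eqref{eq:EquivCocycWeakPB-II}.

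Equation \eqref{eq:EquivCocycWeakPB-b} is immediate: inserting $m_a=\id_{e_a}$ and $\tilde n_a=\id_{\id_{e_a}}$ reduces its right-hand side to $\sfl_{\id_{e_a}}^{-1}\circ\sfr_{\id_{e_a}}$, which is an identity $2$-cell by the Kelly coherence relation $\sfl_{\id_e}=\sfr_{\id_e}$ \cite{Kelly:1964:397}, while the interchange law \eqref{eq:interchange_law} collapses the left-hand side $n_{aa}\circ(n_a\otimes\id_{\id_{e_a}})=(n_a^{-1}\circ n_a)\otimes\id_{\id_{e_a}}$ to the same thing. For \eqref{eq:EquivCocycWeakPB-a} I would split into cases according to how many of $a,b,c$ coincide. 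When they are pairwise distinct, both sides are composites of the structural isomorphisms $\sfa,\sfl,\sfr$ with the single $2$-cell $n_{abc}=\tilde n_{abc}$; pulling $n_{abc}$ to the outer edge by naturality of $\sfl$ and $\sfr$, the identity becomes the assertion that two structural isomorphisms $(m_{ab}\otimes m_{bc})\otimes\id_{e_c}\Rightarrow\id_{e_a}\otimes m_{ac}$ agree, which is Mac Lane's coherence theorem \cite{0387984038}. When two or three of the indices coincide, one additionally invokes the doubled-index cocycle conditions \eqref{eq:CocycConWeakPB-b} of $\Phi$, $n_{abb}\circ(\id_{m_{ab}}\otimes n_b)=\sfl_{m_{ab}}$ and $n_{aab}\circ(n_a\otimes\id_{m_{ab}})=\sfr_{m_{ab}}$, to rewrite the occurrences of $n_{aab}$, $n_{abb}$ and $n_{aaa}$, and combines them with the defining formul{\ae} $\tilde n_{aab}=\sfr_{m_{ab}}$, $\tilde n_{abb}=\sfl_{m_{ab}}$, $\tilde n_a=\id$ of the normalisation; after cancelling the matching factors $\sfr_{m_{ab}}^{\pm1}$, $\sfl_{m_{ab}}^{\pm1}$, both sides reduce --- via the interchange law, naturality of $\sfa$, and the triangle-type Kelly identities such as $(\id_x\otimes\sfl_y)\circ\sfa_{x,y,\id_e}=\sfl_{x\otimes y}$ --- to a common expression. (For instance, in the case $a=b\neq c$ the left-hand side equals $(n_a^{-1}\otimes\id_{m_{ac}})\circ\sfl_{m_{aa}\otimes m_{ac}}$ and the right-hand side collapses to $(n_a^{-1}\otimes\sfl_{m_{ac}})\circ\sfa_{m_{aa},m_{ac},\id_{e_c}}$, and the two coincide by this Kelly identity.)

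The main obstacle is purely organisational: verifying \eqref{eq:EquivCocycWeakPB-a} in the degenerate index configurations amounts to a sequence of diagram chases in which one must keep careful track of sources, targets and the order of vertical composition while alternately applying the interchange law, the naturality squares of $\sfa,\sfl,\sfr$, the Kelly coherence identities and --- the only genuinely non-formal ingredient --- the doubled-index cocycle conditions of $\Phi$, which are exactly what is needed to match the doubled-index data built into $\tilde\Phi$. Conceptually, the coboundary $\alpha$ is simply the identity natural $2$-transformation up to the structural isomorphisms, so the entire argument is an instance of Mac Lane coherence decorated by the single relation \eqref{eq:CocycConWeakPB-b}; adopting this viewpoint shortens the write-up considerably, at some cost in transparency.
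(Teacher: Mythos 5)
Your proposal is correct and coincides with the paper's own proof: the coboundary $\{m_a=\id_{e_a}\}$ with $n_{ab}=\sfr^{-1}_{\tilde m_{ab}}\circ\sfl_{m_{ab}}$ for $a\neq b$ and $n_{aa}=n_a^{-1}\otimes\id_{\id_{e_a}}$ is exactly the natural $2$-transformation exhibited in Proposition \ref{prop:Normalisation}. The only difference is that you spell out the verification of \eqref{eq:EquivCocycWeakPB-II} (via naturality, the interchange law, Mac Lane coherence and the doubled-index cocycle conditions), which the paper leaves as ``as one may check''.
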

\begin{proof}
 A natural 2-transformation that yields the equivalence is given by
 \begin{equation}
  \{m_a=\id_{e_a}\}\eand \left\{n_{ab}=\left\{\begin{array}{ll}
                                              \sfr^{-1}_{\tilde m_{ab}}\circ \sfl_{m_{ab}} &\efor a\neq b\\
                                              n_a^{-1}\otimes \id_{\id_{e_a}} &\efor a=b
                                             \end{array}\right.\right\}.
 \end{equation}
 Note that $\tilde m_{ab}=m_{ab}$ for $a\neq b$. As one may check, the coboundary conditions \eqref{eq:EquivCocycWeakPB-I} and \eqref{eq:EquivCocycWeakPB-II} are indeed satisfied. Note that the choice of $n_{ab}$ in this transformation for $a\neq b$ is not unique. 
\end{proof}

\begin{cor}
 Every weak principal 2-bundle is locally trivialisable.
\end{cor}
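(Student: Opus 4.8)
The plan is to derive this as an immediate consequence of Proposition \ref{prop:Normalisation} together with Lemma \ref{lem:construct_equiv_2-bundle}. Local trivialisability means that for each patch $U_a$ the restriction of the weak principal 2-bundle $\Phi$ to $U_a$, in the sense of Definition \ref{def:restriction-1}, is a trivial weak principal 2-bundle. First I would fix $U_a$ and observe that over $U_a$ one may use the single-patch covering $\{U_a\}$: then the only transition data inherited from $\Phi$ are the diagonal components $(m_{aa},n_{aaa},n_a)$, since all indices collapse to $a$, and the conditions these must satisfy are precisely the all-indices-equal specialisations of the cocycle conditions \eqref{eq:CocycConWeakPB-I} and \eqref{eq:CocycConWeakPB-II}, hence automatically hold.

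Next I would apply Proposition \ref{prop:Normalisation} to $\Phi|_{U_a}$: it is equivalent to its normalisation $\tilde\Phi|_{U_a}$, whose transition data are, by Lemma \ref{lem:construct_equiv_2-bundle}, given by $\tilde m_{aa}=\id_{e_a}$, $\tilde n_{aaa}=\sfr_{\id_{e_a}}=\sfl_{\id_{e_a}}$, and $\tilde n_a=\id_{\id_{e_a}}$. These coincide exactly with the components, on the single-patch covering $\{U_a\}$, of the defining functor of the trivial weak principal 2-bundle. Hence $\tilde\Phi|_{U_a}$ is trivial, and since equivalence of weak principal 2-bundles is transitive, so is $\Phi|_{U_a}$. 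As $a$ was arbitrary, $\Phi$ is locally trivialisable.

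There is no substantive obstacle here; the corollary is a formal consequence of the preceding results. The only point I would spell out carefully is the bookkeeping concerning coverings — namely that restriction and normalisation may be carried out in either order, and that coarsening the covering of $U_a$ to $\{U_a\}$ is harmless. This is already implicit in the remark in the proof of Lemma \ref{lem:construct_equiv_2-bundle} that the cocycle conditions remain consistent under arbitrary coincidences of indices.
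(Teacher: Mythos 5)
Your proof is correct and follows essentially the same route as the paper: apply Proposition \ref{prop:Normalisation} to replace $\Phi$ by its normalisation, whose diagonal data $m_{aa}=\id_{e_a}$, $n_{aaa}=\sfl_{\id_{e_a}}=\sfr_{\id_{e_a}}$, $n_a=\id_{\id_{e_a}}$ coincide with those of the trivial bundle on each patch. Your extra care about restricting to the single-patch covering $\{U_a\}$ and the order of restriction versus normalisation is a harmless elaboration of what the paper leaves implicit.
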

\begin{proof}
 By Proposition \ref{prop:Normalisation}, a weak principal 2-bundle $\Phi$ is equivalent to its normalisation, for which we have
 \begin{equation}
  n_{aaa}\ =\ \sfl_{\id_{e_a}}=\sfr_{\id_{e_a}}~,~~~n_a\ =\ \id_{\id_{e_a}}~,\eand m_{aa}\ =\ \id_{e_a}~.
 \end{equation}
 on any $U_a\in\frU$. Thus, the weak principal 2-bundle is locally equivalent to a trivial one.
\end{proof}

Recall that trivial principal bundles with structure group $\sG$ are described by transition functions $\{g_{ab}\}$ of the form $g_{ab}=g_ag_b^{-1}$, where $\{g_a\}$ is a $\sG$-valued \v Cech 0-cochain. Note that the $g_a$ can be multiplied by a (global) $\sG$-valued function from the right, leaving $g_{ab}=g_ag_b^{-1}$ invariant. This is an equivalence relation, which is described by modifications in functorial language.

The corresponding equivalence relations are more comprehensive in the case of principal 2-bundles, as we shall see in the following. Consider two equivalent weak principal 2-bundles $\Phi$ and $\tilde \Phi$ with natural 2-transformations $\alpha:\tilde \Phi\to \Phi$ and $\tilde\alpha:\tilde \Phi\to \Phi$ between them. A weak 2-modification $\varphi:\alpha \Rightarrow \tilde \alpha$ is given by a smooth map $\varphi:\alpha\to\tilde\alpha$ that assigns to every object $(x,a)\in \check \CCC(\frU)$ a 2-morphism $\varphi_{(x,a)}: \alpha_{(x,a)}\Rightarrow\tilde\alpha_{(x,a)}$. We set $o_a:=\varphi_{(x,a)}$ so that $o_a:m_a\Rightarrow\tilde m_a$. Moreover, the following diagram is required to be commutative:
\begin{equation}
\xymatrixcolsep{4pc}\myxymatrix{
m_{ab}\otimes m_b  \ar@{=>}[r]^{\id_{m_{ab}}\otimes o_{b}} \ar@{=>}[d]_{n_{ab}} & m_{ab}\otimes \tilde m_b  \ar@{=>}[d]^{\tilde n_{ab}}\\
m_a\otimes\tilde m_{ab} \ar@{=>}[r]_{o_{a}\otimes\id_{\tilde m_{ab}}} & \tilde m_a\otimes\tilde m_{ab}
}
\end{equation}
that is,
\begin{equation}\label{eq:EquivCocycWeakPB-modi}
 \tilde n_{ab}\circ (\id_{m_{ab}}\otimes o_{b})\ =\ (o_{a}\otimes\id_{\tilde m_{ab}})\circ n_{ab}~.
\end{equation}

\begin{definition}
  Any two $\CCG$-valued degree-2 \v Cech coboundaries $(\{m_a\},\{n_{ab}\})$ and $(\{\tilde m_a\},$ $\{\tilde n_{ab}\})$ between any two $\CCG$-valued degree-2 \v Cech cocycles $(\{m_{ab}\},$ $\{n_{abc}\},$ $\{n_a\})$ and $(\{\tilde m_{ab}\},$ $\{\tilde n_{abc}\},\{\tilde n_a\})$ are said to be \uline{equivalent} if and only if there is a $\CCG$-valued degree-0 \v Cech cochain $\{o_a\}$ such that equations \eqref{eq:EquivCocycWeakPB-modi} are satisfied. Such a degree-0 \v Cech cochain $\{o_a\}$ is called a \uline{degree-2 \v Cech modification}.
\end{definition}

To define pullbacks and restrictions of weak principal 2-bundles, we proceed just as in the case of the functorial description of principal bundles; see Definitions \ref{def:pullback-1} and \ref{def:restriction-1}. Recall that given a smooth map $\phi:X\rightarrow Y$ and a covering $\frU_Y$ of $Y$, the pre-images of the patches in $\frU_Y$ form a covering of $\frU_X$. The resulting groupoid morphisms $\check\CCC(\frU_X)\rightarrow \check\CCC(\frU_Y)$ can be extended to a strict 2-functor $\phi_{\frU}$. Therefore, we give the following definitions.
\begin{definition}\label{def:pullback-2}
 The \uline{pullback} of a weak principal 2-bundle $\Phi$ over $Y$ with respect to an open covering $\frU_Y$ along a map $\phi:X\rightarrow Y$ is the composition of 2-functors $\Phi\circ \phi_{\frU}$.
\end{definition}

\begin{definition}\label{def:restriction-2}
 The \uline{restriction} of a weak principal 2-bundle $\Phi$ over a manifold $X$ to a submanifold $Y$ inside $X$ is the pullback of $\Phi$ along the embedding map $Y\embd X$.
\end{definition}

\subsection{Semistrict and strict principal 2-bundles}

We shall be specifically interested in weak principal 2-bundles with semistrict structure 2-groups. This implies a number of simplifications, which we shall discuss in the following.

\begin{definition}
 A \uline{semistrict principal 2-bundle} is a normalised weak principal 2-bundle with semistrict structure 2-group $\CCG$.
\end{definition}

\noindent
Explicitly, we have a weak 2-functor $\Phi$ described by a \v Cech 2-cocycle (transition functions) $(\{m_{ab}\},$ $\{n_{abc}\},\{n_a\})$ with values in $\CCG$ such that
\begin{subequations}
 \begin{equation}
  \{ m_{aa}=\id_{e_a}\}~,~~~\{ n_{aab}=\id_{ m_{ab}}\}~,~~~
  \{ n_{abb}=\id_{ m_{ab}}\}~,\eand \{ n_a=\id_{\id_{e_a}}\}~.
 \end{equation}
The cocycle conditions for this type of principal 2-bundle then read as 
\begin{equation}\label{eq:CocycConSemiPB}
\begin{aligned}
  n_{abc}\,:\, m_{ab}\otimes  m_{bc}\ &\Rightarrow\ m_{ac}~,\\
   n_{acd}\circ (n_{abc}\otimes \id_{m_{cd}})\circ  \sfa^{-1}_{m_{ab},m_{bc},m_{cd}}\ &=\ 
   n_{abd}\circ  (\id_{m_{ab}}\otimes  n_{bcd})~,
   \end{aligned}
  \end{equation} 
  while the coboundary conditions and modifications are given by
\begin{equation}\label{eq:EquivCocycSemiPB}
\begin{aligned}
 m_a\, :\, \tilde e_a\ &\to\  e_a~,\\
 n_{ab}\, :\, m_{ab}\otimes  m_b\ &\Rightarrow\ m_a\otimes  \tilde m_{ab}~,\\
  n_{ac}\circ  (n_{abc}\otimes \id_{m_c})\ &=\ (\id_{m_a}\otimes \tilde n_{abc})\circ \sfa_{m_a,\tilde m_{ab},\tilde m_{bc}}\circ (n_{ab}\otimes \id_{\tilde m_{bc}})\,\circ  \\
  &\kern1cm \circ \, \sfa^{-1}_{m_{ab},m_b,\tilde m_{bc}}\circ (\id_{m_{ab}}\otimes  n_{bc})\circ \sfa_{m_{ab},m_{bc},m_c}
\end{aligned}
\end{equation}
and
\begin{equation}
\begin{aligned}
o_a:m_a\ &\Rightarrow\ \tilde m_a~,\\
 \tilde n_{ab}\circ  (\id_{m_{ab}}\otimes  o_{b})\ &=\ (o_{a}\otimes \id_{\tilde m_{ab}})\circ  n_{ab}~,
 \end{aligned}
\end{equation}
\end{subequations}
respectively.

\begin{rem}\label{rem:trivial-2-bundle}
A trivial semistrict principal 2-bundle is described by transition functions $(\{m_{ab}\}$, $\{n_{abc}\})$ given in terms of coboundary data $(\{m_a\},\{n_{ab}\})$ according to
\begin{equation}\label{eq:trivial-2-cocycles}
\begin{aligned}
 m_a\, :\, \tilde e_a\ \to\  e_a &\eand n_{ab}\, :\, m_{ab}\otimes  m_b\ \Rightarrow\ m_a~,\\
  n_{ac}\circ  (n_{abc}\otimes \id_{m_c})\ &=\ n_{ab}\,\circ(\id_{m_{ab}}\otimes  n_{bc})\circ \sfa_{m_{ab},m_{bc},m_c}~,
\end{aligned}
\end{equation}
where $n_{aa}=\id_{m_a}$.
\end{rem}

To recover principal 2-bundles based on crossed modules as discussed in most of the current literature, we define the following.
\begin{definition}
 A \uline{strict principal 2-bundle} is a weak principal 2-bundle with strict structure 2-group.
\end{definition}

A well-known result is then the following.
\begin{prop}
 A strict principal 2-bundle $\Phi$ with strict structure 2-group $\CCG$ can be equivalently described in terms of \v Cech cochains taking values in the equivalent crossed module of Lie groups $(\sH\xrightarrow{\,\dpar\,}\sG,\acton)$. In particular, we have a $\sG$-valued \v Cech 1-cochain $\{g_{ab}\}$ and an $\sH$-valued \v Cech 2-cochain $\{h_{abc}\}$  such that
 \begin{subequations}
 \begin{equation}\label{eq:strict-cocycle}
  \begin{aligned}
   \dpar(h_{abc})g_{ab}g_{bc}\ =\ g_{ac}\eand h_{acd}h_{abc}\ =\ h_{abd}(g_{ab}\acton h_{bcd})~.
  \end{aligned}
 \end{equation}

Coboundaries are then described in terms of $\sG$-valued \v Cech 0-cochains $\{g_a\}$ and $\sH$-valued \v Cech 1-cochains $\{h_{ab}\}$. In particular, any two strict principal 2-bundles $(\{g_{ab}\},$ $\{h_{abc}\})$ and $(\{\tilde g_{ab}\},\{\tilde h_{abc}\})$ are said to equivalent if and only if
 \begin{equation}\label{eq:strict-coboundaries}
 g_a\tilde g_{ab}\ =\ \dpar(h_{ab})g_{ab}g_b\eand
 h_{ac}h_{abc}\ =\ (g_a\acton\tilde h_{abc})h_{ab}(g_{ab}\acton h_{bc})~.
 \end{equation}
 
 In addition, any two coboundaries  $(\{g_{a}\},\{h_{ab}\})$ and $(\{\tilde g_{a}\},\{\tilde h_{ab}\})$ are equivalent if and only if there is an $\sH$-valued \v Cech 0-cochain $\{h_a\}$  such that 
 \begin{equation}\label{eq:strict-modifications}
 \tilde g_a\ =\ g_a\dpar(h_a)\eand
 \tilde h_{ab}\ =\ (g_a\acton h_ah_b^{-1})h_{ab}~.
 \end{equation}
 \end{subequations}
\end{prop}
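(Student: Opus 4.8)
\textit{Proof proposal.}

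The plan is to take the general weak cocycle, coboundary, and modification conditions and specialise them to the normalised strict case, and then to translate every horizontal and vertical composition into crossed-module data using the explicit identification recalled around Proposition \ref{prop:equiv-strict-Lie-2-CM}, i.e.\ $M=\sG$, $N=\sG\ltimes\sH$, $\sfs(g,h)=\dpar(h^{-1})g$, $\sft(g,h)=g$, $\id_g=(g,\unit_\sH)$, together with the operations \eqref{eq:CompositionsStrictLie2Group}. First I would invoke Proposition \ref{prop:Normalisation} (via Lemma \ref{lem:construct_equiv_2-bundle}) to pass to the normalisation, so that $m_{aa}=\id_{e_a}$ and $\{n_a\}$ is trivial; since $\CCG$ is strict, the associator, the unitors, and the unit and counit are all trivial as well, so the surviving data are exactly $(\{m_{ab}\},\{n_{abc}\})$ with $n_{abc}\colon m_{ab}\otimes m_{bc}\Rightarrow m_{ac}$, the coboundary $(\{m_a\},\{n_{ab}\})$, and the modification $\{o_a\}$, subject to \eqref{eq:CocycConWeakPB-a}, \eqref{eq:EquivCocycWeakPB-a}, and \eqref{eq:EquivCocycWeakPB-modi} with all coherence cells replaced by identities.

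For the cocycles I would set $g_{ab}:=m_{ab}\in\sG$. Any $2$-cell $n_{abc}\in N$ with target $\sft(n_{abc})=m_{ac}$ is uniquely of the form $n_{abc}=(g_{ac},h_{abc})$ with $h_{abc}\in\sH$, and the requirement $\sfs(n_{abc})=m_{ab}\otimes m_{bc}=g_{ab}g_{bc}$ reads $\dpar(h_{abc}^{-1})g_{ac}=g_{ab}g_{bc}$, which is the first equation of \eqref{eq:strict-cocycle}. Expanding \eqref{eq:CocycConWeakPB-a} with $\sfa$ trivial, using the horizontal and vertical composition rules \eqref{eq:CompositionsStrictLie2Group} (and checking that every vertical composite occurring is defined, which is precisely what the first cocycle equation guarantees), both sides collapse to a $2$-cell with target $g_{ad}$, and comparing $\sH$-components yields $h_{acd}h_{abc}=h_{abd}(g_{ab}\acton h_{bcd})$, the second equation of \eqref{eq:strict-cocycle}.

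The coboundaries and modifications are handled the same way. Writing $g_a:=m_a\in\sG$ and $n_{ab}=(g_a\tilde g_{ab},h_{ab})$ (its target being forced to $m_a\otimes\tilde m_{ab}$), the source condition $\sfs(n_{ab})=m_{ab}\otimes m_b$ gives $g_a\tilde g_{ab}=\dpar(h_{ab})g_{ab}g_b$; substituting into \eqref{eq:EquivCocycWeakPB-a} and expanding all composites produces, on comparing $\sH$-components, the relation $h_{ac}h_{abc}=(g_a\acton\tilde h_{abc})h_{ab}(g_{ab}\acton h_{bc})$, i.e.\ \eqref{eq:strict-coboundaries}. For the modifications, a $2$-cell $o_a\colon m_a\Rightarrow\tilde m_a$ can be written as $o_a=\id_{\tilde g_a}\otimes h_a=(\tilde g_a,\tilde g_a\acton h_a)$ for a unique $h_a\in\sH$, whence the source condition gives $\tilde g_a=g_a\dpar(h_a)$; expanding \eqref{eq:EquivCocycWeakPB-modi}, using the Peiffer identity $\dpar(h_1)\acton h_2=h_1h_2h_1^{-1}$ together with $\dpar(g\acton h)=g\dpar(h)g^{-1}$ to rewrite $\tilde g_a\acton h_a=g_a\acton h_a$ and to move the conjugation by $h_{ab}$ past the action, and feeding back the coboundary relation just derived, one obtains $\tilde h_{ab}=(g_a\acton h_a h_b^{-1})h_{ab}$, i.e.\ \eqref{eq:strict-modifications}. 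Conversely, each of these translations is manifestly reversible: from crossed-module \v{C}ech data satisfying \eqref{eq:strict-cocycle} one reconstructs a normalised strict $2$-functor by the inverse identification \eqref{eq:definition-CM-ops}, and likewise for coboundaries and modifications, which establishes the equivalence.

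The step I expect to be the main obstacle is precisely this translation bookkeeping, and most acutely the modification step. Every vertical composite that arises is defined only subject to a specific source--target identity, so the lower-level equations have to be substituted back in at each stage; and the raw $\sH$-component identity coming directly out of \eqref{eq:EquivCocycWeakPB-modi} is not literally \eqref{eq:strict-modifications} but must be massaged into that form by repeated use of the Peiffer identity and $\sG$-equivariance of $\dpar$, exploiting also the two equivalent descriptions of $\sH$ (as $\ker\sft$ and as the second factor of $\sG\ltimes\sH$) when choosing how to present $o_a$ as a $2$-cell. One should additionally verify that the doubled-index normalisations ($h_{aa}=\unit_\sH$, $h_{aab}=h_{abb}=\unit_\sH$, $h_{aa}=\unit_\sH$ for coboundaries) are consistent with \eqref{eq:strict-cocycle}--\eqref{eq:strict-modifications}, which is routine but must not be skipped.
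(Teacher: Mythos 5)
Your overall route is exactly the paper's: pass to the normalisation, identify $M=\sG$ and $N=\sG\ltimes\sH$ via Proposition \ref{prop:equiv-strict-Lie-2-CM}, write every $2$-cell as a pair whose $\sG$-component is its target, read off the first equation of each pair from the source condition and the second from comparing $\sH$-components after expanding the composites with \eqref{eq:CompositionsStrictLie2Group}. The cocycle and coboundary translations you describe are correct and reproduce \eqref{eq:strict-cocycle} and \eqref{eq:strict-coboundaries}; this is precisely the (very terse) argument the paper gives.

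The one step that does not close as you describe it is the modification. With $o_a=(\tilde g_a,\,g_a\acton h_a)$ (source condition $\Rightarrow\tilde g_a=g_a\dpar(h_a)$, as you say), expanding \eqref{eq:EquivCocycWeakPB-modi} with \eqref{eq:CompositionsStrictLie2Group} gives
\[
 \tilde h_{ab}\,\big((g_{ab}g_b)\acton h_b\big)\ =\ (g_a\acton h_a)\,h_{ab}~,
 \qquad\text{i.e.}\qquad
 \tilde h_{ab}\ =\ (g_a\acton h_a)\,h_{ab}\,\big((g_{ab}g_b)\acton h_b^{-1}\big)~.
\]
Moving the last factor past $h_{ab}$ via the Peiffer identity and then "feeding back the coboundary relation" $\dpar(h_{ab})g_{ab}g_b=g_a\tilde g_{ab}$ — the manipulation you outline — produces $(g_a\tilde g_{ab})\acton h_b^{-1}$, \emph{not} $g_a\acton h_b^{-1}$: the residual $\tilde g_{ab}$ does not cancel, so one lands on $\tilde h_{ab}=(g_a\acton h_a)\,\big((g_a\tilde g_{ab})\acton h_b^{-1}\big)\,h_{ab}$ rather than on \eqref{eq:strict-modifications}. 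A direct consistency check (e.g.\ $g_a=g_b=\unit_\sG$, $h_{ab}=\unit_\sH$, so $\tilde g_{ab}=g_{ab}$) shows that it is the former expression, and not $(g_a\acton h_ah_b^{-1})h_{ab}$, that preserves the coboundary condition $\tilde g_a\tilde g_{ab}=\dpar(\tilde h_{ab})g_{ab}\tilde g_b$. So either you are invoking an identity you have not stated, or this final "massaging" step cannot be carried out as claimed; in the latter case the discrepancy sits in the target formula itself and should be flagged explicitly rather than asserted to follow.
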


\vspace{10pt}
\noindent
{\it Proof:}
Let us again sketch the identification. For a strict principal 2-bundle, the cocycle and coboundary conditions, as well as the coherence equation for modifications, reduce to
\begin{subequations}
\begin{equation}\label{eq:CocycConStrictPB}
\begin{aligned}
  n_{abc}\,:\, m_{ab}\otimes m_{bc}\ &\Rightarrow\ m_{ac}~,\\
   n_{acd}\circ(n_{abc}\otimes\id_{m_{cd}})\ &=\ 
   n_{abd}\circ(\id_{m_{ab}}\otimes n_{bcd})~,
   \end{aligned}
  \end{equation} 
and
\begin{equation}\label{eq:EquivCocycStrictPB}
\begin{aligned}
m_a\, :\, \tilde e_a\ &\to\ e_a~,\\
 n_{ab}\, :\, m_{ab}\otimes m_b\ &\Rightarrow\ m_a\otimes \tilde m_{ab}~,\\
  n_{ac}\circ (n_{abc}\otimes\id_{m_c})\ &=\ (\id_{m_a}\otimes\tilde n_{abc})\circ(n_{ab}\otimes\id_{\tilde m_{bc}})
 \circ(\id_{m_{ab}}\otimes n_{bc})~,
\end{aligned}
\end{equation}
and
\begin{equation}\label{eq:ModiStrictPB}
\begin{aligned}
o_a:m_a\ &\Rightarrow\ \tilde m_a~,\\
 \tilde n_{ab}\circ (\id_{m_{ab}}\otimes o_{b})\ &=\ (o_{a}\otimes\id_{\tilde m_{ab}})\circ n_{ab}~.
 \end{aligned}
\end{equation}
\end{subequations}
Next, recall the identification of strict Lie 2-groups with crossed modules of Lie groups of Proposition \ref{prop:equiv-strict-Lie-2-CM}. To go from a crossed module of Lie groups $\sH\xrightarrow{\,\dpar\,}\sG$ to a strict Lie 2-group $\CCG$, we identify $\CCG$ with $(\sG,\sG\ltimes \sH)$ in terms of the Lie groups $\sG$ and $\sH$ contained in the equivalent crossed module, we can identify $m_{ab}=g_{ab}$ and $n_{abc}=(g_{abc},h_{abc})$. From
\begin{equation}
\begin{aligned}
 g_{abc}\ &=\ \sft(n_{abc})\ =\ g_{ac}\ =\ m_{ac}~,\\
 \sfs(n_{abc})\ &=\ m_{ab}\otimes m_{bc}\ =\ g_{ab}g_{bc}\ =\ \dpar(h_{abc}^{-1})g_{abc}~,
 \end{aligned}
\end{equation}
we immediately obtain the first equation in \eqref{eq:strict-cocycle}. Likewise, using $\id_{m_{ab}}=(g_{ab},\unit_\sH)$ and \eqref{eq:CompositionsStrictLie2Group}, it is a straightforward exercise to show that \eqref{eq:CocycConStrictPB} simplifies to the second equation in \eqref{eq:strict-cocycle}. We skip the inverse transition from Lie 2-groups to crossed modules here; details on this point can be found in the proof of Proposition \ref{eq:inverse-identification}.

Following the same line of arguments, the coboundary conditions \eqref{eq:EquivCocycStrictPB} and modifications \eqref{eq:ModiStrictPB} are rewritten as \eqref{eq:strict-coboundaries} and \eqref{eq:strict-modifications}. \hfill $\Box$
 
\vspace{10pt}

\begin{rem}\label{rem:strict2bundis}
In the strict setting, we may define
\begin{equation}
\ell_{ab}\ :=\ n_{ab}\otimes\id_{\overline{m}_b}~,
\end{equation}
where $m\otimes\overline{m}=\id_e$. It is easy to see that $\ell_{ab}: m_{ab}\Rightarrow m_a\otimes \tilde m_{ab}\otimes \overline{m}_b$, and, in particular, if the bundle is trivial, then $\ell_{ab}: m_{ab}\Rightarrow m_a \otimes \overline{m}_b$. In this case, one may show that $n_{abc}$ can be rewritten in terms of $\ell_{ab}$ as
\begin{equation}
 n_{abc}\ =\ell_{ac}^{-1}\circ (\ell_{ab}\otimes\ell_{bc})~.
\end{equation}
It is amusing to note the resemblance with a trivial Abelian gerbe: the only difference is that ordinary products are replaced by $\circ$ and $\otimes$. 
\end{rem}

\section{Differentiating semistrict Lie 2-groups}\label{sec:Lie-functor}

In order to define connective structures on semistrict principal 2-bundles, we first need to develop a way of differentiating a semistrict Lie 2-group to a semistrict Lie 2-algebra. The approach we shall develop is based on an idea of \v Severa's \cite{Severa:2006aa} (see also Jur\v co \cite{Jurco20122389}). 

As before, we let $X$ be a smooth manifold. The sheaf of smooth differential $p$-forms on $X$ is denoted by $\Omega^p_X$, and we set $\Omega^\bullet_X:=\bigoplus_{p\geq 0} \Omega^p_X$. In general, given a module $\frv=\bigoplus_{k\in\RZ} \frv_k$  with a $\RZ$-grading, one may always introduce a $\RZ_2$-grading referred to as the Gra{\ss}mann parity in terms of the parity of degrees: $\frv=\bigoplus_{k\in\RZ} \frv_{2k}\oplus \bigoplus_{k\in\RZ} \frv_{2k-1}$. Elements of $\bigoplus_{k\in\RZ} \frv_{2k}$ are said to be Gra{\ss}mann-even while elements of $\bigoplus_{k\in\RZ} \frv_{2k-1}$  are said to be Gra{\ss}mann-odd, respectively. We shall also make use of the Gra{\ss}mann-parity changing functor $\Pi$. For instance, $\FR^{m|n}:=\FR^m\oplus \Pi\FR^n$. Moreover,  $\frv[k]$ will denote the module $\frv$ with grading shifted by $k$. Similarly, $T[k]X$ denotes the tangent bundle of $X$ with the grading of the fibres shifted by $k$.

\subsection{Basic ideas}

\begin{definition}
Let $\sigma:Y\rightarrow X$ be a surjective submersion and $\sG$ be a Lie group. A $\sG$-valued \uline{descent datum} on  $\sigma:Y\rightarrow X$  is a map $g:Y\times_X Y\to \sG$ such that\footnote{Recall that $Y\times_X\cdots\times_X Y:=\{(x_1,\ldots,x_k))\in Y\times \cdots\times Y\,|\, \sigma(x_1)=\cdots=\sigma(x_k)\}$.}
 \begin{equation}
   g(x_1,x_1)\ =\ \unit_\sG\eand g(x_1,x_2)g(x_2,x_3)\ =\ g(x_1,x_3) 
 \end{equation}
 for all $(x_1,x_2,x_3)\in Y\times_X Y\times_X Y$. 
 \end{definition}

\noindent
Specifically, a given descent datum describes the descent of a trivial principal $\sG$-bundle over $Y$ to a non-trivial principal $\sG$-bundle over $X$. The following example makes this more transparent.

\begin{exam}
Let $X$ be a smooth manifold with covering $\frU=\{U_a\}_{a\in\CI}$ indexed by the index set $\CI$. Consider the trivial projection $\sigma : \CI\times X\to X$. A $\sG$-valued descent datum is then given by a map $g:\CI\times\CI \times X\to X$ such that $g(a,a,x)=\unit_\sG$ and $g(a,b,x)g(b,c,x)=g(a,c,x)$ for all $a,b\in\CI$ and $x\in X$. Setting $g_{ab}(x):=g(a,b,x)$, we have obtained a $\sG$-valued \v Cech 1-cocycle $\{g_{ab}\}$ on $X$ relative to the covering $\frU$. This, in turn, describes a principal $\sG$-bundle over $X$.
\end{exam}

Below, we shall be interested in the trivial projection $\sigma:\FR^{0|1}\times X\rightarrow X$, so a $\sG$-valued decent datum is in this case given by a map $g:\FR^{0|1}\times \FR^{0|1}\times X\rightarrow \sG$ such that 
\begin{equation}\label{eq:cocycle-1-theta}
 g(\theta_0,\theta_1,x)g(\theta_1,\theta_2,x)\ =\ g(\theta_0,\theta_2,x)\efor x\ \in\ X~.
\end{equation}
We can regard the maps from the surjective submersion $\FR^{0|1}\times X\rightarrow X$ to a descent datum as a contravariant functor from the category of smooth manifolds to the category of sets. As we shall see below, this functor is representable by $\frg[-1]$, where $\frg$ is the Lie algebra of $\sG$.  In particular, calculating the moduli of this functor yields the Lie algebra $\frg$ as a vector space. To describe its Lie bracket, one needs to compute the action of its Chevalley--Eilenberg differential\footnote{see Appendix \ref{app:A} for the relevant definitions} $\dd_{\rm CE}$. This differential is governed by a generator of the natural action of $C^\infty(\FR^{0|1},\FR^{0,1})$ on the descent data, as was first discussed by Kontsevich \cite{Kontsevich:1997vb} (see also \cite{Severa:2006aa}). Let us now review this in some more detail.

\begin{prop}
 There is a natural isomorphism $H^0(X,\Omega^\bullet_X)\cong C^\infty (C^\infty(\FR^{0|1},X),\FR)$.
\end{prop}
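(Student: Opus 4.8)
The plan is to identify the mapping space $C^\infty(\FR^{0|1},X)$ with the shifted tangent bundle $T[1]X$, and then to observe that smooth functions on $T[1]X$ are exactly the global differential forms on $X$; here $H^0(X,\Omega^\bullet_X)=\Omega^\bullet(X):=\bigoplus_{p\geq 0}\Omega^p(X)$, since $\Omega^\bullet_X$ is a sheaf and $H^0$ of a sheaf is its module of global sections. This is essentially the observation of Kontsevich and \v Severa referred to above.

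First, I would make the mapping space precise through its functor of points: for a test supermanifold $S$ with superalgebra of functions $A:=C^\infty(S)$, the $S$-points of $C^\infty(\FR^{0|1},X)$ are $\mathrm{Hom}(S\times\FR^{0|1},X)=\mathrm{Hom}_{\FR\text{-alg}}(C^\infty(X),A[\theta])$, where $A[\theta]:=A\otimes C^\infty(\FR^{0|1})$ and $\theta$ is the odd generator with $\theta^2=0$. Writing an algebra homomorphism $\phi:C^\infty(X)\to A[\theta]$ as $\phi(f)=\phi_0(f)+\phi_1(f)\,\theta$, multiplicativity forces $\phi_0:C^\infty(X)\to A$ to be an algebra homomorphism, hence (as $C^\infty(X)$ is a $C^\infty$-ring) an $S$-point of $X$, and $\phi_1$ to be an odd $\phi_0$-derivation, $\phi_1(fg)=\phi_1(f)\phi_0(g)+\phi_0(f)\phi_1(g)$. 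Such a pair is precisely an $S$-point of $T[1]X$, so $C^\infty(\FR^{0|1},X)\cong T[1]X$ as supermanifolds, functorially in $X$.

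Next, in a chart of $X$ with even coordinates $x^\mu$, the supermanifold $T[1]X$ has even coordinates $x^\mu$ together with Gra{\ss}mann-odd fibre coordinates $\psi^\mu$ of degree one, transforming under a change of chart $x\mapsto\tilde x(x)$ as $\tilde\psi^\nu=\tfrac{\dpar\tilde x^\nu}{\dpar x^\mu}\psi^\mu$, exactly like the differentials $\dd x^\mu$. A smooth function on $T[1]X$ is therefore locally of the form $F(x,\psi)=\sum_{k\geq 0}\tfrac1{k!}F_{\mu_1\cdots\mu_k}(x)\,\psi^{\mu_1}\cdots\psi^{\mu_k}$ with $F_{\mu_1\cdots\mu_k}$ smooth and totally antisymmetric (the $\psi^\mu$ anticommute), and the transformation law shows that these local data glue to the same object as a global differential form $\omega=\sum_k\tfrac1{k!}\omega_{\mu_1\cdots\mu_k}\,\dd x^{\mu_1}\wedge\cdots\wedge\dd x^{\mu_k}$. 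Hence the assignment $\psi^\mu\leftrightarrow\dd x^\mu$ furnishes an isomorphism $C^\infty(C^\infty(\FR^{0|1},X),\FR)=C^\infty(T[1]X)\cong\Omega^\bullet(X)=H^0(X,\Omega^\bullet_X)$ of graded-commutative algebras.

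Finally, naturality: a smooth map $\varphi:X\to Y$ induces, by precomposition, $C^\infty(\FR^{0|1},X)\to C^\infty(\FR^{0|1},Y)$, which under the first identification is $T[1]\varphi$, and pulling functions back along $T[1]\varphi$ matches, in the coordinate description above, the pullback $\varphi^*:\Omega^\bullet(Y)\to\Omega^\bullet(X)$ of forms (both are controlled by the Jacobian of $\varphi$); so the isomorphism is natural in $X\in\CatDiff^{\mathrm{op}}$. The genuinely delicate point is foundational rather than computational: one must fix a workable meaning for the internal hom $C^\infty(\FR^{0|1},X)$ and invoke the standard fact that $\FR$-algebra homomorphisms out of $C^\infty(X)$ are the same as smooth maps into $X$ (that $C^\infty(X)$ is a Milnor, or $C^\infty$-, ring), after which the $(\phi_0,\phi_1)$-decomposition and the chart computation are routine.
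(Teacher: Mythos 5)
Your proof is correct, and its overall strategy coincides with the paper's: identify the mapping space $C^\infty(\FR^{0|1},X)$ with the parity-shifted tangent bundle and then observe that superfunctions on it, being polynomials in the odd fibre coordinates with the transformation law of the $\dd x^\mu$, are exactly global differential forms. The difference lies in how the first identification is established. The paper computes $C^\infty(\FR^{0|1},\FR^n)\cong\FR^{n|n}$ in a chart and then simply asserts that the resulting local isomorphism with the shifted tangent bundle is coordinate-independent; you instead derive it intrinsically via the functor of points, decomposing an algebra homomorphism $C^\infty(X)\to A[\theta]$ into an $S$-point $\phi_0$ of $X$ and an odd $\phi_0$-derivation $\phi_1$, which is precisely an $S$-point of the shifted tangent bundle. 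Your route is more foundational and actually supplies the justification the paper leaves implicit (both for the coordinate-independence of the identification and for the naturality claim in the statement), at the cost of invoking the internal-hom formalism for supermanifolds and the fact that algebra homomorphisms out of $C^\infty(X)$ correspond to points of $X$. The only cosmetic discrepancy is the grading convention: you write $T[1]X$ where the paper writes $T[-1]X$; this is an immaterial choice of sign for the degree shift.
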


\begin{proof}
Consider first the case  $X=\FR^n$ equipped with standard coordinates $(x^1,\ldots, x^n)$. An element of $C^\infty(\FR^{0|1},\FR^n)$ is parameterised as $(x^1,\ldots, x^n)=(a^1+\alpha^1\theta,\ldots,a^n+\alpha^n\theta)$, where $\theta,\alpha^i\in \FR^{0|1}$ are Gra{\ss}mann-odd and $a^i\in \FR$ are Gra{\ss}mann-even for $i=1,\ldots,n$. We thus have established $C^\infty(\FR^{0|1},\FR^n)\cong\FR^{n|n}$. Furthermore, functions on $\FR^{n|n}$ are polynomials in the Gra{\ss}mann-odd coordinates. Thus, upon identifying the $a^i$ with the coordinates on $\FR^n$ and the $\alpha^i$ with the corresponding differential 1-forms, we have  obtained $ H^0(\FR^n,\Omega^\bullet_{\FR^n})\cong C^\infty(C^\infty(\FR^{0|1},\FR^n),\FR)$. For a general smooth manifold $X$, we have thus a local isomorphism between  $C^\infty(\FR^{0|1},X)$ and $T[-1]X$. However, this isomorphism is independent of the choice of coordinates, and, hence, $C^\infty(\FR^{0|1},X)$ and $T[-1]X$
can be naturally identified. This, in turn, leads to the isomorphism $H^0(X,\Omega^\bullet_X)\cong C^\infty (C^\infty(\FR^{0|1},X),\FR)$.
\end{proof}

Furthermore, the de Rham differential $\dd$ on $H^0(X,\Omega^\bullet_X)$ follows from the action of $C^\infty(\FR^{0|1},\FR^{0|1})$ on $C^\infty(\FR^{0|1},X)$. Concretely, transformations of the form $\theta\mapsto \tilde{\theta}= b\theta+\beta$ for $b\in \FR$, $\beta\in \FR^{0|1}$ induce an action on elements of $C^\infty(\FR^{0|1},X)$ which in local coordinates $(x^1,\ldots,x^n)$ of $X$ is given by 
\begin{equation}
 x^i(\theta)\ =\ a^i+\alpha^i\theta~~~\mapsto~~~x^i(\tilde{\theta})\ =\ a^i+(b\theta+\beta)\alpha^i\ =\ (a^i+\beta\alpha^i)+b\alpha^i\theta
\end{equation}
for $i=1,\ldots,n$. Translated into differential forms, this means that $x^i\mapsto x^i+\beta\dd x^i$ and $\dd x^i\mapsto b \dd x^i$. We thus arrive at the following result.

\begin{prop}\label{prop:action-dK}
 The action of the de Rham differential $\dd$ on $H^0(X,\Omega^\bullet_X)$ translates to an action of the generator $\dd_{\rm K}$ of $C^\infty(\FR^{0|1},\FR^{0|1})$ given by
 \begin{equation}\label{eq:action-dK}
  \dd_{\rm K} f(x(\theta))\ =\ \dder{\eps}f(x(\theta+\eps))
 \end{equation}
 for any $f\in C^\infty (C^\infty(\FR^{0|1},X),\FR)$.
\end{prop}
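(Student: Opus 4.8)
The plan is to establish the identification $\dd\leftrightarrow\dd_{\rm K}$ by a direct computation in local coordinates, using the natural isomorphism $C^\infty(\FR^{0|1},X)\cong T[-1]X$ obtained in the previous proposition together with the explicit action of reparametrisations of $\FR^{0|1}$ recalled just above. The operator $\dd_{\rm K}$ is, by its very definition, the infinitesimal generator of the translation subgroup $\theta\mapsto\theta+\eps$ inside $C^\infty(\FR^{0|1},\FR^{0|1})$, acting by precomposition on $C^\infty(\FR^{0|1},X)$ and hence on $C^\infty(C^\infty(\FR^{0|1},X),\FR)\cong H^0(X,\Omega^\bullet_X)$; so the content to be checked is that this generator, written out explicitly, coincides with the de Rham differential.

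First I would fix local coordinates $(x^1,\ldots,x^n)$ on $X$, so that a point of $C^\infty(\FR^{0|1},X)$ is $x^i(\theta)=a^i+\alpha^i\theta$ with $a^i$ Gra{\ss}mann-even and $\alpha^i$ Gra{\ss}mann-odd, and under the identification with $T[-1]X$ one has $a^i\leftrightarrow x^i$ and $\alpha^i\leftrightarrow\dd x^i$; a function $f\in C^\infty(C^\infty(\FR^{0|1},X),\FR)$ is then locally a polynomial in the $\alpha^i$ with coefficients smooth in the $a^i$. The translation $\theta\mapsto\theta+\eps$ (the case $b=1$, $\beta=\eps$ of $\theta\mapsto b\theta+\beta$) produces $x^i(\theta+\eps)=(a^i+\eps\alpha^i)+\alpha^i\theta$, i.e.\ it acts by $a^i\mapsto a^i+\eps\alpha^i$ and $\alpha^i\mapsto\alpha^i$. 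Since $\eps^2=0$, expanding $f$ to first order gives
\[
 f(x(\theta+\eps))\ =\ f(a,\alpha)+\eps\,\sum_i\alpha^i\,\derr{f}{a^i}~,
\]
and therefore $\dd_{\rm K}f=\dder{\eps}f(x(\theta+\eps))=\sum_i\alpha^i\,\derr{f}{a^i}$. Under $\alpha^i\leftrightarrow\dd x^i$ and $\derr{f}{a^i}\leftrightarrow\derr{f}{x^i}$ the right-hand side is exactly $\dd f=\sum_i\dd x^i\,\derr{f}{x^i}$; in particular $\dd_{\rm K}$ is an odd derivation with $\dd_{\rm K}x^i=\dd x^i$ and $\dd_{\rm K}(\dd x^i)=0$, which characterises it as the de Rham differential. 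Since the isomorphism $C^\infty(\FR^{0|1},X)\cong T[-1]X$ is coordinate-independent and the operator $f\mapsto\dder{\eps}f(x(\theta+\eps))$ is manifestly defined without any choice of coordinates on $X$, the local verification suffices; alternatively one checks directly that the coordinate expression above transforms correctly under changes of coordinates on $X$.

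The one point that requires genuine care is the bookkeeping of Gra{\ss}mann signs and the convention adopted for differentiation with respect to the odd parameter $\eps$ (left versus right derivative): it is this convention that fixes the overall sign, so that the result is $\dd f$ rather than $-\dd f$. Relatedly, the manipulations above presuppose the functor-of-points reading in which $\eps$ is an odd parameter of an auxiliary test algebra and $f(x(\theta+\eps))$ denotes the canonical extension of $f$ over that algebra; making this precise is the only conceptual subtlety, after which the computation is entirely routine and I do not anticipate any real obstacle.
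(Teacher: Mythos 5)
Your proposal is correct and follows essentially the same route as the paper: the paper establishes this proposition by exactly the local-coordinate computation you give, namely writing $x^i(\theta)=a^i+\alpha^i\theta$, observing that $\theta\mapsto b\theta+\beta$ acts by $x^i\mapsto x^i+\beta\,\dd x^i$ and $\dd x^i\mapsto b\,\dd x^i$ under the identification $a^i\leftrightarrow x^i$, $\alpha^i\leftrightarrow\dd x^i$, and then reading off the translation generator as the de Rham differential. Your explicit expansion $\dd_{\rm K}f=\sum_i\alpha^i\,\partial f/\partial a^i$ and your remark on the odd-derivative sign convention only make explicit what the paper leaves implicit.
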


\noindent
We would like to point out that the differential $\dd_{\rm K}$ extends to smooth functions $f\in C^\infty (C^\infty(\FR^{0|k},X),\FR)$, since there is a natural action of $C^\infty(\FR^{0|1},\FR^{0|1})$ on $C^\infty(\FR^{0|k},X)$. Specifically, its action on a function of several Gra{\ss}mann-odd coordinates $(\theta_0,\ldots,\theta_{k-1})$ is diagonal,
 \begin{equation}\label{eq:action-dK-1}
  \dd_{\rm K} f(x(\theta_0,\ldots,\theta_{k-1}))\ =\ \dder{\eps}f(x(\theta_0+\eps,\ldots,\theta_{k-1}+\eps))~,
 \end{equation}
 where $x(\theta_0,\ldots,\theta_{k-1}):=(x^1(\theta_0,\ldots,\theta_{k-1}),\ldots,x^n(\theta_0,\ldots,\theta_{k-1}))$.

 \begin{exam}
Consider a Gra{\ss}mann-even function $f(x(\theta_0,\theta_1))=g+\phi\theta_0 +\psi\theta_1+F\theta_0\theta_1$. We obtain
\begin{equation}\label{eq:induced_differential}
\dd_{\rm K} f(x(\theta_0,\theta_1))\ =\ \dder{\eps} f(x(\theta_0+\eps,\theta_1+\eps))\ =\ -\phi-\psi+(\theta_0-\theta_1)F
\end{equation}
from \eqref{eq:action-dK-1}. Comparing coefficients in the Gra{\ss}mann-odd coordinates, we can read off the action of an induced operator, again denoted by $\dd_{\rm K}$ on the components
\begin{equation}
 \dd_{\rm K} g\ =\ -\phi-\psi~,\quad \dd_{\rm K} \phi\ =\ F~,\quad \dd_{\rm K} \psi\ =\ -F~,\eand \dd_{\rm K}F\ =\ 0~.
\end{equation}
\end{exam}

\pagebreak

\begin{prop}
 The operator $\dd_{\rm K}$ is a differential. That is, it has the following properties:
 \begin{enumerate}[(i)]\setlength{\itemsep}{-1mm}
  \item $\dd_{\rm K}\circ\dd_{\rm K}=0$,
  \item for any $f\in C^\infty (C^\infty(\FR^{0|k},X),\FR)$ and $g\in C^\infty (C^\infty(\FR^{0|l},X),\FR)$,  the operator $\dd_{\rm K}$ obeys a graded Leibniz rule,
 \begin{equation}\label{eq:Leibniz}
  \dd_{\rm K}(fg)\ =\ (\dd_{\rm K}f)g+(-1)^{|f|}f\dd_{\rm K}g~,
 \end{equation}
 where $|f|$ denotes the Gra{\ss}mann parity of $f$.
 \end{enumerate}
\end{prop}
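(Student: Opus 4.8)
The plan is to exploit that the shift parameter $\eps$ in \eqref{eq:action-dK-1} is Gra{\ss}mann-odd, so that $\eps^2=0$ and $\dder{\eps}$ behaves as an odd derivation. The one fact I would establish first is an elementary normal form: because $\eps^2=0$, the map $\eps\mapsto f(x(\theta_0+\eps,\ldots,\theta_{k-1}+\eps))$ is affine in $\eps$, and comparison with \eqref{eq:action-dK-1} identifies its linear part, giving
\begin{equation}
 f(x(\theta_0+\eps,\ldots,\theta_{k-1}+\eps))\ =\ f(x(\theta_0,\ldots,\theta_{k-1}))\,+\,\eps\,(\dd_{\rm K}f)(x(\theta_0,\ldots,\theta_{k-1}))~,
\end{equation}
with $\eps$ standing to the left of $\dd_{\rm K}f$. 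In particular $\dd_{\rm K}$ raises the Gra{\ss}mann parity by one, so that the sign $(-1)^{|f|}$ in \eqref{eq:Leibniz} is exactly the one expected for a left-acting odd derivation.

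For property (ii) I would substitute this normal form into the product. Abbreviating $f:=f(x(\theta))$ and $g:=g(x(\theta))$ and using $\eps^2=0$ together with $f\eps=(-1)^{|f|}\eps f$, the term quadratic in $\eps$ drops out and
\begin{equation}
 (fg)(x(\theta+\eps))\ =\ (f+\eps\,\dd_{\rm K}f)\,(g+\eps\,\dd_{\rm K}g)\ =\ fg\,+\,\eps\,\big[(\dd_{\rm K}f)\,g+(-1)^{|f|}f\,\dd_{\rm K}g\big]~.
\end{equation}
Reading off the coefficient of $\eps$ via $\dder{\eps}$ then yields \eqref{eq:Leibniz}. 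Here one uses that the natural action of $C^\infty(\FR^{0|1},\FR^{0|1})$ on $C^\infty(\FR^{0|k},X)$ and on $C^\infty(\FR^{0|l},X)$ is induced by one and the same $\eps$, so the two shifts occurring in the product are synchronised; the identity then extends by linearity to non-homogeneous $f$.

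For property (i) I would iterate the definition with two independent odd parameters $\eps_1,\eps_2$,
\begin{equation}
 (\dd_{\rm K}\circ\dd_{\rm K})f(x(\theta))\ =\ \dder{\eps_2}\dder{\eps_1}\,f(x(\theta_0+\eps_1+\eps_2,\ldots,\theta_{k-1}+\eps_1+\eps_2))~.
\end{equation}
Since $\eps_1$ and $\eps_2$ anticommute, $(\eps_1+\eps_2)^2=0$, so the normal form applies with the single odd shift $\eps_1+\eps_2$ and
\begin{equation}
 f(x(\theta+\eps_1+\eps_2))\ =\ f(x(\theta))\,+\,(\eps_1+\eps_2)\,(\dd_{\rm K}f)(x(\theta))~,
\end{equation}
which contains no $\eps_1\eps_2$-term. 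Hence $\dder{\eps_1}$ of this equals $(\dd_{\rm K}f)(x(\theta))$, which is independent of $\eps_2$, and the final $\dder{\eps_2}$ annihilates it. Equivalently, $\dder{\eps_1}$ and $\dder{\eps_2}$ are anticommuting odd derivations while the above expression is symmetric under $\eps_1\leftrightarrow\eps_2$, which forces the mixed second derivative to vanish; this is the infinitesimal shadow of $\partial_\theta^2=0$ on $\FR^{0|1}$, to which $\dd_{\rm K}$ restricts on each odd factor.

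The computations are entirely routine; the only point requiring a little care is to justify the affine normal form above, i.e.\ to check that the natural $C^\infty(\FR^{0|1},\FR^{0|1})$-action restricts, coordinate by coordinate, to the translation $\theta\mapsto\theta+\eps$, so that all manipulations may be carried out inside the algebra generated over $C^\infty(C^\infty(\FR^{0|k},X),\FR)$ by finitely many anticommuting odd parameters of square zero. Once the supergeometric bookkeeping (in particular the Gra{\ss}mann signs) is fixed, both (i) and (ii) are immediate.
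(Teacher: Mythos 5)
Your proof is correct and takes essentially the same route as the paper, whose own proof consists of the single sentence that both properties are ``an immediate consequence of the definition of $\dd_{\rm K}$''; your affine expansion $f(x(\theta+\eps))=f(x(\theta))+\eps\,\dd_{\rm K}f(x(\theta))$, the nilpotency argument via two anticommuting parameters with $(\eps_1+\eps_2)^2=0$, and the product of two such expansions are precisely the computations the paper leaves implicit. The only caveat is bookkeeping: fix once and for all whether $\dder{\eps}$ extracts the coefficient of $\eps$ from the left or the right (the paper's Example with $f=g+\phi\theta_0+\psi\theta_1+F\theta_0\theta_1$ pins down its convention), since the sign $(-1)^{|f|}$ in \eqref{eq:Leibniz} is tied to that choice.
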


\begin{proof}
These properties are an immediate consequence of the definition of $\dd_{\rm K}$. \end{proof}

\subsection{Lie algebra of a Lie group}\label{ssec:diff-Lie-groups}

Having collected all relevant ideas, let us put them to use and start by computing the Lie algebra of a Lie group as a guiding example for the case of Lie 2-groups. This has been done in \cite{Severa:2006aa,Jurco20122389}, and our discussion below is an expanded version of the one found in these references.

Consider a Lie group $\sG$ with Lie algebra $\frg=T_{\unit_\sG}\sG$. To prepare our discussion for semistrict Lie 2-groups, we shall not assume that $\sG$ is a matrix group, rather we only make use of the fact that there is a local diffeomorphism $\varphi$ between a neighbourhood $U_\frg$ of $0\in \frg$ and a  neighbourhood $U_{\sG}$ of ${\unit_\sG}\in \sG$ with $\varphi(a)=g$ for $a\in U_\frg$ and $g\in U_\sG$, $\varphi(0)=\unit_\sG$, and $\varphi_*|_0$ is the identity. In addition, we wish to restrict ourselves to infinitesimal neighbourhoods by considering elements of $\frg[-1]$ multiplied by a Gra\ss mann-odd coordinate. 

\begin{prop}\label{prop:mult-rule-1}
 Let $\varphi:U_\frg\to U_\sG$ be the above-described local diffeomorphism. For $a,a_{1,2}\in\frg[-1]$, we have the following relations:
 \begin{equation}\label{eq:algebra_tensor_product}
 \begin{aligned}
   \varphi(a\theta)^{-1}\ &=\ \varphi(-a\theta)~,\\
 \varphi^{-1}(\varphi(a_1\theta_1)\varphi(a_2\theta_2))\ &=\ a_1\theta_1+a_2\theta_2 - a_1\cdot a_2\,\theta_1\theta_2~,
  \end{aligned}
 \end{equation}
 where the operation $\cdot:\frg[-1]\times\frg[-1]\to\frg[-2]$ is defined by the second equation. This operation is bilinear and $a_1\cdot a_2+a_2\cdot a_1= [a_1,a_2]$ is the Lie bracket shifted by one degree. 
\end{prop}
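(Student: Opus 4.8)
The plan is to work entirely in the infinitesimal setting, using that the local diffeomorphism $\varphi$ identifies a neighbourhood of $0\in\frg$ with a neighbourhood of $\unit_\sG\in\sG$ and that $\varphi_*|_0=\id$. Since the arguments we feed into $\varphi$ are of the form $a\theta$ with $\theta$ Gra\ss mann-odd, all higher powers of a single $\theta$ vanish, so every expression in the relevant superfunction algebra is at most linear in each $\theta_i$ and we are really only tracking the coefficients of $1$, $\theta_1$, $\theta_2$, and $\theta_1\theta_2$. First I would fix notation: write the group multiplication in the chart as $\varphi^{-1}(\varphi(X)\varphi(Y))=F(X,Y)$ for $X,Y$ in a neighbourhood of $0$, where $F$ is smooth, $F(X,0)=X$, $F(0,Y)=Y$, and (by $\varphi_*|_0=\id$) $F(X,Y)=X+Y+B(X,Y)+\cdots$ with $B$ bilinear and the dots denoting terms of cubic or higher total order. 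This is just the beginning of the Baker--Campbell--Hausdorff-type expansion, valid for any Lie group in a canonical-type chart, without assuming a matrix realisation.

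The inverse formula $\varphi(a\theta)^{-1}=\varphi(-a\theta)$ I would get as follows: set $X=a\theta$ and look for $Y=Y(a\theta)$ with $F(X,Y)=0$. Writing $Y=c\theta$ for some $c\in\frg$ (the only possible form, since $Y$ must be Gra\ss mann-odd and linear in $\theta$, as $\theta^2=0$ kills any $\theta^2$-term), we get $F(a\theta,c\theta)=(a+c)\theta+B(a\theta,c\theta)+\cdots$. Every bilinear or higher term contains $\theta^2=0$ and hence vanishes, so $F(a\theta,c\theta)=(a+c)\theta$, forcing $c=-a$. Thus $\varphi(a\theta)^{-1}=\varphi(-a\theta)$, which also confirms the first line of \eqref{eq:algebra_tensor_product}.

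For the product formula I would plug $X=a_1\theta_1$ and $Y=a_2\theta_2$ into $F$. Now $\theta_1\theta_2\neq 0$ but $\theta_1^2=\theta_2^2=0$, so the only surviving contributions beyond the linear term come from the bilinear part $B$ evaluated on $(a_1\theta_1, a_2\theta_2)$, namely $B(a_1,a_2)\,\theta_1\theta_2$; all cubic and higher terms in $F$ contain at least three factors of $\theta$'s drawn from $\{\theta_1,\theta_2\}$ and therefore vanish. Hence $\varphi^{-1}(\varphi(a_1\theta_1)\varphi(a_2\theta_2))=a_1\theta_1+a_2\theta_2+B(a_1,a_2)\,\theta_1\theta_2$, and I would then simply \emph{define} $a_1\cdot a_2:=-B(a_1,a_2)$ to match the sign convention in the statement (the minus sign is just bookkeeping, matching $\dd_{\rm K}$ conventions used later). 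Bilinearity of $\cdot$ is immediate from bilinearity of $B$. Finally, to identify $a_1\cdot a_2+a_2\cdot a_1$ with the Lie bracket, I would swap $\theta_1\leftrightarrow\theta_2$ — equivalently compare $\varphi(a_1\theta_1)\varphi(a_2\theta_2)$ with $\varphi(a_2\theta_2)\varphi(a_1\theta_1)$: their ``difference'' in the chart is $(B(a_1,a_2)-B(a_2,a_1))\theta_1\theta_2$ up to sign, and the antisymmetric part of the second-order BCH coefficient $B$ is exactly $\tfrac12[\cdot,\cdot]$ in canonical coordinates; with the degree-shift conventions here the factor is absorbed, giving $a_1\cdot a_2+a_2\cdot a_1=[a_1,a_2]$. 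The one point that needs a little care — and which I expect to be the main obstacle — is precisely this last identification: one must either invoke the standard fact that the symmetric part of $B$ can be gauged away by a choice of chart (so only the commutator survives under antisymmetrisation) or, more robustly, note that the definition of the Lie bracket on $T_{\unit_\sG}\sG$ via the group commutator $ghg^{-1}h^{-1}$ reproduces $B(a_1,a_2)-B(a_2,a_1)$ when computed with the Gra\ss mann-odd insertions, using the already-established inverse formula. I would take the latter route, since it avoids any dependence on the precise normalisation of $\varphi$ and keeps the computation internal to the objects at hand.
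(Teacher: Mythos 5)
Your proposal is correct and follows the paper's overall strategy --- nilpotency of the $\theta$'s truncates the Taylor expansion of the chart multiplication, the inverse formula drops out of $\theta^2=0$, and the bracket is extracted from the group commutator --- but it organises the middle step differently. You make the second-order Taylor coefficient $B$ of $F(X,Y)=\varphi^{-1}(\varphi(X)\varphi(Y))$ explicit and define $a_1\cdot a_2$ through it, so bilinearity is immediate from bilinearity of $B$; the paper never introduces $B$ and instead proves bilinearity by expanding the four-variable product $\varphi^{-1}(\varphi(a_1\theta_1+a_2\theta_2)\varphi(a_3\theta_3+a_4\theta_4))$ and specialising $\theta_1=\theta_2$, $\theta_3=\theta_4$. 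Your route is shorter and makes transparent why only the quadratic term survives; the paper's stays entirely inside the Gra{\ss}mann calculus, which is the style it needs later for the 2-group case. The treatment of the inverse (solving $F(a\theta,c\theta)=0$ directly rather than recovering it as the special case $\theta_1=\theta_2$, $a_2=-a_1$ of the product formula) is an equivalent reshuffling.

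The one place you should not leave things at ``up to sign'' is the final identification. If the components $a^i$ of $a\in\frg[-1]$ are treated as ordinary real numbers, then comparing $\varphi(a_1\theta_1)\varphi(a_2\theta_2)$ with $\varphi(a_2\theta_2)\varphi(a_1\theta_1)$ does \emph{not} produce the antisymmetric part of $B$: the reordering $\theta_2\theta_1=-\theta_1\theta_2$ converts the difference into a sum, and $a_1\cdot a_2+a_2\cdot a_1$ comes out proportional to the \emph{symmetric} part $B^k_{ij}+B^k_{ji}$, which is chart-dependent and vanishes in the exponential chart --- so it cannot be the bracket. The identity $a_1\cdot a_2+a_2\cdot a_1=[a_1,a_2]$ only holds because elements of $\frg[-1]$ carry Gra{\ss}mann-odd components, so that reordering $a_2^i\theta_2\,a_1^j\theta_1$ produces no net sign and the antisymmetric combination $B^k_{ij}-B^k_{ji}=f^k_{ij}$ survives; this same convention is what makes $\tfrac12[a,a]\neq 0$ in Proposition \ref{lemma:induced_differential}. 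Your preferred group-commutator route is indeed the robust one (its second-order part is chart-independent), but you must carry the parity of the $a_i$ through that computation explicitly rather than absorbing it into bookkeeping --- otherwise the ``standard fact'' you invoke lands you on the wrong (symmetric) combination of $B$.
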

\begin{proof}
First of all, it is clear that $\varphi^{-1}(\varphi(a_1\theta_1)\varphi(a_2\theta_2))$ is a polynomial in the Gra\ss mann-odd coordinates. The terms of this expression linear in $\theta_1$ and $\theta_2$ then follow from putting $\theta_1$ or $\theta_2$ to zero, respectively. In the special case when $\theta_1=\theta_2$ and $a_1=-a_2$, we recover the first equation of \eqref{eq:algebra_tensor_product}. It remains to understand the operation `$\cdot$'. For this, consider the expression
\begin{equation}
\begin{aligned}
&\varphi^{-1}(\varphi(a_1\theta_1+a_2\theta_2)\varphi(a_3\theta_3+a_4\theta_4))\ =\ a_1\theta_1+\cdots +a_4\theta_4\,-\\
 &\hspace{2cm}-\,a_1\cdot a_3\,\theta_1\theta_3-a_1\cdot a_4\,\theta_1\theta_4-
 a_2\cdot a_3\, \theta_2\theta_3-a_2\cdot a_4\,\theta_2\theta_4+\CO(\theta^3)~.
\end{aligned}
\end{equation}
This expansion follows from the second equation of \eqref{eq:algebra_tensor_product} and the special cases $a_1=a_3=0$, $a_1=a_4=0$, $a_2=a_3=0$, and $a_2=a_4=0$. Bilinearity of  `$\cdot$' then follows directly from this expression for $\theta_1=\theta_2$ and $\theta_3=\theta_4$ together with the second equation of \eqref{eq:algebra_tensor_product}. Furthermore, considering the algebra element corresponding to the group commutator
\begin{equation}
\begin{aligned}
  \varphi^{-1}(\varphi(-a_1\theta_1)\varphi(-a_2\theta_2)\varphi(a_1\theta_1)\varphi(a_2\theta_2))\ =\ (a_1\cdot a_2+a_2\cdot a_1)\theta_1\theta_2~,
\end{aligned}
\end{equation}
where the expansion follows from considering the cases either $a_1$ and/or $a_2$ vanish,
we find the shifted Lie bracket $[a_1,a_2]=a_1\cdot a_2+a_2\cdot a_1$. This concludes the proof.
\end{proof}

\begin{rem}
For matrix Lie groups, we may suggestively write
  \begin{equation}
  \begin{aligned}
   (\unit_\sG+a\theta)^{-1}\ &=\ \unit_\sG-a\theta~,\\
  (\unit_\sG+a_1\theta_1)(\unit_\sG+a_2\theta_2)\ &=\ \unit_\sG+\ a_1\theta_1+a_2\theta_2 - a_1\cdot a_2\,\theta_1\theta_2~.
  \end{aligned}
 \end{equation}
In addition, one may also define products between elements $g$ and $a$ of $\sG$ and $\frg[-1]$, respectively. For matrix Lie groups, we simply write $g a$. For general Lie groups, one replaces such expressions by the pullback $\sL^*_g a$ of $a$, where $\sL_g$ denotes left multiplication on $\sG$.
\end{rem}

We are now ready to discuss the computation of the Lie algebra of a Lie group by \v Severa's construction \cite{Severa:2006aa}. Consider a $\sG$-valued descent datum on the trivial projection $\FR^{0|1}\times X\rightarrow X$. That is, we a have smooth map $g:\FR^{0|1}\times \FR^{0|1}\times X\rightarrow \sG$ satisfying the  cocycle condition  \eqref{eq:cocycle-1-theta}. Since we are interested in the functor from the category of smooth manifolds to the category of descent data in the following, we shall suppress the explicit dependence on $x\in X$ and simply write $\{g_{01}:=g(\theta_0,\theta_1)\}$ with $g_{01}g_{12}=g_{02}$ and $g_{10}=g^{-1}_{01}$. Then, we have the following result.

\begin{lemma}\label{ref:lemma-trivialising-1}
Letting $g(\theta):=g(\theta,0)$, we have 
 \begin{equation}
  g(\theta_0,\theta_1)\ =\ g(\theta_0)g(\theta_1)^{-1}~.
 \end{equation}
\end{lemma}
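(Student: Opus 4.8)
The plan is to read the cocycle condition $g_{01}g_{12}=g_{02}$ at the special slice $\theta_2=0$, where it becomes $g(\theta_0,\theta_1)\,g(\theta_1,0)=g(\theta_0,0)$, i.e. $g(\theta_0,\theta_1)\,g(\theta_1)=g(\theta_0)$ in the notation $g(\theta):=g(\theta,0)$. Solving for $g(\theta_0,\theta_1)$ requires inverting $g(\theta_1)$, so the first point to establish is that $g(\theta_1)$ is indeed invertible as an element of $\sG$ (for each fixed $x$ and each value of the Gra{\ss}mann parameter). This follows because $g(0,0)=\unit_\sG$ by the cocycle condition (set $\theta_0=\theta_1=\theta_2=0$), hence $g(\theta)$ lies in a neighbourhood of $\unit_\sG$ and, more concretely, $g(\theta)=\varphi(a\theta)$ for a unique $a\in\frg[-1]$, whose inverse is $\varphi(-a\theta)$ by Proposition \ref{prop:mult-rule-1}.

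Once invertibility is in hand, the identity is immediate: from $g(\theta_0,\theta_1)\,g(\theta_1)=g(\theta_0)$ we get
\begin{equation}
 g(\theta_0,\theta_1)\ =\ g(\theta_0)\,g(\theta_1)^{-1}~.
\end{equation}
One should also check consistency, namely that the right-hand side actually satisfies the original cocycle condition and the normalisation $g(\theta,\theta)=\unit_\sG$: indeed $\big(g(\theta_0)g(\theta_1)^{-1}\big)\big(g(\theta_1)g(\theta_2)^{-1}\big)=g(\theta_0)g(\theta_2)^{-1}$ and $g(\theta)g(\theta)^{-1}=\unit_\sG$, which confirms the formula is not merely a necessary condition but reconstructs the descent datum correctly.

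I do not expect any serious obstacle here; this is a short bookkeeping argument. The only subtlety worth spelling out is that all the manipulations take place among $\sG$-valued functions of Gra{\ss}mann-odd coordinates, so "invertible" must be understood in the appropriate sense — but this is exactly what Proposition \ref{prop:mult-rule-1} and the remark following it provide, since every relevant group element is of the form $\varphi$ of a nilpotent ($\theta$-proportional) algebra element and hence lies in $U_\sG$ with an explicit inverse. Thus the proof is essentially: evaluate the cocycle condition at $\theta_2=0$, note $g(\theta_1)$ is invertible, and rearrange.
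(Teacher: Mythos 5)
Your proof is correct and is essentially the paper's argument: the paper simply declares the lemma an immediate consequence of the cocycle condition \eqref{eq:cocycle-1-theta}, which is exactly your evaluation at $\theta_2=0$ followed by rearrangement (invertibility being automatic since $g$ takes values in the group $\sG$). The extra consistency check and the remark on Gra{\ss}mann-valued invertibility are harmless elaborations of the same route.
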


\begin{proof}
This is an immediate consequence of \eqref{eq:cocycle-1-theta}. 
\end{proof}

Next, we may expand\footnote{To simplify notation, we use suggestive notation for matrix groups, which is readily translated to general expressions involving the diffeomorphism $\varphi$.} $g(\theta_0)=\unit_\sG+a\theta_0$ for some $a\in\frg[-1]$ since $g(0)=g(0,0)=\unit_\sG$. Together with the Propositions \ref{prop:action-dK} and \ref{prop:mult-rule-1}, we get the following result.

\begin{prop}\label{lemma:induced_differential}
 A $\sG$-valued descent datum on $\FR^{0|1}\times X\to X$ is parametrised by an element $a\in\frg[-1]$ according to 
 \begin{equation}
  g_{01}\ =\ (\unit_\sG+a\theta_0)(\unit_\sG-a\theta_1)\ =\ \unit_\sG+a(\theta_0-\theta_1)+\tfrac{1}{2}[a,a]\theta_0\theta_1~.
 \end{equation}
 The induced differential is given by
 \begin{equation}\label{eq:CE-MC-1}
  \dd_{\rm K} a+\tfrac{1}{2}[a,a]\ =\ 0~.
 \end{equation}
\end{prop}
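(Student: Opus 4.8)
The statement has two halves: the explicit parametrisation of $\sG$-valued descent data on $\FR^{0|1}\times X\to X$ by elements $a\in\frg[-1]$, and the identification of the induced differential. For the first half I would start from Lemma \ref{ref:lemma-trivialising-1}, which lets me write any such descent datum as $g_{01}=g(\theta_0)g(\theta_1)^{-1}$ with $g(\theta):=g(\theta,0)$ and $g(0)=\unit_\sG$. Since $g(\theta)$ depends on the single Gra{\ss}mann-odd coordinate $\theta$, its Taylor expansion terminates and, using the local diffeomorphism $\varphi$ with $\varphi_*|_0=\id$, it can be written as $g(\theta)=\varphi(a\theta)$ for a unique $a\in\frg[-1]$. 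Proposition \ref{prop:mult-rule-1} then gives $g(\theta_1)^{-1}=\varphi(-a\theta_1)$ together with
\[
 g_{01}\ =\ \varphi(a\theta_0)\varphi(-a\theta_1)\ =\ \varphi\big(a(\theta_0-\theta_1)+(a\cdot a)\,\theta_0\theta_1\big)~,
\]
and since $a\cdot a=\tfrac12(a\cdot a+a\cdot a)=\tfrac12[a,a]$ this is exactly the claimed expression (in the suggestive matrix notation $\unit_\sG+\cdots$). Conversely, for any $a\in\frg[-1]$ the formula $g_{01}:=\varphi(a\theta_0)\varphi(a\theta_1)^{-1}$ manifestly satisfies the cocycle condition \eqref{eq:cocycle-1-theta}, so descent data on $\FR^{0|1}\times X\to X$ are in bijection with $\frg[-1]$ and the displayed formula is the universal such datum.

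For the second half I would feed this universal datum into the operator $\dd_{\rm K}$ of Proposition \ref{prop:action-dK}, extended diagonally to several odd coordinates as in \eqref{eq:action-dK-1}, so that $\dd_{\rm K}g_{01}=\frac{\dd}{\dd\eps}\,g_{01}\big|_{\theta_i\mapsto\theta_i+\eps}$. Because $(\theta_0+\eps)-(\theta_1+\eps)=\theta_0-\theta_1$ while $(\theta_0+\eps)(\theta_1+\eps)=\theta_0\theta_1+\eps(\theta_1-\theta_0)$, only the quadratic term of $g_{01}$ is affected, and one reads off $\dd_{\rm K}g_{01}=-\tfrac12[a,a]\,(\theta_0-\theta_1)$. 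On the other hand $\dd_{\rm K}$ is the differential on the representing object $\frg[-1]$ and therefore acts on $g_{01}$ as a derivation leaving the test-object coordinates $\theta_i$ untouched; equivalently, the $\eps$-shifted datum is again of the above normal form but now with parameter $a+\eps\,\dd_{\rm K}a$. Comparing the coefficients of $\theta_0-\theta_1$ gives $\dd_{\rm K}a=-\tfrac12[a,a]$, i.e.\ $\dd_{\rm K}a+\tfrac12[a,a]=0$, while the (automatically vanishing) coefficient of $\theta_0\theta_1$ reproduces $\dd_{\rm K}[a,a]=0$, consistent with $\dd_{\rm K}^2=0$.

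I expect the only genuine subtlety to be the legitimacy of this last ``matching of coefficients'', i.e.\ the assertion that the $\eps$-deformed datum is again in normal form with a well-defined deformed parameter $\hat a=a+\eps\,\dd_{\rm K}a$. This can be made airtight by an explicit re-trivialisation: set $\hat g(\theta):=g(\theta+\eps)g(\eps)^{-1}$, which satisfies $\hat g(0)=\unit_\sG$ and $\hat g(\theta_0)\hat g(\theta_1)^{-1}=g_{01}\big|_{\theta_i\mapsto\theta_i+\eps}$, and apply Proposition \ref{prop:mult-rule-1} once more to $\varphi(a(\theta+\eps))\varphi(-a\eps)$ to obtain $\hat g(\theta)=\varphi(\hat a\theta)$ with $\hat a=a-\tfrac12\eps[a,a]$; differentiating in $\eps$ recovers $\dd_{\rm K}a=-\tfrac12[a,a]$. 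Beyond this point the argument is just careful bookkeeping of Gra{\ss}mann signs (the parity of $[a,a]\in\frg[-2]$, the ordering of $\eps$ relative to $\theta_0,\theta_1$), which I would keep aligned with the conventions fixed in the example following \eqref{eq:induced_differential}.
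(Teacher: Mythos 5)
Your proposal is correct and follows essentially the same route as the paper: trivialise via Lemma \ref{ref:lemma-trivialising-1}, expand $g(\theta)=\unit_\sG+a\theta$ using $g(0)=\unit_\sG$, apply Proposition \ref{prop:mult-rule-1} with $a\cdot a=\tfrac12[a,a]$ to get the displayed cocycle, and read off the induced differential from the diagonal $\eps$-shift of Proposition \ref{prop:action-dK}. The explicit re-trivialisation $\hat g(\theta)=g(\theta+\eps)g(\eps)^{-1}$ is a welcome sharpening of the paper's implicit ``compare coefficients'' step (cf.\ the example around \eqref{eq:induced_differential}), but it does not change the argument.
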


As stated previously, we wish to identify the induced action of the differential $\dd_{\rm K}$ with the Chevalley--Eilenberg differential $\dd_{\rm CE}$ on $\frg$. Recall that the Chevalley--Eilenberg differential of a Lie algebra $\frg$ acts as 
\begin{equation}\label{eq:CE-Lie-algebra}
 \dd_{\rm CE} \check{\tau}^i\ =\ -\tfrac{1}{2}f^i_{jk}\check{\tau}^j\wedge \check{\tau}^k~,
\end{equation}
on basis elements $\check \tau^i$ of the dual Lie algebra $\frg^\vee$ of $\frg$. Here, the $f^i_{jk}$ are the structure constants of $\frg$ with respect to the basis elements $\tau_i$ of $\frg$ with $\check \tau^i(\tau_j)=\delta^i_j$. The equation \eqref{eq:CE-MC-1} amounts to the Maurer--Cartan equation $\dd_{\rm CE}a+\tfrac{1}{2}[a,a]=0$ which should be regarded as the equation \eqref{eq:CE-Lie-algebra} evaluated for a polynomial in $a^i$ with $a=a^i\tau_i$.

Altogether, we have proved the following theorem.
\begin{theorem}
 The functor from the category of smooth manifolds $X$ to the category of $\sG$-valued descent data on surjective submersions $\FR^{0|1}\times X\rightarrow X$ is parameterised by elements of $\frg[-1]$ with $\frg=\sLie(\sG)$. The action of the differential $\dd_{\rm K}$ on descent data yields the action of the Chevalley--Eilenberg differential corresponding to $\frg$.
\end{theorem}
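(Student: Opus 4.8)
\textit{Strategy.} I would split the statement into two claims: (a) $\sG$-valued descent data on $\FR^{0|1}\times X\to X$ are naturally parameterised by $\frg[-1]$, and (b) under this parameterisation the operator $\dd_{\rm K}$ becomes the Chevalley--Eilenberg differential $\dd_{\rm CE}$ of $\frg$. Both are essentially consequences of the lemmas already assembled above; the plan is to stitch them together and supply the naturality and well-definedness checks.

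\textit{Step 1 (parametrisation).} I would start from an arbitrary descent datum $\{g_{01}=g(\theta_0,\theta_1)\}$ obeying the cocycle condition \eqref{eq:cocycle-1-theta} and invoke Lemma \ref{ref:lemma-trivialising-1} to write $g_{01}=g(\theta_0)g(\theta_1)^{-1}$ with $g(\theta):=g(\theta,0)$, so that the datum is encoded in the single map $g(\theta):\FR^{0|1}\times X\to\sG$ with $g(0)=\unit_\sG$. Since such a $g(\theta)$ lands pointwise in $U_\sG$, I would use the local diffeomorphism $\varphi$ of Proposition \ref{prop:mult-rule-1} to write $g(\theta)=\varphi(a\theta)$ for a unique $a\in\frg[-1]$ (a smooth $\frg$-valued function on $X$ placed in degree $-1$), and check that this assignment, together with its inverse, is natural under pullback along smooth maps of manifolds. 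That gives the equivalence of functors claimed in (a).

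\textit{Step 2 (the differential).} Applying Proposition \ref{prop:mult-rule-1} I would expand, as in Proposition \ref{lemma:induced_differential},
\[
 g_{01}\ =\ \varphi(a\theta_0)\varphi(-a\theta_1)\ =\ \unit_\sG+a(\theta_0-\theta_1)+\tfrac{1}{2}[a,a]\,\theta_0\theta_1~,
\]
and then act with $\dd_{\rm K}$ using the diagonal rule \eqref{eq:action-dK-1} of Proposition \ref{prop:action-dK}. Demanding that the transformed expression again be of this form (equivalently, extracting the coefficient of $\theta_0\theta_1$ after the shift $\theta_i\mapsto\theta_i+\eps$) forces the induced relation $\dd_{\rm K}a+\tfrac12[a,a]=0$, which is \eqref{eq:CE-MC-1}. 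Writing $a=a^i\tau_i$ for a basis $\{\tau_i\}$ of $\frg$ with structure constants $f^i_{jk}$, this is exactly $\dd_{\rm K}a^i=-\tfrac12 f^i_{jk}a^ja^k$, i.e. formula \eqref{eq:CE-Lie-algebra} for $\dd_{\rm CE}$ evaluated on the coordinate functions of $\frg[-1]$; by the graded Leibniz rule for $\dd_{\rm K}$, agreement on generators propagates to everything, establishing (b).

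\textit{Expected obstacle.} The algebra in Step 2 is routine once Propositions \ref{prop:mult-rule-1} and \ref{prop:action-dK} are in hand. The delicate part will be the bookkeeping in Step 1: making "parameterised by $\frg[-1]$" precise enough that $\dd_{\rm K}$ descends to a genuine operator on functions on $\frg[-1]$, and confirming that the natural $C^\infty(\FR^{0|1},\FR^{0|1})$-action is compatible both with the trivialisation $g_{01}=g(\theta_0)g(\theta_1)^{-1}$ and with the diffeomorphism $\varphi$ when $\sG$ is not a matrix group — which is precisely why Proposition \ref{prop:mult-rule-1} was phrased in terms of $\varphi$ rather than matrix entries.
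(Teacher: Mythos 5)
Your proposal is correct and follows essentially the same route as the paper: it assembles Lemma \ref{ref:lemma-trivialising-1} (trivialisation $g_{01}=g(\theta_0)g(\theta_1)^{-1}$), Proposition \ref{prop:mult-rule-1} (the expansion via $\varphi$ defining the product on $\frg[-1]$), and Proposition \ref{lemma:induced_differential} (the induced relation $\dd_{\rm K}a+\tfrac12[a,a]=0$), then identifies this with the Chevalley--Eilenberg differential via \eqref{eq:CE-Lie-algebra}. The additional naturality and well-definedness remarks you flag are left implicit in the paper but do not change the argument.
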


Finally, let us consider \v Cech coboundary transformations on  $\{g_{01}=g(\theta_0,\theta_1)\}$. Such transformations are parameterised by smooth maps $p:\FR^{0|1}\rightarrow \sG$ with $\{p_0:=p(\theta_0)\}$ and $p(\theta)=p+ \pi\theta$ for some $p\in\sG$ and $\pi\in T_p[-1]\sG$ according to
\begin{subequations}
\begin{equation}
 \tilde g_{01}\ =\ p_0g_{01}p_1^{-1}\ =\ \unit_\sG+\tilde a(\theta_0-\theta_1)+\tfrac12 [\tilde a,\tilde a]\theta_0\theta_1~,
\end{equation}
where
\begin{equation}
 \tilde a\ :=\ pap^{-1}+\pi p^{-1}~.
\end{equation} 
\end{subequations}
Together with the induced differential $\dd_{\rm K} p=-\pi$, we obtain the following.
\begin{prop}\label{prop:pre-gauge}
 Consider two equivalent $\sG$-valued descent data that are parametrised by $a\in\frg[-1]$ and $\tilde a\in\frg[-1]$, respectively. Then there is a \v Cech coboundary transformations between these, which is parametrised by $p:\FR^{0|1}\rightarrow \sG$ with $p(\theta)=p+ \pi\theta$ for some $p\in\sG$ and $\pi\in T_p[-1]\sG$, such that
 \begin{equation}
  \tilde a \ =\ p a p^{-1}+\pi p^{-1}\ =\ p a p^{-1}+p\dd_{\rm K} p^{-1}~.
 \end{equation}
 The equation $\dd_{\rm K} a+\tfrac{1}{2}[a,a]=0$ is invariant under coboundary transformations. That is, whenever $\dd_{\rm K} a+\tfrac{1}{2}[a,a]=0$ we have $\dd_{\rm K} \tilde a+\tfrac{1}{2}[\tilde a,\tilde a]=0$ and vice versa.
\end{prop}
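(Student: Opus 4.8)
The plan is to assemble the ingredients established just above and then to add the covariance computation. Recall that a \v Cech coboundary transformation of a $\sG$-valued descent datum on $\FR^{0|1}\times X\to X$ is a map $p:\FR^{0|1}\to\sG$ acting by $\tilde g_{01}=p_0\, g_{01}\, p_1^{-1}$, the descent-theoretic analogue of \eqref{eq:EquivCocycPB}. Since $\FR^{0|1}$ carries a single Gra\ss mann-odd coordinate, $p$ is of the form $p(\theta)=p+\pi\theta$ with $p\in\sG$ and $\pi\in T_p[-1]\sG$, and $p_0:=p(\theta_0)$.

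First I would re-trivialise the transformed datum. Applying Lemma~\ref{ref:lemma-trivialising-1} to $\{\tilde g_{01}\}$ and inserting $g_{01}=g(\theta_0)g(\theta_1)^{-1}$ with $g(\theta)=\unit_\sG+a\theta$ from Proposition~\ref{lemma:induced_differential}, one gets $\tilde g_{01}=\tilde g(\theta_0)\tilde g(\theta_1)^{-1}$ with $\tilde g(\theta):=\tilde g(\theta,0)=p(\theta)g(\theta)p^{-1}$ and $\tilde g(0)=\unit_\sG$. Expanding and using $\theta^2=0$,
\begin{equation}
 \tilde g(\theta)\ =\ (p+\pi\theta)(\unit_\sG+a\theta)p^{-1}\ =\ \unit_\sG+\big(pap^{-1}+\pi p^{-1}\big)\theta~,
\end{equation}
so Proposition~\ref{lemma:induced_differential} identifies the transformed datum with the parameter $\tilde a=pap^{-1}+\pi p^{-1}\in\frg[-1]$; for a general, non-matrix Lie group all products here are to be read via the diffeomorphism $\varphi$ and the translations on $\sG$, exactly as in the Remark after Proposition~\ref{prop:mult-rule-1}. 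To recast the inhomogeneous term I would use $\dd_{\rm K}p=-\pi$ together with the graded Leibniz rule \eqref{eq:Leibniz} applied to $p\,p^{-1}=\unit_\sG$, giving $\dd_{\rm K}p^{-1}=-p^{-1}(\dd_{\rm K}p)p^{-1}=p^{-1}\pi p^{-1}$, hence $p\,\dd_{\rm K}p^{-1}=\pi p^{-1}$ and $\tilde a=pap^{-1}+p\,\dd_{\rm K}p^{-1}$, which is the first assertion.

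For the invariance of the equation $\dd_{\rm K}a+\tfrac{1}{2}[a,a]=0$ there is both a one-line argument and a hands-on one. The short way: $\tilde g_{01}=\tilde g(\theta_0)\tilde g(\theta_1)^{-1}$ with $\tilde g(0)=\unit_\sG$ is again a descent datum in the standard form of Proposition~\ref{lemma:induced_differential}, whose induced differential is, by that proposition, precisely $\dd_{\rm K}\tilde a+\tfrac{1}{2}[\tilde a,\tilde a]=0$; reading the coboundary relation backwards (i.e.\ $p\mapsto p^{-1}$, $a\leftrightarrow\tilde a$) yields the converse, so the equation holds for $a$ if and only if it holds for $\tilde a$. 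The direct way: differentiate $\tilde a=pap^{-1}+\pi p^{-1}$ using $\dd_{\rm K}p=-\pi$, $\dd_{\rm K}\circ\dd_{\rm K}=0$, the Leibniz rule and $\Ad$-invariance of the bracket, and collect the terms into $\dd_{\rm K}\tilde a+\tfrac{1}{2}[\tilde a,\tilde a]=\Ad_p\big(\dd_{\rm K}a+\tfrac{1}{2}[a,a]\big)$, which vanishes if and only if the bracketed expression does.

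The conceptual input is thus minimal; the only delicate part is bookkeeping: making ``$pap^{-1}$'' and ``$\pi p^{-1}$'' precise for a general Lie group through $\varphi$ and the translations on $\sG$, and checking that the above formal identities survive this translation, while tracking the signs generated by the Gra\ss mann-odd parameter $\eps$ in $\dd_{\rm K}$ and by the Gra\ss mann parities in \eqref{eq:Leibniz}. None of this affects the final formulas, which are forced by specialising $\theta_i$ or $\eps$ to zero in the appropriate order, just as in the proof of Proposition~\ref{prop:mult-rule-1}.
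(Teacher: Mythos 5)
Your proposal is correct and follows essentially the same route as the paper: the paper likewise expands $\tilde g_{01}=p_0\,g_{01}\,p_1^{-1}$ (with $p(\theta)=p+\pi\theta$, $\dd_{\rm K}p=-\pi$) into the standard form $\unit_\sG+\tilde a(\theta_0-\theta_1)+\tfrac12[\tilde a,\tilde a]\theta_0\theta_1$, reads off $\tilde a=pap^{-1}+\pi p^{-1}=pap^{-1}+p\,\dd_{\rm K}p^{-1}$, and obtains the invariance of the Maurer--Cartan equation because the transformed datum is again of the form governed by Proposition~\ref{lemma:induced_differential}. Your intermediate step of first re-trivialising via $\tilde g(\theta)=p(\theta)g(\theta)p^{-1}$, and your optional direct check $\dd_{\rm K}\tilde a+\tfrac12[\tilde a,\tilde a]=\Ad_p\big(\dd_{\rm K}a+\tfrac12[a,a]\big)$, are only minor organisational variations on the same argument.
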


\begin{rem}
Note that by replacing $\dd_{\rm K}$ by the de Rham differential in all of the above, we recover the definition of the curvature of a connection 1-form on a principal bundle with structure group $\sG$ as well as its gauge transformation. We will make use of this observation later on.
\end{rem}

\subsection{Semistrict Lie 2-algebra of a semistrict Lie 2-group}\label{ssec:Severa-2-groups}

Now we generalise the previous discussion to the case of semistrict Lie 2-groups $\CCG=(M,N)$, which we shall regard as a weak Lie 2-groupoid $\sB\CCG(\{e\},M,N)$ in the following. In this case, the local diffeomorphism $\varphi=(\varphi_M,\varphi_N)$ goes between neighbourhoods $U_\frm$ of $\frm:=T_{\id_{e}}M$ and $U_\frn$ of $\frn:=T_{\id_{\id_{e}}}N$ as well as neighbourhoods $U_M$ of $\id_e$ and $U_N$ of $\id_{\id_{e}}$. As before, $\varphi(0)=(\id_e,\id_{\id_e})$ and $\varphi_*|_0$ is the identity. Following our previous discussion, we shall again be interested in infinitesimal neighbourhoods and we shall always write suggestively $\id_e+a \theta$ and $\id_{\id_e}+b\theta$ for $\varphi_M(a\theta)$ and $\varphi_N(b\theta)$, where $a\in \frm[-1]$ and $b\in \frn[-1]$. 

\begin{prop}\label{prop:linearized-tensor-products}
The bifunctor $\otimes:\sB\CCG\times\sB\CCG\to \sB\CCG$ induces bilinear non-associative products $\otimes:\frm[-1]\times\frm[-1]\to\frm[-2]$ and $\otimes:\frn[-1]\times\frn[-1]\to\frn[-2]$ by means of
 \begin{equation}\label{eq:def:tensor-product}
 \begin{aligned}
  (\id_e+a_1\theta_1)\otimes(\id_e+a_2\theta_2)\ &=\ \id_e+a_1\theta_1+a_2\theta_2-a_1\otimes a_2\,\theta_1\theta_2~,\\
   (\id_{\id_e}+b_1\theta_1)\otimes(\id_{\id_e}+b_2\theta_2)\ &=\ \id_{\id_e}+b_1\theta_1+b_2\theta_2-b_1\otimes b_2\,\theta_1\theta_2~,
  \end{aligned}
  \end{equation}
  where $a_{1,2}\in \frm[-1]$ and $b_{1,2}\in \frn[-1]$, respectively.
\end{prop}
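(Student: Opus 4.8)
The plan is to run the argument in exact parallel to that of Proposition~\ref{prop:mult-rule-1}, carried out separately on the 1-cell space $\frm=T_{\id_e}M$ and the 2-cell space $\frn=T_{\id_{\id_e}}N$. The two inputs I would use are that $\otimes$ is a bifunctor internal to $\CatDiffCat$, hence smooth both on 1-cells and on 2-cells, and that in a semistrict 2-group the unitors are trivial (so $x\otimes\id_e=x=\id_e\otimes x$ on 1-cells and $n\otimes\id_{\id_e}=n=\id_{\id_e}\otimes n$ on 2-cells).

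First I would transport $\otimes$ through the local diffeomorphism $\varphi=(\varphi_M,\varphi_N)$ and restrict to infinitesimal neighbourhoods. Smoothness then forces
\begin{equation}
 \varphi_M^{-1}\big(\varphi_M(a_1\theta_1)\otimes\varphi_M(a_2\theta_2)\big)\ =\ c_0+c_1\,\theta_1+c_1'\,\theta_2+c_2\,\theta_1\theta_2
\end{equation}
with $c_0\in\frm$ and $c_1,c_1'$ linear in $a_1,a_2$ respectively (a polynomial of degree $\le 2$ in the Gra{\ss}mann-odd parameters), and similarly for the $N$-valued expression with $\varphi_N$ in place of $\varphi_M$. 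Next I would pin down the low-order terms by specialisation: setting $\theta_2=0$ gives $\varphi_M(a_1\theta_1)\otimes\varphi_M(0)=\varphi_M(a_1\theta_1)\otimes\id_e=\varphi_M(a_1\theta_1)$ by triviality of the left unitor $\sfl$, so $c_0=0$ and $c_1=a_1$; symmetrically, setting $\theta_1=0$ and using triviality of $\sfr$ gives $c_1'=a_2$. The only remaining datum is the quadratic coefficient, which I would \emph{define} to be $-a_1\otimes a_2$, thereby introducing the product $\otimes:\frm[-1]\times\frm[-1]\to\frm[-2]$. Applying the same three steps to the $N$-valued expression defines $\otimes:\frn[-1]\times\frn[-1]\to\frn[-2]$; here one uses in addition that $\sB\CCG$ has a single 0-cell, so horizontal composition of 2-cells is everywhere defined, and again that the unitors act trivially on 2-cells.

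To prove bilinearity I would imitate the corresponding step of Proposition~\ref{prop:mult-rule-1}: apply the defining formula to $\varphi_M(a_1\theta_1+a_2\theta_2)\otimes\varphi_M(a_3\theta_3+a_4\theta_4)$ and use the four special cases in which two of the $a_i$ vanish, obtaining
\begin{equation}
 \varphi_M^{-1}\big(\varphi_M(a_1\theta_1+a_2\theta_2)\otimes\varphi_M(a_3\theta_3+a_4\theta_4)\big)\ =\ \sum_i a_i\theta_i-\sum_{i\in\{1,2\},\,j\in\{3,4\}}a_i\otimes a_j\,\theta_i\theta_j+\CO(\theta^3)~.
\end{equation}
Setting $\theta_1=\theta_2$ and $\theta_3=\theta_4$ and comparing with the defining formula evaluated on the left factor $\id_e+(a_1+a_2)\theta_1$ and right factor $\id_e+(a_3+a_4)\theta_3$ yields $(a_1+a_2)\otimes(a_3+a_4)=\sum_{i,j}a_i\otimes a_j$, i.e.\ additivity in each slot; homogeneity follows by rescaling $\theta_i\mapsto\lambda\theta_i$ with $\lambda\in\FR$. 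The identical computation with $\varphi_N$ gives bilinearity of the $\frn$-valued product. I would also point out that associativity is not claimed and indeed fails in general, because the associator $\sfa$ of $\CCG$ is unrestricted; this non-associativity is precisely what will produce the higher product $\mu_3$ of the resulting 2-term $L_\infty$-algebra.

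The argument is essentially bookkeeping once the machinery of Section~\ref{sec:Lie-functor} is in place; the only two points that deserve care are (i) the legitimacy of passing from the smooth bifunctor $\otimes$ to an honest polynomial in the Gra{\ss}mann-odd parameters via $\varphi$, which is the same manoeuvre used for the Lie group case in Proposition~\ref{prop:mult-rule-1}, and (ii) the observation that for the $N$-valued identity no source--target compatibility needs to be imposed, since in $\sB\CCG$ all 2-cells are horizontally composable. I expect no genuine obstacle beyond these.
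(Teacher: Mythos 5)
Your proposal is correct and follows essentially the same route as the paper, whose proof of this proposition consists of the single remark that it is ``essentially the same as the one given for Proposition \ref{prop:mult-rule-1}''; your write-up simply spells out that argument for $\varphi_M$ and $\varphi_N$ separately, with the unitor-triviality of the semistrict 2-group correctly playing the role of the strict group unit in fixing the zeroth- and first-order coefficients.
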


\begin{proof}
 The proof is essentially the same as the one given for Proposition \ref{prop:mult-rule-1}.
\end{proof}

We now turn to the maps induced by the structure maps $\sfs$, $\sft$, and $\id$ on $\frn[-1]$ and $\frm[-1]$. Note that for elements $a\in\frm[-1]$ and $b\in\frn[-1]$, we have
\begin{equation}
\begin{aligned}
 \id_{\id_e+a\theta}\ =\ &\id_{\id_e}+\id_*(a)\theta~,\\
 \sfs(\id_{\id_e}+b\theta)\ =\ \id_e+\sfs_*(b)\theta~&,~~~\sft(\id_{\id_e}+b\theta)\ =\ \id_e+\sft_*(b)\theta~,
\end{aligned}
\end{equation}
where the differentials are to be taken at $\id_{\id_e}$ and $\id_e$, respectively. More generally, the following result holds.

\begin{prop}
 Around $\id_e+a_1\theta_1+a_2\theta_2$ and $\id_{\id_e}+b_1\theta_1+b_2\theta_2$ for some $a_{1,2}\in\frm[-1]$ and $b_{1,2}\in \frn[-1]$, the structure maps expand as follows:
 \begin{equation}
 \begin{aligned}
  &\id_{\id_e+a_1\theta_1+a_2\theta_2}\ =\\ 
  &\hspace{1cm}=\ \id_{\id_e}+\id_*(a_1)\theta_1+\id_*(a_2)\theta_2+(\id_*(a_1\otimes a_2)-\id_*(a_1)\otimes \id_*(a_2))\theta_1\theta_2~, \\
  &\sfs(\id_{\id_e}+b_1\theta_1+b_2\theta_2)\ =\\
  &\hspace{1cm}=\ \id_e+\sfs_*(b_1)\theta_1+\sfs_*(b_2)\theta_2+(\sfs_*(b_1\otimes b_2)-\sfs_*(b_1)\otimes \sfs_*(b_2))\theta_1\theta_2~, \\
  &\sft(\id_{\id_e}+b_1\theta_1+b_2\theta_2)\ =\\
  &\hspace{1cm}=\ \id_e+\sft_*(b_1)\theta_1+\sft_*(b_2)\theta_2+(\sft_*(b_1\otimes b_2)-\sft_*(b_1)\otimes \sft_*(b_2))\theta_1\theta_2~. 
 \end{aligned}
 \end{equation}
\end{prop}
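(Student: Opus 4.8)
The plan is to reduce the statement to a routine Taylor-expansion computation in the Gra\ss mann-odd parameters, using the already-established bilinear products $\otimes$ on $\frm[-1]$ and $\frn[-1]$ from Proposition \ref{prop:linearized-tensor-products} together with the chain rule for the smooth structure maps $\id$, $\sfs$, and $\sft$. Concretely, recall that in our suggestive notation $\id_e+a_1\theta_1+a_2\theta_2$ abbreviates $\varphi_M(a_1\theta_1+a_2\theta_2)$, which by the polynomial nature of maps out of $\FR^{0|2}$ can always be written as a finite expansion in $\theta_1,\theta_2$; the same holds for any smooth map applied to it. So for a generic smooth map $F$ sending a neighbourhood of $\id_e$ to some manifold (later specialised to $\id$, $\sfs$, $\sft$), write $F(\id_e+a_1\theta_1+a_2\theta_2)=F(\id_e)+c_1\theta_1+c_2\theta_2+c_{12}\theta_1\theta_2$ and determine the four coefficients one by one.

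First I would fix the linear terms: setting $\theta_2=0$ gives $F(\id_e+a_1\theta_1)=F(\id_e)+F_*(a_1)\theta_1$ by the very definition of the differential $F_*$ at $\id_e$ (this is exactly the single-parameter statement quoted just above the proposition), hence $c_1=F_*(a_1)$ and, symmetrically, $c_2=F_*(a_2)$. The only real content is the quadratic coefficient $c_{12}$. To extract it I would use the key bookkeeping identity already exploited in the proofs of Propositions \ref{prop:mult-rule-1} and \ref{prop:linearized-tensor-products}: the products $\otimes$ were \emph{defined} precisely so that $\varphi_M^{-1}\big((\id_e+a_1\theta_1)(\id_e+a_2\theta_2)\big)=a_1\theta_1+a_2\theta_2-a_1\otimes a_2\,\theta_1\theta_2$, so that $\id_e+a_1\theta_1+a_2\theta_2$ and $(\id_e+a_1\theta_1)\otimes(\id_e+a_2\theta_2)$ differ exactly by the $\theta_1\theta_2$-term $+a_1\otimes a_2\,\theta_1\theta_2$ (compare \eqref{eq:def:tensor-product}). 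Applying $F$ and using that $F$ is a functor/homomorphism compatible with $\otimes$ on the image side, one compares $F(\id_e+a_1\theta_1+a_2\theta_2)$ with $F(\id_e+a_1\theta_1)\,\tilde\otimes\, F(\id_e+a_2\theta_2)$: the latter, again by \eqref{eq:def:tensor-product} applied on the target, equals $F(\id_e)+F_*(a_1)\theta_1+F_*(a_2)\theta_2-F_*(a_1)\otimes F_*(a_2)\,\theta_1\theta_2$, while the former carries the extra correction $F_*(a_1\otimes a_2)\,\theta_1\theta_2$ coming from the difference of the two arguments. Adding these contributions yields $c_{12}=F_*(a_1\otimes a_2)-F_*(a_1)\otimes F_*(a_2)$, which is exactly the stated formula; specialising $F$ to $\id$ (with target $N$, using the product on $\frn[-1]$), to $\sfs$, and to $\sft$ (both with target $M$, using the product on $\frm[-1]$) gives the three displayed expansions. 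The case of $\sfs$ and $\sft$ uses that these are homomorphisms for $\otimes$ in $\sB\CCG$, hence their differentials intertwine the linearised $\otimes$-products, which is what makes the cross-term combine as claimed.

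The main obstacle is making the comparison argument for $c_{12}$ genuinely watertight rather than merely formal: one must be careful that ``$F$ compatible with $\otimes$'' is being used correctly for each of $\id$, $\sfs$, $\sft$ — for $\sfs$ and $\sft$ this is functoriality of source/target in the weak 2-groupoid $\sB\CCG$ (they strictly commute with horizontal composition), and for $\id$ it is the fact that $\id_{m\otimes m'}=\id_m\otimes\id_{m'}$ holds on the nose. An alternative, perhaps cleaner, route that avoids invoking any compatibility is purely computational: write $F(\id_e+a_1\theta_1+a_2\theta_2)$, expand $F$ in a (formal) second-order Taylor expansion about $\id_e$ in the chart $\varphi_M$, substitute the argument $a_1\theta_1+a_2\theta_2$, and collect the $\theta_1\theta_2$-coefficient; the quadratic piece of $F$ is symmetric and contributes a term built from the product $\otimes$ (which itself encodes the second-order structure of multiplication), and matching against the definition \eqref{eq:def:tensor-product} on the target recovers $F_*(a_1\otimes a_2)-F_*(a_1)\otimes F_*(a_2)$. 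Either way the computation is short; I would present the comparison argument as the proof and relegate the bilinearity/well-definedness remarks to a sentence, exactly as was done in the proof of Proposition \ref{prop:linearized-tensor-products}.
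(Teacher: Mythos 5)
Your proposal is correct and follows essentially the same route as the paper: the paper's proof likewise invokes the compatibility $\id_{(\id_e+a_1\theta_1)\otimes(\id_e+a_2\theta_2)}=\id_{\id_e+a_1\theta_1}\otimes\id_{\id_e+a_2\theta_2}$ (and its strict analogues for $\sfs$ and $\sft$, which hold because $\otimes$ is a bifunctor), expands both sides via Proposition \ref{prop:linearized-tensor-products}, and compares the $\theta_1\theta_2$-coefficients. Your sign bookkeeping for the cross term is right, and your justification of why the compatibility holds on the nose for each of $\id$, $\sfs$, $\sft$ is a useful explicit remark that the paper leaves implicit.
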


\begin{proof}
 The map $\id$ is compatible with $\otimes$ on $M$ in the following way:
  \begin{equation}
    \id_{(\id_e+a_1\theta_1)\otimes(\id_e+a_2\theta_2)}\ =\ \id_{\id_e+a_1\theta_1}\otimes \id_{\id_e+a_2\theta_2}~.
  \end{equation}
 Expanding both sides of this equation according to Proposition \ref{prop:linearized-tensor-products}
 yields the desired result. The argument for the maps $\sfs$ and $\sft$ is fully analogous.
\end{proof}

Finally, we have to discuss an induced concatenation map on $\frn[-1]$. Note that if $\sfs_*(b_1)=\sft_*(b_2)$ for some $b_{1,2}\in\frn[-1]$, then $\sfs(\id_{\id_e}+b_1\theta)=\sft(\id_{\id_e}+b_2\theta)$.

\begin{definition}\label{def:induced-concatenation}
 For elements $b_{1,2}\in \frn[-1]$ with $\sfs_*(b_1)=\sft_*(b_2)$, we define implicitly
\begin{equation}
 (\id_{\id_e}+b_1\theta)\circ(\id_{\id_e}+b_2\theta)\ =:\ \id_{\id_e}+b_1\circ b_2\,\theta~.
\end{equation}
\end{definition}

It trivially follows that $b_1\circ 0=b_1$ for $\sfs_*(b_1)=0$ and $0\circ b_2=b_2$ for $\sft_*(b_2)=0$. More generally, the induced concatenation map satisfies the following.
\begin{prop}\label{prop:induced-concatenation}
 For $b_{1,2,3,4}\in \frn[-1]$ with $\sfs_*(b_1)=\sft_*(b_3)$, $\sfs_*(b_2)=\sft_*(b_4)$, and $\sfs_*(b_1\otimes b_2)=\sft_*(b_3\otimes b_4)$,
we have
\begin{equation}
  \begin{aligned}
  (\id_{\id_e}+b_1\theta_1+b_2\theta_2)\circ(\id_{\id_e}+b_3\theta_1+b_4\theta_2)\ =\ \id_{\id_e}+b_1\circ b_3\,\theta_1+b_2\circ b_4\,\theta_2~.
  \end{aligned}
\end{equation}
\end{prop}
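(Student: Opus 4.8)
The plan is to mimic the proofs of Proposition~\ref{prop:mult-rule-1} and of the expansion of the structure maps $\sfs,\sft,\id$ given just above: feeding an $\FR^{0|2}$-point into the smooth vertical-composition map $\circ$ yields an expression polynomial in the Gra{\ss}mann-odd coordinates $\theta_1,\theta_2$, and its coefficients are pinned down by specialising the $\theta$'s.

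First I would check that the composite is actually defined and that this is exactly what the three hypotheses encode. By the preceding proposition, $\sfs(\id_{\id_e}+b_1\theta_1+b_2\theta_2)=\id_e+\sfs_*(b_1)\theta_1+\sfs_*(b_2)\theta_2+\big(\sfs_*(b_1\otimes b_2)-\sfs_*(b_1)\otimes\sfs_*(b_2)\big)\theta_1\theta_2$, with the analogous expansion (involving $\sft$) for $\id_{\id_e}+b_3\theta_1+b_4\theta_2$; comparing the coefficients of $\theta_1,\theta_2,\theta_1\theta_2$ and using $\sfs_*(b_1)\otimes\sfs_*(b_2)=\sft_*(b_3)\otimes\sft_*(b_4)$ shows that $\sfs(\id_{\id_e}+b_1\theta_1+b_2\theta_2)=\sft(\id_{\id_e}+b_3\theta_1+b_4\theta_2)$ holds precisely under $\sfs_*(b_1)=\sft_*(b_3)$, $\sfs_*(b_2)=\sft_*(b_4)$ and $\sfs_*(b_1\otimes b_2)=\sft_*(b_3\otimes b_4)$.

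Write the composite as $\id_{\id_e}+c_1\theta_1+c_2\theta_2+c_{12}\theta_1\theta_2$ with $c_{1,2}\in\frn[-1]$ and $c_{12}\in\frn[-2]$. Pulling back along the morphism of supermanifolds $\FR^{0|1}\hookrightarrow\FR^{0|2}$ given by $\theta_1\mapsto\theta_1$, $\theta_2\mapsto 0$ --- which commutes with the structure map $\circ$ --- carries the two factors to $\id_{\id_e}+b_1\theta_1$ and $\id_{\id_e}+b_3\theta_1$, whose vertical composite is $\id_{\id_e}+(b_1\circ b_3)\theta_1$ by Definition~\ref{def:induced-concatenation}; hence $c_1=b_1\circ b_3$, and the symmetric pullback $\theta_1\mapsto 0$, $\theta_2\mapsto\theta_2$ gives $c_2=b_2\circ b_4$. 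The content of the statement is therefore the vanishing of $c_{12}$, and this is the step I expect to be the real work.

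To show $c_{12}=0$ I would bring in the interchange law~\eqref{eq:interchange_law}. With $P:=\id_{\id_e}+b_1\theta_1$, $Q:=\id_{\id_e}+b_2\theta_2$, $R:=\id_{\id_e}+b_3\theta_1$, $S:=\id_{\id_e}+b_4\theta_2$ (all composites below being defined by the first two hypotheses), one has $(P\otimes Q)\circ(R\otimes S)=(P\circ R)\otimes(Q\circ S)$; by Definition~\ref{def:induced-concatenation} the right-hand side is $(\id_{\id_e}+(b_1\circ b_3)\theta_1)\otimes(\id_{\id_e}+(b_2\circ b_4)\theta_2)$, while by Proposition~\ref{prop:linearized-tensor-products} one has $P\otimes Q=\id_{\id_e}+b_1\theta_1+b_2\theta_2-b_1\otimes b_2\,\theta_1\theta_2$ and $R\otimes S=\id_{\id_e}+b_3\theta_1+b_4\theta_2-b_3\otimes b_4\,\theta_1\theta_2$. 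The genuine obstacle is that the horizontal factorisation $P\otimes Q$ differs from the object $\id_{\id_e}+b_1\theta_1+b_2\theta_2$ appearing in the statement precisely by a $\theta_1\theta_2$-term, so the interchange law cannot be applied directly; one must track how the explicit corrections $-b_i\otimes b_j\,\theta_1\theta_2$ propagate through $\circ$ (together with the identity $\sfs(\id_{\id_e}+b\theta)=\id_e+\sfs_*(b)\theta$ needed to re-express the sources and targets) and show, using all three defining conditions and the bilinearity of the induced products, that the discrepancy between $(P\otimes Q)\circ(R\otimes S)$ and $(\id_{\id_e}+b_1\theta_1+b_2\theta_2)\circ(\id_{\id_e}+b_3\theta_1+b_4\theta_2)$ collapses, forcing $c_{12}=0$. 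I expect this bookkeeping to be the crux, and it is likely to require a local diffeomorphism $\varphi=(\varphi_M,\varphi_N)$ adapted to the groupoid structure.
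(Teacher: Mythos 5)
The paper states this proposition without proof, so there is nothing to compare your route against; judged on its own terms, your argument is set up correctly but stops exactly where the mathematical content begins. Your verification that the three hypotheses are precisely the composability condition $\sfs(\id_{\id_e}+b_1\theta_1+b_2\theta_2)=\sft(\id_{\id_e}+b_3\theta_1+b_4\theta_2)$, and your determination of the coefficients $c_1=b_1\circ b_3$ and $c_2=b_2\circ b_4$ by specialising $\theta_2=0$ and $\theta_1=0$, are both fine and follow the paper's standard method. But you never establish $c_{12}=0$, and the interchange-law route you sketch does not close the gap: carrying it through, the $\theta_1\theta_2$-coefficient of $(P\otimes Q)\circ(R\otimes S)$ is $-(b_1\otimes b_2)\circ(b_3\otimes b_4)+c_{12}$ (the $\theta_1\theta_2$-parts of the two factors enter only linearly, through the concatenation of Definition \ref{def:induced-concatenation}, while the second-order cross term depends only on the linear parts and is therefore the same $c_{12}$ as in the statement). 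Equating this with $-(b_1\circ b_3)\otimes(b_2\circ b_4)$ shows that $c_{12}=0$ is \emph{equivalent} to the middle-four-exchange identity $(b_1\otimes b_2)\circ(b_3\otimes b_4)=(b_1\circ b_3)\otimes(b_2\circ b_4)$ for the induced products — so the interchange law trades the claim for an equally unproven one rather than proving it.

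That this is not mere bookkeeping is visible already in the simplest test case, a strict 2-group built from a crossed module $(\sH\xrightarrow{\,\dpar\,}\sG,\acton)$ with $N=\sG\ltimes\sH$, where $\circ$ is essentially multiplication in $\sH$. Writing $b_i=(X_i,Y_i)$, the $\theta_1\theta_2$-coefficient of the composite works out to $-(Y_1\cdot Y_4-Y_2\cdot Y_3)$ in the $\frh$-component, and this quantity is chart-independent (the Hessian of a change of chart with $\psi(0)=0$, $D\psi(0)=\id$ contributes nothing to the $\theta_1\theta_2$-coefficient of an input with vanishing $\theta_1\theta_2$-part). Taking for instance $\sG=\sH$ with $\dpar=\id$, and $b_1=(\dpar A,A)$, $b_2=b_3=0$, $b_4=(0,A)$, all three hypotheses hold (the third trivially, by bilinearity of $\otimes$), yet the cross term equals $-A\cdot A=-\tfrac{1}{2}[A,A]$, which is nonzero for generic odd $A$ in a nonabelian $\frh$. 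So the vanishing of $c_{12}$ either requires an argument that uses genuinely more structure than you invoke, or additional hypotheses on the $b_i$; in either case a complete proof must compute this cross term explicitly rather than defer it as bookkeeping.
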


\begin{rem}\label{rem:linearising-at-p}
Note that above we have linearised all the structure maps $\sfs$, $\sft$, $\id$, $\otimes$, and $\circ$ at $\id_e$ or $\id_{\id_e}$ and obtained maps on $\frm[-1]$ or $\frn[-1]$. We can certainly consider linearisations also at other points $p$ of $M$ or $N$, leading to maps on $T_p[-1]M$ or $T_p[-1]N$. The formul{\ae} in these cases are obvious generalisations of the ones derived above.
\end{rem}

\begin{rem}
In the following, we shall simply write $\sfs$, $\sft$, and $\id$ for $\sfs_*,$ $\sft_*$, and $\id_*$, slightly abusing notation. We shall also write $\id_a$ instead of $\id_*(a)$. The distinction between these linear maps and the finite maps on $M$ and $N$ should always be clear from the context.
\end{rem}

This completes the preliminary discussion, and we can turn to the differentiation of a semistrict Lie 2-group $\CCG=(M,N)$ to a 2-term $L_\infty$-algebra.  Following our discussion for Lie groups, we consider the functor from the category of smooth manifolds $X$ to the category of $\CCG$-valued descent data on surjective submersions $\FR^{0|1}\times X\rightarrow X$ that  are represented by $M$-valued 1-cells  $\{m_{01}:=m(\theta_0,\theta_1)\}$ and $N$-valued 2-cells $\{n_{012}:=n(\theta_0,\theta_1,\theta_2)\}$ so that
\begin{subequations}
\begin{equation}\label{eq:2CellExpCocyc-A}
   n_{012}\,:\, m_{01}\otimes m_{12}\ \Rightarrow\ m_{02}~,
\end{equation}
and
\begin{equation}\label{eq:2CellExpCocyc-B}
n_{023}\circ(n_{012}\otimes\id_{m_{23}})\ =\ 
   n_{013}\circ (\id_{m_{01}}\otimes n_{123})\circ \sfa_{m_{01},m_{12},m_{23}}~.
\end{equation}
\end{subequations}

\noindent
Analogously to Lemma \ref{ref:lemma-trivialising-1}, we have the following statement; see also Remark \ref{rem:trivial-2-bundle}.

\begin{lemma}
 The functor $(\{m_{01}\},\{n_{012}\})$ is trivialised by the following $\CCG$-valued \v Cech 1-cochains $(\{m_0\},\{n_{01}\})$:
 \begin{equation}\label{eq:normalised-cochains-2}
  m_0\ :=\ m(\theta_0)\ :=\ m(\theta_0,0)\eand n_{01}\ :=\ n(\theta_0,\theta_1)\ :=\ n(\theta_0,\theta_1,0)~.
 \end{equation}
 That is, $n_{01}: m_{01}\otimes m_{1}\Rightarrow m_{0}$ with 
\begin{equation}\label{eq:CoBounTrivExp}
n_{02}\circ(n_{012}\otimes\id_{m_{2}})\ =\ 
   n_{01}\circ (\id_{m_{01}}\otimes n_{12})\circ \sfa_{m_{01},m_{12},m_{2}}~.
\end{equation}
Furthermore, 
\begin{equation}\label{eq:Norm2Modn}
  m(0)\ =\ \id_e\eand n(\theta_0,0)\ =\ \id_{m_0}~.
\end{equation}
\end{lemma}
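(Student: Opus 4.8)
The statement is the $2$-categorical analogue of Lemma \ref{ref:lemma-trivialising-1}. There, a $\sG$-valued descent datum on $\FR^{0|1}\times X\to X$ is trivialised by restricting its second leg to the basepoint $0\in\FR^{0|1}$; here the plan is to show that the $\CCG$-valued descent datum $(\{m_{01}\},\{n_{012}\})$ is trivialised in exactly the same way, namely by the cochains obtained from $m$ and $n$ upon setting their last $\theta$-argument to $0$. The whole proof then amounts to specialising the cocycle identities \eqref{eq:2CellExpCocyc-A} and \eqref{eq:2CellExpCocyc-B} along $\theta_2=0$, respectively $\theta_3=0$, and reading off what they become; no genuine computation is required.

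First I would record that we are dealing with semistrict --- hence normalised, cf.\ Lemma \ref{lem:construct_equiv_2-bundle} and Proposition \ref{prop:Normalisation} --- descent data, so that $m(\theta,\theta)=\id_e$ together with $n(\theta_0,\theta_0,\theta_2)=\id_{m(\theta_0,\theta_2)}$ and $n(\theta_0,\theta_1,\theta_1)=\id_{m(\theta_0,\theta_1)}$ may be assumed. Next, setting $\theta_2=0$ in \eqref{eq:2CellExpCocyc-A} and recalling $m_0=m(\theta_0,0)$, $m_1=m(\theta_1,0)$ and $n_{01}=n(\theta_0,\theta_1,0)$ gives at once the asserted source and target $n_{01}:m_{01}\otimes m_1\Rightarrow m_0$; invertibility of $n_{01}$ is inherited from that of the $2$-cell $n_{012}$ in the weak $2$-groupoid $\sB\CCG$. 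Then, setting $\theta_3=0$ in \eqref{eq:2CellExpCocyc-B} and using $n_{023}|_{\theta_3=0}=n_{02}$, $n_{013}|_{\theta_3=0}=n_{01}$, $n_{123}|_{\theta_3=0}=n_{12}$, $m_{23}|_{\theta_3=0}=m_2$ and $\sfa_{m_{01},m_{12},m_{23}}|_{\theta_3=0}=\sfa_{m_{01},m_{12},m_2}$ turns \eqref{eq:2CellExpCocyc-B} into precisely \eqref{eq:CoBounTrivExp}. This exhibits $(\{m_0\},\{n_{01}\})$ as coboundary data trivialising the descent datum in the sense of Remark \ref{rem:trivial-2-bundle}, whose remaining requirement $n_{aa}=\id_{m_a}$ reads here $n(\theta_0,\theta_0,0)=\id_{m(\theta_0,0)}$ and is the normalisation condition above evaluated at $\theta_2=0$.

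The two ``furthermore'' identities are then obtained by one more specialisation: $m(0)=m(0,0)=\id_e$ from the diagonal normalisation $m(\theta,\theta)=\id_e$ at $\theta=0$, and $n(\theta_0,0)=n(\theta_0,0,0)=\id_{m(\theta_0,0)}=\id_{m_0}$ from $n(\theta_0,\theta_1,\theta_1)=\id_{m(\theta_0,\theta_1)}$ at $\theta_1=0$.

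The hard part is not analytical but notational: the subscripts $0,1,2,3$ label the $\theta$-slots of the descent datum rather than concrete values, so one must carefully distinguish ``setting $\theta_3=0$'', which collapses a slot onto the basepoint leg, from ``passing to the diagonal'', which identifies two slots, and keep in mind that $m_0,m_1$ (and likewise $n_{01},n_{12},n_{02}$) are one and the same function of its remaining Gra{\ss}mann-odd variable. A secondary point is the appeal to normalisation: if one prefers not to assume the descent datum normalised from the outset, one first replaces it by its normalisation via Proposition \ref{prop:Normalisation}, which leaves the definitions $m_0:=m(\theta_0,0)$ and $n_{01}:=n(\theta_0,\theta_1,0)$ untouched.
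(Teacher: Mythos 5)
Your proof is correct and follows essentially the same route as the paper's: evaluate the cocycle conditions \eqref{eq:2CellExpCocyc-A} and \eqref{eq:2CellExpCocyc-B} at vanishing last Gra{\ss}mann argument, invoke the semistrict normalisation of Lemma \ref{lem:construct_equiv_2-bundle} for \eqref{eq:Norm2Modn}, and compare with Remark \ref{rem:trivial-2-bundle}. The only ingredient the paper makes explicit that you leave implicit in the appeal to that remark is the triviality of $\sfa_{m,m',\id_e}$ (Proposition \ref{prop:TrivAssocId}), which is what collapses the general coboundary condition \eqref{eq:EquivCocycSemiPB} to the form \eqref{eq:trivial-2-cocycles} you match against.
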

\begin{proof}
 This statement is readily proved by computation and comparison with Remark \ref{rem:trivial-2-bundle}. To this end one needs to use the fact that $\sfa_{m,m',\id_e}$ is trivial for all $m,m'\in M$; see Proposition \ref{prop:TrivAssocId}. Equations \eqref{eq:Norm2Modn} follow from the normalisations of the cocycle conditions for semistrict principal 2-bundles, cf.\  Lemma \ref{lem:construct_equiv_2-bundle}.
\end{proof}

\begin{rem}
 Clearly, there is  a one-to-one correspondence between $\CCG$-valued descent data $(\{m_{01}\},\{n_{012}\})$ and trivialising $\CCG$-valued \v Cech 1-cochains $(\{m_0\},\{n_{01}\})$. Moreover, by a modification isomorphism, any trivialising $\CCG$-valued \v Cech 1-cochain $(\{m_0\},\{n_{01}\})$ is equivalent to one of the form \eqref{eq:normalised-cochains-2}.
\end{rem}

\begin{prop}\label{prop:Expansions}
A descent datum $(\{m_{01}\},\{n_{012}\})$ and the corresponding coboundary datum $(\{m_0\},\{n_{01}\})$ are parametrised by 1-cells $\alpha\in\frm[-1]$ and 2-cells $\beta\in\frn[-2]$ with
\begin{equation}
 \alpha:0\ \rightarrow\ 0 \eand \beta: \sfs(\beta)\ \Rightarrow\ 0
\end{equation}
according to the following expansions in the Gra{\ss}mann-odd coordinates:
\begin{subequations}\label{eq:Expansions}
\begin{eqnarray}
     m_0\! &=&\! \id_e+\alpha\theta_0~,\label{eq:Expansions-A}\\
     n_{01}\! &=&\! \id_{\id_e}+\id_\alpha \theta_0+\beta\theta_0\theta_1~,\label{eq:Expansions-B}\\
     m_{01}\! &=&\! \id_e+\alpha(\theta_0-\theta_1)+\big[\alpha\otimes\alpha+\sfs(\beta)\big]\theta_0\theta_1~,\label{eq:Expansions-C}\\
     n_{012}\! &=&\! \id_{\id_e}+\id_\alpha(\theta_0-\theta_2)+\beta(\theta_0\theta_1+\theta_1\theta_2-\theta_0\theta_2)\,+\notag\\
  &&\kern1cm+\, \id_{\alpha\otimes\alpha+\sfs(\beta)}\theta_0\theta_2+\big[\id_\alpha\otimes\beta-\beta\otimes\id_\alpha+\mu(\alpha,\alpha,\alpha)\big]\theta_0\theta_1\theta_2~, \label{eq:Expansions-D}
\end{eqnarray}
  \end{subequations}
  where $\mu(\alpha,\alpha,\alpha):\alpha\otimes(\alpha\otimes \alpha)-(\alpha\otimes\alpha)\otimes\alpha\Rightarrow 0$.
\end{prop}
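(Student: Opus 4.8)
The four expansions are obtained by Taylor-expanding the $\FR^{0|1}$-families $m(\theta_0)$, $n(\theta_0,\theta_1)$ and the underlying descent datum $m(\theta_0,\theta_1)$, $n(\theta_0,\theta_1,\theta_2)$ in the Gra\ss mann-odd coordinates, and fixing the coefficients from (a) the normalisation conditions \eqref{eq:Norm2Modn} together with those inherited from the normalisation of a semistrict principal 2-bundle ($m_{aa}=\id_{e}$, $n_{aab}=\id_{m_{ab}}$, $n_{abb}=\id_{m_{ab}}$), and (b) the coherence equation \eqref{eq:2CellExpCocyc-B} (equivalently the coboundary relation \eqref{eq:CoBounTrivExp}). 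Throughout one uses the linearised structure maps and products of Propositions \ref{prop:linearized-tensor-products}--\ref{prop:induced-concatenation}, and Proposition \ref{prop:TrivAssocId}, which says that $\sfa$ is trivial as soon as one of its arguments equals $\id_e$; consequently the first non-vanishing contribution of $\sfa_{m_1,m_2,m_3}$ to an expansion in which each $m_i=\id_e+\CO(\theta)$ is trilinear in the $\CO(\theta)$-parts, and it is this term that will carry $\mu(\alpha,\alpha,\alpha)$.

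\textbf{Fixing $m_0$ and $n_{01}$.} Since $m_0=m(\theta_0,0)$ depends on a single odd coordinate and $m(0)=\id_e$, one has $m_0=\id_e+\alpha\theta_0$ for a unique $\alpha\in\frm[-1]$, giving \eqref{eq:Expansions-A}. Writing the general expansion $n_{01}=\id_{\id_e}+\id_\alpha\theta_0+\gamma\theta_1+\beta\theta_0\theta_1$, the normalisation $n_{aab}=\id_{m_{ab}}$ of the descent datum gives $n(\theta_0,\theta_0,0)=\id_{m(\theta_0,0)}=\id_{m_0}$; substituting $\theta_1\mapsto\theta_0$ annihilates the $\theta_0\theta_1$-term and forces $\gamma=0$. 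Applying $\sft$ and using $\sft(n_{01})=m_0$, which has no $\theta_0\theta_1$-term, together with the expansion of $\sft$ yields $\sft(\beta)=0$, i.e.\ $\beta\in\frv=\ker(\sft)$ and hence $\beta:\sfs(\beta)\Rightarrow0$; this is \eqref{eq:Expansions-B}. (Conversely every such pair $(\alpha,\beta)$ occurs, and it is read off from the data, so the parametrisation is faithful.) For $m_{01}$ one uses $n_{01}:m_{01}\otimes m_1\Rightarrow m_0$, i.e.\ $m_{01}\otimes m_1=\sfs(n_{01})$. Expanding $\sfs(n_{01})=\id_e+\alpha\theta_0+\sfs(\beta)\theta_0\theta_1$ and $m_1=\id_e+\alpha\theta_1$ and solving the bilinear relation of Proposition \ref{prop:linearized-tensor-products} for $m_{01}=\id_e+p\theta_0+q\theta_1+r\theta_0\theta_1$ gives $p=\alpha$, $q=-\alpha$, $r=\alpha\otimes\alpha+\sfs(\beta)$, which is \eqref{eq:Expansions-C}; setting $\theta_1=\theta_0$ recovers $m_{aa}=\id_e$.

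\textbf{Fixing $n_{012}$.} Write the most general expansion $n_{012}=\id_{\id_e}+A_0\theta_0+A_1\theta_1+A_2\theta_2+B_{01}\theta_0\theta_1+B_{02}\theta_0\theta_2+B_{12}\theta_1\theta_2+C\,\theta_0\theta_1\theta_2$. The three normalisations $n(\theta_0,\theta_1,0)=n_{01}$, $n(\theta_0,\theta_1,\theta_1)=\id_{m_{01}}$ (from $n_{abb}=\id_{m_{ab}}$) and $n(\theta_0,\theta_0,\theta_2)=\id_{m_{02}}$ (from $n_{aab}=\id_{m_{ab}}$), expanded using the formulae for $\id,\sfs,\sft$ applied to the already-known $m_{01},m_{02}$, fix every coefficient except $C$ and produce the terms $\id_\alpha(\theta_0-\theta_2)$, $\beta(\theta_0\theta_1+\theta_1\theta_2-\theta_0\theta_2)$ and $\id_{\alpha\otimes\alpha+\sfs(\beta)}\theta_0\theta_2$. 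The remaining coefficient $C$ is then read off from \eqref{eq:2CellExpCocyc-B} (or from \eqref{eq:CoBounTrivExp}) at order $\theta_0\theta_1\theta_2$: after the contributions of the already-determined coefficients cancel, the horizontal compositions $n_{012}\otimes\id_{m_{23}}$ and $\id_{m_{01}}\otimes n_{123}$ contribute the terms $-\beta\otimes\id_\alpha$ and $\id_\alpha\otimes\beta$, while $\sfa_{m_{01},m_{12},m_{23}}$ contributes its trilinear part evaluated on $(\alpha,\alpha,\alpha)$, which by definition is the 2-cell $\mu(\alpha,\alpha,\alpha):\alpha\otimes(\alpha\otimes\alpha)-(\alpha\otimes\alpha)\otimes\alpha\Rightarrow0$. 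Solving gives $C=\id_\alpha\otimes\beta-\beta\otimes\id_\alpha+\mu(\alpha,\alpha,\alpha)$, i.e.\ \eqref{eq:Expansions-D}.

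\textbf{Main obstacle.} The genuine difficulty is the bookkeeping of the second- and third-order corrections that arise whenever the linearised maps $\id,\sfs,\sft$ and the product $\otimes$ are applied to points carrying $\theta_i\theta_j$-components -- the corrections of the type $\id_*(a\otimes b)-\id_*(a)\otimes\id_*(b)$ and the Baker--Campbell--Hausdorff-type cubic corrections to $\otimes$ -- together with the consistent handling of the signs of the Gra\ss mann-odd monomials; one must use the appropriate extension of the expansion formulae of Section \ref{ssec:Severa-2-groups} to arguments with top-degree terms. The one structurally new point, as opposed to routine algebra, is the identification of the $\theta_0\theta_1\theta_2$-contribution of the associator with $\mu(\alpha,\alpha,\alpha)$: this is precisely where the non-triviality of $\sfa$, hence the semistrictness of $\CCG$, enters the expansion.
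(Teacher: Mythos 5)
Your proof is correct and follows essentially the same route as the paper's: fix all but the top coefficient of each expansion from the normalisation conditions $m(0)=\id_e$, $n(\theta_0,0)=\id_{m_0}$, $n_{001}=n_{011}=\id_{m_{01}}$ and from $\sfs(n_{01})=m_{01}\otimes m_1$, then extract the $\theta_0\theta_1\theta_2$-coefficient of $n_{012}$ from the trivialising coboundary relation \eqref{eq:CoBounTrivExp}, with the associator's leading trilinear term defining $\mu(\alpha,\alpha,\alpha)$ via Proposition \ref{prop:TrivAssocId}. The bookkeeping subtlety you flag — applying the linearised maps and $\otimes$ to expansions carrying top-degree terms — is exactly what the paper resolves by rewriting such expressions in single-odd-variable form (identity \eqref{eq:induced-identity-1} and the analogous rewriting of $n_{012}$ as $\id_{\id_e}+[\cdots](\theta_0-\theta_2)$), together with $\beta\circ(\id_{\sfs(\beta)}-\beta)=0$ from Proposition \ref{prop:inverse-concatenation}.
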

\begin{proof}
The expansion of $m_0$ is a direct consequence of \eqref{eq:Norm2Modn} while the expansion \eqref{eq:Expansions-B} follows directly from the conditions $n_{00}=\id_{m_0}=\id_{\id_e}+\id_\alpha\theta_0$ and $n(\theta_0,0)=\id_{m_0}$; $\sft(n_{01})=m_0=\id_e+\alpha\theta_0$ implies $\sft(\beta)=0$. The expansion \eqref{eq:Expansions-C} follows from the normalisation $m_{00}=\id_e$ together with \eqref{eq:Expansions-B} by comparing coefficients in $\sfs(n_{01})=m_{01}\otimes m_1$, where we used the identity
 \begin{equation}\label{eq:induced-identity-1}
 \begin{aligned}
  (\id_e+\alpha(\theta_0-\theta_1)&+\alpha_2\theta_0\theta_1)\otimes(\id_e+\alpha\theta_1)\ =\\
  &=\ \big(\id_e+(\alpha-\tfrac{1}{2}\alpha_2(\theta_0+\theta_1))(\theta_0-\theta_1)\big)\otimes(\id_e+\alpha\theta_1)\\
  &=\ \id_e+\alpha\theta_0+(\alpha_2-\alpha\otimes \alpha)\theta_0\theta_1
 \end{aligned}
 \end{equation}
to evaluate the product. 

To derive the expansion \eqref{eq:Expansions-D}, we use $n(\theta_0,\theta_1,0)=n(\theta_0,\theta_1)$ together  with the normalisation $n_{001}=\id_{m_{01}}$ and $n_{011}=\id_{m_{01}}$. Hence, $n_{012}$ must be of the form
\begin{equation}\label{eq:expn012}
 n_{012}\ =\ \id_{\id_e}+\id_\alpha(\theta_0-\theta_2)+\beta(\theta_0\theta_1+\theta_1\theta_2-\theta_0\theta_2)+
 \id_{\alpha\otimes\alpha+\sfs(\beta)}\theta_0\theta_2+\gamma\,\theta_0\theta_1\theta_2~.
\end{equation}
for some 2-cell $\gamma\in\frn[-3]$. To find $\gamma$  from \eqref{eq:CoBounTrivExp} and \eqref{eq:Expansions-A}--\eqref{eq:Expansions-C}, we require an expansion of the associator $\sfa_{m_{01},m_{12},m_2}$. Since according to Proposition \ref{prop:TrivAssocId} $\sfa_{\id_e,m,m'}$, $\sfa_{m,\id_e,m'}$, and $\sfa_{m,m',\id_e}$ are trivial for all $m,m'\in M$, we can write 
\begin{equation}\label{eq:ExAssPf}
  \sfa_{m_{01},m_{12},m_2}\ =\ \id_{m_{01}\otimes(m_{12}\otimes m_2)}+\mu(\alpha,\alpha,\alpha)\,\theta_0\theta_1\theta_2~,
\end{equation}
defining a linearised 2-cell $\mu(\alpha,\alpha,\alpha):\alpha\otimes(\alpha\otimes \alpha)-(\alpha\otimes\alpha)\otimes\alpha\Rightarrow 0$. In order to evaluate \eqref{eq:CoBounTrivExp} for coboundaries given in \eqref{eq:Expansions}, we note that \eqref{eq:expn012} can be rewritten as
\begin{equation}
  n_{012}\ =\ \id_{\id_e}+\big[\id_\alpha+\tfrac12(\beta- \id_{\alpha\otimes\alpha+\sfs(\beta)}+\gamma\theta_1)(\theta_0+\theta_2)-\beta\theta_1\big](\theta_0-\theta_2)
\end{equation}
and likewise for $n_{01}=\id_{\id_e}+(\id_\alpha-\beta\theta_1)\theta_0$ and all the other terms appearing in \eqref{eq:CoBounTrivExp}. Thus, our definitions of the induced concatenation and products $\otimes$ to linear order are  sufficient to evaluate \eqref{eq:CoBounTrivExp}. For example, we compute
\begin{equation}
 n_{012}\otimes \id_{m_2}\ =\ \id_{\id_e}+\id_\alpha\theta_0+\beta(\theta_0\theta_1+\theta_1\theta_2-\theta_0\theta_2)+
 \id_{\sfs(\beta)}\theta_0\theta_2+(\gamma+\beta\otimes\id_\alpha)\theta_0\theta_1\theta_2~.
\end{equation}
Comparing the coefficient of $\theta_0\theta_1\theta_2$ of both sides of equation \eqref{eq:CoBounTrivExp}, we obtain
\begin{equation}
\gamma\ =\ \id_\alpha\otimes\beta-\beta\otimes\id_\alpha+\mu(\alpha,\alpha,\alpha)~.
\end{equation}
In deriving the latter, we have used $\beta\circ(\id_{\sfs(\beta)}-\beta)=0$, which follows immediately from Proposition \ref{prop:inverse-concatenation}. 
\end{proof}

\begin{cor} \label{cor:Konsdiffalbe}
The induced differentials $\dd_{\rm K}$ of $\alpha\in \frm[-1]$ and $\beta\in\frn[-2]$ with $\sft(\beta)=0$ are given by
\begin{equation}\label{eq:action-dK-2}
\begin{aligned}
 \dd_{\rm K} \alpha\ &=\ -\alpha\otimes\alpha- \sfs(\beta) ~,\\
 \dd_{\rm K} \beta\ &=\ -\id_\alpha\otimes\beta+\beta\otimes \id_\alpha-\mu(\alpha,\alpha,\alpha)~.
\end{aligned}
\end{equation}
\end{cor}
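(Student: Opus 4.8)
The corollary is an immediate consequence of Proposition \ref{prop:Expansions} combined with Proposition \ref{prop:action-dK} (in its multi-$\theta$ form \eqref{eq:action-dK-1}). The strategy is to take the Gra{\ss}mann-odd expansions of the descent datum obtained in Proposition \ref{prop:Expansions}, apply the Kontsevich differential $\dd_{\rm K}$ to them by shifting each $\theta_i\mapsto\theta_i+\eps$ and differentiating in $\eps$, and then read off the action on the component fields $\alpha$ and $\beta$ by comparing coefficients of the independent Gra{\ss}mann monomials. This is exactly parallel to the computation in the Lie group case, where $g_{01}=\unit_\sG+a(\theta_0-\theta_1)+\tfrac12[a,a]\theta_0\theta_1$ together with \eqref{eq:action-dK} produced the Maurer--Cartan relation $\dd_{\rm K}a+\tfrac12[a,a]=0$; see Proposition \ref{lemma:induced_differential} and the Example following \eqref{eq:action-dK-1}.

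\textbf{Key steps.} First I would apply $\dd_{\rm K}$ to $m_0=\id_e+\alpha\theta_0$, i.e.\ replace $\theta_0$ by $\theta_0+\eps$ and differentiate at $\eps=0$, which using the product rule for $\otimes$ to linear order gives $\dd_{\rm K}m_0 = \alpha$ as the coefficient of the constant (in $\theta_0$) term; more precisely one should work consistently inside $m_{01}$, whose $\dd_{\rm K}$-image must vanish because $\{m_{01}\}$, being a cocycle built from $\{m_0\}$ via Lemma-type trivialisation, is a $\dd_{\rm K}$-closed object — equivalently, one applies $\dd_{\rm K}$ directly to \eqref{eq:Expansions-C}. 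Shifting $(\theta_0,\theta_1)\mapsto(\theta_0+\eps,\theta_1+\eps)$ in \eqref{eq:Expansions-C}: the $(\theta_0-\theta_1)$ term is invariant, while $\theta_0\theta_1\mapsto\theta_0\theta_1+\eps(\theta_0-\theta_1)$, so $\dd_{\rm K}$ of the $\theta_0\theta_1$-coefficient feeds into the $(\theta_0-\theta_1)$-coefficient; setting the total coefficient of $(\theta_0-\theta_1)$ to zero yields $\dd_{\rm K}\alpha + \alpha\otimes\alpha + \sfs(\beta)=0$, which is the first line of \eqref{eq:action-dK-2}. For the second line I would repeat this with \eqref{eq:Expansions-D}: under $(\theta_0,\theta_1,\theta_2)\mapsto(\theta_i+\eps)$ the combination $(\theta_0\theta_1+\theta_1\theta_2-\theta_0\theta_2)$ maps to itself plus $\eps$ times something proportional to $(\theta_0-\theta_2)$ (one checks $\theta_0\theta_1+\theta_1\theta_2-\theta_0\theta_2 \mapsto (\text{same}) + \eps\cdot 0$ actually, so care is needed), and the genuinely new piece comes from $\theta_0\theta_1\theta_2\mapsto\theta_0\theta_1\theta_2 + \eps(\theta_0\theta_1+\theta_1\theta_2-\theta_0\theta_2)$; comparing the coefficient of $(\theta_0\theta_1+\theta_1\theta_2-\theta_0\theta_2)$ in $\dd_{\rm K}n_{012}=0$ gives $\dd_{\rm K}\beta + \id_\alpha\otimes\beta - \beta\otimes\id_\alpha + \mu(\alpha,\alpha,\alpha)=0$, i.e.\ the second line of \eqref{eq:action-dK-2}. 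Throughout, one uses that $\dd_{\rm K}\beta\in\frn[-3]$ inherits $\sft(\dd_{\rm K}\beta)=0$ from $\sft(\beta)=0$ and that the induced $\otimes$ and $\circ$ maps act on these linearised objects exactly as recorded in Propositions \ref{prop:linearized-tensor-products}--\ref{prop:induced-concatenation}.

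\textbf{Main obstacle.} The only delicate point is the consistent bookkeeping of signs and of the non-associative $\otimes$-products when $\dd_{\rm K}$ is applied to the cubic-in-$\theta$ terms of $n_{012}$: one must be sure that the terms already present in \eqref{eq:Expansions-D} (in particular $\id_{\alpha\otimes\alpha+\sfs(\beta)}\theta_0\theta_2$) do not contribute spurious pieces after the $\eps$-shift, and that the identity $\beta\circ(\id_{\sfs(\beta)}-\beta)=0$ from Proposition \ref{prop:inverse-concatenation} is again used to simplify the concatenation terms, exactly as in the proof of Proposition \ref{prop:Expansions}. Once the expansions are fixed this is a purely mechanical comparison of coefficients, so the proof reduces to the single sentence that the claimed formulae \eqref{eq:action-dK-2} are read off from \eqref{eq:Expansions} by applying $\dd_{\rm K}$ as in \eqref{eq:action-dK-1} and demanding that the resulting descent datum again be $\dd_{\rm K}$-closed.
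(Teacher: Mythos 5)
Your proposal is correct and follows essentially the paper's own (one-line) argument: apply $\dd_{\rm K}$ via the $\eps$-shift to the expansions of $\{m_{01}\}$ and $\{n_{012}\}$ from Proposition \ref{prop:Expansions} and compare coefficients of $(\theta_0-\theta_1)$ and of $(\theta_0\theta_1+\theta_1\theta_2-\theta_0\theta_2)$ respectively, correctly observing that the latter combination and $(\theta_0-\theta_2)$ are themselves shift-invariant so that the new contributions come only from the $\theta_0\theta_2$- and $\theta_0\theta_1\theta_2$-terms. The only phrase worth tidying is the appeal to $\{m_{01}\}$ being ``$\dd_{\rm K}$-closed'': the relations \eqref{eq:action-dK-2} are the \emph{definition} of the induced differential read off by coefficient comparison exactly as in the example following \eqref{eq:action-dK-1}, not a consequence of an a priori closedness, but the mechanical procedure you then describe is precisely the right one and yields the stated formulae.
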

\begin{proof}
This is a direct consequence of the application of the differential $\dd_{\rm K}$ to $\{n_{012}\}$ as given in Proposition \ref{prop:Expansions}. Alternatively, the first of these equations can also be obtained from the application of $\dd_{\rm K}$ to $\{m_{01}\}$.
\end{proof}

 From equations \eqref{eq:action-dK-2}, we can now extract the Chevalley--Eilenberg algebra of a 2-term $L_\infty$-algebra. In particular, let $(\tau_i)$ and $(\sigma_m)$  be bases of $\frw:=\frm=T_{\id_e}M$ and $\frv:=\ker(\sft)\subseteq \frn=T_{\id_{\id_e}}N$, respectively, and  let $(\check\tau^i)$ and $(\check\sigma^m)$ be the corresponding dual bases of $\frw^\vee$ and $\frv^\vee$. The equations \eqref{eq:action-dK-2} should be regarded as the evaluation of 
 \begin{equation}\label{eq:CE-identification-1}
  \begin{aligned}
   \dd_{\rm CE} \check\tau^i\ &=\ -s^i_m\check \sigma^m-\tfrac{1}{2}f^i_{jk}\check\tau^j\wedge \check\tau^k~,\\
   \dd_{\rm CE} \check\sigma^m\ &=\ -\tfrac{1}{2}c^m_{in}(\check \tau^i\wedge \check \sigma^n-\check \sigma^n\wedge \check \tau^i)+\tfrac{1}{3!}d^m_{ijk}\check\tau^i\wedge \check\tau^j\wedge \check\tau^k~,
  \end{aligned}
 \end{equation}
 at $\check \tau^i=\alpha^i$ and $\check \sigma^m=\beta^m$ with $\alpha=\alpha^i\tau_j$ and $\beta=\beta^m\sigma_m$. The constants $s^i_m$, $f^i_{jk}$, $c^m_{in}$, and $d^m_{ijk}$ are the generalised structure constants of the 2-term $L_\infty$-algebra $\frv\xrightarrow{\,\mu_1\,} \frw$:
 \begin{equation}\label{eq:CE-identification-2}
 \begin{aligned}
  \mu_1(\sigma_m)\ &=\ -s^i_m\tau_i~,\\
  \mu_2(\tau_i,\tau_j)\ &=\ f^k_{ij}\tau_k\eand \mu_2(\tau_i,\sigma_m)\ =\ c^n_{im}\sigma_n~,\\
  \mu_3(\tau_i,\tau_j,\tau_k)\ &=\ -d^m_{ijk}\sigma_m~.
  \end{aligned}
 \end{equation}
 The additional signs are included to match our overall conventions, cf.\ Remark \ref{rem:invert-L-infty}. The higher homotopy Jacobi identities follow from the fact that $\dd_{\rm CE}^2=\dd_{\rm K}^2=0$ \cite{Lada:1992wc}.

We sum up our findings in the following theorem.

\begin{theorem}\label{th:differentiate-lie-2}
 For a semistrict Lie 2-group $\CCG=(M,N)$, the functor from the category of smooth manifolds $X$ to the category of $\CCG$-valued descent data on surjective submersions $\FR^{1|0}\times X\to X$ is parameterised by elements of $\frw[-1]\oplus \frv[-2]$, where $\frv\rightarrow \frw$ is  the 2-term $L_\infty$-algebra for which $\frw:=T_{\id_e}M$ and $\frv:=\ker(\sft)\subseteq T_{\id_{\id_e}}N$. The action of the differential $\dd_{\rm K}$ on the descent data yields the Chevalley--Eilenberg differential of the 2-term $L_\infty$-algebra $\frv\rightarrow \frw$.
\end{theorem}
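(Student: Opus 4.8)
The plan is to assemble Theorem~\ref{th:differentiate-lie-2} from the pieces already developed above, in essentially the same way that the Lie-group case was assembled from Lemma~\ref{ref:lemma-trivialising-1}, Proposition~\ref{lemma:induced_differential}, and the identification of $\dd_{\rm K}$ with $\dd_{\rm CE}$. First I would recall that, given a smooth manifold $X$, a $\CCG$-valued descent datum on $\FR^{0|1}\times X\to X$ consists of $M$-valued $1$-cells $\{m_{01}\}$ and $N$-valued $2$-cells $\{n_{012}\}$ obeying \eqref{eq:2CellExpCocyc-A}--\eqref{eq:2CellExpCocyc-B}. By the trivialisation lemma preceding Proposition~\ref{prop:Expansions}, every such descent datum is equivalent, via a modification, to one coming from the normalised trivialising $1$-cochains $(\{m_0\},\{n_{01}\})$ of \eqref{eq:normalised-cochains-2}, and this sets up a one-to-one correspondence between descent data and trivialising cochains. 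The content of Proposition~\ref{prop:Expansions} is then precisely that, after restricting to infinitesimal neighbourhoods (i.e.\ writing $m_0=\id_e+\alpha\theta_0$ with $\alpha\in\frw[-1]=\frm[-1]$ and $n_{01}=\id_{\id_e}+\id_\alpha\theta_0+\beta\theta_0\theta_1$ with $\beta\in\frv[-2]\subseteq\frn[-2]$, since $\sft(\beta)=0$), the cocycle/coboundary data are uniquely parametrised by the pair $(\alpha,\beta)\in\frw[-1]\oplus\frv[-2]$. Thus the first claim of the theorem is immediate once one invokes Proposition~\ref{prop:Expansions} together with the trivialisation lemma; the only thing to emphasise is that the data $\alpha,\beta$ are unconstrained elements of $\frw[-1]\oplus\frv[-2]$ and that the expansions \eqref{eq:Expansions} then determine $m_{01}$ and $n_{012}$ completely.

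For the second claim, I would invoke Corollary~\ref{cor:Konsdiffalbe}, which computes $\dd_{\rm K}\alpha$ and $\dd_{\rm K}\beta$ by applying the Kontsevich differential $\dd_{\rm K}$ (Proposition~\ref{prop:action-dK}) to the expansions of $\{m_{01}\}$ and $\{n_{012}\}$. The remaining step is to recognise \eqref{eq:action-dK-2} as the dual of the Chevalley--Eilenberg differential \eqref{eq:CE-identification-1} of the $2$-term $L_\infty$-algebra $\frv\xrightarrow{\mu_1}\frw$ with $\frw:=T_{\id_e}M$ and $\frv:=\ker(\sft)\subseteq T_{\id_{\id_e}}N$. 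Choosing bases $(\tau_i)$ of $\frw$ and $(\sigma_m)$ of $\frv$ with dual bases $(\check\tau^i),(\check\sigma^m)$, one expands $\alpha=\alpha^i\tau_i$, $\beta=\beta^m\sigma_m$ and reads off from \eqref{eq:action-dK-2} that $\dd_{\rm K}$ acts as \eqref{eq:CE-identification-1} evaluated at $\check\tau^i=\alpha^i$, $\check\sigma^m=\beta^m$, provided the structure constants are identified with the higher products $\mu_{1,2,3}$ as in \eqref{eq:CE-identification-2}. This is exactly the bookkeeping already laid out in the paragraph preceding the theorem, so the proof of the theorem consists essentially of collecting these observations.

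Finally, one must check that the resulting $(\mu_1,\mu_2,\mu_3)$ genuinely define a $2$-term $L_\infty$-algebra, i.e.\ that they satisfy the higher homotopy Jacobi identities \eqref{eq:homotopy_relations}. The cleanest route — and the one indicated in the excerpt — is to argue that these identities are automatic from $\dd_{\rm CE}^2=0$, which in turn follows from $\dd_{\rm K}^2=0$ (the first property in the proposition establishing that $\dd_{\rm K}$ is a differential). Concretely: $\dd_{\rm K}^2=0$ applied to $\alpha$ and $\beta$ yields, after comparing coefficients in the Gra{\ss}mann-odd coordinates, precisely the homotopy Jacobi relations among $\mu_1,\mu_2,\mu_3$; this is the standard correspondence between nilpotent degree-$1$ derivations on the free graded-commutative algebra and $L_\infty$-structures (cf.\ the reference to Lada--Stasheff already cited). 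I expect the main obstacle to be bookkeeping rather than conceptual: one has to be careful that the degree shifts ($\frw[-1]$, $\frv[-2]$, the target constraint $\sft(\beta)=0$ so that $\beta\in\frv[-2]$ and not merely $\frn[-2]$) and the sign conventions (the extra signs in \eqref{eq:CE-identification-2}, cf.\ Remark~\ref{rem:invert-L-infty}) all match up consistently between the expansions \eqref{eq:Expansions}, the differential \eqref{eq:action-dK-2}, and the homotopy Jacobi identities \eqref{eq:homotopy_relations}, and that the induced products $\otimes$, $\circ$, $\id$ used in $\mu(\alpha,\alpha,\alpha)$ and in Corollary~\ref{cor:Konsdiffalbe} are the linearised ones of Proposition~\ref{prop:linearized-tensor-products} and Definition~\ref{def:induced-concatenation}. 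Since all the hard analytic work has already been done in Proposition~\ref{prop:Expansions} and Corollary~\ref{cor:Konsdiffalbe}, the theorem itself is a summary statement and its proof can be kept short.
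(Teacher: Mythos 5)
Your proposal is correct and follows essentially the same route as the paper: Theorem \ref{th:differentiate-lie-2} is indeed presented there as a summary whose proof consists of the trivialisation lemma, Proposition \ref{prop:Expansions}, Corollary \ref{cor:Konsdiffalbe}, and the identification of $\dd_{\rm K}$ with $\dd_{\rm CE}$ via \eqref{eq:CE-identification-1}--\eqref{eq:CE-identification-2}, with the higher homotopy Jacobi identities obtained exactly as you say from $\dd_{\rm CE}^2=\dd_{\rm K}^2=0$ following Lada--Stasheff. Your bookkeeping remarks on the degree shifts, the constraint $\sft(\beta)=0$ placing $\beta$ in $\frv[-2]$, and the sign conventions of Remark \ref{rem:invert-L-infty} all match the paper's treatment.
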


Analogously to Lie groups, we would like to consider an equivalent descent datum and compare the change of the resulting Chevalley--Eilenberg algebra. This will eventually give us equivalent an parameterisation $(\tilde\alpha,\tilde\beta)\in \frw[-1]\oplus \frv[-2]$ obtained from $(\alpha,\beta)\in \frw[-1]\oplus \frv[-2]$. 

\begin{lemma}\label{lem:expansion-coboundary}
Equivalent descent data $(\{\tilde m_{01}\},\{\tilde n_{012}\})$ and $(\{m_{01}\},\{n_{012}\})$ are related by a degree-2 \v Cech coboundary $(\{p_{0}:=p(\theta_0)\},\{q_{01}:=q(\theta_0,\theta_1)\})$ according to 
\begin{equation}\label{eq:GaugeTrafoStart}
\begin{aligned}
 q_{01}\, :\, \tilde m_{01}\otimes p_1\ &\Rightarrow\ p_0\otimes m_{01}~,\\
  q_{02}\circ (\tilde n_{012}\otimes\id_{p_2})\ &=\ (\id_{p_0}\otimes  n_{012})\circ\sfa_{p_0,m_{01},m_{12}}\circ(q_{01}\otimes\id_{m_{12}})\,\circ \\
  &\kern1cm \circ\, \sfa^{-1}_{\tilde m_{01},p_1,m_{12}}\circ(\id_{\tilde m_{01}}\otimes q_{12})\circ\sfa_{\tilde m_{01},\tilde m_{12},p_2}
\end{aligned}
\end{equation}
with
\begin{equation}\label{eq:expansion-pq}
 p_0\ =\ p-\dd_{\rm K} p \theta_0\eand q_{01}\ =\ \id_p+\lambda_p(\theta_0-\theta_1)-\id_{\dd_{\rm K}p}\theta_1-\dd_{\rm K}\lambda_p\theta_0\theta_1
\end{equation}
for some $p\in N$ and $\lambda_p\in T_p[-1]N$.
\end{lemma}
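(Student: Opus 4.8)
The plan is to follow the treatment of the Lie group case in Proposition~\ref{prop:pre-gauge}, together with the trivialisation and Gra{\ss}mann-expansion arguments of Proposition~\ref{prop:Expansions}. First I would note that \eqref{eq:GaugeTrafoStart} is nothing but the general coboundary relation \eqref{eq:EquivCocycWeakPB-a} spelled out for the surjective submersion $\FR^{0|1}\times X\to X$: the indices $a,b,c$ are replaced by the fibre coordinates $\theta_0,\theta_1,\theta_2$; the two pairs $(\{\tilde m_{ab}\},\{\tilde n_{abc}\})$ and $(\{m_{ab}\},\{n_{abc}\})$ of the general relation are played by the two descent data, with the roles of the tilded and untilded cocycles interchanged relative to \eqref{eq:EquivCocycWeakPB-a}; and the coboundary cochain $(\{m_a\},\{n_{ab}\})$ becomes $(\{p_0\},\{q_{01}\})$, where $p_0=p(\theta_0)$ is an $M$-valued function and $q_{01}=q(\theta_0,\theta_1)$ an $N$-valued function on the fibre product, the dependence on $x\in X$ being suppressed throughout. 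Since $\CCG$ is semistrict and both descent data are in normalised form, the companion relation \eqref{eq:EquivCocycWeakPB-b} collapses to $q_{00}=\id_{p_0}$, and this is the only further constraint on $q$ that needs to be imposed.

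Next I would expand $p_0$ and $q_{01}$ in the Gra{\ss}mann-odd coordinates as in Proposition~\ref{prop:Expansions}. Writing $p_0=p+\pi\theta_0$ with $p:=p(0)\in M$ and $\pi\in T_p[-1]M$, the induced differential is $\dd_{\rm K}p=-\pi$ (cf.\ Proposition~\ref{prop:pre-gauge}), which gives the first equation of \eqref{eq:expansion-pq}. The most general $q_{01}$ reads $q_{01}=c+u\theta_0+v\theta_1+w\,\theta_0\theta_1$; substituting $\theta_1=\theta_0$ and using $\id_{p_0}=\id_p-\id_{\dd_{\rm K}p}\theta_0$ in $q_{00}=\id_{p_0}$ forces $c=\id_p$ and $u+v=-\id_{\dd_{\rm K}p}$. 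Setting $\lambda_p:=u\in T_{\id_p}[-1]N$ then gives $v=-\lambda_p-\id_{\dd_{\rm K}p}$, so that $q_{01}=\id_p+\lambda_p(\theta_0-\theta_1)-\id_{\dd_{\rm K}p}\theta_1+w\,\theta_0\theta_1$. Applying the induced differential $\dd_{\rm K}$, which acts diagonally on the two odd coordinates (Proposition~\ref{prop:action-dK}), to $q_{01}$ and comparing Gra{\ss}mann coefficients, exactly as in the passage from Proposition~\ref{prop:Expansions} to Corollary~\ref{cor:Konsdiffalbe}, identifies the remaining coefficient as $w=-\dd_{\rm K}\lambda_p$, completing \eqref{eq:expansion-pq}. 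In the process one verifies, using the linearised structure maps of Propositions~\ref{prop:linearized-tensor-products}--\ref{prop:induced-concatenation} (now linearised at $p$, cf.\ Remark~\ref{rem:linearising-at-p}) together with the vanishing associators of Proposition~\ref{prop:TrivAssocId}, that $\sfs(q_{01})=\tilde m_{01}\otimes p_1$ and $\sft(q_{01})=p_0\otimes m_{01}$; this pins down $\sfs(\lambda_p)$ and $\sft(\lambda_p)$ in terms of $\alpha$, $\tilde\alpha$, $p$ and $\dd_{\rm K}p$, leaving the $\ker(\sft)$-part of $\lambda_p$ free.

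The only delicate point is the Gra{\ss}mann-coefficient bookkeeping for the top term $w$ and the signs entering $\dd_{\rm K}\lambda_p$; this is of exactly the same nature as the computation in the proof of Proposition~\ref{prop:Expansions}, and I would carry it out by the same reorganisation of $q_{01}$ used there, noting that the induced concatenation of Definition~\ref{def:induced-concatenation} and the linearised tensor products are needed only to linear order. The explicit expression of $(\tilde\alpha,\tilde\beta)$ in terms of $(\alpha,\beta)$ and $(p,\lambda_p)$ is not required for this Lemma and is postponed to Theorem~\ref{th:SemStrGaugeTrafo}.
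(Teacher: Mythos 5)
Your argument is correct and follows essentially the same route as the paper's (very terse) proof: the paper simply notes that \eqref{eq:expansion-pq} follows from $q_{00}=\id_{p_0}$ together with $\dd_{\rm K}\,\id_{\dd_{\rm K}p}=0$, which is precisely your fixing of $c$ and $u+v$ from the normalisation of the coboundary and your identification of the top coefficient $w=-\dd_{\rm K}\lambda_p$ via the (consistency of the) induced diagonal differential. Your additional observations about $\sfs(q_{01})$ and $\sft(q_{01})$ are not needed for this Lemma and indeed only enter in the proof of Theorem \ref{th:SemStrGaugeTrafo}.
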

\begin{proof}
 The expansion for $q_{01}$ in \eqref{eq:expansion-pq} follows from $q_{00}=\id_{p_0}$, cf.\ Remark \ref{eq:trivial-2-cocycles}, together with $\dd_{\rm K}\,\id_{\dd_{\rm K}p}=0$.
\end{proof}
Note that contrary to the previously considered coboundaries, $p_0$ and $q_{01}$ are points in $M$ near $p$ and in $N$ near $\id_p$, respectively. Our formul\ae{} for linearising the structure maps at $p$ and $\id_p$, however, remain essentially the same, cf.\ Remark \ref{rem:linearising-at-p}.

Following Proposition \ref{prop:Combine2Lax2NatTrans}, we may now combine the coboundaries $(\{m_0\},\{n_{01}\})$ and $(\{p_0\},\{q_{01}\})$ appearing in  \eqref{eq:CoBounTrivExp} to a new coboundary $(\{m'_0\},\{n'_{01}\})$. The diagram
\begin{equation}
 \xymatrixcolsep{8pc}
\myxymatrix{
 e\ar@{->}[r]^{\id_e}  \ar@{->}[d]_{m_1} & e \ar@{->}[d]^{m_0}\\
 e \ar@{->}[d]_{p_1} \ar@{->}[r]^{m_{01}} \ar@{=>}[ru]^{n_{01}~}  & e \ar@{->}[d]^{p_0} \\
 e \ar@{->}[r]_{\tilde m_{01}} \ar@{=>}[ru]^{q_{01}~}  & e
}\ =\ 
\myxymatrix{
 e\ar@{->}[r]^{\id_e}  \ar@{->}[d]_{m'_1} & e \ar@{->}[d]^{m'_0}\\
 e  \ar@{->}[r]_{m'_{01}} \ar@{=>}[ru]^{n'_{01}~}  & e 
}
\end{equation}
yields the formul\ae
\begin{equation}\label{eq:GaugeTrafo-nprime}
\begin{aligned}
m'_0\ &=\ p_0\otimes m_0~,\\
n'_{01}&~:~\tilde m_{01}\otimes m'_1\Rightarrow m'_0~,\\
 n'_{01}\ &=\ (\id_{p_0}\otimes n_{01})\circ\sfa_{p_0,m_{01},m_1}\circ(q_{01}\otimes\id_{m_1})\circ \sfa^{-1}_{\tilde m_{01},p_1,m_1}~.
 \end{aligned}
\end{equation}
Hence, $\tilde n_{012}$ obeys
\begin{equation}
 n'_{02}\circ (\tilde n_{012}\otimes \id_{m'_2})\ =\ n'_{01}\circ(\id_{\tilde m_{01}}\otimes n'_{12})\circ\sfa_{\tilde m_{01},\tilde m_{12},m'_2}~.
\end{equation}

Comparing the parameterisation of the coboundary $(\{m_0\},\{n_{01}\})$ with that of $(\{m'_0\},$ $\{n'_{01}\})$ is not straightforward as their expansions in the Gra\ss mann-odd coordinates are different. In particular $m'_0$ and $n'_{01}$ are  not the same as $\tilde m_0:=\tilde m(\theta_0,0)$ and $\tilde n_{01}:=\tilde n(\theta_0,\theta_1,0)$, in general. To remedy this, we apply a modification isomorphism $\{o_0:m'_0\Rightarrow \tilde m_0\otimes p\}$, taking us from the coboundary $(\{m'_0\},\{n'_{01}\})$ to the coboundary $(\{\tilde m_0\},\{\hat n_{01}\})$:
\begin{equation}\label{eq:ModTrafNorm}
  o_0\circ n'_{01}\ =\ \hat n_{01}\circ(\id_{\tilde m_{01}}\otimes o_1)\ewith
 \{o_0:=o(\theta_0):=q^{-1}(\theta_0,0)\}~,
\end{equation}
where $\hat n_{01}:\tilde m_{01}\otimes (\tilde m_1\otimes p)\Rightarrow \tilde m_0\otimes p$. It is then easy to see that
\begin{equation}\label{eq:GaugeTrafoResMod}
  \tilde m(0)\ =\ \id_e~,\quad \hat n_{00}\ =\ \id_{\tilde m_0\otimes p}~,\eand \hat n(\theta_0,0)\ =\ \id_{\tilde m_0\otimes p}
\end{equation}
and hence, 
\begin{equation}
  \hat n_{02}\circ (\tilde n_{012}\otimes \id_{\tilde m_2\otimes p})\ =\ \hat n_{01}\circ(\id_{\tilde m_{01}}\otimes \hat n_{12})\circ\sfa_{\tilde m_{01},\tilde m_{12},\tilde m_2\otimes p}~.
\end{equation}
For $\theta_2=0$, this equation implies that  
\begin{equation}\label{eq:GaugeTrafoResMod-2}
 \tilde n_{01}\otimes \id_p\ =\ \hat n_{01}\circ \sfa_{\tilde m_{01},\tilde m_1,p}~.
\end{equation}
  
Altogether, we have thus constructed a coboundary $(\{\tilde m_0\},\{\tilde n_{01}\})$ representing the equivalent descent data $(\{\tilde m_{01},\tilde n_{012}\})\sim (\{m_{01}, n_{012}\})$ according to
\begin{equation}\label{eq:GaugeTrafoEnd}
 \tilde n_{02}\circ (\tilde n_{012}\otimes \id_{\tilde m_2})\ =\ \tilde n_{01}\circ(\id_{\tilde m_{01}}\otimes \tilde n_{12})\circ\sfa_{\tilde m_{01},\tilde m_{12},\tilde m_2}~.
 \end{equation}

These considerations then lead to the following theorem. 
\begin{theorem}\label{th:SemStrGaugeTrafo}
Let $(\{m_{01}\},\{n_{012}\})$ be a descent datum parametrised by $(\alpha,\beta)\in \frm[-1]\oplus\frn[-2]$ with $\sft(\beta)=0$. Furthermore, let $(\{\tilde m_{01}\},\{\tilde n_{012}\})$ be an equivalent descent datum  that is parametrised by $(\tilde \alpha,\tilde \beta)\in \frm[-1]\oplus\frn[-2]$ with $\sft(\tilde\beta)=0$. Then $\tilde\alpha$ and $\tilde\beta$  are expressed in terms of $\alpha$ and $\beta$ according to
\begin{subequations}
\begin{eqnarray}\label{eq:SemStrGaugeTrafo}
 \kern-20pt\lambda_p\,:\,\tilde\alpha\otimes p\! &\Rightarrow&\! p\otimes\alpha-\dd_{\rm K} p~, \label{eq:SemStrGaugeTrafo-B}\\
 \kern-20pt  \tilde\beta\otimes \id_p\! &=&\! \mu(\tilde\alpha,\tilde\alpha,p)+ \big[\id_p\otimes\beta+\mu(p,\alpha,\alpha)\big]\circ\notag\\
    &&\kern2.5cm \circ\, \big[-\dd_{\rm K}\lambda_p-\lambda_p\otimes\id_{\alpha}-\mu(\tilde\alpha,p,\alpha)\big]\circ\notag\\
      &&\kern2.5cm \circ\,\big[-\id_{\sfs(\dd_{\rm K}\lambda_p)}-\id_{\tilde\alpha}\otimes(\lambda_p+\id_{\dd_{\rm K}p})\big]~,\label{eq:SemStrGaugeTrafo-C}
\end{eqnarray}
\end{subequations}
where $p\in M$ and $\lambda_p\in T_p[-1]N$. By construction, equations \eqref{eq:action-dK-2} are invariant under this equivalence relation.
\end{theorem}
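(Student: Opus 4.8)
The plan is to carry out the computation sketched in the paragraphs immediately preceding Theorem~\ref{th:SemStrGaugeTrafo}, but to organise it as a clean chain of three reductions so that the algebra stays manageable. First I would recall from Lemma~\ref{lem:expansion-coboundary} that an equivalence between the two descent data is encoded by a degree-2 \v Cech coboundary $(\{p_0\},\{q_{01}\})$ with the expansions \eqref{eq:expansion-pq}, where $p\in M$ (not $N$; the statement's $p\in M$ should be read together with $\lambda_p\in T_p[-1]N$, the linearisation being taken at $\id_p$ as in Remark~\ref{rem:linearising-at-p}). The strategy is: (i) combine $(\{m_0\},\{n_{01}\})$ with $(\{p_0\},\{q_{01}\})$ via Proposition~\ref{prop:Combine2Lax2NatTrans} to get $(\{m'_0\},\{n'_{01}\})$ as in \eqref{eq:GaugeTrafo-nprime}; (ii) apply the modification isomorphism $\{o_0:=q^{-1}(\theta_0,0)\}$ of \eqref{eq:ModTrafNorm} to normalise to $(\{\tilde m_0\},\{\tilde n_{01}\})$ satisfying \eqref{eq:GaugeTrafoResMod}; (iii) read off $\tilde\alpha$ and $\tilde\beta$ from the $\theta$-expansions using Proposition~\ref{prop:Expansions} together with the identification $\tilde n_{01}\otimes\id_p = \hat n_{01}\circ\sfa_{\tilde m_{01},\tilde m_1,p}$ of \eqref{eq:GaugeTrafoResMod-2}.

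For step~(iii) I would compute the coefficient of $\theta_0$ and of $\theta_0\theta_1$ in the composite $n'_{01}$ of \eqref{eq:GaugeTrafo-nprime}. The $\theta_0$-coefficient of $m'_{01}=\tilde m_{01}$ by construction, and matching the linear-in-$\theta$ part of $q_{01}: \tilde m_{01}\otimes p_1\Rightarrow p_0\otimes m_{01}$ with the expansions $m_{01}=\id_e+\alpha(\theta_0-\theta_1)+\cdots$, $p_0=p-\dd_{\rm K}p\,\theta_0$ gives precisely the source/target statement $\lambda_p:\tilde\alpha\otimes p\Rightarrow p\otimes\alpha-\dd_{\rm K}p$, which is \eqref{eq:SemStrGaugeTrafo-B}. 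For \eqref{eq:SemStrGaugeTrafo-C} I would expand every factor appearing in \eqref{eq:GaugeTrafoResMod-2} after composing through \eqref{eq:ModTrafNorm} and \eqref{eq:GaugeTrafo-nprime}: the associators $\sfa_{p_0,m_{01},m_1}$, $\sfa_{\tilde m_{01},p_1,m_1}$, $\sfa_{\tilde m_{01},\tilde m_1,p}$ each contribute a single $\mu(\cdot,\cdot,\cdot)$ term at order $\theta_0\theta_1$ by the same argument as \eqref{eq:ExAssPf} (trivial associators when one slot is $\id_e$, via Proposition~\ref{prop:TrivAssocId}), producing $\mu(\tilde\alpha,\tilde\alpha,p)$, $\mu(\tilde\alpha,p,\alpha)$ and $\mu(p,\alpha,\alpha)$; the factors $\id_{p_0}\otimes n_{01}$ and $q_{01}\otimes\id$ contribute $\id_p\otimes\beta$ and $-\dd_{\rm K}\lambda_p-\lambda_p\otimes\id_\alpha$ respectively; and the modification $o_0=q^{-1}(\theta_0,0)$ supplies $-\id_{\sfs(\dd_{\rm K}\lambda_p)}-\id_{\tilde\alpha}\otimes(\lambda_p+\id_{\dd_{\rm K}p})$. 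Assembling these via the induced concatenation $\circ$ of Definition~\ref{def:induced-concatenation} and Proposition~\ref{prop:induced-concatenation}, using $\beta\circ(\id_{\sfs(\beta)}-\beta)=0$ from Proposition~\ref{prop:inverse-concatenation} to simplify, yields exactly \eqref{eq:SemStrGaugeTrafo-C}.

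The last clause — invariance of \eqref{eq:action-dK-2} under this equivalence — is immediate by construction: \eqref{eq:action-dK-2} is the statement that $(\{m_{01}\},\{n_{012}\})$ is a genuine descent datum (i.e.\ satisfies \eqref{eq:2CellExpCocyc-B}, equivalently \eqref{eq:CoBounTrivExp}), and $(\{\tilde m_{01}\},\{\tilde n_{012}\})$ is a descent datum because it is related to one by a coboundary (this is \eqref{eq:GaugeTrafoEnd}), so the analogous identities hold for $(\tilde\alpha,\tilde\beta)$. Concretely one just notes that applying $\dd_{\rm K}$ to the expansions \eqref{eq:Expansions} of the tilded data reproduces Corollary~\ref{cor:Konsdiffalbe} with tildes, so no independent check is needed.

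The main obstacle will be the bookkeeping in step~(iii): one is composing several 2-cells each of which has a nontrivial $\theta_0\theta_1$-coefficient, and the order of the associator insertions matters, so the $\mu$-terms must be tracked through the $\circ$-composition with the correct sources and targets (recall $\circ$ on linearised 2-cells is only partially defined, Definition~\ref{def:induced-concatenation}, so one must check the composability conditions $\sfs_*=\sft_*$ hold at each stage). I expect the only genuinely delicate point is verifying that the three associator contributions land in the slots shown in \eqref{eq:SemStrGaugeTrafo-C} — in particular that the cubic term is $\mu(\tilde\alpha,\tilde\alpha,p)+\big[\id_p\otimes\beta+\mu(p,\alpha,\alpha)\big]\circ\big[\cdots\big]\circ\big[\cdots\big]$ and not some reassociated variant — which one settles by expanding \eqref{eq:GaugeTrafoResMod-2} and \eqref{eq:ModTrafNorm} carefully and using Proposition~\ref{prop:TrivAssocId} to kill every associator whose expansion is not needed. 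Everything else is routine Taylor expansion in the Gra\ss mann-odd coordinates of the kind already carried out in the proof of Proposition~\ref{prop:Expansions}.
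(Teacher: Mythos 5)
Your proposal is correct and follows essentially the same route as the paper's proof: it combines the coboundaries via Proposition \ref{prop:Combine2Lax2NatTrans} to obtain $(\{m'_0\},\{n'_{01}\})$, normalises with the modification $\{o_0=q^{-1}(\theta_0,0)\}$, and reads off $\tilde\alpha$ and $\tilde\beta$ by comparing Gra\ss mann expansions, with the three associators $\sfa_{p_0,m_{01},m_1}$, $\sfa^{-1}_{\tilde m_{01},p_1,m_1}$, and $\sfa^{-1}_{\tilde m_{01},\tilde m_1,p}$ supplying the $\mu$-terms exactly as in the paper. No gaps.
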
 

\vspace{5pt}
\noindent
{\it Proof:} We follow the arguments around \eqref{eq:GaugeTrafoStart}--\eqref{eq:GaugeTrafoEnd} so that the expansions of $\{m_0\}$, $\{n_{01}\}$, $\{m_{01}\}$, and $\{n_{012}\}$ and $\{\tilde m_0\}$,  $\{\tilde n_{01}\}$, $\{\tilde m_{01}\}$, and $\{\tilde n_{012}\}$, are those given in Proposition \ref{prop:Expansions}, with tilded coefficients for tilded quantities. The expansion of the coboundary $(\{p_0\},\{q_{01}\})$ are given in Lemma \ref{lem:expansion-coboundary}.

Since  $q_{01}:\tilde m_{01}\otimes p_1\Rightarrow p_0\otimes m_{01}$, we find by computing the source and target and using the expansions (see also Proposition \ref{prop:Expansions} and Corollary \ref{cor:Konsdiffalbe})
\begin{equation}
\begin{aligned}
  m_{01}\ &=\ \id_e+\alpha(\theta_0-\theta_1)+\big[\alpha\otimes\alpha+\sfs(\beta)\big]\theta_0\theta_1\ =\ \id_e+\alpha(\theta_0-\theta_1)-\dd_{\rm K}\alpha\, \theta_0\theta_1~,\\
  \tilde m_{01}\ &=\ \id_e+\tilde \alpha(\theta_0-\theta_1)+\big[\tilde \alpha\otimes\tilde \alpha+\sfs(\tilde \beta)\big]\theta_0\theta_1\  =\ \id_e+\tilde \alpha(\theta_0-\theta_1)-\dd_{\rm K}\tilde \alpha\,\theta_0\theta_1~,
  \end{aligned}
\end{equation}
that
\begin{equation}
\begin{aligned}
   \lambda_p\,:\,\tilde\alpha\otimes p\ & \Rightarrow\ p\otimes\alpha-\dd_{\rm K}p~,\\
   \dd_{\rm K}\lambda_p\,:\,-\dd_{\rm K}\tilde\alpha\otimes p+\tilde\alpha\otimes \dd_{\rm K}p\ & \Rightarrow\ -\dd_{\rm K}p\otimes\alpha-p\otimes\dd_{\rm K}\alpha~,
\end{aligned}
\end{equation}
thus verifying \eqref{eq:SemStrGaugeTrafo-B}.

To compute $n'_{01}$ from \eqref{eq:GaugeTrafo-nprime}, we need to establish the explicit form of the two associators $\sfa_{p_0,m_{01},m_1}$ and $\sfa^{-1}_{\tilde m_{01},p_1,m_1}$. Both of these become trivial for $\theta_0=\theta_1$ or $\theta_1=0$. We therefore have the following expansions, 
\begin{equation}
\begin{aligned}
 \sfa_{p_0,m_{01},m_1}\ &=:\ \id_{p_0\otimes(m_{01}\otimes m_1)}+\mu(p,\alpha,\alpha)\theta_0\theta_1~,\\
  \sfa^{-1}_{\tilde m_{01},p_1,m_1}\ &=:\ \id_{(\tilde m_{01}\otimes p_1)\otimes m_1}-\mu(\tilde\alpha,p,\alpha)\theta_0\theta_1~,
 \end{aligned}
\end{equation}
defining two maps, which we both denote by $\mu$:
\begin{equation}
 \begin{aligned}
  \mu(p,\alpha,\alpha)~:&~p\otimes(\alpha\otimes \alpha)-(p\otimes\alpha)\otimes\alpha\ \Rightarrow\ 0~,\\
  \mu(\tilde \alpha,p,\alpha)~:&~\tilde\alpha\otimes(p\otimes \alpha)-(\tilde\alpha\otimes p)\otimes\alpha \ \Rightarrow\ 0~.
 \end{aligned}
\end{equation}

Upon substituting these expressions together with those for $\{p_{01}\}$, $\{q_0\}$ and $\{n_{01}\}$, $\{m_1\}$ into  \eqref{eq:GaugeTrafo-nprime}, we find
\begin{equation}\label{eq:Exnprime}
\begin{aligned}
 n'_{01}\ &=\ \id_p+(\theta_0-\theta_1)\lambda_p+\id_{p\otimes\alpha-\dd_{\rm K}p}\theta_1\,\,+\\
  &\kern1.5cm +\big[\id_p\otimes\beta+\mu(p,\alpha,\alpha)\big]\circ\big[-\dd_{\rm K}\lambda_p-\lambda_p\otimes\id_\alpha-\mu(\tilde\alpha,p,\alpha)\big]\theta_0\theta_1~.
  \end{aligned}
\end{equation}
Here, we relied on the fact that each of the terms in \eqref{eq:GaugeTrafo-nprime} can be written as $\id_p+\theta_0 \pi_1+\theta_1 \pi_2$, where $\pi_{1,2}\in T_p[-1]N$, and for these, the linearised concatenation is well-defined.
 
Finally, we perform the modification transformation $o_0 : m'_0\Rightarrow \tilde m_0\otimes p$ with $\{o_0^{-1}=q(\theta_0,0)\}$ which we have introduced in \eqref{eq:ModTrafNorm},
\begin{equation}\label{eq:modtrafopf}
  o_0\circ n'_{01}\ =\ \hat n_{01}\circ(\id_{\tilde m_{01}}\otimes o_1)\quad\Longleftrightarrow\quad
   o^{-1}_0\circ \hat n_{01}\ =\ n'_{01}\circ(\id_{\tilde m_{01}}\otimes o^{-1}_1)~,
\end{equation}
to obtain $\hat n_{01}:\tilde m_{01}\otimes (\tilde m_1\otimes p)\Rightarrow\tilde m_0\otimes p$. Using \eqref{eq:GaugeTrafoResMod-2} and $\{o_0^{-1}=q(\theta_0,0)\}$, this can be rewritten as  
\begin{equation}\label{eq:modtrafopf-2}
  q(\theta_0,0)\circ (\tilde n_{01}\otimes \id_p)\circ \sfa^{-1}_{\tilde m_{01},\tilde m_1,p} \ =\ n'_{01}\circ[\id_{\tilde m_{01}}\otimes q(\theta_1,0)]~.
\end{equation}
To evaluate this expression we need to fix the expansion of the associator, which we do according to
\begin{equation}
  \sfa^{-1}_{\tilde m_{01},\tilde m_1,p}\ =:\ \id_{(\tilde m_{01}\otimes\tilde m_1)\otimes p}-\mu(\tilde\alpha,\tilde\alpha,p)\theta_0\theta_1~,
\end{equation}
where $\mu(\tilde\alpha,\tilde\alpha,p):\tilde\alpha\otimes(\tilde\alpha\otimes p)-(\tilde\alpha\otimes \tilde\alpha)\otimes p\Rightarrow 0$. Substituting this expression, \eqref{eq:expansion-pq}, and \eqref{eq:Exnprime} into \eqref{eq:modtrafopf-2}, we find after some algebraic manipulations that $\tilde n_{01}=\id_e+\id_{\tilde\alpha}\theta_0+\tilde\beta\theta_0\theta_1$ with
\begin{equation}
\begin{aligned}
 \tilde\beta\otimes \id_p\ &=\ \mu(\tilde\alpha,\tilde\alpha,p)+ \big[\id_p\otimes\beta+\mu(p,\alpha,\alpha)\big]\,\circ\\
    &\kern1cm \circ\big[-\dd_{\rm K}\lambda_p-\lambda_p\otimes\id_{\alpha}-\mu(\tilde\alpha,p,\alpha)\big]\circ\big[-\id_{\sfs(\dd_{\rm K}\lambda_p)}-\id_{\tilde\alpha}\otimes(\lambda_p+\id_{\dd_{\rm K}p})\big]~,
\end{aligned}
\end{equation}
verifying \eqref{eq:SemStrGaugeTrafo-C}. Note that $\sft(\tilde\beta)=0$ as required.  This concludes the proof. \hfill $\Box$

\vspace{10pt}
Finally, we would like to emphasise that given $\lambda_p\in T_p [-1]N$, we can always construct a $\lambda\in \frn[-1]$ and vice versa.

\vspace{10pt}
\begin{definition}\label{def:DefOfLambda} 
 Let $p\in M$ and $\lambda_p\in T_p [-1]N$ be given as in Theorem \ref{th:SemStrGaugeTrafo}. We define a 2-cell $\lambda\in\frn[-1]$ by setting
 \begin{equation}
   \lambda\ :=\ (\lambda_p\otimes\id_{\bar p})\circ\sfa^{-1}_{\tilde\alpha,p,\bar p}~,
 \end{equation}
 that is, $\lambda:\tilde\alpha\Rightarrow (p\otimes\alpha)\otimes\bar p-\dd_{\rm K}p\otimes\bar p$, where $\bar p\in M$ with $p\otimes\bar p=\id_e=\bar p\otimes p$ and $\sfa_{\tilde\alpha,p,\bar p}: (\tilde\alpha\otimes p)\otimes\tilde p\Rightarrow \tilde\alpha\otimes (p\otimes\tilde p)$. In addition, we define a 2-cell $\lambda_0\in \frv[-1]$ by setting
 \begin{equation}
  \lambda_0\ :=\ \lambda-\id_{(p\otimes\alpha)\otimes\bar p-\dd_{\rm K}p\otimes\bar p}~,
 \end{equation}
 that is,   $\lambda_0:\tilde\alpha-(p\otimes\alpha)\otimes\bar p+\dd_{\rm K}p\otimes\bar p\Rightarrow 0$ with an intuitive notation to be understood.
\end{definition}

\begin{prop}\label{prop:PropOfLambda}
Given $\lambda$ as in Definition \ref{def:DefOfLambda}, we have
\begin{equation}
   \lambda_p\ =\ \sfa_{p\otimes\alpha+\dd_{\rm K}p,\bar p, p}\circ\big[(\lambda\circ\sfa_{\tilde \alpha,p,\bar p})\otimes\id_{p}\big]\circ\sfa^{-1}_{\tilde\alpha\otimes p,\bar p,p}~.
\end{equation}
\end{prop}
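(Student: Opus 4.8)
The plan is to reverse the defining relation for $\lambda$ in Definition \ref{def:DefOfLambda}, namely
$\lambda = (\lambda_p\otimes\id_{\bar p})\circ\sfa^{-1}_{\tilde\alpha,p,\bar p}$,
and solve it for $\lambda_p$ by first composing with the associator and then ``multiplying on the right by $\id_p$'' and cancelling $\bar p\otimes p=\id_e$. Concretely, from the definition we immediately get $\lambda\circ\sfa_{\tilde\alpha,p,\bar p}=\lambda_p\otimes\id_{\bar p}$, since the associator $\sfa_{\tilde\alpha,p,\bar p}$ is strictly invertible (all 2-cells in a semistrict 2-group are strictly invertible).

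Next I would tensor both sides on the right with $\id_p$, producing $(\lambda\circ\sfa_{\tilde\alpha,p,\bar p})\otimes\id_p=(\lambda_p\otimes\id_{\bar p})\otimes\id_p$. The right-hand side is $(\lambda_p\otimes\id_{\bar p})\otimes\id_p$, which by functoriality of $\otimes$ and naturality of the associator equals $\sfa_{p\otimes\alpha+\dd_{\rm K}p,\bar p,p}^{-1}\circ\big(\lambda_p\otimes(\id_{\bar p}\otimes\id_p)\big)\circ\sfa_{\tilde\alpha,p,\bar p}$ -- here one uses that the source of $\lambda_p$ is $\tilde\alpha\otimes p$ and the target is $(p\otimes\alpha)\otimes\bar p-\dd_{\rm K}p\otimes\bar p$, so that the middle object in the reassociation is governed precisely by the two associators appearing in the claimed formula. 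Since $\bar p\otimes p=\id_e$, we have $\id_{\bar p}\otimes\id_p=\id_{\id_e}$, and $\lambda_p\otimes\id_{\id_e}=\lambda_p$ because in a semistrict 2-group the right unitor is trivial (so $x\otimes\id_e=x$ strictly on objects and $\id_{\id_e}$ acts trivially under $\otimes$, cf.\ Proposition \ref{prop:TrivAssocId} and the triviality of $\sfr$). Rearranging then gives exactly
$\lambda_p=\sfa_{p\otimes\alpha+\dd_{\rm K}p,\bar p,p}\circ\big[(\lambda\circ\sfa_{\tilde\alpha,p,\bar p})\otimes\id_p\big]\circ\sfa^{-1}_{\tilde\alpha\otimes p,\bar p,p}$,
which is the assertion.

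The main subtlety -- and the step I would be most careful about -- is the bookkeeping of source and target objects in the reassociation of $(\lambda_p\otimes\id_{\bar p})\otimes\id_p$: one must check that the outer associators in the claimed formula are precisely $\sfa_{p\otimes\alpha+\dd_{\rm K}p,\bar p,p}$ on the target side and $\sfa_{\tilde\alpha\otimes p,\bar p,p}$ on the source side, using that $\sfs(\lambda_p)=\tilde\alpha\otimes p$ and $\sft(\lambda_p)=(p\otimes\alpha)\otimes\bar p-\dd_{\rm K}p\otimes\bar p=p\otimes\alpha+\dd_{\rm K}p$ in the additive (infinitesimal) notation of Definition \ref{def:DefOfLambda}. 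Everything else is a routine consequence of the interchange law \eqref{eq:interchange_law}, the functoriality of $\otimes$, and the triviality of the unitors in a semistrict 2-group; one should also note that $\sfa_{\bar p,p,\bar p}$-type associators that would otherwise appear are trivial by Proposition \ref{prop:TrivAssocId}, which is what makes the cancellation $\bar p\otimes p=\id_e$ clean at the level of 2-cells.
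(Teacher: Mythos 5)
Your argument is correct and is precisely the paper's own (one-sentence) proof spelled out: the entire content is the naturality of the associator applied to $(\lambda_p\otimes\id_{\bar p})\otimes\id_p$, combined with the strictness of unitors and inverses in a semistrict 2-group so that $\id_{\bar p}\otimes\id_p=\id_{\id_e}$ and $\lambda_p\otimes\id_{\id_e}=\lambda_p$. Only two cosmetic slips: in your intermediate display the source-side associator should already read $\sfa_{\tilde\alpha\otimes p,\bar p,p}$ (as it correctly does in your final formula), and $\sft(\lambda_p)$ equals $p\otimes\alpha-\dd_{\rm K}p$ rather than the expression $(p\otimes\alpha)\otimes\bar p-\dd_{\rm K}p\otimes\bar p$, which is $\sft(\lambda)$.
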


\begin{proof}
 Due to the naturalness of the associator, it is straightforward to see that $\lambda_p$ can be expressed in terms of $\lambda$ in the above way.
\end{proof}

\subsection{Example: strict Lie 2-groups}

As a consistency check, let us now consider a class of examples. Since it is notoriously difficult to construct non-trivial examples of Lie 2-groups which are not strict, we have to consider the strict case. That is, we start from descent data for strict principal 2-bundles in the general Lie 2-group framework. For such bundles, we have $n_{012}=\ell_{02}^{-1}\circ (\ell_{01}\otimes\ell_{12})$ as was discussed in Remark \ref{rem:strict2bundis}. One can check that then
\begin{equation}
 \ell_{01}\ =\ n_{01}\otimes \id_{\overline{m}_1}\ =\ \id_{\id_e}+ \id_\alpha(\theta_0-\theta_1)+(\beta+\id_{\alpha\otimes\alpha})\theta_0\theta_1~,
\end{equation}
which yields the following.
\begin{lemma}
 For strict Lie 2-groups, the functor between the category of smooth manifolds $X$ and the category of $\CCG$-valued descent data on $\FR^{0|1}\times X\to X$ reads as
\begin{equation}
\begin{aligned}\label{eq:someeqn}
   m_{01}\ &=\ \id_e+\alpha(\theta_0-\theta_1)+\theta_0\theta_1\big[\alpha\otimes\alpha+\sfs(\beta)\big]~,\\
   n_{012}\ &=\ \id_{\id_e}+\id_\alpha(\theta_0-\theta_2)+\beta(\theta_0\theta_1+\theta_1\theta_2-\theta_0\theta_2)\,+\\
  &\kern1cm+\, \id_{\alpha\otimes\alpha+\sfs(\beta)}\theta_0\theta_2+(\id_\alpha\otimes\beta-\beta\otimes\id_\alpha)\theta_0\theta_1\theta_2~,
 \end{aligned}
\end{equation}
which implies 
\begin{equation}\label{eq:someDiffFormsMapLan}
  \dd_{\rm K}\alpha\ =\ -\alpha\otimes\alpha-\sfs(\beta)\eand  \dd_{\rm K}\beta\ =\ -\id_\alpha\otimes\beta+\beta\otimes\id_\alpha~.
\end{equation}
\end{lemma}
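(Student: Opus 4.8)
The plan is to specialise the general semistrict results, Proposition \ref{prop:Expansions} and Corollary \ref{cor:Konsdiffalbe}, to the strict case and verify that the extra ingredient — the associator term $\mu(\alpha,\alpha,\alpha)$ — drops out. First I would recall that for a strict Lie 2-group the associator $\sfa$ is trivial, so in particular $\sfa_{m_{01},m_{12},m_2}=\id$; comparing with the defining expansion \eqref{eq:ExAssPf} of the linearised associator, $\sfa_{m_{01},m_{12},m_2}=\id_{m_{01}\otimes(m_{12}\otimes m_2)}+\mu(\alpha,\alpha,\alpha)\theta_0\theta_1\theta_2$, this forces $\mu(\alpha,\alpha,\alpha)=0$ for all $\alpha\in\frm[-1]$. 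Substituting $\mu=0$ into the expansions \eqref{eq:Expansions-C} and \eqref{eq:Expansions-D} gives exactly \eqref{eq:someeqn}, and substituting $\mu=0$ into \eqref{eq:action-dK-2} gives \eqref{eq:someDiffFormsMapLan}. So the content of the lemma is really just the bookkeeping that confirms $\ell_{01}$ has the stated form and that the strict descent datum $n_{012}=\ell_{02}^{-1}\circ(\ell_{01}\otimes\ell_{12})$ reproduces \eqref{eq:Expansions-D} with $\mu=0$.

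The concrete steps I would carry out, in order, are: (i) start from the normalised trivialising cochains $(\{m_0\},\{n_{01}\})$ with $m_0=\id_e+\alpha\theta_0$ and $n_{01}=\id_{\id_e}+\id_\alpha\theta_0+\beta\theta_0\theta_1$ from Proposition \ref{prop:Expansions}, which continues to apply since strict Lie 2-groups are in particular semistrict; (ii) using the strict-case formula $\ell_{01}:=n_{01}\otimes\id_{\overline m_1}$ from Remark \ref{rem:strict2bundis}, expand $\overline m_1=\id_e-\alpha\theta_1$ (by the first equation of \eqref{eq:def:tensor-product}) and then evaluate the induced product $\otimes:\frn[-1]\times\frn[-1]\to\frn[-2]$ together with the mixed identity/tensor compatibility to obtain $\ell_{01}=\id_{\id_e}+\id_\alpha(\theta_0-\theta_1)+(\beta+\id_{\alpha\otimes\alpha})\theta_0\theta_1$; (iii) compute $\ell_{02}^{-1}$ using Proposition \ref{prop:inverse-concatenation} and then form $\ell_{02}^{-1}\circ(\ell_{01}\otimes\ell_{12})$ order by order in the $\theta$'s, reading off $m_{01}$ from $\sfs(n_{01})=m_{01}\otimes m_1$ (or directly from $\ell$) and $n_{012}$ from the composite; (iv) read off $\dd_{\rm K}\alpha$ and $\dd_{\rm K}\beta$ by applying \eqref{eq:action-dK} to $\{m_{01}\}$ and $\{n_{012}\}$, or simply quote Corollary \ref{cor:Konsdiffalbe} with $\mu(\alpha,\alpha,\alpha)=0$.

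Alternatively, and more economically, I would present the one-line argument: the lemma is an immediate corollary of Proposition \ref{prop:Expansions} and Corollary \ref{cor:Konsdiffalbe} once one observes that strictness of $\sfa$ makes the linearised associator $\mu(\alpha,\alpha,\alpha)$ vanish identically, together with a short verification (using the definitions of the induced products and concatenation) that the strict expression $\ell_{01}=n_{01}\otimes\id_{\overline m_1}$ indeed expands as claimed and is consistent with $n_{012}=\ell_{02}^{-1}\circ(\ell_{01}\otimes\ell_{12})$. I expect the only mildly technical point — the ``main obstacle'', such as it is — to be the careful evaluation of $n_{01}\otimes\id_{\overline m_1}$ and of the triple composite $\ell_{02}^{-1}\circ(\ell_{01}\otimes\ell_{12})$ using only the linear-order induced products, making sure signs and the $\alpha\otimes\alpha$ terms match, since the interchange law \eqref{eq:interchange_law} and the normalisation conditions \eqref{eq:Norm2Modn} have to be used to reduce the composite to the normal form $\id_{\id_e}+(\cdots)\theta_0+(\cdots)\theta_0\theta_1$; everything else is formal.

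\begin{proof}
Since strict Lie 2-groups are in particular semistrict, Proposition \ref{prop:Expansions} applies and the trivialising $\CCG$-valued \v Cech 1-cochains are
\begin{equation}
 m_0\ =\ \id_e+\alpha\theta_0\eand n_{01}\ =\ \id_{\id_e}+\id_\alpha\theta_0+\beta\theta_0\theta_1~,
\end{equation}
with $\alpha\in\frm[-1]$ and $\beta\in\frn[-2]$, $\sft(\beta)=0$. For a strict Lie 2-group the associator is trivial, so comparing $\sfa_{m_{01},m_{12},m_2}=\id$ with its defining expansion \eqref{eq:ExAssPf} forces the linearised associator to vanish,
\begin{equation}
 \mu(\alpha,\alpha,\alpha)\ =\ 0~,
\end{equation}
for all $\alpha\in\frm[-1]$. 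Inserting $\mu(\alpha,\alpha,\alpha)=0$ into the expansions \eqref{eq:Expansions-C} and \eqref{eq:Expansions-D} of Proposition \ref{prop:Expansions} yields \eqref{eq:someeqn}, and inserting it into the differentials \eqref{eq:action-dK-2} of Corollary \ref{cor:Konsdiffalbe} yields \eqref{eq:someDiffFormsMapLan}.

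To make the strict picture manifest, recall from Remark \ref{rem:strict2bundis} that in the strict setting $\ell_{01}:=n_{01}\otimes\id_{\overline m_1}$ with $m\otimes\overline m=\id_e$, so that $n_{012}=\ell_{02}^{-1}\circ(\ell_{01}\otimes\ell_{12})$. Using $\overline m_1=\id_e-\alpha\theta_1$ (first equation of \eqref{eq:def:tensor-product}), hence $\id_{\overline m_1}=\id_{\id_e}-\id_\alpha\theta_1$, and evaluating the induced product $\otimes$ on $\frn$ together with the compatibility of $\id$ with $\otimes$, we obtain
\begin{equation}
 \ell_{01}\ =\ \id_{\id_e}+\id_\alpha(\theta_0-\theta_1)+\big(\beta+\id_{\alpha\otimes\alpha}\big)\theta_0\theta_1~.
\end{equation}
A direct computation of $\ell_{02}^{-1}\circ(\ell_{01}\otimes\ell_{12})$ order by order in the Gra\ss mann-odd coordinates, using the induced concatenation of Definition \ref{def:induced-concatenation}, Proposition \ref{prop:inverse-concatenation}, and the interchange law \eqref{eq:interchange_law}, then reproduces precisely the expansion of $n_{012}$ in \eqref{eq:someeqn}; likewise $\sfs(n_{01})=m_{01}\otimes m_1$ reproduces the expansion of $m_{01}$. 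Finally, applying the differential $\dd_{\rm K}$ as in \eqref{eq:action-dK} to $\{m_{01}\}$ and $\{n_{012}\}$ and comparing coefficients of the Gra\ss mann-odd coordinates gives \eqref{eq:someDiffFormsMapLan}.
\end{proof}
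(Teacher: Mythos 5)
Your proposal is correct and takes essentially the same route as the paper: the lemma is just Proposition \ref{prop:Expansions} and Corollary \ref{cor:Konsdiffalbe} specialised to the strict case, where triviality of the associator forces the linearised 2-cell $\mu(\alpha,\alpha,\alpha)$ to vanish, and the computation $\ell_{01}=n_{01}\otimes\id_{\overline{m}_1}=\id_{\id_e}+\id_\alpha(\theta_0-\theta_1)+(\beta+\id_{\alpha\otimes\alpha})\theta_0\theta_1$ is precisely the consistency check the paper itself performs before stating the lemma. (One negligible slip: the expansion $\overline{m}_1=\id_e-\alpha\theta_1$ comes from the inverse formula of Proposition \ref{prop:mult-rule-1} carried over to $M$, not from the first equation of \eqref{eq:def:tensor-product}.)
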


To compare with the literature, we need to translate these results into expressions using crossed modules of Lie groups.

\begin{prop}\label{eq:inverse-identification}
 In terms of crossed modules of Lie groups $(\sH\overset{\partial}{\to}\sG,\acton)$, the functor between the category of smooth manifolds $X$ and the category of $(\sH\overset{\partial}{\to}\sG,\acton)$-valued descent data on $\FR^{0|1}\times X\to X$ is given by \v Cech 1- and 2-cochains $\{g_{01}\}$ and $\{h_{012}\}$ with values in the Lie groups $\sG$ and $\sH$, respectively. These are parameterised by $a\in\frg[-1]$ and $b\in\frh[-2]$, where $\frg$ and $\frh$ are the Lie algebras of $\sG$ and $\sH$, according to
  \begin{subequations}\label{eq:CrossedModExpans}
  \begin{equation}
  g_{01}\ =\ \unit_\sG+a(\theta_0-\theta_1)+\big\{\tfrac12 [a,a]-\partial(b)\big\}\theta_0\theta_1~
  \end{equation}
  and
  \begin{equation}
  h_{012}\ =\ \unit_\sH+b(\theta_0\theta_1+\theta_1\theta_2-\theta_0\theta_2)+(a\acton b)\theta_0\theta_1\theta_2~.
  \end{equation}
  \end{subequations}
  The action of the differential $\dd_{\rm K}$ translates to
  \begin{equation}
  \dd_{\rm K} a\ =\ -\tfrac12[a,a]+\partial(b)\eand \dd_{\rm K} b\ =\ -a\acton b~.
  \end{equation}
\end{prop}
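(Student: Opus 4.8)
The plan is to reduce the general formul{\ae} of Proposition \ref{prop:Expansions} and Corollary \ref{cor:Konsdiffalbe} to the crossed-module setting by using the explicit identification of a strict Lie 2-group with a crossed module of Lie groups spelled out around \eqref{eq:definition-CM-ops}, namely $\sG:=M$, $\sH:=\ker(\sft)$, $\dpar(h):=\sfs(h^{-1})$, $g\acton h:=\id_g\otimes h\otimes\id_{\overline g}$, and $h_2 h_1:=h_2\circ(h_1\otimes\id_{\sfs(h_2)})$. On the algebra side this differentiates to $\frg=\frw=\frm$, $\frh=\frv=\ker(\sft)$, with $\dpar=\mu_1=-\sfs|_{\frv}$ and $\acton=\mu_2$. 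First I would identify the parameters: $\alpha\in\frm[-1]$ becomes $a\in\frg[-1]$ directly, while $\beta\in\frn[-2]$ with $\sft(\beta)=0$ lies in $\frv[-2]=\frh[-2]$, so I set $b:=-\beta$ (or $b:=\beta$ up to the overall sign convention of Remark \ref{rem:invert-L-infty}; I would fix the sign by matching $\sfs(\beta)=-\mu_1(\beta)=-\dpar(b)$ against the target expansion $\tfrac12[a,a]-\dpar(b)$).

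Next I would translate the $M$-cochain. From \eqref{eq:Expansions-C}, $m_{01}=\id_e+\alpha(\theta_0-\theta_1)+[\alpha\otimes\alpha+\sfs(\beta)]\theta_0\theta_1$. Under the identification $M=\sG$, the induced product $\otimes:\frm[-1]\times\frm[-1]\to\frm[-2]$ is precisely the product ``$\cdot$'' of Proposition \ref{prop:mult-rule-1} for the group $\sG$, so $\alpha\otimes\alpha=\tfrac12[a,a]$ (using $a\cdot a=\tfrac12(a\cdot a+a\cdot a)=\tfrac12[a,a]$), and $\sfs(\beta)=-\dpar(b)$. Writing $g_{01}:=m_{01}$ then gives $g_{01}=\unit_\sG+a(\theta_0-\theta_1)+\{\tfrac12[a,a]-\dpar(b)\}\theta_0\theta_1$, which is the first claimed formula. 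For the $N$-cochain I would invoke Remark \ref{rem:strict2bundis}: in the strict case $n_{012}=\ell_{02}^{-1}\circ(\ell_{01}\otimes\ell_{12})$ with $\ell_{ab}=n_{ab}\otimes\id_{\overline m_b}$, and the lemma preceding this proposition already records $\ell_{01}=\id_{\id_e}+\id_\alpha(\theta_0-\theta_1)+(\beta+\id_{\alpha\otimes\alpha})\theta_0\theta_1$ together with the resulting $m_{01},n_{012}$ and the differentials $\dd_{\rm K}\alpha=-\alpha\otimes\alpha-\sfs(\beta)$, $\dd_{\rm K}\beta=-\id_\alpha\otimes\beta+\beta\otimes\id_\alpha$. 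I would extract $h_{012}$ from $n_{012}=\id_{g_{012}}\cdot(g_{012},h_{012})$ via the crossed-module dictionary, i.e.\ $h_{012}$ is the $\ker(\sft)$-part obtained by composing $n_{012}$ with $\id_{m_{02}^{-1}}=\id_{\overline{m_{02}}}$ on the right; concretely $h_{012}=n_{012}\otimes\id_{\overline{m}_{02}}$ followed by the concatenation that turns a 2-cell into an element of $\sH$. Expanding this to cubic order in the $\theta$'s using Propositions \ref{prop:linearized-tensor-products} and \ref{prop:induced-concatenation}, the $\theta_0\theta_1+\theta_1\theta_2-\theta_0\theta_2$ term gives $b$ and the $\theta_0\theta_1\theta_2$ term collapses to $\id_\alpha\otimes\beta-\beta\otimes\id_\alpha$, which under $g\acton h=\id_g\otimes h\otimes\id_{\overline g}$ linearises to $a\acton b$ (the two terms $\id_\alpha\otimes\beta$ and $-\beta\otimes\id_\alpha$ being the two halves of the linearised adjoint-type action on $\frh$, consistent with $\dpar(Y_1)\acton Y_2=[Y_1,Y_2]$ from \eqref{eq:peiffer}). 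This yields $h_{012}=\unit_\sH+b(\theta_0\theta_1+\theta_1\theta_2-\theta_0\theta_2)+(a\acton b)\theta_0\theta_1\theta_2$.

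Finally, reading off the action of $\dd_{\rm K}$ is immediate: applying $\dd_{\rm K}$ to $g_{01}$ and $h_{012}$ as in the proof of Corollary \ref{cor:Konsdiffalbe} (i.e.\ differentiating in the shift $\theta_i\mapsto\theta_i+\eps$ and comparing Gra{\ss}mann-odd coefficients), or equivalently just rewriting \eqref{eq:someDiffFormsMapLan} in crossed-module notation, gives $\dd_{\rm K}a=-\tfrac12[a,a]+\dpar(b)$ and $\dd_{\rm K}b=-a\acton b$. I expect the only genuinely delicate point to be bookkeeping of signs and the precise form of the induced ``$\acton$'' on the Lie algebra level: one must be careful that $\ker(\sft)$ is the correct model of $\frh$, that the concatenation product $h_2h_1=h_2\circ(h_1\otimes\id_{\sfs(h_2)})$ linearises compatibly with Proposition \ref{prop:induced-concatenation}, and that the cubic term $\id_\alpha\otimes\beta-\beta\otimes\id_\alpha$ really reassembles into the single expression $a\acton b$ rather than into $a\acton b$ plus a bracket correction — this follows from the Peiffer identity \eqref{eq:peiffer}, but it is the step where the strictness ($\mu_3=0$, so the associator contributes nothing) is actually used. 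Everything else is a routine substitution into formul{\ae} already established in the previous subsections, so the proof can be kept to a brief sketch referring back to Remark \ref{rem:strict2bundis}, Proposition \ref{prop:Expansions}, and the identification \eqref{eq:definition-CM-ops}.
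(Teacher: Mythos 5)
Your proposal is correct and follows essentially the same route as the paper's proof: identify $\sG:=M$, $\sH:=\ker(\sft)$ via \eqref{eq:definition-CM-ops}, set $g_{01}=m_{01}$ and $h_{012}=n_{012}\otimes\id_{\overline{m}_{02}}$ with $\alpha=a$, $\beta=b$, and read off from \eqref{eq:someDiffFormsMapLan} that $\alpha\otimes\alpha$ and $\id_\alpha\otimes\beta-\beta\otimes\id_\alpha$ become $\tfrac12[a,a]$ and $a\acton b$. Your extra remarks on fixing the sign of $b$ via $\dpar=\mu_1=-\sfs|_{\frv}$ and on where strictness ($\mu_3=0$) enters are consistent with, and slightly more explicit than, the paper's own argument.
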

\begin{proof}
 Starting from \eqref{eq:someeqn} and \eqref{eq:someDiffFormsMapLan}, we follow Proposition \ref{prop:equiv-strict-Lie-2-CM} and define $\sG:=M$ and $\sH=\ker(\sft)\subseteq N$. The products on $\sG$ and $\sH$, the action $\acton$ and the map $\dpar$ are defined according to equation \eqref{eq:definition-CM-ops}. We then identify
 \begin{equation}
  g_{01}\ =\ m_{01}\eand h_{012}\ =\ n_{012}\otimes \id_{\overline m_{02}}~,
 \end{equation}
 which implies $\alpha=a\in\mathfrak{g}[-1]$ and $\beta=b$. Clearly, this identification is reversible and therefore an equivalence. The cocycle relations \eqref{eq:2CellExpCocyc-B} for $(\{m_{01}\},\{n_{012}\})$ are then equivalent to those for $(\{g_{01}\},\{h_{012}\})$, cf.\ \eqref{eq:strict-cocycle}, using the identifications under Proposition \ref{prop:equiv-strict-Lie-2-CM}. In the strict case, $\alpha$ and $\beta$ take values in a 2-term $L_\infty$-algebra with trivial associator, which forms a differential crossed module. From the actions of $\dd_{\rm K}$ given in \eqref{eq:someDiffFormsMapLan} as well as equations \eqref{eq:CE-identification-1} and \eqref{eq:CE-identification-2}, we read off that the tensor products $\alpha\otimes \alpha$ and $\id_\alpha\otimes\beta-\beta\otimes\id_\alpha$ turn into the commutator and the action of $\sG$ onto $\sH$.
\end{proof}

\noindent These are the expressions that were already obtained in Jur\v co \cite{Jurco20122389}. 

Furthermore, combining the results of Theorem \ref{th:SemStrGaugeTrafo} and Definition \ref{def:DefOfLambda} with the interchange law \eqref{eq:interchange_law}, we arrive after a few algebraic manipulations at 
\begin{equation}\label{eq:strictGT-1}
\begin{aligned}
  \lambda\,:\,\tilde\alpha\ &\Rightarrow\ p\otimes\alpha\otimes \bar p-\dd_{\rm K}p\otimes \bar p~,\\
  \tilde\beta\ &=\ [\id_p\otimes\beta\otimes\id_{\bar p}]\circ[-\dd_{\rm K}\lambda-\lambda\otimes\lambda]~.
\end{aligned}
\end{equation}
Translated into crossed modules of Lie groups, this takes the following form.

\begin{prop}
Let  $(\{g_{01}\},\{h_{012}\})$ be a descent datum that is parameterised by $a\in \frg[-1]$ and $b\in \frh[-2]$. Furthermore, let $(\{\tilde g_{01}\},\{\tilde h_{012}\})$ be an equivalent descent datum that is parameterised by $\tilde a\in \frg[-1]$ and $\tilde b\in \frh[-2]$. Then, $(a,b)$ and $(\tilde a, \tilde b)$ are related by the following equations:
\begin{subequations}\label{eq:strictGT-all}
\begin{eqnarray}
   \tilde a&=& pap^{-1}+p\,\dd_{\rm K}p^{-1}-\partial(\lambda^\frh)~,\label{eq:strictGT-a}\\
  \tilde b&=& p\acton b-\dd_{\rm K}\lambda^\frh-\tilde a\acton\lambda^\frh-\tfrac12[\lambda^\frh,\lambda^\frh]\label{eq:strictGT-b}
\end{eqnarray}
 \end{subequations}
 for $p\in G$ and $\lambda^\frh \in\frh[-1]$.
\end{prop}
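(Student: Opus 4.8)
The plan is to take the two abstract relations \eqref{eq:strictGT-1} derived from Theorem~\ref{th:SemStrGaugeTrafo} together with Definition~\ref{def:DefOfLambda} and simply translate each 2-cell appearing in them into crossed-module language, using the dictionary of \eqref{eq:definition-CM-ops} and \eqref{eq:CompositionsStrictLie2Group}, exactly as was done in the proof of Proposition~\ref{eq:inverse-identification}. First I would introduce $\sG:=M$, $\sH:=\ker(\sft)\subseteq N$, and write $p\in M=\sG$ for the object-level coboundary parameter, and decompose the 2-cell $\lambda\in\frn[-1]$ via its target and its vertical part: since $\sft(\lambda)$ is determined ($\lambda:\tilde\alpha\Rightarrow p\otimes\alpha\otimes\bar p-\dd_{\rm K}p\otimes\bar p$, which fixes $\sft(\lambda)=\tilde\alpha$ up to the $\frh$-shift), I would set $\lambda^\frh\in\frh[-1]$ to be the corresponding element of $\ker(\sft)$, precisely the analogue of $\lambda_0$ in Definition~\ref{def:DefOfLambda}. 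Here $\lambda^\frh$ plays the role of the $\sH$-valued part of $q_{01}$ in \eqref{eq:strict-coboundaries}, so this is the natural identification.

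Next I would compute the target of $\lambda$ at the linearised level. Using $\sfs(h^{-1})=\dpar(h)$ from \eqref{eq:definition-CM-ops} and that in the strict case the associator is trivial so all the $\mu$-terms drop out, the relation $\lambda:\tilde\alpha\Rightarrow p\otimes\alpha\otimes\bar p-\dd_{\rm K}p\otimes\bar p$ becomes, after subtracting the $\frh$-part, $\tilde a=p\acton a+p\,\dd_{\rm K}p^{-1}-\dpar(\lambda^\frh)$, where $p\acton a=\id_p\otimes a\otimes\id_{\bar p}$ in the notation of \eqref{eq:definition-CM-ops} and the term $p\,\dd_{\rm K}p^{-1}$ arises exactly as in Proposition~\ref{prop:pre-gauge} (with $\dd_{\rm K}$ in place of the de Rham differential). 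This is \eqref{eq:strictGT-a}. For \eqref{eq:strictGT-b}, I would take the second line of \eqref{eq:strictGT-1}, $\tilde\beta=[\id_p\otimes\beta\otimes\id_{\bar p}]\circ[-\dd_{\rm K}\lambda-\lambda\otimes\lambda]$, and unwind each piece: $\id_p\otimes\beta\otimes\id_{\bar p}$ becomes $p\acton b$; the composition $\circ$ becomes the group product in $\sH$ (via $h_2h_1=h_2\circ(h_1\otimes\id_{\sfs(h_2)})$ from \eqref{eq:definition-CM-ops}); the term $-\dd_{\rm K}\lambda$ contributes $-\dd_{\rm K}\lambda^\frh-\tilde a\acton\lambda^\frh$ once one differentiates the target-twist of $\lambda$ (the inhomogeneous piece coming from $\sft(\lambda)=\tilde\alpha$ depending on $\theta$); and the self-pairing $-\lambda\otimes\lambda$ contributes the Peiffer-type quadratic term $-\tfrac12[\lambda^\frh,\lambda^\frh]$, using $[Y_1,Y_2]=\dpar(Y_1)\acton Y_2$ from \eqref{eq:peiffer} together with the bilinear identity $\lambda^\frh\otimes\lambda^\frh+\lambda^\frh\otimes\lambda^\frh=[\lambda^\frh,\lambda^\frh]$ of Proposition~\ref{prop:mult-rule-1}.

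The main obstacle I expect is bookkeeping rather than conceptual: correctly tracking the inhomogeneous (target-dependent) contributions when $\dd_{\rm K}$ acts on the 2-cell $\lambda$, since $\lambda$ has a moving target $\tilde\alpha$, and making sure the cross-terms between $\dd_{\rm K}\lambda^\frh$, $\tilde a\acton\lambda^\frh$, and $\lambda\otimes\lambda$ combine with the right signs and do not double-count. Concretely, one must be careful that the linearised concatenation $\circ$ of two expressions of the form $\id_p+\theta_0\pi_1+\theta_1\pi_2$ reproduces the group multiplication in $\sH$ only after the appropriate $\id$-twist, which is where the $\tilde a\acton\lambda^\frh$ term is generated. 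Once these signs are pinned down, comparing the $\theta_0\theta_1$-coefficients on both sides of \eqref{eq:strictGT-1} yields \eqref{eq:strictGT-a} and \eqref{eq:strictGT-b} directly. I would also remark that specialising $\dd_{\rm K}\mapsto\dd$ recovers precisely the finite gauge transformations of a connective structure on a strict principal 2-bundle familiar from the literature, providing the promised consistency check.
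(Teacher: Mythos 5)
Your proposal follows the paper's own proof essentially verbatim: the identification $\lambda^\frh=\lambda-\id_{p\otimes\alpha\otimes\bar p-\dd_{\rm K}p\otimes\bar p}$ (the paper's $\lambda_0$ from Definition \ref{def:DefOfLambda}), the derivation of \eqref{eq:strictGT-a} from $\sfs(\lambda^\frh)=-\dpar(\lambda^\frh)$, and the term-by-term translation of the second equation of \eqref{eq:strictGT-1} via the crossed-module dictionary, the linearised concatenation, and the Peiffer-type identity are exactly the steps the paper takes. The paper likewise relegates the final sign/cross-term bookkeeping to a ``lengthy but straightforward computation'', so your account is at the same level of rigour and nothing essential is missing.
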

\begin{proof}
 We again follow Proposition \ref{prop:equiv-strict-Lie-2-CM}, which justifies the appearance of $p$ in \eqref{eq:strictGT-all} after identifying
 \begin{equation}
  \tilde a\ =\ \tilde \alpha~,~~~\tilde b\ =\ \tilde \beta~,\eand  \lambda^\frh\ =\ \lambda-\id_{p\otimes\alpha\otimes \bar p-\dd_{\rm K}p\otimes \bar p}~.
 \end{equation}
 More specifically, \eqref{eq:strictGT-a} immediately follows from computing $\sfs(\lambda^\frh)=-\dpar(\lambda^\frh)$. Recall that  $\id_p\otimes\beta\otimes\id_{\bar p}$ translates to $p\acton b$. Using Proposition \ref{prop:induced-concatenation} together with the identity $\sfs(\beta)=-\dd_{\rm K}\alpha-\alpha\otimes \alpha$, we can derive \eqref{eq:strictGT-b} by a lengthy but straightforward computation from the second equation in \eqref{eq:strictGT-1}.
\end{proof}

\subsection{Comment on differentiation and categorical equivalence}

Recall from Proposition \ref{prop:classify-Lie-2-groups} that every weak 2-group is categorically equivalent to a special weak 2-group given in terms of a group $\sG$, an Abelian group $\sH$, a representation $\varrho$ of $\sG$ on $\sH$, and an element $[\sfa]\in H^3(\sG,\sH)$. The corresponding Proposition  \ref{prop:classify-Lie-2-algebras} for Lie 2-algebras states that semistrict Lie 2-algebras are categorically equivalent to special Lie 2-algebras given in terms of a Lie algebra $\frg$, a representation $\varrho$ of $\frg$ on a vector space $\frv$, and an element $[\sfJ]\in H^3(\frg,\frv)$.

It is now tempting to assume that the natural integration process factors through categorical equivalence and therefore special Lie 2-algebras can be integrated to special Lie 2-groups. However, Baez \& Lauda proved a no-go theorem \cite[Section 8.5]{Baez:0307200}, which shows that certain special Lie 2-algebras can be integrated to 2-groups, which, however, can be turned into topological 2-groups only for the strict case $\sfa =0$. In particular, consider the case of a special Lie 2-algebra with $\frv=\au(1)$. We have $H^3(\frg,\au(1))\cong \FR$. The latter contains a lattice $\cong \RZ$, which can be embedded into $H^3(\sG,\sU(1))$, yielding the integration to a 2-group. In the topological case, however, we have to use continuous cohomology, for which $H^3_{\rm cont.}(\sG,\sU(1))=0$.

The differentiation of Lie 2-groups we performed in this section is the inverse operation to this integration. As integration does not commute with categorical equivalence, neither will differentiation.

\section{Semistrict higher gauge theory}\label{sec:Semistrict_HGT}

We now put the results of the previous section together and develop a description of semistrict principal 2-bundles with connective structure. We first discuss the local case\footnote{For more details on the local case, see Sati, Schreiber \& Stasheff \cite{Sati:0801.3480}.}, which can be readily derived from the Maurer--Cartan equation of an $L_\infty$-algebra. We then give the global description in terms of non-Abelian Deligne cohomology sets.

As before, let $X$ be a smooth manifold with covering $\frU=\{U_a\}$ and let $U\subseteq X$ be an open subset of $X$. Furthermore, let $\Omega^p_X$ be the sheaf of smooth differential $p$-forms on $X$ and set $\Omega^\bullet_X=\bigoplus_{p\geq 0} \Omega^p_X$. 

\subsection{Local semistrict higher gauge theory}

Local semistrict higher gauge theory corresponds to the Maurer--Cartan equation \eqref{eq:MCeqs} for a degree-$1$ element of the $L_\infty$-algebra arising from the tensor product of $\Omega^\bullet_X$ and a gauge $L_\infty$-algebra $L$. The corresponding infinitesimal gauge transformations are the gauge transformations of the Maurer--Cartan equation
\eqref{eq:MCgaugetrafos}. To make this explicit, we wish to recall the following proposition.

\begin{prop}
 A tensor product of a differential graded algebra $\fra$ and an $L_\infty$-algebra $L$ comes with a natural $L_\infty$-structure. The grading of an element of $\fra\otimes L$ is the sum of its individual gradings. Moreover, for a tuple of elements $(a_1\otimes\ell_1,\ldots,a_i\otimes \ell_i)$ of $\fra\otimes L$, the higher products $\tilde{\mu}_i$ read as
 \begin{equation}
  \tilde\mu_i(a_1\otimes\ell_1,\ldots,a_i\otimes \ell_i)\ =\ 
  \left\{\begin{array}{ll}
  (\dd a_1) \otimes \ell_1+(-1)^{\deg(a_1)}a_1\otimes \mu_1(\ell_1)&\efor i=1~,\\
  \chi(a_1a_2\cdots a_i\otimes \mu_i(\ell_1,\ldots,\ell_i))&\efor i>1~.\end{array}\right.
 \end{equation}
 Here, the $\mu_i$ are the higher products in $L$, $\deg$ denotes the degrees in $\fra$, and $\chi=\pm1$ is the so-called Koszul sign arising from moving graded elements of $\fra$ past graded elements of $L$.
\end{prop}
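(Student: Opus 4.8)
The plan is to verify directly that the products $\tilde\mu_i$ defined in the statement satisfy the axioms of an $L_\infty$-algebra, as listed in the appendix of the paper (the higher homotopy Jacobi identities). The strategy is the standard one for such ``obvious'' tensor-product constructions: one exploits the fact that $\fra$ is a differential \emph{graded associative} algebra (so $\dd$ is a derivation of square zero and the product is associative), while $L$ is an $L_\infty$-algebra, and then bookkeeps the Koszul signs that arise from interleaving the two gradings.

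First I would fix notation: write a homogeneous element of $\fra\otimes L$ as $a\otimes\ell$ with total degree $\deg(a)+\deg(\ell)$, and record the Koszul sign rule $\chi$ explicitly for a permutation acting on a tuple $(a_1\otimes\ell_1,\dots,a_i\otimes\ell_i)$ — it is the product of the sign of the induced permutation on the $a$'s (with their $\fra$-degrees), the sign on the $\ell$'s (with their $L$-degrees), \emph{and} the sign from moving each $\ell_j$ past each $a_k$ that jumps over it. The key preliminary observation is that, because the bracket of $\fra$ is (graded) \emph{commutative up to the product $a_1\cdots a_i$ being reordered freely}, the map $(a_1\otimes\ell_1,\dots)\mapsto a_1\cdots a_i\otimes\mu_i(\ell_1,\dots)$ is automatically graded antisymmetric in the combined grading precisely when $\mu_i$ is graded antisymmetric in the $L$-grading — this is where the commutativity of $\fra$ is essential and where most of the sign is absorbed. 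I would state this as a short lemma.

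Next I would check the homotopy Jacobi identity $\sum_{j+k=n+1}\sum_{\sigma}\pm\,\tilde\mu_k(\tilde\mu_j(\dots),\dots)=0$ degree by degree in $\fra$. The terms split into three types according to how many factors of $\dd$ appear (zero, one from the inner $\tilde\mu_1$, one from the outer $\tilde\mu_1$): the ``zero $\dd$'' terms collect to $(\text{product of all }a\text{'s})\otimes(\text{homotopy Jacobi for }L)$, which vanishes by the $L_\infty$ axioms for $L$; the terms with one $\dd$ reassemble, after using the Leibniz rule for $\dd$ on the product $a_1\cdots a_i$ and the (anti)symmetry of $\mu$, into $\dd(\text{product})\otimes\mu_{n}(\dots)$ terms that cancel in pairs; and the $\dd^2$ contributions vanish because $\dd^2=0$ in $\fra$. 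The only genuinely delicate point — and the main obstacle — is confirming that the signs produced by the Koszul rule match the signs in the $L_\infty$ Jacobi identity after one pulls the differentials and the algebra product through; this is a careful but mechanical sign-chase, and I would organise it by treating a single representative unshuffle $\sigma$ and arguing by the symmetry of both sides that it suffices. I would remark that the case $i=1$ of the formula, $\tilde\mu_1(a\otimes\ell)=\dd a\otimes\ell+(-1)^{\deg a}a\otimes\mu_1(\ell)$, is exactly the tensor-product differential of the complexes and squares to zero by $\dd^2=0$ and $\mu_1^2=0$, which is the base case of the induction-free verification. Since the paper only needs the $2$-term case with products $\mu_{1,2,3}$, I would note that for the reader's purposes it is enough to check identities \eqref{eq:homotopy_relations} explicitly, which reduces the sign-chase to a handful of finite checks.
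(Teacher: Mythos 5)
Your proposal is correct and follows essentially the same route as the paper, whose entire proof is the one-line assertion that the homotopy Jacobi identities for the $\tilde\mu_i$ are readily checked; you simply carry out that check in outline (antisymmetry from graded commutativity of $\fra$, splitting the Jacobi identity by the number of differentials, cancellation of the single-$\dd$ terms via the Leibniz rule against the $\dd(a_1\cdots a_i)$ term, and the pure-$\mu$ terms reducing to the $L_\infty$ identities of $L$). The remaining Koszul sign-chase you defer is exactly what the paper also leaves implicit, so there is no gap relative to the paper's own standard of rigour.
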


\begin{proof}
 The higher homotopy Jacobi identities, displayed in the appendix in  \eqref{eq:homotopyJacobi}, for the higher products $\tilde \mu_i$ are readily checked.
\end{proof}

\begin{exam}\label{ex:tensor-product-sh}
As an example, let us work out the details for the case where $\fra$ is the de Rham complex on $X$ and $L$ is a 2-term $L_\infty$-algebra. Let $U\subseteq X$ be an open subset.
 The tensor product of $H^0(U,\Omega^\bullet_X)$ and the 2-term $L_\infty$-algebra $\frv\xrightarrow{\,\mu_1\,} \frw$ consists of the following graded subspaces
 \begin{equation}
  H^0(U,\Omega^\bullet_X)\otimes (\frv\xrightarrow{\,\mu_1\,} \frw)\ \cong\ \underbrace{H^0(U,\Omega^0_X\otimes \frv)}_{\mbox{degree -1}}~\oplus~\bigoplus_{p\geq 0}\underbrace{H^0(U,\Omega^p_X\otimes \frw\oplus\Omega^{p+1}_X\otimes \frv)}_{\mbox{degree $p$}}~.
 \end{equation}
  For $\phi\in H^0(U,\Omega^1_X\otimes \frw\oplus\Omega^{2}_X\otimes \frv)$, the homotopy Maurer--Cartan equation \eqref{eq:MCeqs} reads as 
  \begin{equation}\label{eq:hMC1}
  -\tilde{\mu}_1(\phi)-\tfrac{1}{2}\tilde{\mu}_2(\phi,\phi)+\tfrac{1}{3!}\tilde{\mu}_3(\phi,\phi,\phi)\ =\ 0~.
  \end{equation}
  This equation is invariant under the (infinitesimal) transformations
  \begin{equation}\label{eq:hMCt1}
  \delta\phi\ =\ \tilde{\mu}_1(\gamma)-\tilde{\mu}_2(\gamma,\phi)-\tfrac{1}{2}\tilde{\mu}_3(\gamma,\phi,\phi)
  \end{equation}
  for $\gamma\in H^0(U,\Omega^0_X\otimes \frw\oplus\Omega^{1}_X\otimes \frv)$.
\end{exam}

\begin{prop}\label{prop:hMC-eqn-and-gauge}
 The homotopy Maurer--Cartan equation \eqref{eq:hMC1} and the transformations \eqref{eq:hMCt1} are equivalent to the equations
  \begin{equation}\label{eq:field_equations}
  \begin{aligned}
  \CF&\ :=\ \dd A+\tfrac{1}{2}\mu_2(A,A)-\mu_1(B)\ =\ 0~,\\
  H&\ :=\ \dd B+\mu_2(A,B)-\tfrac{1}{3!}\mu_3(A,A,A)\ =\ 0~,
  \end{aligned}
  \end{equation}
 where $A\in H^0(U,\Omega^1_X\otimes \frw)$ and $B\in H^0(U,\Omega^2_X\otimes \frv)$ and
  \begin{equation}\label{eq:gauge_trafos}
  \begin{aligned}
    \delta A\ &=\ \dd \omega+\mu_2(A,\omega)-\mu_1(\Lambda)~,\\
    \delta B\ &=\ -\dd \Lambda-\mu_2(A,\Lambda)+\mu_2(B,\omega)+\tfrac{1}{2}\mu_3(\omega,A,A)~,
  \end{aligned}
  \end{equation}
 where $\omega\in H^0(U,\Omega^0_X\otimes \frw)$ and $\Lambda\in H^0(U,\Omega^1_X\otimes \frv)$.
\end{prop}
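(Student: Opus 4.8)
The plan is to unpack the abstract higher products $\tilde\mu_i$ for the tensor product $H^0(U,\Omega^\bullet_X)\otimes(\frv\xrightarrow{\,\mu_1\,}\frw)$ and substitute them into the homotopy Maurer--Cartan equation \eqref{eq:hMC1} and its gauge transformation \eqref{eq:hMCt1}, then match coefficients by form-degree and internal degree. First I would write the generic degree-$1$ element $\phi\in H^0(U,\Omega^1_X\otimes\frw\oplus\Omega^2_X\otimes\frv)$ as $\phi=A+B$ with $A\in H^0(U,\Omega^1_X\otimes\frw)$ of form-degree $1$ and internal degree $0$, and $B\in H^0(U,\Omega^2_X\otimes\frv)$ of form-degree $2$ and internal degree $-1$, so that the total degree of each summand is indeed $1$. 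Similarly I would write the degree-$0$ gauge parameter $\gamma=\omega+\Lambda$ with $\omega\in H^0(U,\Omega^0_X\otimes\frw)$ and $\Lambda\in H^0(U,\Omega^1_X\otimes\frv)$.

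Next I would compute each $\tilde\mu_i$ term. From the preceding proposition, $\tilde\mu_1(a\otimes\ell)=(\dd a)\otimes\ell+(-1)^{\deg a}a\otimes\mu_1(\ell)$, so $\tilde\mu_1(\phi)=\dd A+\dd B-\mu_1(B)$ (the term $\mu_1(A)$ vanishes since $\mu_1$ is only defined on $\frv$, and the sign on the $B$-term is $(-1)^2=+1$). For $i=2$: $\tilde\mu_2(\phi,\phi)$ splits into $\tilde\mu_2(A,A)$ landing in $\Omega^2_X\otimes\frw$ via $\mu_2(\frw,\frw)\to\frw$, and the mixed pieces $\tilde\mu_2(A,B)+\tilde\mu_2(B,A)$ landing in $\Omega^3_X\otimes\frv$ via $\mu_2(\frw,\frv)\to\frv$; using antisymmetry/symmetry of $\mu_2$ on $\frw\wedge\frv$ and the Koszul signs from moving the $1$-form $A$ past $\frv$-valued or $\frw$-valued data, the two mixed terms add up to $2\,\mu_2(A,B)$ so that $\tfrac12\tilde\mu_2(\phi,\phi)=\tfrac12\mu_2(A,A)+\mu_2(A,B)$. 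For $i=3$: only $\tilde\mu_3(A,A,A)$ is non-zero (any insertion of $B$ raises the form-degree beyond what $\mu_3:\frw^{\wedge 3}\to\frv$ can absorb, or exceeds total degree), giving $\tfrac{1}{3!}\tilde\mu_3(\phi,\phi,\phi)=\tfrac{1}{3!}\mu_3(A,A,A)$ in $\Omega^3_X\otimes\frv$. Collecting the $\Omega^2_X\otimes\frw$ part of \eqref{eq:hMC1} yields $\CF=\dd A+\tfrac12\mu_2(A,A)-\mu_1(B)=0$ and the $\Omega^3_X\otimes\frv$ part yields $H=\dd B+\mu_2(A,B)-\tfrac{1}{3!}\mu_3(A,A,A)=0$, as claimed. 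The same bookkeeping applied to \eqref{eq:hMCt1} with $\gamma=\omega+\Lambda$ gives the stated $\delta A$ and $\delta B$: $\tilde\mu_1(\gamma)=\dd\omega+\dd\Lambda-\mu_1(\Lambda)$, $\tilde\mu_2(\gamma,\phi)$ contributes $-\mu_2(A,\omega)$ to $\delta A$ and $\mu_2(A,\Lambda)-\mu_2(B,\omega)$ to $\delta B$ after signs, and $\tfrac12\tilde\mu_3(\gamma,\phi,\phi)$ contributes only $\tfrac12\mu_3(\omega,A,A)$ to $\delta B$.

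The main obstacle I expect is bookkeeping of the Koszul/Koszul--Quillen signs $\chi$: one must track the Grassmann parity of form-degrees against the internal grading of $\frv$ and $\frw$ when permuting factors of $\fra\otimes L$, and in particular verify that the two mixed $\mu_2$ contributions combine with a $+$ sign (so the $\tfrac12$ cancels) and that the $\mu_3$ term survives with the right coefficient $\tfrac{1}{3!}$ and sign. I would handle this by fixing once and for all the convention that an element $a\otimes\ell$ has total degree $\deg(a)+|\ell|$ and Koszul sign $(-1)^{\deg(a)\,|\ell'|}$ upon transposing with $a'\otimes\ell'$, then evaluating the graded-symmetric $\tilde\mu_i$ on the explicit inputs; since the inputs $A,B$ have fixed parities, only finitely many sign computations arise. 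Once the signs are pinned down, the identification of components by $(\text{form-degree},\text{internal degree})$ is immediate, and the equivalence in both directions (Maurer--Cartan $\Leftrightarrow$ \eqref{eq:field_equations}, and gauge transformation \eqref{eq:hMCt1} $\Leftrightarrow$ \eqref{eq:gauge_trafos}) follows because no information is lost in the decomposition $\phi=A+B$, $\gamma=\omega+\Lambda$.
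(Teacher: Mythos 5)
Your overall strategy is exactly the paper's: the published proof is the one-line observation that the result "follows by identifying $\phi=A-B$ and $\gamma=\omega+\Lambda$", i.e.\ precisely the bidegree decomposition and unpacking of the $\tilde\mu_i$ that you carry out in detail. So the method is right; the problem is a concrete sign inconsistency in your decomposition of $\phi$.

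You set $\phi=A+B$, whereas the paper takes $\phi=A-B$, and the relative sign matters. By your own (correct) parenthetical, the internal-degree term in $\tilde\mu_1(B)$ carries the sign $(-1)^{\deg(b)}=(-1)^2=+1$, so $\tilde\mu_1(A+B)=\dd A+\dd B+\mu_1(B)$ --- yet you then write $\tilde\mu_1(\phi)=\dd A+\dd B-\mu_1(B)$, which contradicts the sentence in which it appears. If you carry the $+$ sign consistently with $\phi=A+B$, the $\Omega^2_X\otimes\frw$ component of \eqref{eq:hMC1} becomes $\dd A+\tfrac{1}{2}\mu_2(A,A)+\mu_1(B)=0$, which is \emph{not} the stated fake-curvature equation; the minus sign in $\CF=\dd A+\tfrac12\mu_2(A,A)-\mu_1(B)$ is produced exactly by the relative sign in $\phi=A-B$ (equivalently, by the redefinition $B\mapsto -B$, which by Remark \ref{rem:invert-L-infty} flips $\mu_1$ and $\mu_3$ but not $\mu_2$). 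The same choice propagates into the gauge transformations: with $\phi=A-B$ one has $\delta\phi=\delta A-\delta B$, so the $\frv$-valued components of \eqref{eq:hMCt1} must be read off with an extra overall minus sign to obtain the stated $\delta B$ --- a step your write-up skips because your decomposition hides it. The remaining bookkeeping (the factor of $2$ cancelling the $\tfrac12$ in the mixed $\mu_2$ terms, the survival of only $\mu_3(A,A,A)$ in the cubic term, and the vanishing of $\mu_2(B,B)$ and of any $\mu_3$ with a $\frv$-entry) is fine. To repair the argument, either adopt $\phi=A-B$ from the start, or keep $\phi=A+B$ and state explicitly that the resulting equations agree with \eqref{eq:field_equations} only after applying the $L_\infty$-isomorphism that reverses the sign of the $\frv$-component.
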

\begin{proof}
 This trivially follows by identifying $\phi=A-B$ and $\gamma=\omega+\Lambda$ in \eqref{eq:hMC1} and \eqref{eq:hMCt1}.
\end{proof}

Let us now generalise from gauge potential 1- and 2-forms $A$ and $B$ satisfying the Maurer--Cartan equation to general kinematic data for local semistrict higher gauge theory. It makes sense to relax the equation $H=0$: a trivial calculation shows that in this case, $H$ transforms under under gauge transformations \eqref{eq:gauge_trafos} covariantly according to $\delta H=\mu_2(H,\omega)$. There are a number of reasons, however, why we cannot relax $\CF=0$. Firstly, consistency of the underlying parallel transport requires $\CF$ to vanish, just as it did in the strict case. Secondly, the above covariant transformation law is broken for non-vanishing $\CF$, which makes it impossible to impose a self-duality condition on $H$. Such a condition, however, is expected to arise in the $\CN=(2,0)$ superconformal field theory in six dimensions. We therefore arrive at the following definition.
\begin{definition}
 The kinematic datum of \uline{local semistrict higher gauge theory} with underlying 2-term $L_\infty$-algebra $\frv\xrightarrow{\,\mu_1\,} \frw$ is given by potential 1- and 2-forms $A\in H^0(U,\Omega^1_X\otimes \frw)$ and $B\in H^0(U,\Omega^2_X\otimes \frv)$, for which the 2-form \uline{fake curvature} $\CF:=\dd A+\tfrac{1}{2}\mu_2(A,A)-\mu_1(B)$ vanishes. An equivalence relation between such kinematic data is generated by the infinitesimal gauge transformations described in equations \eqref{eq:gauge_trafos}.
\end{definition}

\begin{rem}
 For trivial $\mu_3$, the equations \eqref{eq:field_equations} reduce to the field equations for a flat connective structure of a principal 2-bundle with strict structure 2-group and equations \eqref{eq:gauge_trafos} describe infinitesimal gauge transformations. 
 
 Note also that there are equivalence relations between gauge transformations which have the same effect on $A$ and $B$. These are given by
  \begin{equation}
  \delta \omega\ =\ \mu_1(\sigma)\eand \delta \Lambda\ =\ \dd \sigma+\mu_2(A,\sigma)~,
  \end{equation}
  where $\sigma\in H^0(U,\Omega^0_X\otimes \frv)$.
\end{rem}

\begin{rem}
 Finally, we would like to stress that the kinematic data, the local flatness conditions and the infinitesimal gauge transformations for local semistrict higher gauge theory based on an $n$-term $L_\infty$-algebras $L$ are similarly derived by considering the tensor product of $\Omega^\bullet_X$ with $L$.
\end{rem}

\subsection{Finite gauge transformations}

Having derived curvature and infinitesimal gauge transformation for semistrict higher gauge theory, let us now turn to the finite gauge transformations. Here, we rely on the results of Section \ref{sec:Lie-functor}, and the lift to Lie $n$-algebra valued potential and curvature forms is readily obtained. 

In Proposition \ref{prop:pre-gauge}, we showed that the equation $\dd_{\rm K} a+\tfrac{1}{2}[a,a]=0$ was invariant under $a\mapsto \tilde a=pap^{-1}+p\dd_{\rm K}p^{-1}$. Since $\dd_{\rm K}$ and the de Rham differential $\dd$ have the same algebraic properties, we derived the well-known statement

\begin{prop} 
 If a local connection 1-form $A$ taking values in the Lie algebra of a Lie group $\sG$ is flat, its curvature $F:=\dd A+\tfrac12[A,A]=0$ is invariant under the transformation
 \begin{equation}\label{eq:gauge_trafos_Lie-1}
  A\ \mapsto\  \tilde A\ =\ pAp^{-1}+p\dd p^{-1}
 \end{equation}
 for any $p\in H^0(U,\sG)$. Such transformations are called \uline{gauge transformations}.
\end{prop}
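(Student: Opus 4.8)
The plan is to reduce the finite statement to the purely algebraic computation already carried out in Proposition \ref{prop:pre-gauge}. The key observation is that nowhere in the proof of Proposition \ref{prop:pre-gauge} did we use any special feature of $\dd_{\rm K}$ beyond the fact that it is a differential: it squares to zero and satisfies the graded Leibniz rule. The de Rham differential $\dd$ on $H^0(U,\Omega^\bullet_X)$ shares exactly these properties, and $C^\infty(\FR^{0|1},X)$ was identified with $T[-1]X$ precisely so that $\dd_{\rm K}$ encodes $\dd$. Hence the entire manipulation showing that $\tilde a = pap^{-1}+p\dd_{\rm K}p^{-1}$ leaves $\dd_{\rm K}a+\tfrac12[a,a]=0$ invariant transfers verbatim upon replacing $\dd_{\rm K}$ by $\dd$, $a$ by the local connection 1-form $A$, and $p\in\FR^{0|1}\to\sG$ by a smooth $\sG$-valued function $p\in H^0(U,\sG)$.

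First I would record that, for $\tilde A := pAp^{-1}+p\,\dd p^{-1}$, a direct computation of the curvature gives the standard transformation law $\tilde F = p\,F\,p^{-1}$, where $F:=\dd A+\tfrac12[A,A]$ and likewise $\tilde F:=\dd\tilde A+\tfrac12[\tilde A,\tilde A]$. This is obtained by expanding $\dd\tilde A$ using the Leibniz rule (noting $\dd(p^{-1}) = -p^{-1}(\dd p)p^{-1}$) and collecting terms; it is exactly the calculation underlying Proposition \ref{prop:pre-gauge}, only with $\dd$ in place of $\dd_{\rm K}$, so I would simply invoke that proposition rather than reproduce the algebra. From $\tilde F = p\,F\,p^{-1}$ the claim is immediate: if $F=0$ then $\tilde F=0$, and conversely since conjugation by $p$ is invertible. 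I would then remark that this also shows $F=0$ is a genuine equivalence relation under the group of such transformations, matching the general pattern of Proposition \ref{prop:pre-gauge}.

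There is essentially no obstacle here — the statement is a well-known classical fact and the only content is the transcription from the $\dd_{\rm K}$-setting to the $\dd$-setting. If anything, the one point requiring a line of care is the identity $\dd(p^{-1}) = -p^{-1}(\dd p)p^{-1}$ for a non-matrix Lie group, where products of the form $p\,a$ with $p\in\sG$ and $a\in\frg$-valued forms must be read via left translation $\sL_p^*$ as explained in the Remark following Proposition \ref{prop:mult-rule-1}; with that convention in force the computation is formally identical to the matrix case. I would therefore present the proof as: the properties $\dd\circ\dd=0$ and the graded Leibniz rule are all that Proposition \ref{prop:pre-gauge} used, hence replacing $\dd_{\rm K}\rightsquigarrow\dd$ and $a\rightsquigarrow A$ yields $\tilde F = pFp^{-1}$, from which invariance of $F=0$ follows.

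\vspace{10pt}
\noindent
{\it Proof:} By the Remark following Proposition \ref{prop:pre-gauge}, replacing the differential $\dd_{\rm K}$ by the de Rham differential $\dd$ leaves the algebraic content of Proposition \ref{prop:pre-gauge} intact, since the only properties of $\dd_{\rm K}$ used there are $\dd_{\rm K}\circ\dd_{\rm K}=0$ and the graded Leibniz rule, both of which hold for $\dd$. Explicitly, writing $F:=\dd A+\tfrac12[A,A]$ and $\tilde A:=pAp^{-1}+p\,\dd p^{-1}$, one computes $\dd\tilde A+\tfrac12[\tilde A,\tilde A]=pFp^{-1}$, exactly as in the proof of Proposition \ref{prop:pre-gauge} with $\dd_{\rm K}$ replaced by $\dd$ and $a$ by $A$. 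Hence $F=0$ implies $\dd\tilde A+\tfrac12[\tilde A,\tilde A]=0$, and conversely, since conjugation by $p$ is invertible. \hfill $\Box$
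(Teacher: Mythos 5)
Your proposal is correct and follows exactly the route the paper takes: the text preceding the proposition argues that since $\dd_{\rm K}$ and the de Rham differential $\dd$ share the same algebraic properties (nilpotency and the graded Leibniz rule), the invariance established in Proposition \ref{prop:pre-gauge} transfers verbatim, which is precisely your argument. Your added remark about reading $p\,a$ via left translation for non-matrix groups and the explicit covariance $\tilde F = pFp^{-1}$ are harmless elaborations of the same idea.
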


\noindent Note also the following consequence.
\begin{cor}
 At the infinitesimal level, the transformations \eqref{eq:gauge_trafos_Lie-1} amount to
 \begin{equation}
  A\ \mapsto\ \tilde A\ =\ \dd \pi+[A,\pi]~,
 \end{equation}
 where $\pi\in H^0(U,\Omega^0_X\otimes\frg)$. They match the gauge transformations in Proposition \ref{prop:hMC-eqn-and-gauge} for the 2-term $L_\infty$-algebra $\{0\}\rightarrow \frg$.
\end{cor}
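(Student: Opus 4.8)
The statement to prove is the Corollary following the Proposition on finite gauge transformations for Lie-algebra-valued flat connections: the infinitesimal limit of $A \mapsto \tilde A = pAp^{-1} + p\dd p^{-1}$ is $\delta A = \dd\pi + [A,\pi]$, and this matches the gauge transformations of Proposition \ref{prop:hMC-eqn-and-gauge} specialised to the 2-term $L_\infty$-algebra $\{0\} \to \frg$.

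\vspace{5pt}\noindent{\it Proof:}
The plan is to linearise the finite transformation and then compare. First I would write $p = \unit_\sG + \pi + \CO(\pi^2)$ for $\pi \in H^0(U,\Omega^0_X\otimes\frg)$, so that $p^{-1} = \unit_\sG - \pi + \CO(\pi^2)$ and $\dd p^{-1} = -\dd\pi + \CO(\pi^2)$. Substituting into \eqref{eq:gauge_trafos_Lie-1} and keeping only terms linear in $\pi$ gives
\begin{equation}
 \tilde A\ =\ (\unit_\sG+\pi)A(\unit_\sG-\pi)+(\unit_\sG+\pi)(-\dd\pi)\ =\ A+\pi A-A\pi-\dd\pi+\CO(\pi^2)~,
\end{equation}
so that $\delta A := \tilde A - A = -\dd\pi + [\pi,A] = \dd(-\pi) + [A,-\pi]$. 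Absorbing the sign by renaming the parameter (or equivalently using the opposite-sign convention for $\dd_{\rm K}$ as in Remark \ref{rem:invert-L-infty}) yields $\delta A = \dd\pi + [A,\pi]$, which is the claimed formula. For a Lie group that is not a matrix group, the same computation goes through verbatim after replacing $pAp^{-1}$ by the adjoint action $\Ad_p A$ and $p\dd p^{-1}$ by the pullback $\sL^*_p(\dd p^{-1})$, both of which linearise to $A + [\pi,A]$ and $-\dd\pi$ respectively, exactly as in the discussion preceding Proposition \ref{prop:mult-rule-1}.

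\vspace{5pt}
For the second assertion, I would specialise Proposition \ref{prop:hMC-eqn-and-gauge} to the 2-term $L_\infty$-algebra with $\frv = \{0\}$ and $\frw = \frg$. Then $\mu_1 = 0$, $\mu_3 = 0$, the $\frv$-valued 2-form $B$ and the $\frv$-valued parameter $\Lambda$ are absent, $\mu_2(A,A) = [A,A]$ and $\mu_2(A,\omega) = [A,\omega]$. The first equation of \eqref{eq:gauge_trafos} then reads $\delta A = \dd\omega + [A,\omega]$, which is precisely the infinitesimal formula above with $\omega$ in the role of $\pi$; the $B$-transformation and the fake-curvature equation degenerate trivially. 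This matching is immediate once the specialisation is written out, so no further work is needed.

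\vspace{5pt}
The only mild subtlety — and the nearest thing to an obstacle — is bookkeeping of the sign convention: depending on whether one reads \eqref{eq:gauge_trafos_Lie-1} with $p\dd p^{-1}$ or $(\dd p)p^{-1}$ and whether one writes $[A,\pi]$ or $[\pi,A]$, the linearisation produces $\pm(\dd\pi + [A,\pi])$. This is resolved exactly as the paper resolves it elsewhere, namely by the sign freedom recorded in Remark \ref{rem:invert-L-infty} (replacing $(\mu_1,\mu_2,\mu_3)$ by $(-\mu_1,\mu_2,-\mu_3)$), so there is no genuine discrepancy. \hfill $\Box$
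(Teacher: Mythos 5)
Your proof is correct and is essentially the computation the paper leaves implicit (no proof is given for this corollary in the text): linearise $p=\unit_\sG+\pi$, keep first-order terms to get $\delta A=-\dd\pi+[\pi,A]$, and specialise Proposition \ref{prop:hMC-eqn-and-gauge} to $\frv=\{0\}$, $\mu_1=\mu_3=0$, $B=\Lambda=0$. The residual sign is purely a choice of parameter, resolved by expanding $p=\unit_\sG-\pi$ from the start (or renaming $\pi\mapsto-\pi$ as you do); the appeal to Remark \ref{rem:invert-L-infty} is unnecessary for this but harmless.
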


Analogously, we treat the kinematic data of local semistrict higher gauge theory. In Theorem \ref{th:SemStrGaugeTrafo}, we showed that the equations 
\begin{equation}
 \dd_{\rm K} \alpha\ =\ -\alpha\otimes\alpha- \sfs(\beta)\eand\dd_{\rm K} \beta\ =\ -\id_\alpha\otimes\beta+\beta\otimes \id_\alpha-\mu(\alpha,\alpha,\alpha)
\end{equation}
are invariant under (\ref{eq:SemStrGaugeTrafo}) and (\ref{eq:SemStrGaugeTrafo-C}). Again, since $\dd_{\rm K}$ and $\dd$ have the same algebraic properties, we have derived the following statement.
 
\begin{prop}\label{prop:semistrict-gauge-trafos}
If the curvatures $\CF$ and $H$ of local gauge potential 1- and 2-forms $A$ and $B$ as defined in Proposition \ref{prop:hMC-eqn-and-gauge} vanish, then they are invariant under the transformation 
\begin{subequations}\label{eq:FiniteSemStrGaugeTrafo}
\begin{eqnarray}
 \kern-20pt\Lambda_p\,:\,\tilde A\otimes p\! &\Rightarrow&\! p\otimes A-\dd p~, \label{eq:FiniteSemStrGaugeTrafo-B}\\
 \kern-20pt  \tilde B\otimes \id_p\! &=&\! \mu(\tilde A,\tilde A,p)+ \big[\id_p\otimes B+\mu(p,A,A)\big]\circ\notag\\
    &&\kern2.5cm \circ\, \big[-\dd\Lambda_p-\Lambda_p\otimes\id_{A}-\mu(\tilde A,p,A)\big]\circ\notag\\
      &&\kern2.5cm \circ\,\big[-\id_{\sfs(\dd\Lambda_p)}-\id_{\tilde A}\otimes(\Lambda_p+\id_{\dd p})\big]~,\label{eq:FiniteSemStrGaugeTrafo-C}
\end{eqnarray}
\end{subequations}
where $p\in H^0(U,M)$ and\footnote{Here, $T_pN$ denotes the sheaf over $U\subseteq X$ with stalks $T_{p(x)}N$ over $x\in U$.} $\Lambda_p\in H^0(U,\Omega^1_X\otimes T_pN)$. We shall refer to such transformations as \uline{gauge transformations}.
\end{prop}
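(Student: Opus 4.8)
The plan is to obtain Proposition \ref{prop:semistrict-gauge-trafos} as a direct consequence of Theorem \ref{th:SemStrGaugeTrafo} by invoking the formal analogy between the Kontsevich differential $\dd_{\rm K}$ and the de Rham differential $\dd$, exactly as was done to pass from Proposition \ref{prop:pre-gauge} to \eqref{eq:gauge_trafos_Lie-1} in the ordinary Lie group case. First I would recall the setup of Section \ref{sec:Lie-functor}: a $\CCG$-valued descent datum on $\FR^{0|1}\times X\to X$ is parametrised by a pair $(\alpha,\beta)\in\frm[-1]\oplus\frn[-2]$ with $\sft(\beta)=0$, and by Corollary \ref{cor:Konsdiffalbe} the parametrisation is governed by the two equations \eqref{eq:action-dK-2} involving $\dd_{\rm K}$. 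Theorem \ref{th:SemStrGaugeTrafo} states precisely that two equivalent descent data $(\alpha,\beta)$ and $(\tilde\alpha,\tilde\beta)$ are related by \eqref{eq:SemStrGaugeTrafo-B}--\eqref{eq:SemStrGaugeTrafo-C}, and that \eqref{eq:action-dK-2} is invariant under this relation.

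The second step is the substitution $\dd_{\rm K}\rightsquigarrow\dd$, $\alpha\rightsquigarrow A$, $\beta\rightsquigarrow B$, $p\rightsquigarrow p$, $\lambda_p\rightsquigarrow\Lambda_p$. The key point, already used in the remark following Proposition \ref{prop:pre-gauge} and in the corollary before Proposition \ref{prop:semistrict-gauge-trafos}, is that $\dd_{\rm K}$ and $\dd$ share the same algebraic properties: both are nilpotent degree-one derivations obeying a graded Leibniz rule (this is the content of the Proposition establishing $\dd_{\rm K}\circ\dd_{\rm K}=0$ and \eqref{eq:Leibniz}). Hence every identity in Theorem \ref{th:SemStrGaugeTrafo} that was derived purely algebraically — using only the linearised structure maps $\sfs,\sft,\id,\otimes,\circ$ of $\CCG$, the associator expansions, and the formal manipulation of $\dd_{\rm K}$ — remains valid verbatim when $\dd_{\rm K}$ is replaced by $\dd$ acting on $\frw$-valued $1$-forms and $\frv$-valued $2$-forms on $U\subseteq X$. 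Under this replacement, the equations \eqref{eq:action-dK-2} become precisely $\CF=0$ and $H=0$ of Proposition \ref{prop:hMC-eqn-and-gauge} (upon identifying $\alpha\otimes\alpha=\tfrac12\mu_2(A,A)$ etc.\ via \eqref{eq:CE-identification-1}--\eqref{eq:CE-identification-2} and Remark \ref{rem:invert-L-infty}), and \eqref{eq:SemStrGaugeTrafo-B}--\eqref{eq:SemStrGaugeTrafo-C} become \eqref{eq:FiniteSemStrGaugeTrafo-B}--\eqref{eq:FiniteSemStrGaugeTrafo-C}. The invariance of $\CF=0$, $H=0$ then follows from the invariance of \eqref{eq:action-dK-2} asserted in Theorem \ref{th:SemStrGaugeTrafo}.

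The remaining bookkeeping is to check that the degree assignments are consistent: $A$ has form-degree $1$ and internal degree $0$ in $\frw$, so it sits in the same slot as $\alpha\in\frm[-1]$ (the odd coordinate $\theta$ playing the role of $\dd x$); $B$ has form-degree $2$ and internal degree $-1$ in $\frv$, matching $\beta\in\frn[-2]$; the gauge parameter $p\in H^0(U,M)$ is a $0$-form valued in $M$, matching $p\in M$ in the descent datum picture, and $\Lambda_p\in H^0(U,\Omega^1_X\otimes T_pN)$ matches $\lambda_p\in T_p[-1]N$. One should also note that $\sft(\tilde B)=0$ holds since $\sft(\tilde\beta)=0$ was established in the proof of Theorem \ref{th:SemStrGaugeTrafo}, so that $\tilde B$ is again a legitimate $\frv$-valued $2$-form.

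I do not expect a genuine obstacle here: the proposition is essentially a transcription lemma, and the one-line proof ``since $\dd_{\rm K}$ and $\dd$ have the same algebraic properties, this follows from Theorem \ref{th:SemStrGaugeTrafo}'' is the intended argument. The only point requiring a modicum of care — and hence the closest thing to a main obstacle — is to make sure that nothing in the derivation of Theorem \ref{th:SemStrGaugeTrafo} used a property special to $\dd_{\rm K}$ beyond nilpotency and the graded Leibniz rule (for instance, the explicit diagonal action \eqref{eq:action-dK-1} on several odd coordinates was only an auxiliary device to \emph{derive} the expansions, not something appearing in the final identities), so that the analogy is exact rather than merely formal. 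Once that is observed, the statement follows.
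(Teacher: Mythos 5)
Your proposal is correct and coincides with the paper's own argument: the paper derives Proposition \ref{prop:semistrict-gauge-trafos} in the paragraph immediately preceding it, precisely by citing the invariance statement of Theorem \ref{th:SemStrGaugeTrafo} and the observation that $\dd_{\rm K}$ and $\dd$ share the same algebraic properties (nilpotency and the graded Leibniz rule). Your additional bookkeeping on degrees and on $\sft(\tilde\beta)=0$ is consistent with, and slightly more explicit than, what the paper records.
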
  
\noindent As a consistency check, we can linearise these gauge transformations, obtaining the transformations \eqref{eq:gauge_trafos}:
\begin{prop}
 At the infinitesimal level, the gauge transformations \eqref{eq:FiniteSemStrGaugeTrafo} become
 \begin{equation}\label{eq:gauge_trafos1}
  \begin{aligned}
    \delta A\ &=\ \dd w+\mu_2(A,w)-\mu_1(v)~,\\
    \delta B\ &=\ -\dd v-\mu_2(A,v)+\mu_2(B,w)+\tfrac{1}{2}\mu_3(w,A,A)~,
  \end{aligned}
  \end{equation}
 where $w\in H^0(U,\Omega^0_X\otimes\frw)$ and $v\in H^0(U,\Omega^1_X\otimes\frv)$.
Hence, they agree with the gauge transformations in Proposition \ref{prop:hMC-eqn-and-gauge} for the 2-term $L_\infty$-algebra $\frv\xrightarrow{\,\mu_1\,} \frw$ concentrated in degrees -1 and 0.
\end{prop}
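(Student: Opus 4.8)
\noindent The plan is to obtain \eqref{eq:gauge_trafos1} by linearising the finite gauge transformations \eqref{eq:FiniteSemStrGaugeTrafo} about the trivial transformation $p=\id_e$, $\Lambda_p=0$, and then reading off that the result coincides term by term with the infinitesimal transformations \eqref{eq:gauge_trafos} of Proposition \ref{prop:hMC-eqn-and-gauge}. A preliminary observation that organises the computation: since $\dd_{\rm K}$ and the de Rham differential $\dd$ obey the same algebraic identities, the transformation law \eqref{eq:FiniteSemStrGaugeTrafo} has exactly the same form as the equivalence relation of Theorem \ref{th:SemStrGaugeTrafo} under $\dd_{\rm K}\mapsto\dd$, $\alpha\mapsto A$, $\beta\mapsto B$, $\lambda_p\mapsto\Lambda_p$. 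It therefore suffices to linearise the formulae of that theorem, using the linearised structure maps $\otimes$, $\circ$ and the associator expansion collected in Section \ref{ssec:Severa-2-groups}.

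Concretely, first I would write $p=\id_e+w$ with $w\in H^0(U,\Omega^0_X\otimes\frw)$ and extract from $\Lambda_p$ a $\frv$-valued $1$-form $v\in H^0(U,\Omega^1_X\otimes\frv)$ exactly as $\lambda_0$ was extracted from $\lambda_p$ in Definition \ref{def:DefOfLambda} (by subtracting the identity $2$-cell on the target), treating $w$, $v$ and hence $\delta A:=\tilde A-A$, $\delta B:=\tilde B-B$ as infinitesimal and discarding all products of two such quantities. Computing the source and target of the $2$-cell $\Lambda_p$ in \eqref{eq:FiniteSemStrGaugeTrafo-B}, expanding $\tilde A\otimes p$ and $p\otimes A$ to first order via Proposition \ref{prop:linearized-tensor-products}, and using $\sfs|_\frv=-\mu_1$ together with the dictionary \eqref{eq:CE-identification-1}--\eqref{eq:CE-identification-2} that turns the linearised $\otimes$ into $\mu_2$, should give $\delta A=\dd w+\mu_2(A,w)-\mu_1(v)$. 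For \eqref{eq:FiniteSemStrGaugeTrafo-C} I would expand the three associator-induced $2$-cells $\mu(\tilde A,\tilde A,p)$, $\mu(p,A,A)$, $\mu(\tilde A,p,A)$: by Proposition \ref{prop:TrivAssocId} these degenerate whenever a slot is $\id_e$, so only $\mu(w,A,A)$ contributes at first order and, via \eqref{eq:ExAssPf} and \eqref{eq:CE-identification-1}--\eqref{eq:CE-identification-2}, yields $\tfrac12\mu_3(w,A,A)$; the nested vertical compositions linearise by Proposition \ref{prop:induced-concatenation}, the term $\id_p\otimes B$ gives $\mu_2(B,w)$, the terms $\Lambda_p\otimes\id_A$ and $\id_{\tilde A}\otimes\Lambda_p$ give $-\mu_2(A,v)$-type contributions, and $\dd\Lambda_p$ reduces to $\dd v$. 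Assembling these should reproduce $\delta B=-\dd v-\mu_2(A,v)+\mu_2(B,w)+\tfrac12\mu_3(w,A,A)$, and comparison with \eqref{eq:gauge_trafos} (with $\omega=w$, $\Lambda=v$) then finishes the proof; a useful check along the way is that setting $\mu_3=0$ returns the strict-case transformations \eqref{eq:strictGT-1}.

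The hard part will be the bookkeeping in \eqref{eq:FiniteSemStrGaugeTrafo-C}: one has to identify precisely which of the many terms in the threefold $\circ$-composition are genuinely of first order once $p$ lies near $\id_e$ and $\Lambda_p,\delta A,\delta B$ are treated as infinitesimal, and apply the linearised $\otimes$, the induced concatenation of Proposition \ref{prop:induced-concatenation}, and the associator expansion \eqref{eq:ExAssPf} carefully enough that all the ``trivial-associator'' contributions of Proposition \ref{prop:TrivAssocId} cancel, leaving only the single $\mu_3$-term. By contrast, the linearisation of \eqref{eq:FiniteSemStrGaugeTrafo-B} and the final matching with Proposition \ref{prop:hMC-eqn-and-gauge} are routine. \hfill$\Box$
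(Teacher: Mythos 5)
Your strategy is essentially the paper's: linearise $p=\id_e+\delta p$ and $\Lambda_p=\id_A+\delta\Lambda$, read $\delta A$ off the source and target of $\Lambda_p$, define $v$ by subtracting the identity $2$-cell on the target of $\delta\Lambda$ (exactly the paper's $v=\delta\Lambda-\id_{\delta p\otimes A-\dd\,\delta p}$), and then work through \eqref{eq:FiniteSemStrGaugeTrafo-C} with the linearised $\otimes$, $\circ$, the interchange law, and $\sfs(B)=-\dd A+A\otimes A$. Two execution points would trip you up as written. First, the identification that reproduces \eqref{eq:gauge_trafos1} with the stated signs is $w=-\delta p$, not $p=\id_e+w$; with your convention the $\dd w$ and $\mu_2(A,w)$ terms come out with the wrong sign. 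Second, it is not the case that only one associator survives: all three of $\mu(\tilde A,\tilde A,p)$, $\mu(p,A,A)$, and $\mu(\tilde A,p,A)$ are linear in $\delta p$ at first order (Proposition \ref{prop:TrivAssocId} only kills their $\id_e$ parts), and it is the sum of these three contributions, via the dictionary \eqref{eq:CE-identification-1}--\eqref{eq:CE-identification-2} relating $\mu(\alpha,\alpha,\alpha)$ to $-\tfrac{1}{3!}\mu_3(\alpha,\alpha,\alpha)$, that yields the coefficient $\tfrac12=3\cdot\tfrac{1}{3!}$ in $\tfrac12\mu_3(w,A,A)$; a single surviving associator would give $\tfrac16$. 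With those two corrections your outline coincides with the paper's (itself only sketched) computation.
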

\begin{proof}
 We linearise $p=\id_e+\delta p$ and $\Lambda=\id_A+\delta \Lambda$ such that equation \eqref{eq:FiniteSemStrGaugeTrafo-B} reads as
 \begin{equation}
  (\id_A+\delta\Lambda)\ :\ (A+\delta A)\otimes (\id_e+\delta p)\ \Rightarrow\ (\id_e+\delta p)\otimes A-\dd \delta p~.
 \end{equation}
 Identifying
 \begin{equation}
 w\ =\ -\delta p\eand 
  v\ =\ \delta \Lambda-\id_{\delta p\otimes A-\dd \delta p}\ :\ \delta A+A\otimes \delta p -\delta p\otimes A+\dd \delta p \ \Rightarrow\ 0~,
 \end{equation}
 we immediately obtain the first equation in \eqref{eq:gauge_trafos1}. The derivation of the second equation in \eqref{eq:gauge_trafos1} from linearising \eqref{eq:FiniteSemStrGaugeTrafo-C} is somewhat more involved. We start from
 \begin{equation}
  \begin{aligned}
    (B+\delta B)&\otimes (\id_A+\id_{\delta p})\ =\ \mu(A,A,\delta p)+[\id_e\otimes B+\id_{\delta p}\otimes B+\mu(\delta p,A,A)]\circ\\
    &\circ[-\dd\,\id_A-\dd \delta \Lambda -\id_A\otimes \id_A-\delta\Lambda\otimes \id_A+\mu(A,w,A)]\circ\\
    &\circ[-\id_{\dd A}-\id_{\sfs(\dd \delta \Lambda)}-\id_A\otimes \id_A-\id_A\otimes \id_{\dd \delta p}-\id_A\otimes \delta \Lambda-\id_{\delta A}\otimes \id_A]~.
  \end{aligned}
 \end{equation}
The remaining calculation is rather lengthy but straightforward, if one makes use of the (linearised) interchange law, Proposition \ref{prop:induced-concatenation} and the identity $\sfs(B)=-\dd A+A\otimes A$.
\end{proof}

\subsection{Connective structure}

Consider a semistrict principal 2-bundle $\Phi$ with a semistrict structure 2-group $\CCG=(M,N)$ over a smooth manifold $X$ with covering $\frU=\{U_a\}$. We use again the notation $\frw:=T_{\id_e}M$ and $\frv:=\ker(\sft)\subseteq T_{\id_{\id_e}}N$. The bundle $\Phi$ is characterised by $\CCG$-valued  transition functions $(\{m_{ab}\},\{n_{abc}\})$. Next, we would like to equip $\Phi$ with a connective structure. 

From the discussion of strict principal 2-bundles, it is clear that a connective structure will consist locally of a $\frw$-valued 1-form $A_a$, a $\frv$-valued 2-form $B_a$, and, on intersections $U_a\cap U_b$, a $T_{m_{ab}}[-1]N$-valued 1-form $\Lambda_{ab}$. On intersections of patches $U_a\cap U_b$, $(A_a,B_a)$ and $(A_b,B_b)$ are related by a gauge transformation on $U_a\cap U_b$, which is parameterised by $(m_{ab},\Lambda_{ab})$. The explicit formula is then clear from Proposition \ref{prop:semistrict-gauge-trafos} and reads as follows:
\begin{eqnarray}\label{eq:cocycle1}
 \kern-20pt\Lambda_{ab}\,:\,\tilde A_b\otimes m_{ab}\! &\Rightarrow&\! m_{ab}\otimes A-\dd m_{ab}~,\\ 
 \kern-20pt  B_b\otimes \id_{m_{ab}}\! &=&\! \mu(A_b,A_b,m_{ab})+ \big[\id_{m_{ab}}\otimes B_a+\mu(m_{ab},A_a,A_a)\big]\circ\notag\\
    &&\kern2.5cm \circ\, \big[-\dd\Lambda_{ab}-\Lambda_{ab}\otimes\id_{A_a}-\mu(\tilde A_b,m_{ab},A_a)\big]\circ\notag\\
      &&\kern2.5cm \circ\,\big[-\id_{\sfs(\dd\Lambda_{ab})}-\id_{\tilde A_b}\otimes(\Lambda_{ab}+\id_{\dd m_{ab}})\big]~,
\end{eqnarray}
provided the fake curvature $\CF_a:=\dd A_a+A_a\otimes A_a+\sfs(B_a)$ vanishes on all coordinate patches $U_a$.
  
Note that the condition that two transformations of the form \eqref{eq:cocycle1} combine to a third one on non-empty triple intersection of coordinate patches yields the cocycle condition for  $\{\Lambda_{ab}\}$. To derive this condition, let us consider 
\begin{equation}
 \begin{aligned}
  \Lambda_{ab}\,:\,A_b\otimes m_{ba} \ &\Rightarrow\   m_{ba}\otimes A_a-\dd m_{ba}~,\\
  \Lambda_{bc}\,:\,A_c\otimes m_{cb} \ &\Rightarrow \ m_{cb}\otimes A_b-\dd m_{cb}~,\\
  \Lambda_{ac}\,:\,A_c\otimes m_{ca} \ &\Rightarrow \  m_{ca}\otimes A_a-\dd m_{ca}~,\\
 \end{aligned}
\end{equation}
over a non-empty triple intersections $U_a\cap U_b\cap U_c$. Recall also that
\begin{equation}
  n_{abc}\,:\,m_{ab}\otimes m_{bc}\ \Rightarrow\  m_{ac}~.
\end{equation}
Chasing the commutative diagram relating the two possible ways of going from $(A_b\otimes m_{ba})\otimes m_{ac}$ to $m_{bc} \otimes A_c -\dd m_{bc}$, we obtain the following proposition.

\begin{prop}
 The 1-forms $\{\Lambda_{ab}\}$ are consistent over triple overlaps $U_a\cap U_b\cap U_c$, if the following holds:
 \begin{equation}
  \begin{aligned}\label{eq:CompOfGaugeTrafos}
   &\Lambda_{cb}\circ (\id_{A_{b}}\otimes n_{bac})\circ \sfa_{{A_b},{m_{ba}},{m_{ac}}}\ =\\
   &\hspace{0.5cm}\ =\ (n_{bac}\otimes\id_{A_{c}}-\dd n_{bac})\circ (\sfa^{-1}_{{m_{ba}},{m_{ac}},A_{c}}-\id_{\dd({m_{ba}}\otimes{m_{ac}})}) \circ\\
   &\hspace{1.5cm}\circ(\id_{{m_{ba}}}\otimes\Lambda_{ca}- \id_{\dd{m_{ba}}\otimes{m_{ac}}})\circ(\sfa_{{m_{ba}},{A_{c}}, m_{ac}}-\id_{\dd{m_{ba}}\otimes{m_{ac}}})\circ(\Lambda_{ab}\otimes\id_{m_{ac}})~.
  \end{aligned}
  \end{equation}
\end{prop}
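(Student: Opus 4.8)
The plan is to derive the consistency condition \eqref{eq:CompOfGaugeTrafos} by the same diagram-chasing technique that produced the single gauge transformation law \eqref{eq:cocycle1}, exploiting the fact that the transition 2-cell $n_{abc}\colon m_{ab}\otimes m_{bc}\Rightarrow m_{ac}$ relates the pair of composed gauge transformations indexed by $(ab)$ and $(bc)$ to the single gauge transformation indexed by $(ac)$ (up to associators). Concretely, over a non-empty triple overlap $U_a\cap U_b\cap U_c$ one has two ways of passing from the starting 1-cell $(A_b\otimes m_{ba})\otimes m_{ac}$ to the target $m_{bc}\otimes A_c-\dd m_{bc}$: one route applies $\Lambda_{ab}$ first and then $\Lambda_{ac}$ (reassociating so that $m_{ba}\otimes m_{ac}$ appears), the other applies $n_{bac}$ first to fuse $m_{ba}\otimes m_{ac}$ into $m_{bc}$ and then applies $\Lambda_{cb}$. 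The content of the proposition is exactly that these two routes agree, and \eqref{eq:CompOfGaugeTrafos} is the resulting pasting identity, written out in formulae.

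First I would assemble the commutative diagram whose 1-cells are the various tensor products of $A_a,A_b,A_c$ with the transition functions $m_{ab}$ (and their $\dd$-shifted variants coming from the right-hand side $m\otimes A-\dd m$ of the gauge transformation), and whose 2-cells are the three $\Lambda$'s, the $n_{bac}$, the relevant associators $\sfa$ and $\sfa^{-1}$, and the identity 2-cells $\id_{\dd(\cdots)}$ accounting for the additive $-\dd m$ terms. The additive structure in the targets (expressions of the form $m\otimes A-\dd m$) means that all 2-cells involved can be written in the normalised form $\id_{(\cdot)}+(\text{nilpotent shift})$, exactly as in the proof of Theorem \ref{th:SemStrGaugeTrafo}, so the linearised horizontal composition $\otimes$ and vertical composition $\circ$ are well-defined on them; this is the mechanism by which terms like $\Lambda_{ab}\otimes\id_{m_{ac}}$, $\id_{m_{ba}}\otimes\Lambda_{ca}$, $n_{bac}\otimes\id_{A_c}-\dd n_{bac}$, etc., make sense. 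Then I would read off \eqref{eq:CompOfGaugeTrafos} by equating the two composites along the boundary of the diagram, just as \eqref{eq:cocycle1} was read off from the corresponding diagram on a double overlap.

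The key inputs are: the naturalness of the associator (Remarks in Section \ref{ssec:Weak2Categories} and the commuting hexagons, needed to commute $\sfa$ past the various $\Lambda$'s and $n$'s), the interchange law \eqref{eq:interchange_law} (needed whenever a horizontal composite of two vertical composites has to be rewritten), and the cocycle condition \eqref{eq:CocycConSemiPB} for $\{n_{abc}\}$ together with the triviality of $\sfa$ on degenerate arguments (Proposition \ref{prop:TrivAssocId}), which is what allows the reassociation of $m_{ba}\otimes m_{ac}$ inside $n_{bac}$. Finally one must check that the fake-flatness conditions $\CF_a=\CF_b=\CF_c=0$ are consistent with the existence of the $\Lambda_{ab}$ on each overlap, so that the diagram one is chasing actually lives in the right groupoid.

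The main obstacle I expect is purely bookkeeping: correctly tracking the additive $\dd m$-shifts and the identity 2-cells $\id_{\dd(\cdots)}$ through all the reassociations, and making sure that every associator insertion is the one forced by the source/target types rather than an arbitrary choice. In particular, verifying that the combination $(\sfa^{-1}_{m_{ba},m_{ac},A_c}-\id_{\dd(m_{ba}\otimes m_{ac})})$ is precisely the associator that converts the output of the $\Lambda_{ab}$-then-$\Lambda_{ca}$ route into a form to which $n_{bac}\otimes\id_{A_c}$ can be applied requires carefully expanding both sides and matching the linearised $\otimes$ and $\circ$ operations term by term; the pasting theorem for weak 2-categories (Verity \cite{Verity:2011aa}, as used for Proposition \ref{prop:Combine2Lax2NatTrans}) guarantees that some such identity holds, and the remaining work is to confirm that it takes exactly the form \eqref{eq:CompOfGaugeTrafos}.
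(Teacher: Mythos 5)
Your proposal matches the paper's own argument: the proposition is obtained precisely by chasing the commutative diagram relating the two ways of passing from $(A_b\otimes m_{ba})\otimes m_{ac}$ to $m_{bc}\otimes A_c-\dd m_{bc}$ --- via $n_{bac}$ followed by $\Lambda_{cb}$ on one side, versus $\Lambda_{ab}\otimes\id_{m_{ac}}$, reassociation, $\id_{m_{ba}}\otimes\Lambda_{ca}$, reassociation, and $n_{bac}\otimes\id_{A_c}-\dd n_{bac}$ on the other --- with the linearised $\circ$ and $\otimes$ acting on 2-cells in the normalised ``identity plus shift'' form, exactly as in the paper. The only blemish is a small index slip ($\Lambda_{ac}$ where you mean $\Lambda_{ca}$ in describing the second route), which does not affect the argument.
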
 

\noindent 
In the above equation, we have again used our intuitive notation: for instance, $n_{bac}\otimes\id_{A_c}-\dd n_{bac}$ has to be understood as
\begin{equation}
n_{bac}\otimes\id_{A_c}-\dd n_{bac}:(m_{ba}\otimes m_{ac})\otimes A_c -\dd(m_{ba}\otimes m_{ac}) \ \Rightarrow\ m_{bc}\otimes A_c-\dd m_{bc}~.  
\end{equation}

We now have all the ingredients for defining the notion of a connective structure.
\begin{definition}
A \uline{connective structure} on a semistrict principal 2-bundle $\Phi$ with semistrict structure 2-group $\CCG=(M,N)$ with associated 2-term $L_\infty$-algebra $\frv\xrightarrow{\,\mu_1\,} \frw$ consists of $(\{A_a\},\{B_a\},\{\Lambda_{ab}\})$, where $A_a\in H^0(U_a,\Omega^1_X\otimes \frw)$, $B_a\in H^0(U_a,\Omega^2_X\otimes \frv)$, and $\Lambda_{ab}\in H^0(U_a\cap U_b,\Omega^1_X\otimes T_{m_{ab}}N)$ such that the cocycle conditions \eqref{eq:cocycle1}  as well as \eqref{eq:CompOfGaugeTrafos} are satisfied, and, in addition, the 2-form \uline{fake curvature}
\begin{equation}\label{eq:fakecurv}
 \CF_{a}\ :=\ \dd A_{a}+A_{a}\otimes A_{a}+\sfs(B_{a})
\end{equation}
vanishes.
\end{definition}

\begin{rem}\label{rem:LamVsLam0}
Note that by virtue of Definition \ref{def:DefOfLambda} and Proposition \ref{prop:PropOfLambda}, we can always work with a $\Lambda^0_{ab}\in H^0(U_a\cap U_b,\Omega^1_X\otimes \frv)$ such that
\begin{equation}
        \Lambda_{ab}^0\,:\, A_b-(m_{ab}\otimes A_a)\otimes\overline{m}_{ab}+\dd m_{ab}\otimes \overline{m}_{ab}\ \Rightarrow\ 0
\end{equation}
instead of $\Lambda_{ab}\in H^0(U_a\cap U_b,\Omega^1_X\otimes T_{m_{ab}}N)$ with \eqref{eq:cocycle1}.  Both are related by
\begin{subequations}
\begin{equation}
 \Lambda_{ab}^0\ =\ (\Lambda_{ab}\otimes\id_{\overline{m}_{ab}})\circ\sfa^{-1}_{A_b,m_{ab},\overline{m}_{ab}}-\id_{(m_{ab}\otimes A_a)\otimes\overline{m}_{ab}-\dd m_{ab}\otimes\overline{m}_{ab}}~,
\end{equation}
or, equivalently,
\begin{equation}
\begin{aligned}
  \Lambda_{ab}\ &=\ \sfa_{m_{ab}\otimes A_a+\dd m_{ab},\overline{m}_{ab},m_{ab}}\,\circ\\
  &\kern1cm \circ
  \left\{\left[ \left(\Lambda^0_{ab}+\id_{(m_{ab}\otimes A_a)\otimes\overline{m}_{ab}-\dd m_{ab}\otimes\overline{m}_{ab}} \right)\circ\sfa_{A_b,m_{ab},\overline{m}_{ab}} \right]\otimes\id_{m_{ab}}\right\}\circ\\
  &\kern1cm \circ \sfa^{-1}_{A_b\otimes m_{ab},\overline{m}_{ab},m_{ab}}~.
  \end{aligned}
\end{equation}
\end{subequations}
Therefore, we can say that a connective structure  $(\{A_a\},\{B_a\},\{\Lambda_{ab}\})$ is alternatively given by a tuple  $(\{A_a\},\{B_a\},\{\Lambda^0_{ab}\})$ in which $\Lambda^0_{ab}$ is as above.
\end{rem}

Finally, we would like to describe the action of a coboundary on a connective structure $(\{A_a\},\{B_a\},\{\Lambda_{ab}\})$. For $(\{A_a\},\{B_a\})$ this is again clear from Proposition \ref{prop:semistrict-gauge-trafos}. For instance,
\begin{equation}
 \begin{aligned}
  \Lambda_{a}\,:\,\tilde A_a\otimes  m_{a} \ &\Rightarrow\   m_{a}\otimes A_a-\dd m_{a}~,\\
  n_{ab}\,:\,\tilde m_{ab}\otimes m_b \ &\Rightarrow\   m_{a}\otimes m_{ab}~.  
 \end{aligned}
\end{equation}
To derive the action on  $\{\Lambda_{ab}\}$, we compare the two expressions,
\begin{equation}\label{eq:CoboundaryAction}
 \begin{aligned}
  \Lambda_{ab}\,:\,A_b\otimes  m_{ba} \ &\Rightarrow\   m_{ba}\otimes A_a-\dd m_{ba}~,\\
  \tilde \Lambda_{ab}\,:\,\tilde A_b\otimes \tilde{m}_{ba} \ &\Rightarrow\   \tilde m_{ba}\otimes \tilde A_a-\dd \tilde m_{ba}~.
 \end{aligned}
\end{equation}
Again, chasing the corresponding commutative diagram relating the two possible ways of going from $(\tilde A_a\otimes m_a)\otimes m_{ab}$ to $(\tilde m_{ab}\otimes \tilde A_b)\otimes m_b-\dd \tilde m_{ab}\otimes m_b$ yields the following proposition.

\begin{prop}  The 1-forms $\{\Lambda_{ab}\}$ and $\{\tilde \Lambda_{ab}\}$ of two equivalent connective structures $(\{A_a\},\{B_a\},\{\Lambda_{ab}\})$ and $(\{\tilde A_a\},\{\tilde B_a\},\{\tilde\Lambda_{ab}\})$ on a semistrict principal 2-bundle $\Phi$ with semistrict structure 2-group are related by
 \begin{equation}\label{eq:GaugeOfGaugeTrafos}
  \begin{aligned}
   &(\tilde \Lambda_{ba}\otimes \id_{m_b})\circ \sfa^{-1}_{\tilde{A}_a,\tilde m_{ab},{m_b}}\circ (\id_{\tilde{A}_a}\otimes n_{ab})\circ \sfa_{\tilde{A}_a,m_a,m_{ab}}\ =\ \\
   &=\ (\sfa^{-1}_{\tilde m_{ab},\tilde A_b,m_b}-\id_{\dd\tilde{m}_{ab}\otimes m_b})\circ(\id_{\tilde{m}_{ab}}\otimes\Lambda^{-1}_{b}-\id_{\dd\tilde{m}_{ab}\otimes m_b})\circ(\sfa_{m_{ab},m_b,A_b}-\id_{\dd(\tilde{m}_{ab}\otimes m_b)})\,\circ\\ 
   &\hspace{0.5cm}\circ(n_{ab}^{-1}\otimes\id_{A_b}-\dd n_{ab}^{-1})\circ(\sfa^{-1}_{m_a,m_{ab},A_b}-\id_{\dd(m_a\otimes m_{ab})})\circ (\id_{m_a}\otimes\Lambda_{ba}-\id_{\dd m_a\otimes m_{ab}})\,\circ\\ 
  &\hspace{0.5cm}\circ(\sfa_{m_a,A_a,m_{ab}}-\id_{\dd m_a\otimes m_{ab}})\circ(\Lambda_a\otimes\id_{m_{ab}})~.
  \end{aligned}
  \end{equation}
\end{prop}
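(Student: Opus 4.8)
The plan is to derive the compatibility relation \eqref{eq:GaugeOfGaugeTrafos} the same way the two preceding propositions were proved, namely by a diagram chase. The geometric input is a natural $2$-transformation of the weak $2$-functor $\Phi$ together with the gauge data $\{\Lambda_a\}$ and $\{\Lambda_{ab}\}$; the output is the equality of two pasting composites of $2$-cells. First I would set up the square of $1$-cells over the triple data $(U_a\cap U_b)$ recording the four $1$-cells $\tilde A_a$, $m_a$, $m_{ab}$, $A_b$ and the relevant tensor products of these, exactly as in the discussion leading to \eqref{eq:CompOfGaugeTrafos}. Then I would identify the source and target of the composite $2$-cell: the source is $(\tilde A_a\otimes m_a)\otimes m_{ab}$ and the target is $(\tilde m_{ab}\otimes \tilde A_b)\otimes m_b-\dd\tilde m_{ab}\otimes m_b$, using the same intuitive shorthand in which a map $\phi\Rightarrow 0$ against a background $1$-cell $x$ is written $\phi\otimes\id_x$ etc.

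Next I would write down the two evident routes through this configuration. Route one uses $\Lambda_a:\tilde A_a\otimes m_a\Rightarrow m_a\otimes A_a-\dd m_a$ first, then reassociates with $\sfa_{m_a,A_a,m_{ab}}$, applies $\Lambda_{ba}$ on $U_a\cap U_b$, reassociates again, uses the coboundary $2$-cell $n_{ab}:\tilde m_{ab}\otimes m_b\Rightarrow m_a\otimes m_{ab}$ (in the guise $n_{ab}^{-1}\otimes\id_{A_b}-\dd n_{ab}^{-1}$ acting on the curvature background), reassociates, applies $\Lambda_b^{-1}$, and reassociates once more; this is precisely the right-hand side of \eqref{eq:GaugeOfGaugeTrafos}. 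Route two first applies $\sfa_{\tilde A_a,m_a,m_{ab}}$, then $\id_{\tilde A_a}\otimes n_{ab}$, then $\sfa^{-1}_{\tilde A_a,\tilde m_{ab},m_b}$, then $\tilde\Lambda_{ba}\otimes\id_{m_b}$; this is the left-hand side. The equality of the two routes is then a direct consequence of the pasting theorem for weak $2$-categories (Verity \cite{Verity:2011aa}), together with the naturality of the associator \eqref{eq:interchange_law} and the coherence relations \eqref{eq:EquivCocycWeakPB-II} and \eqref{eq:cocycle1} that $n_{ab}$, $\Lambda_a$, $\Lambda_{ab}$ already satisfy. In particular, all the $\dd$-dressed terms of the form $\id_{\dd(\cdots)}$ are bookkeeping for the $-\dd m$ pieces in the targets of the $\Lambda$'s and are handled automatically once one tracks sources and targets carefully through each reassociation.

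The one point that requires genuine care, and which I expect to be the main obstacle, is the correct handling of the de Rham differential terms $-\dd m_{ab}$, $-\dd m_a$, $\dd n_{ab}^{-1}$, $\dd\Lambda_{ab}$ inside the pasting diagram: because the gauge transformation \eqref{eq:cocycle1} is not purely $2$-categorical but mixes $\otimes$, $\circ$, and $\dd$, one must verify that $\dd$ distributes across $\otimes$ and commutes past $\sfa$ in the linearised-at-$p$ sense of Remark \ref{rem:linearising-at-p}, so that each intermediate $2$-cell genuinely has the claimed source and target. This is the semistrict analogue of the Leibniz rule for $\dd_{\rm K}$ established in Proposition above; once one grants it, the chase closes. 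I would therefore state the proof concisely: set up the square, write the two composites, invoke the pasting theorem plus naturality of $\sfa$ and the already-established cocycle conditions, and remark that the $\dd$-terms are consistent by the linearised Leibniz rule. Concretely the proof reads as follows. We apply $\Lambda_a$, $\Lambda_{ba}$, $\Lambda_b$, and the coboundary $2$-cell $n_{ab}$ in the two evident orders and note that chasing the resulting commutative diagram — which commutes by the pasting theorem \cite{Verity:2011aa}, the naturality \eqref{eq:interchange_law} of $\sfa$, and the relations \eqref{eq:EquivCocycWeakPB-II}, \eqref{eq:cocycle1}, \eqref{eq:CompOfGaugeTrafos} — yields \eqref{eq:GaugeOfGaugeTrafos}, the $\dd$-dressed terms being consistent on sources and targets by the graded Leibniz rule for $\dd$.
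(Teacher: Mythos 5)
Your proposal is correct and follows the same route as the paper: the paper's entire justification is precisely the chase of the commutative diagram relating the two composites from $(\tilde A_a\otimes m_a)\otimes m_{ab}$ to $(\tilde m_{ab}\otimes \tilde A_b)\otimes m_b-\dd \tilde m_{ab}\otimes m_b$, with the left- and right-hand sides of \eqref{eq:GaugeOfGaugeTrafos} being exactly the two routes you describe. Your additional remark on tracking the $\dd$-dressed identity terms through the reassociations is a useful elaboration of what the paper leaves implicit in its ``intuitive notation''.
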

\noindent As before, we have used our intuitive notation here.

\subsection{Semistrict non-Abelian Deligne cohomology}\label{ssec:Deligne}

Deligne cohomology describes gauge configurations on a principal bundle with connection modulo gauge transformations, which act simultaneously on the connection and the transition functions of the bundle. Deligne cohomology for categorified bundles was described previously in some special cases. In particular,  the case of Abelian gerbes was discussed in \cite{0817647309}, the case of principal 2-bundles with strict structure 2-group was given in \cite{Schreiber:2008aa}, and the case of principal 3-bundles was presented in \cite{Saemann:2013pca} (see also \cite{Wang:2013dwa}). Here, we wish to describe the low-lying sets of the Deligne cohomology with values in a semistrict Lie 2-group. In the special case of the 2-group $\sB\sU(1)$, this reduces to ordinary, Abelian Deligne cohomology.

As before, we consider a smooth manifold $X$ with covering $\frU=\{U_a\}$. We shall write $C^{p,q}(\frU,\CS)$ for the $\Omega_X^q\otimes\CS$-valued \v Cech $p$-cochains relative to the covering $\frU$, where $\CS$ is a some sheaf on $X$. Now, let $\CCG=(M,N)$ be a semistrict Lie 2-group. We again make  use of  the notation $\frw:=T_{\id_e}M$ and $\frv:=\ker(\sft)\subseteq T_{\id_{\id_e}}N$ and denote the corresponding 2-term $L_\infty$-algebra by $\frv\xrightarrow{\,\mu_1\,} \frw$. 

\begin{definition}
Let $\CCG=(M,N)$ be a semistrict Lie 2-group with underlying 2-term $L_\infty$-algebra $\frv\xrightarrow{\,\mu_1\,} \frw$.  A $\CCG$-valued \uline{degree-$p$ Deligne cochain}   consists of elements
 \begin{equation}
 \begin{aligned}
(\{n_{a_0\cdots a_p}\},\ldots, \{n_{a_0}\})\ &\in\ C^{p,0}(\frU, N)\times C^{p-1,1}(\frU,\frv)\times \cdots \times C^{0,p}(\frU,\frv)~,\\
(\{m_{a_0\cdots a_{p-1}}\},\ldots, \{m_{a_0}\})\ &\in\ C^{p-1,0}(\frU, M)\times C^{p-2,1}(\frU,\frw)\times \cdots \times C^{0,p-1}(\frU,\frw)~.
 \end{aligned}
 \end{equation}
\end{definition}

The sum of the \v Cech and de Rham degrees of $(\{n_{a_0\cdots a_p}\},\ldots, \{n_{a_0}\})$ is $p$ while for 
$(\{m_{a_0\cdots a_{p-1}}\},\ldots, \{m_{a_0}\})$  it is $p-1$. Compared to the analogous discussions of Deligne cochains for strict 2-groups in Schreiber \& Waldorf \cite{Schreiber:2008aa}, we have dropped \v Cech cochains that are always cohomologous to trivial ones, cf.\ \cite{Saemann:2013pca} and Proposition \ref{prop:Normalisation}.

Using our results from the previous sections as well as Appendix \ref{app:B}, we can describe Deligne cohomology with semistrict 2-groups up to degree 2. In particular, we have provided ample motivation for giving the following definition.

\begin{definition}\label{def:Deligne-cocycles}
 A \uline{degree-$p$ Deligne cocycle} is a degree-$p$ Deligne cochain satisfying a cocycle relation. Here, we restrict ourselves to the case $p\leq 2$, and define the following:
\begin{enumerate}[(i)]\setlength{\itemsep}{-1mm}
  \item A \uline{degree-$0$ Deligne cocycle} is an element $\{n_a\}\in C^{0,0}(\frU,N)$ such that on non-empty   intersections $U_a\cap U_b$
  \begin{equation}
   n_a\ =\ n_b~.
  \end{equation}
  \item A \uline{degree-$1$ Deligne cocycle} consists of elements $\{n_{ab}\}\in C^{1,0}(\frU,N)$, $\{B_{a}\}\in C^{0,1}(\frU,\frv)$, and $\{m_{a}\}\in C^{0,0}(\frU,M)$ such that on relevant non-empty intersections of coordinate patches
  \begin{equation}
   n_{ab}\,:\,m_b\ \Rightarrow\ m_a~,~~~ 
   n_{ab}\circ n_{bc}\ =\ n_{ac}~,
  \end{equation}
and\footnote{Here, the operations $\circ$ are defined in a detailed discussion of these relations in Appendix \ref{app:B}.}
  \begin{equation}\label{eq:Deligne-cocycle-1}
   B_b\ =\ (n_{ab}^{-1}\circ B_a\circ n_{ab})\circ(n_{ab}^{-1}\circ(-\dd n_{ab}))~.
  \end{equation} 
  \item A \uline{degree-$2$ Deligne cocycle} consists of elements $\{n_{abc}\}\in C^{2,0}(\frU,N)$ and $\{m_{ab}\}\in C^{1,0}(\frU,M)$ such that on the relevant non-empty intersections of coordinate patches
    \begin{subequations}\label{eq:Deligne-cocycle-2}
  \begin{equation}
  \begin{aligned}
            n_{abc}\,:\, m_{ab}\otimes  m_{bc}\ \Rightarrow\ m_{ac}~,\kern2.5cm\\
     n_{acd}\circ (n_{abc}\otimes \id_{m_{cd}})\circ  \sfa^{-1}_{m_{ab},m_{bc},m_{cd}}\ =\ 
   n_{abd}\circ  (\id_{m_{ab}}\otimes  n_{bcd})~,
   \end{aligned}
  \end{equation}
  elements  $\{A_{a}\}\in C^{0,1}(\frU,\frw)$ and $\{B_{a}\}\in C^{0,2}(\frU,\frv)$  such that 
    \begin{equation}
   \dd A_{a}+A_{a}\otimes A_{a}+\sfs(B_{a})\ =\ 0~,
  \end{equation}
  and elements  $\{\Lambda_{ab}^0\}\in C^{1,1}(\frU,\frv)$ such that 
  \begin{equation}
       \Lambda_{ab}^0\,:\, A_b-(m_{ab}\otimes A_a)\otimes\overline{m}_{ab}+\dd m_{ab}\otimes \overline{m}_{ab}\ \Rightarrow\ 0~,
  \end{equation}
  or, equivalently, 
   \begin{equation}
       \Lambda_{ab}\,:\, A_b\otimes m_{ab}\ \Rightarrow\ m_{ab}\otimes A_a-\dd m_{ab}
  \end{equation}
with
    \begin{equation}
\begin{aligned}
  \Lambda_{ab}\ &:=\ \sfa_{m_{ab}\otimes A_a+\dd m_{ab},\overline{m}_{ab},m_{ab}}\,\circ\\
  &\kern1cm \circ
  \left\{\left[ \left(\Lambda^0_{ab}+\id_{(m_{ab}\otimes A_a)\otimes\overline{m}_{ab}-\dd m_{ab}\otimes\overline{m}_{ab}} \right)\circ\sfa_{A_b,m_{ab},\overline{m}_{ab}} \right]\otimes\id_{m_{ab}}\right\}\circ\\
  &\kern1cm \circ \sfa^{-1}_{A_b\otimes m_{ab},\overline{m}_{ab},m_{ab}}
  \end{aligned}
\end{equation}
such that 
   \begin{equation}
  \begin{aligned}
   &\Lambda_{cb}\circ (\id_{A_{b}}\otimes n_{bac})\circ \sfa_{{A_b},{m_{ba}},{m_{ac}}}\ =\\
   &\hspace{0.5cm}\ =\ (n_{bac}\otimes\id_{A_{c}}-\dd n_{bac})\circ \big[\sfa^{-1}_{{m_{ba}},{m_{ac}},A_{c}}-\id_{\dd({m_{ba}}\otimes{m_{ac}})}\big] \circ\\
   &\hspace{1.5cm}\circ(\id_{{m_{ba}}}\otimes\Lambda_{ca}- \id_{\dd{m_{ba}}\otimes{m_{ac}}})\circ(\sfa_{{m_{ba}},{A_{c}}, m_{ac}}-\id_{\dd{m_{ba}}\otimes{m_{ac}}})\circ(\Lambda_{ab}\otimes\id_{m_{ac}})~,
  \end{aligned}
  \end{equation}
  and
  \begin{equation}
   \begin{aligned}
 \kern-20pt  B_b\otimes \id_{m_{ab}}\ &=\ \mu(A_b,A_b,m_{ab})+ \big[\id_{m_{ab}}\otimes B_a+\mu(m_{ab},A_a,A_a)\big]\,\circ\\
    &\kern2.5cm \circ\, \big[-\dd\Lambda_{ab}-\Lambda_{ab}\otimes\id_{A_a}-\mu(A_b,m_{ab},A_a)\big]\,\circ\\
      &\kern2.5cm \circ\,\big[-\id_{\sfs(\dd\Lambda_{ab})}-\id_{A_b}\otimes(\Lambda_{ab}+\id_{\dd m_{ab}})\big]~.    
   \end{aligned}
  \end{equation}
  \end{subequations}
 \end{enumerate}
\end{definition}

\noindent
Furthermore, we need to state what we would like to understand by Deligne coboundary transformations.

\begin{definition}\label{def:Deligne-coboundaries}
 Two degree-$p$ Deligne cocycles are called \uline{cohomologous} or \uline{equivalent} if and only if there is a degree-$(p-1)$ Deligne cochain relating both. In more detail, we define the following:
\begin{enumerate}[(i)]\setlength{\itemsep}{-1mm}
  \item Two degree-$1$ Deligne cocycles $(\{n_{ab}\},\{B_a\},\{m_a\})$ and $(\{\tilde n_{ab}\},\{\tilde B_a\},\{\tilde m_a\})$ are called cohomologous if and only if there is a degree-$0$ Deligne cochain $\{n_a\}\in C^{0,0}(\frU,N)$ such that on the relevant non-empty intersections of coordinate patches
  \begin{equation}
    n_a\,:\,\tilde m_a\ \Rightarrow\ m_a \eand
    n_{ab}\ =\ n_a\circ \tilde n_{ab} \circ n_b^{-1}
    \end{equation}
    and
 \begin{equation}   
   \tilde B_a\ =\ (n_{a}^{-1}\circ B_a\circ n_{a})\circ(n_{a}^{-1}\circ(-\dd n_{a}))~.
  \end{equation}
  \item Two degree-$2$ Deligne cocycles $(\{m_{ab}\},\{n_{abc}\},\{A_a\},\{B_a\},\{\Lambda_{ab}^0\})$ and $(\{\tilde m_{ab}\},\{\tilde n_{abc}\}$, $\{\tilde A_a\},\{\tilde B_a\},\{\tilde\Lambda_{ab}^0\})$ are called cohomologous if and only if there is a degree-1  Deligne cochain $(\{n_{ab}\},\{\Lambda_{a}\},\{m_a\})$ such that on the relevant non-empty intersections of coordinate patches
  \begin{subequations}\label{eq:Deligne-coboundaries-2}
  \begin{equation}
  \begin{aligned}
      n_{ab}\,:\,\tilde m_{ab}\otimes m_b \ \Rightarrow\   m_{a}\otimes m_{ab}~,\kern2.5cm\\
      n_{ac}\circ  (\tilde n_{abc}\otimes \id_{m_c})\ =\ (\id_{m_a}\otimes n_{abc})\circ \sfa_{m_a, m_{ab}, m_{bc}}\circ (n_{ab}\otimes \id_{ m_{bc}})\,\circ  \\
  \circ \,\, \sfa^{-1}_{\tilde m_{ab},m_b,\tilde  m_{bc}}\circ (\id_{\tilde m_{ab}}\otimes  n_{bc})\circ \sfa_{\tilde m_{ab},\tilde m_{bc},m_c}~,
  \end{aligned}
  \end{equation}
  and 
  \begin{equation}
   \Lambda_{a}^0\,:\,\tilde A_a -  (m_{a}\otimes A_a)\otimes\overline{m}_{a}+\dd m_{a}\otimes\overline{m}_{a}\ \Rightarrow\ 0~,
  \end{equation}
   or, equivalently, 
   \begin{equation}
       \Lambda_{a}\,:\, \tilde A_a\otimes m_{a}\ \Rightarrow\ m_{a}\otimes A_a-\dd m_{a}
  \end{equation}
with
    \begin{equation}
\begin{aligned}
  \Lambda_{a}\ &:=\ \sfa_{m_{a}\otimes A_a+\dd m_{a},\overline{m}_{a},m_{a}}\,\circ\\
  &\kern1cm \circ
  \left\{\left[ \left(\Lambda^0_{a}+\id_{(m_{a}\otimes A_a)\otimes\overline{m}_{a}-\dd m_{a}\otimes\overline{m}_{a}} \right)\circ\sfa_{\tilde A_a,m_{a},\overline{m}_{a}} \right]\otimes\id_{m_{a}}\right\}\circ\\
  &\kern1cm \circ \sfa^{-1}_{\tilde A_a\otimes m_{a},\overline{m}_{a},m_{a}}
  \end{aligned}
\end{equation}
such that 
  \begin{equation}
   \begin{aligned}
 \kern-20pt \tilde B_a\otimes \id_{m_a}\ &=\ \mu(\tilde A_a,\tilde A_a,m_a)+ \big[\id_{m_a}\otimes B_a+\mu(m_a,A_a,A_a)\big]\,\circ\\
    &\kern2.5cm \circ\, \big[-\dd\Lambda_a-\Lambda_a\otimes\id_{A}-\mu(\tilde A_a,m_a,A_a)\big]\,\circ\\
      &\kern2.5cm \circ\,\big[-\id_{\sfs(\dd\Lambda_a)}-\id_{\tilde A_a}\otimes(\Lambda_a+\id_{\dd m_a})\big]~,    
   \end{aligned}
  \end{equation}
 \begin{equation}\label{eq:gaugetraforlambdaab}
   \begin{aligned}
   &\kern-.5cm(\tilde \Lambda_{ba}\otimes \id_{m_b})\circ \sfa^{-1}_{\tilde{A}_a,\tilde m_{ab},{m_b}}\circ (\id_{\tilde{A}_a}\otimes n_{ab})\circ \sfa_{\tilde{A}_a,m_a,m_{ab}}\ =\ \\
   &\kern-.5cm=\ (\sfa^{-1}_{\tilde m_{ab},\tilde A_b,m_b}-\id_{\dd\tilde{m}_{ab}\otimes m_b})\circ(\id_{\tilde{m}_{ab}}\otimes\Lambda^{-1}_{b}-\id_{\dd\tilde{m}_{ab}\otimes m_b})\circ(\sfa_{m_{ab},m_b,A_b}-\id_{\dd(\tilde{m}_{ab}\otimes m_b})\,\circ\\ 
   &\circ(n_{ab}^{-1}\otimes\id_{A_b}-\dd n_{ab}^{-1})\circ(\sfa^{-1}_{m_a,m_{ab},A_b}-\id_{\dd(m_a\otimes m_{ab})})\circ (\id_{m_a}\otimes\Lambda_{ba}-\id_{\dd m_a\otimes m_{ab}})\,\circ\\ 
  &\circ(\sfa_{m_a,A_a,m_{ab}}-\id_{\dd m_a\otimes m_{ab}})\circ(\Lambda_a\otimes\id_{m_{ab}})~.
  \end{aligned}
  \end{equation}
  \end{subequations}
 \end{enumerate}
\end{definition}

Note that there are further equivalences between Deligne coboundaries arising from modification transformations. These are not relevant for our discussion of Deligne cohomology and we therefore do not wish to go into any further detail.

Let us end this section by briefly commenting on the interpretation of elements of Deligne cohomology sets. The first case of degree-0 Deligne cocycles is readily understood. A degree-0 Deligne cocycles describes an $N$-valued function on $X$, which could be regarded as a principal $0$-bundle.

The case of Deligne 1-cocycles is slightly more involved. If $N$ is a group, then a degree-1 Deligne cocycle defines a principal (1-)bundle with connection one-form $B$ and a preferred section $m$. This data was called a crossed module bundle, from which crossed module bundle gerbes were constructed in \cite{Aschieri:2003mw}, see also \cite{Jurco:2005qj,Jurco:2009px}. Recall that an Abelian bundle $(p+1)$-gerbe over a manifold $X$ can be constructed from the notion of an Abelian bundle $p$-gerbe, by considering a surjective submersion $Y\rightarrow X$ together with Abelian bundle $p$-gerbes over $Y\times_X Y$. The analogous construction for crossed module bundle gerbes starts from a crossed module bundle. If $N$ is not a group, then a Deligne 1-cocycle describes a 2-group principal bundle, which is a special form of a groupoid principal bundle. Considering 2-group principal bundles over $Y\times_X Y$ yields then to 2-group bundle gerbes or the principal 2-bundles described by Deligne 2-cocycles.

A degree-2 Deligne cocycle describes a semistrict principal 2-bundle with connective structure. Again, gauge equivalence is captured by the cohomology. To study such Deligne 2-cocycles further, it is useful to introduce the curvature 3-form, apart from the 2-form fake curvature \eqref{eq:fakecurv} that vanishes; see also Proposition \ref{prop:hMC-eqn-and-gauge}.

\begin{definition} Let $(\{A_a\},\{B_a\},\{\Lambda_{ab}\})$ be a connective structure on a semistrict principal 2-bundle $\Phi$. The associated \uline{3-form curvature} is defined as follows:
\begin{equation}
\begin{aligned}
 H_{a}\ & :=\ \dd B_{a}+\id_{A_{a}}\otimes B_{a}-B_a\otimes\id_{A_a}+\mu(A_{a},A_{a},A_{a})~.
\end{aligned}
\end{equation}
\end{definition}

\section{Application: Penrose--Ward transform}\label{sec:PenroseWard}

As an application of the theory of principal 2-bundles which we have developed in the previous sections, we now show how to generalise the results of \cite{Saemann:2012uq}. Specifically,  \cite{Saemann:2012uq} established a Penrose--Ward transform that yields a bijection between holomorphic principal 2-bundles with strict structure 2-group over a twistor space and non-Abelian self-dual tensor fields on six-dimensional flat space-time. We can now replace the strict principal 2-bundles by semistrict ones in this construction.

In the following, we denote by $\CO_X$ the sheaf of holomorphic functions and by $\Omega^p_X$ the sheaf of {\it holomorphic} differential $p$-forms on a complex (super)manifold $X$.

\subsection{Supertwistor space}

The twistor space $P^6$ underlying chiral field theories on flat complexified six-dimensional space-time $\FC^6$ is the moduli space of $\alpha$-planes or self-dual 3-planes in $\FC^6$. This twistor space has been described in great detail before \cite{springerlink:10.1007/BF00132253,Hughston:1986hb,Hughston:1979TN,Hughston:1987aa,Hughston:1988nz,0198535651,Saemann:2011nb,Mason:2011nw}, and its supersymmetric extension $P^{6|2n}$ was discussed in \cite{Saemann:2012uq,Mason:2012va,Saemann:2013pca}. We therefore keep our following exposition brief.

The starting point is the chiral superspace $M^{6|8n}:=\FC^{6|8n}$ with $n=0,1,2$. This space can be equipped with the coordinates $(x^{AB},\eta^A_I)$, where $x^{AB}=-x^{BA}$ with $A,B,\ldots=1,\ldots,4$ are the usual Gra{\ss}mann-even  coordinates in spinor notation, $\eta^A_I$ are the Gra{\ss}mann-odd coordinates and $I,J,\ldots=1,\ldots,2n$ are the R-symmetry indices. We may raise and lower the spinor indices using the Levi-Civita symbol, that is, $x_{AB}=\frac12\varepsilon_{ABCD} x^{CD}$ $\Leftrightarrow$ $x^{AB}=\frac12\varepsilon^{ABCD} x_{CD}$. Note that in the real setting, the R-symmetry group of the superconformal group $\sOSp(2,6|2n)$ is
\begin{equation}
 \sSp(n)\ =\ \left\{\begin{array}{ll}
                 \sSp(1)\cong \sSU(2)&\mbox{for $n=1$}\\
                 \sSp(2)\cong\mathsf{USp}(4)\subset \sSp(4,\FC)&\mbox{for $n=2$}
                \end{array}\right.~.
\end{equation}
The group $\sSp(n)$ is defined as the elements of $\sSU(2n)$ leaving an antisymmetric $2n\times 2n$ matrix $\Omega$ invariant, which we can fix according to
\begin{equation}
 \Omega\ =\ \diag(\underbrace{\varepsilon,\ldots,\varepsilon}_{n\rm{-times}})\ewith\varepsilon\ :=\ \left(\begin{array}{cc} 0 & 1 \\ -1& 0 \end{array}\right).
\end{equation}
However, working in the complex setting, we shall employ appropriate complexifications of the above groups. 

We further introduce the superspace derivatives
\begin{equation}
  P_{AB}\ :=\ \der{x^{AB}}\eand D^I_A\ :=\ \der{\eta^A_I}-2\Omega^{IJ}\eta_J^B\der{x^{AB}}~,
\end{equation}
which obey
\begin{equation}
 \{D^I_A,D^J_B\}\ =\ -4\Omega^{IJ}P_{AB}~.
\end{equation}

Next, we let $\PP^3$ be the complex projective 3-space and define the correspondence space $F^{9|8n}:=\FC^{6|8n}\times\PP^3$. It can be equipped with coordinates $(x^{AB},\eta^A_I,\lambda_A)$ where $\lambda_A$ are homogeneous coordinates on $\PP^3$. On the correspondence space, we introduce the twistor distribution, denoted by $D$, which is an integrable distribution of rank $3|6n$ generated by the vector fields
\begin{equation}\label{eq:SupTwisDis}
  D\ :=\ \mbox{span}\{ V^A, V^{IAB}\}\ewith
  V^A\ :=\ \lambda_B\partial^{AB}\eand
  V^{IAB}\ :=\ \tfrac12\varepsilon^{ABCD}\lambda_C D^I_D~.
\end{equation}

The supertwistor space $P^{6|2n}$ is then defined to be the associated leaf space $P^{6|2n}:=F^{9|8n}/D$. We can now establish a twistor correspondence which is captured by the double fibration
\begin{equation}\label{eq:superDoubleFibration}
 \begin{picture}(50,40)
  \put(0.0,0.0){\makebox(0,0)[c]{$P^{6|2n}$}}
  \put(64.0,0.0){\makebox(0,0)[c]{$M^{6|8n}$}}
  \put(34.0,33.0){\makebox(0,0)[c]{$F^{9|8n}$}}
  \put(7.0,18.0){\makebox(0,0)[c]{$\pi_1$}}
  \put(55.0,18.0){\makebox(0,0)[c]{$\pi_2$}}
  \put(25.0,25.0){\vector(-1,-1){18}}
  \put(37.0,25.0){\vector(1,-1){18}}
 \end{picture}
\end{equation}
where $\pi_2$ is the trivial projection, while
\begin{equation}
\pi_1:(x^{AB},\eta^A_I,\lambda_A)\ \mapsto\ (z^A,\eta_I,\lambda_A)\ =\ ((x^{AB}+\Omega^{IJ}\eta^A_I\eta^B_J)\lambda_B,\eta_I^A\lambda_A,\lambda_A)
\end{equation}
contains the so-called incidence relation
\begin{equation}\label{eq:superincidence}
 z^A\ =\ (x^{AB}+\Omega^{IJ}\eta^A_I\eta^B_J)\lambda_B\eand
 \eta_I\ =\ \eta_I^A\lambda_A~.
\end{equation}

This incidence relation yields a geometric correspondence between points $x\in M^{6|8n}$ and complex projective 3-spaces $\hat x=\pi_1(\pi_2^{-1}(x))\hookrightarrow P^{6|2n}$ as well as points $p\in P^{6|2n}$ in twistor space and $3|6n$-superplanes $\pi_2(\pi_1^{-1}(p))\hookrightarrow M^{6|8n}$ which is a supersymmetric extension of a totally null 3-plane in $\FC^6$. It also follows that $P^{6|2n}$ the quadric hypersurface given by the zero locus 
\begin{equation}\label{eq:superquadric}
 z^A\lambda_A-\Omega^{IJ}\eta_I\eta_J\ =\ 0
\end{equation}
inside the total space of the holomorphic fibration $\FC^{4|2n}\otimes \CO_{\PP^3}(1)\rightarrow \PP^3$ with fibre coordinates $z^A$ and $\eta_I$ as well as base coordinates $\lambda_A$.

\begin{rem}
In our subsequent discussion, we shall always choose the standard Stein cover $\hat\frU=\{\hat U_a\}$ on the twistor space $P^{6|2n}\to\PP^3$ (generated by the standard Stein cover on $\PP^3$) and the induced cover $\frU':=\{U'_a:=\pi_1^{-1}(U_a)\}$ on the correspondence space $F^{9|8n}$, respectively.
\end{rem}

\subsection{Penrose--Ward transform}

To formulate the Penrose--Ward transform, we first need to introduce a few basic notions. In particular, we will need the sheaf of holomorphic relative differential $p$-forms, denoted by $\Omega^p_{\pi_1}$, on $F^{9|8n}$ along the fibration $\pi_1:F^{9|8n}\to P^{6|2n}$. It is defined by the short exact sequence
\begin{equation}\label{eq:RelOneForms2}
 0\ \longrightarrow\ \pi_1^*\Omega^1_{P^{6|2n}}\wedge\Omega^{p-1}_{F^{9|8n}}\ \longrightarrow\ \Omega^p_{F^{9|8n}}\ \longrightarrow\ \Omega^p_{\pi_1}\ \longrightarrow\ 0~.
\end{equation}
 In addition, if $\mbox{pr}_{\pi_1}\!:\Omega_{F^{9|8n}}^p\to \Omega^p_{\pi_1}$ denotes the quotient mapping, we can define the relative exterior derivative $\dd_{\pi_1}$ by setting
\begin{equation}
 \dd_{\pi_1}\ :=\ \mbox{pr}_{\pi_1}\circ\dd\,:\, \Omega^p_{\pi_1}\ \to\ \Omega^{p+1}_{\pi_1}~,
\end{equation}
where $\dd$ denotes the usual holomorphic exterior derivative on the correspondence space. In the local coordinates $(x^{AB},\eta^A_I\lambda_A)$ on $F^{9|8n}$, $\dd_{\pi_1}$ is presented in terms of the vector fields of the twistor distribution \eqref{eq:SupTwisDis}; see also \eqref{eq:ExplicitRelDer} below. The relative exterior derivative characterises the so-called relative holomorphic de Rham complex, which is the complex that is given in terms of an injective  resolution of the topological inverse $\pi_1^{-1}\CO_{P^{6|2n}}$ of the sheaf $\CO_{P^{6|2n}}$ on the correspondence space $F^{9|8n}$:
\begin{equation}\label{eq:RelDeRhamCom}
 0\ \longrightarrow\ \pi_1^{-1}\CO_{P^{6|2n}}\ \longrightarrow\ \CO_{F^{9|8n}}\ \xrightarrow{\dd_{\pi_1}}\  \Omega^1_{\pi_1}\ \xrightarrow{\dd_{\pi_1}}\  \Omega^2_{\pi_1}\ \xrightarrow{\dd_{\pi_1}}\  \cdots ~.
\end{equation}
Note that $\pi_1^{-1}\CO_{P^{6|2n}}$ consists of those holomorphic functions that are locally constant along the fibres of $\pi_1:F^{9|8n}\to P^{6|2n}$.

Next, let $\Phi'$  be a holomorphic semistrict principal 2-bundles on the correspondence space $F^{9|8n}$, with $\CCG=(M,N)$ as its semistrict structure 2-group. As before, we denote the 2-term $L_\infty$-algebra associated with $\CCG$ by $\frv\overset{\mu_1}{\longrightarrow}\frw$, where $\frw:=T_{\id_e}M$ and $\frv:=\ker(\sft)\subseteq T_{\id_{\id_e}}N$. The bundle $\Phi'$ is described by holomorphic $\CCG$-valued transition functions $(\{m'_{ab}\},\{n'_{abc}\})$ relative to the cover $\frU'$. 

As we shall see momentarily, the Penrose--Ward transform will be based on so-called relative degree-2 Deligne cohomology. For this reason, we wish to equip $\Phi'$ with a relative connective structure and study its behaviour under equivalence transformations.  Concretely, $\Phi'$ is then described by a degree-2 Deligne cocycle\footnote{To simplify notation, we shall suppress the superscript $0$ in the $\Lambda$-part of the cocycle here and in the following.} 
\begin{equation}
(\{m'_{ab}\},\{n'_{abc}\},\{A'_a\},\{B'_a\},\{\Lambda'_{ab}\})
\end{equation}
with $\{m'_{ab}\}\in C^{1,0}_{\pi_1}(\frU',M)$, $\{n'_{abc}\}\in C^{2,0}_{\pi_1}(\frU',N)$, $\{\Lambda'_{ab}\}\in C^{1,1}_{\pi_1}(\frU',\frv)$,  $\{A'_{a}\}\in C^{0,1}_{\pi_1}(\frU',\frw)$, and $\{B'_{a}\}\in C^{0,2}_{\pi_1}(\frU',\frv)$. Here, the subscript `$\pi_1$' indicates that these are relative differential forms.  For instance, the  $\Lambda'_{ab}$ and $A'_a$ take values in $\Omega^1_{\pi_1}\otimes \frv$ and $\Omega^1_{\pi_1}\otimes \frw$, respectively,  while the $B'_a$ take values in $\Omega^2_{\pi_1}\otimes \frv$. In addition, we call the relative connective structure flat whenever, apart from the vanishing of 2-form fake curvature,
\begin{equation}
  \CF'_{a}\ =\ \dd_{\pi_1} A'_{a}+\tfrac{1}{2}\mu_2(A'_{a},A'_{a})-\mu_1(B'_{a})\ =\ 0~,
\end{equation}
inherent to 2-degree Deligne cocycles,  also the 3-form curvature vanishes
\begin{equation}
 H'_{a}\ =\ \dd_{\pi_1} B'_{a}+\mu_2(A'_{a},B'_{a})-\tfrac{1}{3!}\mu_3(A'_{a},A'_{a},A'_{a})\ =\ 0~.
\end{equation}

The final ingredient we shall need is  a holomorphic semistrict principal 2-bundle $\hat \Phi$ on $P^{6|2n}$ with $\CCG=(M,N)$ as its semistrict structure 2-group.  The bundle $\hat \Phi$ is described by holomorphic $\CCG$-valued transition functions $(\{\hat m_{ab}\},\{\hat n_{abc}\})$ relative to the cover $\hat \frU$.  Following Manin \cite{Manin:1988ds}, $\hat \Phi$ will be called $M^{6|8n}$-trivial whenever it is holomorphically trivial on $\hat x=\pi_1(\pi_2^{-1}(x))\hookrightarrow P^{6|2n}$ for all $x\in M^{6|8n}$; see also Definition \ref{def:restriction-2}.

Then we have the following result.

\begin{prop}\label{prop:EquivTwistCorrespBundle}
Consider $\pi_1:F^{9|8n}\to P^{6|2n}$ of the double fibration \eqref{eq:superDoubleFibration}. There is a bijection between 
  \begin{enumerate}[(i)]\setlength{\itemsep}{-1mm}
\item equivalence classes of topologically trivial $M^{6|8n}$-trivial holomorphic semistrict principal 2-bundles on $P^{6|2n}$ and 
\item equivalence classes of holomorphically trivial semistrict principal 2-bundles on $F^{9|8n}$ equipped with a relative connective structure which is globally flat.
\end{enumerate}
\end{prop}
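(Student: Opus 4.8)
The plan is to follow the classical Ward-construction argument, adapted to the semistrict 2-bundle setting, exploiting the fact that the fibres of $\pi_1$ are contractible (indeed $\FC^{3|6n}$-like) and that $F^{9|8n}$ retracts onto $M^{6|8n}$. First I would set up the correspondence in one direction: given an $M^{6|8n}$-trivial holomorphic semistrict principal 2-bundle $\hat\Phi$ on $P^{6|2n}$, pull it back along $\pi_1$ to obtain a holomorphic semistrict 2-bundle $\Phi':=\pi_1^*\hat\Phi$ on $F^{9|8n}$ in the sense of Definition \ref{def:pullback-2}. The key point is that $M^{6|8n}$-triviality (holomorphic triviality on every $\hat x$) together with the fact that each leaf $\pi_1^{-1}(p)$ is covered by the $U'_a$ in the standard way forces $\Phi'$ to be holomorphically trivial on $F^{9|8n}$: one trivialises on each $\hat x$, checks the trivialisation is global using that $H^0$ of the relevant sheaves on $\PP^3$-bundles behaves as in the strict case \cite{Saemann:2012uq}, and transports the trivialising cochain $(\{m'_a\},\{n'_{ab}\})$ to the correspondence space. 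Having trivialised, the transition data of $\hat\Phi$ reappear as the statement that the trivialising cochain is locally constant along $\pi_1$, i.e.\ annihilated by $\dd_{\pi_1}$; this is exactly the exactness of the relative de Rham complex \eqref{eq:RelDeRhamCom} at the bottom, $\pi_1^{-1}\CO_{P^{6|2n}}=\ker(\dd_{\pi_1}:\CO_{F^{9|8n}}\to\Omega^1_{\pi_1})$, promoted to the nonabelian 2-categorical level.

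Next I would produce the relative connective structure. On the trivialised bundle $\Phi'$ there is a canonical flat relative connective structure, obtained by the $2$-categorical analogue of the pure-gauge construction: following Proposition \ref{prop:semistrict-gauge-trafos} and Theorem \ref{th:SemStrGaugeTrafo} with $\dd$ replaced by $\dd_{\pi_1}$, one sets $A'_a$, $B'_a$ and $\Lambda'_{ab}$ equal to the pure-gauge expressions built from $(\{m'_a\},\{n'_{ab}\})$, so that $\CF'_a=0$ automatically and, by the relative analogue of the $\dd_{\rm K}$-closedness argument in Corollary \ref{cor:Konsdiffalbe} and the proof of Theorem \ref{th:SemStrGaugeTrafo}, the relative $3$-form curvature $H'_a$ vanishes as well. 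Conversely, starting from a holomorphically trivial semistrict 2-bundle on $F^{9|8n}$ with a globally flat relative connective structure, the vanishing of $\CF'$ and $H'$ plus the relative Poincaré lemma along the $\dd_{\pi_1}$-complex let one gauge the relative connective structure to zero; the residual gauge-transformation data, being $\dd_{\pi_1}$-closed, descend along $\pi_1$ to transition functions of a semistrict 2-bundle $\hat\Phi$ on $P^{6|2n}$ by \eqref{eq:RelDeRhamCom}, and $M^{6|8n}$-triviality is automatic because the bundle was holomorphically trivial on all of $F^{9|8n}$, hence on each $\hat x$. These two constructions are mutually inverse up to the equivalences recorded in Definitions \ref{def:Deligne-cocycles} and \ref{def:Deligne-coboundaries}; one checks that a natural $2$-transformation upstairs corresponds to a relative Deligne coboundary downstairs and vice versa, which is the statement that the bijection is well-defined on equivalence classes.

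The main obstacle, and the step deserving the most care, is the second-level (``$n$-part'') descent: showing that the $N$-valued gluing data and the $\frv$-valued $\Lambda$-part genuinely descend along $\pi_1$, i.e.\ that $\dd_{\pi_1}$-closedness of a $2$-cell $n$ and of a relative $1$-form $\Lambda$ with the compatibility \eqref{eq:CompOfGaugeTrafos} implies they are pulled back from $P^{6|2n}$. In the strict case this is a cohomological vanishing on the fibres ($H^0$ and $H^1$ of $\CO$ and $\Omega^1$ on $\FC^{3|6n}$), but here the cocycle conditions involve the associator $\sfa$ and the nonlinear product $\mu$, so one must verify that the relative Poincaré lemma can be applied coherently — i.e.\ that the choices of primitives can be made compatibly with the pentagon-type coherence \eqref{eq:CoBounTrivExp} and \eqref{eq:CompOfGaugeTrafos}. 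Concretely this means running the trivialisation argument of the Lemma preceding Proposition \ref{prop:Expansions} (and Proposition \ref{prop:Expansions} itself) in the relative/holomorphic setting, with $\theta_0,\theta_1$-expansions replaced by the filtration of the $\dd_{\pi_1}$-complex, and checking that the obstruction cochains live in sheaf-cohomology groups that vanish on the fibres of $\pi_1$ — precisely as was done for the strict structure $2$-group in \cite{Saemann:2012uq}. Once that coherence-compatible relative Poincaré lemma is in place, the rest of the proof is bookkeeping with the Deligne cocycle and coboundary relations of Definitions \ref{def:Deligne-cocycles} and \ref{def:Deligne-coboundaries}.
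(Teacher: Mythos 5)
Your overall strategy coincides with the paper's: pull back along $\pi_1$, use $M^{6|8n}$-triviality to trivialise the transition functions on $F^{9|8n}$, read off a pure-gauge (hence flat) relative connective structure, and invert the construction via a relative Poincar\'e lemma. However, you omit the one step that carries the actual cohomological weight of the forward direction. After trivialising the $M$- and $N$-valued transition data, the pure-gauge construction does \emph{not} directly produce a globally defined pair $(A_{\pi_1},B_{\pi_1})$: it produces local data $\{A''_a\},\{B''_a\}$ glued by non-trivial relative $1$-forms $\{\Lambda''_{ab}\}$. The paper's key observation is that once $m''_{ab}=\id_{e_a}$ and $n''_{abc}=\id_{\id_{e_a}}$, the nonlinear compatibility condition \eqref{eq:CompOfGaugeTrafos} collapses to the \emph{Abelian} cocycle relation $\Lambda''_{ac}=\Lambda''_{ab}+\Lambda''_{bc}$ with $\Lambda''_{ab}:A''_b-A''_a\Rightarrow 0$, so $\{\Lambda''_{ab}\}$ represents a class in $H^1(F^{9|8n},\Omega^1_{\pi_1}\otimes\frv)$. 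This group vanishes by the same direct-image computation as in the strict case, so $\Lambda''_{ab}=\Lambda''_a-\Lambda''_b$ splits, and the coboundary $\{\Lambda''_a\}$ carries you to a cohomologous Deligne cocycle with $\Lambda'''_{ab}=0$ and genuinely global $A_{\pi_1}\in H^0(F^{9|8n},\Omega^1_{\pi_1}\otimes\frw)$ and $B_{\pi_1}\in H^0(F^{9|8n},\Omega^2_{\pi_1}\otimes\frv)$. Without this splitting you do not obtain the global relative forms that the subsequent Penrose--Ward transform expands in $\lambda_A$.

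This Abelianisation is also why the difficulty you single out as the ``main obstacle'' --- running a coherence-compatible relative Poincar\'e lemma through the associator and the pentagon-type conditions --- is largely circumvented in the forward direction: the nonabelian content is disposed of by the holomorphic trivialisation of the underlying $2$-bundle, and what remains is linear sheaf cohomology valued in $\frv$ and $\frw$. For the converse direction the paper does need a nonabelian statement, but it simply invokes the generalised Poincar\'e lemma of \cite{Demessie:2014aa} rather than re-deriving the coherence you describe. One further small correction: the trivialising cochain $(\{m'_a\},\{n'_{ab}\})$ is in general \emph{not} annihilated by $\dd_{\pi_1}$ --- only the pulled-back transition functions are --- and it is precisely this failure that generates a non-trivial $(A_{\pi_1},B_{\pi_1})$ in the first place.
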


\begin{proof}
(i) $\to$ (ii) Let $\hat \Phi$ be an $M^{6|8n}$-trivial holomorphic semistrict principal 2-bundle on the twistor space $P^{6|2n}$ described by holomorphic transition functions $(\{\hat m_{ab}\},\{\hat n_{abc}\})$.  Furthermore, let $\Phi'=\pi_1^*\hat\Phi$ be its pullback to the correspondence space $F^{9|8n}$; see also Definition \ref{def:pullback-2}. It is described by holomorphic transition functions $(\{m'_{ab}\},\{n'_{abc}\})$ which are annihilated by the relative exterior derivative $\dd_{\pi_1}$. More precisely, it is described by the relative degree-2 Deligne cocycle
\begin{equation}\label{eq:DelCoCyTriv-1}
(\{m'_{ab}=\pi_1^*\hat m_{ab}\},\{n'_{abc}=\pi_1^*\hat n_{abc}\},\{\Lambda'_{ab}=0\},\{A'_a=0\},\{B'_a=0\})~.
\end{equation}

Since $\hat \Phi$ is $M^{6|8n}$-trivial, its pullback $\Phi'$ is holomorphically trivial on all of $F^{9|8n}$. Therefore, there exists a relative degree-2 Deligne cochain relating the degree-2 Deligne cocycle \eqref{eq:DelCoCyTriv-1} to the cocycle
\begin{equation}\label{eq:DelCoCyTriv-2}
(\{m''_{ab}=\id_{e_a}\},\{n''_{abc}=\id_{\id_{e_a}}\},\{\Lambda''_{ab}\neq 0\},\{A''_a\neq0\},\{B''_a\neq 0\})~.
\end{equation}
From \eqref{eq:Deligne-cocycle-2}, we realise that $\Lambda''_{ab}:A''_b-A''_a\Rightarrow 0$ and
\begin{equation}
 \Lambda''_{ac}\ =\ \Lambda''_{ab}+\Lambda''_{bc}~.
\end{equation}

\pagebreak

Hence, $\{\Lambda''_{ab}\}$ is a representative of an element in the Abelian \v Cech cohomology group $H^1(F^{9|8n},\Omega_{\pi_1}^1\otimes \frv)$. This cohomology group, however, vanishes as follows immediately from the arguments presented in  \cite{Saemann:2011nb,Saemann:2012uq} (see also \cite{Mason:2011nw}). Therefore, we have a splitting
\begin{equation}
  \Lambda''_{ab}\ =\ \Lambda''_a-\Lambda''_b\ewith \Lambda''_a\,:\, A'''_{a}-A''_{a}\ \Rightarrow\ 0~,
\end{equation}
where the $A'''_a$ define a  globally defined $\frw$-valued relative 1-form $A'''_{\pi_1}\in H^0(F^{9|8n},\Omega_{\pi_1}^1\otimes \frw)$, that is, $A'''_a=A'''_{\pi_1}|_{U'_a}$ and $A'''_a=A'''_b$ on $U'_a\cap U'_b$. 
Thus, using \eqref{eq:Deligne-coboundaries-2} with $\Lambda''_a$, we see that the degree-2 Deligne cocycle \eqref{eq:DelCoCyTriv-2} is cohomologous to  
\begin{equation}\label{eq:DelCoCyTriv-3}
(\{m'''_{ab}=\id_{e_a}\},\{n'''_{abc}=\id_{\id_{e_a}}\},\{\Lambda'''_{ab}=0\},\{A'''_a\neq 0\},\{B'''_a\neq 0\})~,
\end{equation}
where the $B'''_a$ yield a globally defined $\frv$-valued relative 2-form $B'''_{\pi_1}\in H^0(F^{9|8n},\Omega_{\pi_1}^2\otimes \frv)$, that is, $B'''_a=B'''_{\pi_1}|_{U'_a}$ and $B'''_a=B'''_b$ on $U'_a\cap U'_b$.

Altogether, we have obtained a holomorphically trivial semistrict principal 2-bundle $\Phi'$ on the correspondence space, equipped with a globally defined relative connective structure represented by $(A_{\pi_1},B_{\pi_2})$. As this relative connective structure is pure gauge, its curvatures necessarily vanish, and, therefore, the relative connective structure is globally flat.

(ii) $\to$ (i)  Conversely, starting from a holomorphically trivial semistrict principal 2-bundle $\Phi'$ on the correspondence space represented by a relative degree-2 Deligne cocycle of the form \eqref{eq:DelCoCyTriv-3} with a relative connective structure that is flat, we can use a generalised Poincar\'e lemma \cite{Demessie:2014aa} to find a relative degree-2 Deligne cochain to transform \eqref{eq:DelCoCyTriv-3} into a cocycle of the form \eqref{eq:DelCoCyTriv-2}. This cocycle descends down to twistor space to a relative degree-2 Deligne cocycle of the form \eqref{eq:DelCoCyTriv-1}.
\end{proof}
\noindent Note that there are equivalence transformations acting on the ingredients of this construction. For instance, constructing the degree-2 Deligne cochains explicitly that mediate between the different degree-2 Deligne cocycles amounts to solving Riemann--Hilbert problems whose solutions are not unique. We shall come back to this in Remark \ref{rem:gaugetrafospacetime}.

Next, we write the relative exterior derivative explicitly as
\begin{equation}\label{eq:ExplicitRelDer}
  \dd_{\pi_1}\ =\ e_A V^A+e_{IAB} V^{IAB}\ =\ e_{[A}\lambda_{B]}\partial^{AB}+e^{AB}_I\lambda_A D^I_B~,
\end{equation}
thereby introducing the relative 1-forms $e_A$ and $e_{IAB}=\tfrac{1}{2}\eps_{ABCD}e_I^{CD}$ which are defined dually to $V^A$ and $V^{IAB}$. Notice that since  $\lambda_A V^A=\lambda_A V^{IAB}=0$, these differential 1-forms are defined modulo terms proportional to $\lambda_A$; see also \cite{Saemann:2011nb,Saemann:2012uq} for more details.

\pagebreak

\begin{lemma}\label{lem:RelFormExpansions}
Let  $\alpha_{\pi_1}\in H^0(F^{9|8n},\Omega^1_{\pi_1})$, $\beta_{\pi_1}\in H^0(F^{9|8n},\Omega^2_{\pi_1})$, and $\gamma_{\pi_1}\in H^0(F^{9|8n},\Omega^2_{\pi_1})$. These relative differential forms are then expanded in $\lambda_A$ according  to 
\begin{equation}\label{eq:form_expansions}
\begin{aligned}
 \alpha_{\pi_1}\ &=\ e_{[A}\lambda_{B]}\,  \alpha^{AB}+e^{AB}_I\lambda_A\, \alpha^I_B~,\\
 \beta_{\pi_1}\ &=\ -\tfrac14e_A\wedge e_B\lambda_C\, \varepsilon^{ABCD} \beta_D{}^{E}\lambda_{E}+\tfrac12 e_A\lambda_B\wedge e^{EF}_I\lambda_E\,\varepsilon^{ABCD}\,\beta_{CD}{}^I_F~+\\
   &\kern1cm+\tfrac12 e^{CA}_I\lambda_C\wedge e^{DB}_J\lambda_D\, \beta^{IJ}_{AB}~,\\
  \gamma_{\pi_1}\ &=\  -\tfrac13 e_A\wedge e_B\wedge e_C\lambda_D\varepsilon^{ABCD}\,\gamma^{EF}\lambda_E\lambda_F~+\\
 &\kern1cm -\tfrac14 e_A\wedge e_B\lambda_C\, \varepsilon^{ABCD}\wedge e^{EF}_I\lambda_E\, (\gamma_{D}{}^G{}^I_F)_0\lambda_G~+\\
 &\kern1cm + \tfrac14 e_A\lambda_B\wedge e^{EF}_I\lambda_E\wedge e^{GH}_J\lambda_G\,\varepsilon^{ABCD}\,(\gamma_{CD}{}^{IJ}_{FH})_0~+\\
 &\kern1cm +\tfrac16 e^{DA}_I\lambda_D\wedge e^{EB}_J\lambda_E\wedge e^{FC}_K\lambda_F\, \gamma^{IJK}_{ABC}~,
 \end{aligned}
\end{equation}
where the coefficient functions depend only on the superspace coordinates $(x^{AB},\eta^A_I)\in M^{6|8n}$. The component $(\gamma_{A}{}^B{}^I_C)_0$ is the totally trace-less part of $\gamma_{A}{}^B{}^I_C$ while $(\gamma_{AB}{}^{IJ}_{CD})_0$ denotes the part of $\gamma_{AB}{}^{IJ}_{CD}$ that vanishes under contraction with $\varepsilon^{ABCD}$. 
\end{lemma}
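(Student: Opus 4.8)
The plan is to compute the global sections $H^0(F^{9|8n},\Omega^p_{\pi_1})$ of the relative de Rham sheaves of \eqref{eq:RelDeRhamCom} by combining the product structure $F^{9|8n}\cong\FC^{6|8n}\times\PP^3$ with the standard vanishing theorems for line bundles on $\PP^3$, and then to rewrite the resulting data in the local frame $\{e_A,e_{IAB}\}$ of \eqref{eq:ExplicitRelDer} that is dual to the twistor distribution \eqref{eq:SupTwisDis}. First I would make $\Omega^1_{\pi_1}$ explicit: by the defining sequence \eqref{eq:RelOneForms2} it is generated over $\CO_{F^{9|8n}}$ by the weight-zero relative one-forms $\dd_{\pi_1}x^{AB}$ and $\dd_{\pi_1}\eta^A_I$, subject to the relations expressing the vanishing in $\Omega^1_{\pi_1}$ of the pullbacks of the twistor-coordinate differentials $\dd z^A$ and $\dd\eta_I$ from \eqref{eq:superincidence}, i.e.\ $\lambda_B\dd_{\pi_1}x^{AB}=\cdots$ and $\lambda_A\dd_{\pi_1}\eta^A_I=\cdots$ with the omitted terms being fermion bilinears. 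These relations are themselves constrained by the secondary identities obtained on contracting once more with $\lambda$ (using the antisymmetry of $x^{AB}$), so that restricted to a $\pi_2$-fibre $\{(x,\eta)\}\times\PP^3$ the bosonic part of $\Omega^1_{\pi_1}$ admits the Koszul-type presentation $0\to\CO_{\PP^3}(-2)\xrightarrow{\ \lambda\ }\CO_{\PP^3}(-1)^{\oplus 4}\to\CO_{\PP^3}^{\oplus 6}\to\Omega^1_{\pi_1}|_{\PP^3}\to 0$, with an analogous presentation accommodating the $6n$ odd directions; the exterior powers $\Omega^p_{\pi_1}$ then inherit finite resolutions by sums of $\CO_{\PP^3}(k)$ with $k$ running over a range $-2p\le k\le 0$.

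Next I would chase these resolutions using $H^i(\PP^3,\CO_{\PP^3}(k))=0$ for $i=1,2$ and all $k\in\RZ$, and for $i=0$ whenever $k<0$. This simultaneously yields the finite-dimensional groups $H^0(\PP^3,\Omega^p_{\pi_1}|_{\PP^3})$ with their index structure and the vanishing of $H^1(\PP^3,\Omega^p_{\pi_1}|_{\PP^3})$. Since $\pi_2$ is the trivial projection and this higher cohomology vanishes, cohomology and base change---together with the fact that $M^{6|8n}$ is affine so that the relevant direct-image bundle is globally trivial---gives $H^0(F^{9|8n},\Omega^p_{\pi_1})\cong H^0(M^{6|8n},\CO_{M^{6|8n}})\otimes_{\FC}H^0(\PP^3,\Omega^p_{\pi_1}|_{\PP^3})$. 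Here one should note that the $(x,\eta)$-dependence of the vector fields $V^{IAB}$ in \eqref{eq:SupTwisDis} modifies the fibrewise identification only by an $\eta$-triangular isomorphism and hence does not affect this computation.

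Finally, expressing an element of $H^0(\PP^3,\Omega^p_{\pi_1}|_{\PP^3})$ in the frame $\{e_A,e_{IAB}\}$---which, being dual to the weight-$(+1)$ vector fields $V^A$ and $V^{IAB}$, carries $\PP^3$-weight $-1$---forces each frame monomial of degree $p$ to be accompanied by a polynomial in $\lambda_A$ of exactly the degree needed to restore total weight zero, with coefficients that are holomorphic functions of $(x^{AB},\eta^A_I)\in M^{6|8n}$ only. Because $H^0(\PP^3,\CO_{\PP^3}(k))=\mathrm{Sym}^k(\FC^4)^\vee$, the symmetry of these $\lambda$-factors under contraction, combined with the over-completeness relations $\lambda_AV^A=\lambda_AV^{IAB}=0$ that make the representation in the $e$-frame unique only modulo $\lambda_A$-proportional terms (as recorded after \eqref{eq:ExplicitRelDer}), removes precisely the redundant tensor components; this is the origin of the trace-free projections $(\gamma_D{}^G{}^I_F)_0$ and $(\gamma_{CD}{}^{IJ}_{FH})_0$ in \eqref{eq:form_expansions}. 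Reassembling the surviving components reproduces the claimed expansions, and applying $\dd_{\pi_1}$ and comparing coefficients gives the relative curvature identities needed elsewhere.

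I expect the last step to be the main obstacle: matching the abstract cohomology groups to the explicit index structure in \eqref{eq:form_expansions}, fixing the numerical coefficients and the trace-free decompositions exactly, and keeping the $\RZ$-grading of relative forms consistent with the $\RZ_2$-grading on the odd directions, so that, for example, several of the even relative one-forms $\dd_{\pi_1}\eta^A_I$ may legitimately appear to higher wedge powers. This is, however, the same type of representation-theoretic bookkeeping already carried out for the purely bosonic relative de Rham complex in \cite{Saemann:2011nb,Saemann:2012uq} (see also \cite{Mason:2011nw}), to which one can appeal for the bosonic building blocks.
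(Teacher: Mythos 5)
Your proposal is correct and takes essentially the same route as the paper: the paper's proof is a single sentence invoking the explicit form of the direct images of $\Omega^1_{\pi_1}$ and $\Omega^2_{\pi_1}$ under $\pi_2$ and deferring the derivation to \cite{Saemann:2011nb,Saemann:2012uq}, and your Koszul presentation, the line-bundle cohomology vanishing on $\PP^3$, and the weight/redundancy bookkeeping in the frame $\{e_A,e_{IAB}\}$ reconstruct precisely that deferred computation. The remaining work you flag --- fixing the numerical coefficients and the trace-free projections such as $(\gamma_D{}^G{}^I_F)_0$, which arise because $\varepsilon^{ABCD}\lambda_C\lambda_D=0$ kills the trace parts --- is indeed just the representation-theoretic bookkeeping carried out in those references.
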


\begin{proof}
This is a direct consequence of the explicit form of direct images of the sheaves $\Omega^1_{\pi_1}$ and $\Omega^2_{\pi_1}$ under the projection $\pi_2:F^{9|8n}\to M^{6|8n}$. See references \cite{Saemann:2011nb,Saemann:2012uq} for a detailed derivation.
\end{proof}

\begin{rem}\label{rem:CommentsForms}
Note that differential 1-, 2- and 3-forms $\alpha$, $\beta$, and $\gamma$ on chiral superspace $M^{6|8n}$ have components
\begin{equation}
  \big(\alpha_{AB}, \alpha^I_B\big)~,~~~\big(\beta_A{}^B,\beta_{AB}{}^I_C, \beta^{IJ}_{AB}\big)~,\eand 
  \big(\gamma_{AB},\gamma^{AB},\gamma_{A}{}^B{}^I_C,\gamma_{AB}{}^{IJ}_{CD},\gamma^{IJK}_{ABC}\big)~,
\end{equation}
where $\gamma_A{}^B{}^I_C$ is traceless over the AB indices. By virtue of Lemma \ref{lem:RelFormExpansions}, we realise that all of these components for the 1- and 2-forms and some of these components for the 3-form appear in the expansion of relative 1-, 2- and 3-forms $\alpha_{\pi_1}$, $\beta_{\pi_1}$, and $\gamma_{\pi_1}$ on the correspondence $F^{9|8n}$. Note further that the components $(\gamma_{AB},\gamma^{AB})$ represent the self-dual and anti-self dual parts of a Gra{\ss}mann-even differential 3-form $\gamma$ on $M^{6|0}$.

\end{rem}

These considerations then enable us to prove the following Penrose--Ward transform.

{\theorem\label{th:Penrose-Ward}
Consider the double fibration  \eqref{eq:superDoubleFibration}. There is a bijection between
  \begin{enumerate}[(i)]\setlength{\itemsep}{-1mm}
\item equivalence classes of topologically trivial $M^{6|8n}$-trivial holomorphic semistrict principal 2-bundles on $P^{6|2n}$ and
\item gauge equivalence classes of (complex holomorphic) solutions to the constraint equations 
\begin{subequations}\label{eq:Constraint-1}
\begin{equation}
  \CF_A{}^B\ =\ 0~,\quad \CF_{AB}{}^I_C\ =\ 0~,\eand \CF^{IJ}_{AB}\ =\ 0~,
\end{equation}
and
             \begin{equation}
\begin{aligned}
  H^{AB}\ &=\ 0~,\\
  H_{A}{}^B{}^I_C\ &=\ \delta^B_C\psi^I_A-\tfrac14\delta^B_A\psi^I_C~,\\
  H_{AB}{}^{IJ}_{CD}\ &=\ \varepsilon_{ABCD}\phi^{IJ}~,\\
  H^{IJK}_{ABC}\ &=\ 0
  \end{aligned}
\end{equation}
\end{subequations}
on chiral superspace $M^{6|8n}$. Here, the curvatures read explicitly as
\begin{subequations}\label{eq:Constraint-2}
\begin{equation}
\begin{aligned}
  \CF_A{}^B\ &=\ \partial^{BC} A_{CA}-\partial_{CA}A^{BC}+\mu_2(A^{BC},A_{CA})-\mu_1(B_A{}^B)~,\\
  \CF_{AB}{}^I_C\ &=\ ~\partial_{AB}A^I_C-D^I_CA_{AB}+\mu_2(A_{AB},A^I_C)-\mu_1(B_{AB}{}^I_C)~,\\
  \CF^{IJ}_{AB}\ &=\ D^I_AA^J_B+D^J_B A^I_A+\mu_2(A^I_A,A^J_B)+4\Omega^{IJ}A_{AB}-\mu_1(B^{IJ}_{AB})
  \end{aligned}
\end{equation}
and
\begin{equation}
\begin{aligned}
  H_{AB}\ &=\ \nabla_{C(A}B_{B)}{}^C+\mu_3(A_{C(A},A^{CD},A_{B)D})~,\\
  H^{AB}\ &=\ \nabla^{C(A}B_C{}^{B)}+\mu_3(A^{C(A},A_{CD},A^{B)D})~,\\
  H_{A}{}^B{}^I_C\ &=\ \nabla^I_CB_A{}^B-\nabla^{DB}B_{DA}{}^I_C+\nabla_{DA}B^{DB}{}^I_C-\mu_3(A^I_C,A^{BD},A_{DA})~,\\
  H_{AB}{}^{IJ}_{CD}\ &=\ \nabla_{AB}B^{IJ}_{CD}-\nabla^I_C B_{AB}{}^J_D-\nabla^J_D B_{AB}{}^I_C\,\,-\\
  &\kern1cm-2\Omega^{IJ}(\varepsilon_{ABF[C} B_{D]}{}^F-\varepsilon_{CDF[A} B_{B]}{}^F)-\mu_3(A_{AB},A^I_C,A^J_D)~,\\
  H^{IJK}_{ABC}\ &=\ \nabla^I_A B^{JK}_{BC}+\nabla^J_BB^{IK}_{AC}+\nabla^K_C B^{IJ}_{AB}\,+\\
     &\kern1cm+4\Omega^{IJ}B_{AB}{}^K_C+4\Omega^{IK}B_{AC}{}^J_B+4\Omega^{JK}B_{BC}{}^I_A-\mu_3(A^I_A,A^J_B,A^K_C)~.
  \end{aligned}
\end{equation}
\end{subequations}
\end{enumerate}
}

\noindent
Before proving the theorem, let us make a few comments. The fields $\psi^I_A$ are Gra{\ss}mann-odd spinor fields while the fields $\phi^{IJ}$ are Gra{\ss}mann-even scalar fields. The condition $H^{AB}=0$ implies that the Gra{\ss}mann-even part of the 3-form $H$ is self-dual, cf.\ Remark \ref{rem:CommentsForms}. Altogether, $(H_{AB},\psi^I_A,\phi^{IJ})$ constitutes an $\CN=(n,0)$ tensor multiplet for $n=0,1,2$. Note that only for $n=2$, the condition $\phi^{IJ}\Omega_{IJ}=0$ arises, so that we always find the correct number of scalar fields. See also Saemann \& Wolf \cite{Saemann:2012uq,Saemann:2013pca,Saemann:2011nb} for more details on this point.

\vspace{10pt}
\noindent
{\it Proof of theorem:}
(i) $\to$ (ii) By virtue of Proposition \ref{prop:EquivTwistCorrespBundle}, topologically trivial $M^{6|8n}$-trivial holomorphic semistrict principal 2-bundles on twistor space correspond to holomorphically trivial semistrict principal 2-bundles on $F^{9|8n}$ equipped with a relative connective structure which is globally flat and vice versa. Therefore, such a bundle on twistor space yields a globally defined relative connective structure $(A_{\pi_1},B_{\pi_1})\in H^0(F^{9|8n},\Omega^1_{\pi_1}\otimes \frw)\oplus H^0(F^{9|8n},\Omega^2_{\pi_1}\otimes \frv)$ on the correspondence space which is flat, that is,
\begin{equation}
\begin{aligned}
 \CF_{\pi_1}\ &=\ \dd_{\pi_1} A_{\pi_1}+\tfrac{1}{2}\mu_2(A_{\pi_1},A_{\pi_1})-\mu_1(B_{\pi_1})\ =\ 0~,\\
 H_{\pi_1}\ &=\ \dd_{\pi_1} B_{\pi_1}+\mu_2(A_{\pi_1},B_{\pi_1})-\tfrac{1}{3!}\mu_3(A_{\pi_1},A_{\pi_1},A_{\pi_1})\ =\ 0~.
\end{aligned}
\end{equation}
Upon using \eqref{eq:ExplicitRelDer} and the expansions given in Lemma \ref{lem:RelFormExpansions}, we arrive at the constraint equations \eqref{eq:Constraint-1} and \eqref{eq:Constraint-2} after a few algebraic manipulations. 

(ii) $\to$ (i) The converse is also readily derived. Given a solution to  \eqref{eq:Constraint-1} and \eqref{eq:Constraint-2}, by Lemma \ref{lem:RelFormExpansions} we can always construct a globally defined relative connective structure $(A_{\pi_1},B_{\pi_1})\in H^0(F^{9|8n},\Omega^1_{\pi_1}\otimes \frw)\oplus H^0(F^{9|8n},\Omega^2_{\pi_1}\otimes \frv)$ on the correspondence space which is flat. This defines a holomorphically trivial semistrict principal 2-bundles on $F^{9|8n}$ equipped with a flat relative connective structure. The construction of a topologically trivial $M^{6|8n}$-trivial holomorphic semistrict principal 2-bundles on twistor space then follows directly from Proposition \ref{prop:EquivTwistCorrespBundle}.
\hfill $\Box$

\begin{rem}\label{rem:gaugetrafospacetime}
Finally, we would like to mention that the gauge transformations of the connective structure $(A_{AB},A^I_A,B_A{}^B,B_{AB}{}^I_C,B^{IJ}_{AB})$ on $M^{6|8n}$ follow directly from the large class of equivalence relations between relative Deligne 2-cocycles of the form \eqref{eq:DelCoCyTriv-3} on $F^{9|8n}$. The Deligne 1-cochains parametrising the equivalence relations between relative Deligne 2-cocycles of the form \eqref{eq:DelCoCyTriv-3} are expressed in terms of $p\in H^0(F^{9|8n},M)$ and $\Lambda_{\pi_1}\in H^0(F^{9|8n},\Omega^1_{\pi_1}\otimes \frv)$. Their $\lambda_A$-expansions read as
\begin{equation}
p\ =\ p(x,\eta)\eand \Lambda_{\pi_1}\ =\  e_{[A}\lambda_{B]}\,  \Lambda^{AB}(x,\eta)+e^{AB}_I\lambda_A\, \Lambda^I_B(x,\eta)~.
\end{equation}
Such Deligne 1-cochains are therefore described by $p(x,\eta)$, $\Lambda_{AB}(x,\eta)$, and $\Lambda^I_A(x,\eta)$ which themselves form  a Deligne 1-cochain encoding an equivalence relation between Deligne 2-cocycles on the chiral superspace $M^{6|8n}$. The gauge transformations are then simply of the form given in Proposition \ref{prop:semistrict-gauge-trafos}.
\end{rem}

\appendices

\subsection{Strong homotopy Lie algebras}\label{app:A}

In this appendix, we recall the definitions of strong homotopy Lie algebras and their Chevalley--Eilenberg algebras as well as the homotopy Maurer--Cartan equation together with their infinitesimal gauge symmetries.

Recall that a permutation $\sigma$ of $i+j$ elements is called an $(i,j)$-unshuffle, if the first $i$ and the last $j$ images of $\sigma$ are ordered: $\sigma(1)<\cdots<\sigma(i)$ and $\sigma(i+1)<\cdots<\sigma(i+j)$. Moreover, the graded Koszul sign $\chi(\sigma;x_1,\ldots,x_n)$ of elements $x_i$ of a graded vector space is defined via the equation
\begin{equation}
 x_1\wedge \cdots \wedge x_n\ =\ \chi(\sigma;x_1,\ldots,x_n)\,x_{\sigma(1)}\wedge \cdots \wedge x_{\sigma(n)}
\end{equation}
in the free graded algebra $\wedge (x_1,\ldots,x_n)$, where $\wedge$ is considered graded antisymmetric. 

\begin{definition} \cite{Lada:1992wc,Lada:1994mn,0821843621} An \uline{$L_\infty$-algebra} or \uline{strong homotopy Lie algebra} is a $\RZ$-graded vector space $L=\oplus_{p\in\RZ} L_p$ endowed with $n$-ary multilinear totally antisymmetric products $\mu_n$, $n\in\NN^*$, of degree $2-n$, that satisfy the homotopy Jacobi identities
\begin{equation}\label{eq:homotopyJacobi}
 \sum_{i+j=n}\sum_\sigma\chi(\sigma;x_1,\ldots,x_n)(-1)^{i\cdot j}\mu_{i+1}(\mu_j(x_{\sigma(1)},\ldots,x_{\sigma(j)}),x_{\sigma(j+1)},\ldots,x_{\sigma(i+j)})\ =\ 0
\end{equation}
for all $n\in \NN^*$, where the sum over $\sigma$ is taken over all $(i,j)$-unshuffles.  
\end{definition}

An alternative sign convention is given in \cite{Stasheff:1997fe}, which is obtained from the above one by inverting the signs of all elements of $L$. The homotopy Jacobi identities \eqref{eq:homotopyJacobi} then read as 
\begin{equation}\label{eq:homotopyJacobi_1}
 \sum_{i+j=n}\sum_\sigma\chi(\sigma;x_1,\ldots,x_n)\mu_{j+1}(\mu_i(x_{\sigma(1)},\ldots,x_{\sigma(i)}),x_{\sigma(i+1)},\ldots,x_{\sigma(i+j)})\ =\ 0~.
\end{equation}

A simple example of an $L_\infty$-algebra is a differential graded Lie algebra, for which $\mu_1$ is the differential, $\mu_2$ is the Lie bracket and $\mu_i=0$ for $i\geq 3$. Another example of an $L_\infty$-algebra is given by the 2-term $L_\infty$-algebras of Definition \ref{def:2-term-SH-algebra}.

\begin{definition}
 A \uline{$\RZ$-graded coalgebra} is a $\RZ$-graded vector space $L=\oplus_{p\in\RZ} L_p$ endowed with a \uline{coproduct}  $\Delta: A\rightarrow A\otimes A$ of degree 0 such that $(\unit\otimes \Delta)\circ \Delta=(\Delta\otimes \unit)\circ \Delta$. A \uline{coderivation} of degree $k$ on a coalgebra $C$ is a linear map $D:C\rightarrow C$ of degree $k$ such that $\Delta\circ D=(\unit\otimes D+D\otimes \unit)\circ \Delta$. A \uline{differential graded coalgebra} is a graded coalgebra endowed with a coderivation $D$ of degree 1 such that $D\circ D=0$.
\end{definition}

Each $L_\infty$-algebra yields naturally a differential graded coalgebra. We start from an $L_\infty$-algebra $L$, and shift the degree of each element by $-1$, arriving at $L[-1]$. The symmetric tensor algebra $\odot^\bullet L[-1]$ of $L[-1]$ can be regarded as a graded coalgebra with coproduct
\begin{equation}
 \Delta(\ell_1\odot \cdots \odot \ell_n)\ :=\ \sum_{i=0}^n\sum_\sigma (\ell_{\sigma(1)}\otimes\cdots\otimes\ell_{\sigma(i)})\otimes (\ell_{\sigma(i+1)}\otimes\cdots\otimes\ell_{\sigma(n)})~,
\end{equation}
where the sum over $\sigma$ is taken over all $(i,n-i)$-unshuffles. Note that on $L[-1]$, the higher products $\mu_n$ all have degree $1$ and we can add them to a differential $D$, which acts as $\mu_i$ on $L[-1]^{\otimes i}$ and on higher tensor powers of $L[-1]$ as a coderivation. The property $D\circ D=0$ is then equivalent to the homotopy Jacobi identities \cite{Lada:1992wc}. 

On the other hand, given a commutative differential graded coalgebra, we can derive a corresponding $L_\infty$-algebra. Altogether, we arrive at the following proposition.
\begin{prop}
 An $L_\infty$-algebra is equivalent to a commutative differential graded coalgebra.
\end{prop}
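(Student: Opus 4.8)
The plan is to make both directions of the correspondence precise, the conceptual core being the classical fact (Lada--Stasheff \cite{Lada:1992wc,Stasheff:1997fe}) that a degree-$1$ coderivation of the cofree conilpotent cocommutative coalgebra on a graded vector space is the same datum as a collection of graded-symmetric multilinear maps on that space. Since the excerpt already carries out the passage from an $L_\infty$-algebra to a differential graded coalgebra, the work is to add the converse, verify cocommutativity and cofreeness carefully, and check that the two constructions are mutually inverse.

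First I would spell out the assignment $L\mapsto(\odot^\bullet L[-1],\Delta,D)$. Writing $V:=L[-1]$, one checks that the coproduct $\Delta$ displayed in the excerpt is coassociative, $(\unit\otimes\Delta)\circ\Delta=(\Delta\otimes\unit)\circ\Delta$, and graded cocommutative, both being immediate from the $(i,n-i)$-unshuffle description once the Koszul signs are tracked, and that $(\odot^\bullet V,\Delta)$ is coaugmented and conilpotent. The key step is that this coalgebra is \emph{cofree} among conilpotent cocommutative coalgebras cogenerated by $V$: any degree-$1$ linear map $\odot^\bullet V\to V$ lifts uniquely to a degree-$1$ coderivation of $\odot^\bullet V$, the lift being given on $\odot^n V$ by summing $\mu_j$ (which after the shift has degree $1$ on $V$) over all $(j,n-j)$-unshuffles with the appropriate Koszul signs. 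Applying this to the collection $(\mu_n)$ produces the coderivation $D$, and a direct computation shows that the component $\odot^n V\to V$ of $D\circ D$ is exactly the left-hand side of the homotopy Jacobi identity in the shifted sign convention \eqref{eq:homotopyJacobi_1} (equivalently \eqref{eq:homotopyJacobi}); hence $D\circ D=0$ if and only if the homotopy Jacobi identities hold, i.e.\ if and only if $(\odot^\bullet V,\Delta,D)$ is a differential graded coalgebra.

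For the converse, given a conilpotent coaugmented cocommutative differential graded coalgebra $(C,\Delta,D)$, the structure theorem identifies $C\cong\odot^\bullet V$ with $V$ the space of cogenerators (extracted from the coradical filtration); composing $D$ with the projection $\odot^\bullet V\to V$ and restricting to $\odot^n V$ gives graded-symmetric maps which, after the inverse degree shift, are products $\mu_n$ on $L:=V[1]$, and $D^2=0$ reproduces the homotopy Jacobi identities, so $L$ is an $L_\infty$-algebra. I would then verify that these two assignments are mutually inverse — this is essentially the uniqueness clause in the cofreeness statement — and observe that a morphism of $L_\infty$-algebras is \emph{by definition} a morphism of the associated differential graded coalgebras, so the correspondence upgrades to an equivalence (in fact an isomorphism) of categories.

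The main obstacle I anticipate is bookkeeping rather than conceptual, and it has two parts. First, one must pin down the Koszul/unshuffle signs in the formula for $D$ and in $D\circ D$ so that they match \eqref{eq:homotopyJacobi}; this forces a definite choice of the shifted convention \eqref{eq:homotopyJacobi_1} and careful attention to the degree-$+1$ nature of the $\mu_n$ on $V=L[-1]$. Second, and more substantively, the equivalence holds on the nose only for \emph{conilpotent} (coaugmented cocommutative) differential graded coalgebras, so part of the job is to state this hypothesis explicitly — it is implicit in speaking of ``the'' commutative differential graded coalgebra $\odot^\bullet L[-1]$ attached to $L$ — and to invoke the structure theorem guaranteeing that every such coalgebra is cofree on its cogenerators, without which the converse construction would not recover all of $C$.
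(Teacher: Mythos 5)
Your forward direction coincides with what the paper does (and with the Lada--Stasheff reference it cites): build $\Delta$ on $\odot^\bullet L[-1]$, lift the shifted products $\mu_n$ (now of degree $1$) to a coderivation $D$ using the cofreeness/corestriction principle, and identify the $\odot^n L[-1]\to L[-1]$ component of $D\circ D$ with the homotopy Jacobi identities. That part is correct and is essentially the paper's argument, filled in with the sign bookkeeping the paper leaves implicit.

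The gap is in your converse. You propose to start from an arbitrary conilpotent coaugmented cocommutative differential graded coalgebra $(C,\Delta,D)$ and invoke ``the structure theorem'' to identify $C\cong\odot^\bullet V$ with $V$ its space of cogenerators. No such theorem holds: a conilpotent cocommutative coalgebra need not be cofree on its primitives. For instance, the truncated divided-power coalgebra $\mathrm{span}(1,x,\ldots,x^n)\subset k[x]$ with $\Delta(x^k)=\sum_{i+j=k}\binom{k}{i}x^i\otimes x^j$ is conilpotent, cocommutative and coaugmented, its primitives are spanned by $x$ alone, yet it is not isomorphic to the full symmetric coalgebra on that one-dimensional space. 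Consequently your recipe of corestricting $D$ to the cogenerators would not recover $C$, and such a $C$ carries no $L_\infty$-algebra in your sense. The correct precise statement of the proposition — and the hypothesis you should impose instead — is that $L_\infty$-algebras correspond to differential graded coalgebras whose \emph{underlying graded coalgebra is cofree}, i.e.\ is isomorphic to some $\odot^\bullet V$ (the ``quasi-free'' conilpotent cocommutative dg coalgebras); with that restriction in place the rest of your converse, the mutual inverseness via uniqueness of coderivation lifts, and the remark about morphisms all go through.
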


Instead of working with coalgebras, it is usually more convenient to work directly with differential graded algebras. Assuming that the vector subspaces $L_p\subseteq L$ are finite dimensional, we can consider the dual complex $L[-1]^\vee$ to $L[-1]$.

\begin{definition}
 The \uline{Chevalley--Eilenberg algebra} of an $L_\infty$-algebra $L$ is the dual of the differential graded coalgebra $\odot^\bullet L[-1]$. In particular, $\sCE(L):=\odot^\bullet (L[-1]^\vee)$ and the differential $\dd_{\rm CE}:=D^\vee$ is the dual of the differential $D$ in $\odot^\bullet L[-1]$.
\end{definition}

\noindent
It is straightforward to verify the $\sCE(L)$ is indeed a differential graded algebra.

The Chevalley--Eilenberg algebra of a Lie algebra $\frg$ is a differential graded algebra that encodes the Lie bracket via the equation
\begin{equation}
 \dd_{\rm CE}\check\tau^k+\tfrac{1}{2}f_{ij}^k\check\tau^i\wedge \check\tau^j\ =\ 0~,
\end{equation}
where the $\check\tau^i$ form a basis of the dual $\frg^\vee$ of $\frg$ and $f_{ij}^k$ are the structure constants of $\frg$ with respect to the dual basis $(\tau_i)$ with $\check \tau^i(\tau_j)=\delta^i_j$. Evaluated at an element $a\in \frg[-1]$, we have
\begin{equation}
 \dd_{\rm CE}a+\tfrac{1}{2}[a,a]\ =\ 0~,
\end{equation}
the Maurer--Cartan equation of the differential graded algebra. This equation can be generalised to the case of $L_\infty$-algebras.

\begin{definition}
 An element $\phi$ of an $L_\infty$-algebra is called a \uline{homotopy Maurer--Cartan element} whenever it satisfies the \uline{homotopy Maurer--Cartan equation}
\begin{equation}\label{eq:MCeqs}
 \sum_{i=1}^\infty \frac{(-1)^{i(i+1)/2}}{i!}\mu_i(\phi,\ldots,\phi)\ =\ 0~.
\end{equation}
\end{definition}

\begin{theorem}
The homotopy Maurer--Cartan equation is invariant under the following infinitesimal symmetries parameterised by an element $\gamma\in L_0$:
\begin{equation}\label{eq:MCgaugetrafos}
 \phi\rightarrow \phi+\delta \phi\ewith \delta \phi\ =\ \sum_i \frac{(-1)^{i(i-1)/2}}{(i-1)!}\mu_i(\gamma,\phi,\ldots,\phi)~.
\end{equation}
\end{theorem}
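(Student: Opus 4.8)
Introduce the curvature map $F\colon L\to L$ of degree $+1$ by
\[
F(\phi)\ :=\ \sum_{i\geq 1}\frac{(-1)^{i(i+1)/2}}{i!}\,\mu_i(\phi,\ldots,\phi)~,
\]
so that the homotopy Maurer--Cartan equation \eqref{eq:MCeqs} is precisely $F(\phi)=0$. Here $\phi$ is Gra{\ss}mann-odd (degree $1$ — the only degree for which this sum is homogeneous, with $F(\phi)$ of degree $2$), while $\gamma\in L_0$ is Gra{\ss}mann-even, so that the arguments $\mu_i(\phi,\ldots,\phi)$ and $\mu_i(\gamma,\phi,\ldots,\phi)$ are compatible with the graded antisymmetry of the $\mu_i$. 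What must be shown is that $F(\phi)=0$ implies $\delta F(\phi)=0$ for $\delta\phi$ as in \eqref{eq:MCgaugetrafos}. The plan is to establish the ``covariance'' identity
\[
\delta F(\phi)\ =\ \sum_{k\geq 0}c_k\,\mu_{k+2}\big(\gamma,F(\phi),\phi,\ldots,\phi\big)
\]
(with $k$ insertions of $\phi$ and suitable constants $c_k$), from which the claim is immediate because every term is linear in $F(\phi)$.

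First I would linearise: to first order in $\delta\phi$, the graded antisymmetry of $\mu_i$ together with the odd parity of $\phi$ makes each of the $i$ terms obtained by varying one slot of $\mu_i(\phi,\ldots,\phi)$ equal to $\mu_i(\delta\phi,\phi,\ldots,\phi)$, so that
\[
\delta F(\phi)\ =\ \sum_{i\geq 1}\frac{(-1)^{i(i+1)/2}}{(i-1)!}\,\mu_i(\delta\phi,\phi,\ldots,\phi)~.
\]
Substituting the explicit form of $\delta\phi$ from \eqref{eq:MCgaugetrafos} then yields a double sum over nested products $\mu_i\!\big(\mu_k(\gamma,\phi,\ldots,\phi),\phi,\ldots,\phi\big)$ carrying prefactors $\tfrac{1}{(i-1)!(k-1)!}$ and signs $(-1)^{i(i+1)/2}(-1)^{k(k-1)/2}$.

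The decisive step is to recognise this double sum, after reorganisation, as the homotopy Jacobi identities \eqref{eq:homotopyJacobi} evaluated on the argument list $(\gamma,\phi,\phi,\ldots,\phi)$. For a fixed total number $m$ of copies of $\phi$, the $(i,j)$-unshuffles of $(\gamma,\phi^{\otimes m})$ fall into two classes according to whether $\gamma$ is routed into the inner product $\mu_i$ or kept outside it; counting the unshuffles of the indistinguishable $\phi$'s reproduces precisely the factorial prefactors above, while the Koszul signs $\chi(\sigma;\ldots)$ reduce to transpositions of $\gamma$ past the $\phi$'s (trivial, as $\gamma$ is even) and among the $\phi$'s. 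Matching the $(-1)^{i\cdot j}$ in \eqref{eq:homotopyJacobi} against the sign factors coming from $F$ and $\delta\phi$, and collecting terms, yields the covariance identity; the pieces in which $\gamma$ never enters an inner bracket recombine, using \eqref{eq:homotopyJacobi} a second time, into the single sum on its right-hand side. An equivalent and more economical route uses the commutative differential graded coalgebra $\big(\odot^\bullet L[-1],D\big)$ associated with $L$: writing the group-like element $\mathrm e^{\phi}=\sum_n\tfrac1{n!}\phi^{\odot n}$, one has $D_1(\mathrm e^\phi)=F(\phi)$ and the gauge variation is $\delta\phi=D_1(\gamma\odot\mathrm e^\phi)$, so that $\delta F(\phi)=D_1\!\big(D_1(\gamma\odot\mathrm e^\phi)\odot\mathrm e^\phi\big)$ is converted into an expression linear in $F(\phi)$ purely by invoking $D\circ D=0$.

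The main obstacle is not conceptual but the sign and combinatorial bookkeeping in the third step: the interplay of the degree shift $L\to L[-1]$, the Gra{\ss}mann parities of $\phi$ and $\gamma$, the Koszul signs in the unshuffle sums, and the factorials generated by grouping identical $\phi$-insertions. For this reason I would carry out the verification in the coalgebra language, where all of these signs are absorbed into $D\circ D=0$, and only then translate back to the component formulas \eqref{eq:MCeqs}--\eqref{eq:MCgaugetrafos}.
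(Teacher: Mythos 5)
Your proposal is correct and follows essentially the same route as the paper: linearise the Maurer--Cartan expression, substitute $\delta\phi$, and use the homotopy Jacobi identities evaluated on the tuple $(\gamma,\phi,\ldots,\phi)$ (with the attendant sign identities for $i+j=n$) to rewrite the resulting double sum as an expression linear in the Maurer--Cartan element, which vanishes on-shell. The coalgebra reformulation via $D\circ D=0$ that you mention as an alternative is consistent with the paper's setup but is not the route taken in its proof, which works directly with the component identities \eqref{eq:homotopyJacobi} for $\phi\in L_1$.
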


\begin{proof}
The general proof of this theorem can be found, for instance, in \cite{Lazaroiu:2001nm}. Here, we give a shortened version for the case $\phi\in L_1$, which is the one we are interested in. We start by computing the homotopy Jacobi identities \eqref{eq:homotopyJacobi} for the tuple $(\gamma,\phi,\ldots,\phi)$, obtaining 
\begin{equation}
\begin{aligned}
  \sum_{i+j=n}\binom{n-1}{j-1}(-1)^{ij}\mu_{i+1}(\mu_j(\gamma,\phi,\ldots,\phi),\phi,\ldots,\phi)+~~&\\
  +\sum_{i+j=n,i\geq 1}\binom{n-1}{j}(-1)^{ij+n-1}\mu_{i+1}(\mu_j(\phi,\ldots,\phi)&,\phi,\ldots,\phi,\gamma)\ =\ 0
\end{aligned}
\end{equation}
or
\begin{equation}
\begin{aligned}
  \sum_{i+j=n}\frac{1}{(j-1)!i!}(-1)^{1+i(n-i)-\frac{n}{2}+\frac{n^2}{2}}\mu_{i+1}(\mu_j(\phi,\ldots,\phi,\gamma),\phi,\ldots,\phi)+~~&\\
  +\sum_{i+j=n,i\geq 1}\frac{(-1)^{1+i(n-i)+n-1-\frac{n}{2}+\frac{n^2}{2}}}{j!(i-1)!}\mu_{i+1}(\mu_j(\phi,\ldots,\phi),\phi,\ldots&,\phi,\gamma)\ =\ 0~.
\end{aligned}
\end{equation}
Next, we note the following identities for $i+j=n$:
\begin{equation}
 \begin{aligned}
 (-1)^{1+i(n-i)+n-1-\frac{n}{2}+\frac{n^2}{2}}\ &=\ (-1)^{i(n-i)+\frac{n}{2}+\frac{n^2}{2}}\ =\ (-1)^{i(i+1)/2+j(j+1)/2}~,\\
 (-1)^{((i+1)(i+2)+j(j-1))/2}\ &=\ (-1)^{1+2i+i^2-\frac{n}{2}-in+\frac{n^2}{2}}\ =\ (-1)^{1+i(n-i)-\frac{n}{2}+\frac{n^2}{2}}~.
 \end{aligned}
\end{equation}
Now we can compute the variation of \eqref{eq:MCeqs} under the transformations \eqref{eq:MCgaugetrafos}:
\begin{equation}\label{eq:delta_MC}
\begin{aligned}
 \delta\left(\sum_{i=1}^\infty \frac{(-1)^{i(i+1)/2}}{i!}\mu_i(\phi,\ldots,\phi)\right)\ &=\ \sum_{i=1}^\infty \frac{(-1)^{i(i+1)/2}}{(i-1)!}\mu_i(\delta\phi,\ldots,\phi)\\
 &\hspace{-4cm}=\ \sum_{i=1}^\infty\sum_{j=1}^\infty\frac{(-1)^{(i(i+1)+j(j-1))/2}}{(i-1)!(j-1)!}\mu_i(\mu_j(\gamma,\phi,\ldots,\phi),\phi,\ldots,\phi)\\
 &\hspace{-4cm}=\ \sum_{n=1}^\infty\sum_{i+j=n}^\infty\frac{(-1)^{((i+1)(i+2)+j(j-1))/2}}{i!(j-1)!}\mu_{i+1}(\mu_j(\gamma,\phi,\ldots,\phi),\phi,\ldots,\phi)\\
 &\hspace{-4cm}=\ -\sum_{n=1}^\infty\sum_{i+j=n,i\geq 1}\frac{(-1)^{i(i+1)/2+j(j+1)/2}}{j!(i-1)!}\mu_{i+1}(\mu_j(\phi,\ldots,\phi),\phi,\ldots,\phi,\gamma)\\
 &\hspace{-4cm}=\ -\sum_{i=1}^\infty\frac{(-1)^{i(i+1)/2}}{(i-1)!}\mu_{i+1}\left(\sum^\infty_{j=1}\frac{(-1)^{j(j+1)/2}	}{j!}\mu_j(\phi,\ldots,\phi),\phi,\ldots,\phi,\gamma\right)\\
 &\hspace{-4cm}=\ 0
\end{aligned}
\end{equation} 
as a consequence of the homotopy Maurer--Cartan equation \eqref{eq:MCeqs}.
\end{proof}

\subsection{Groupoid bundles}\label{app:B}

In this appendix, we present the parameterisation of a functor from the category of supermanifolds to the category of groupoid bundles with preferred section, completing the discussion of Deligne 1-cocycles with values in a semistrict Lie 2-group. Such cocycles arise from functors between the \v Cech groupoid and the Lie 2-group (regarded as a monoidal category). Our discussion follows closely the lines of that in Section \ref{sec:Lie-functor}, and we shall therefore be concise.

We start from $\CG=(M,N)$-valued descent data on surjective submersions $\FR^{0|1}\times X\rightarrow X$, which are represented my $M$-valued 1-cells $\{m_0:=m(\theta_0)\}$ and $N$-valued 2-cells $\{n_{01}:=n(\theta_0,\theta_1)\}$ such that
\begin{equation}
 n_{01}\ :\ m_1\ \Rightarrow m_0\eand n_{01}\circ n_{12}\ =\ n_{02}~.
\end{equation}
Note that $n_{01}=n_{10}^{-1}$ and we can normalise $m_0=m(\theta_0)=\id_e+\alpha \theta_0$ with $\alpha\in \frm[-1]$. These descent data are trivialised by degree-1 Deligne coboundaries $\{n_0:=n(\theta_0)\}$ with
\begin{equation}
 n_0\ : \ m_0\ \Rightarrow \ \id_e\eand n_{01}\ =\ n_0^{-1}\circ n_1~,
\end{equation}
where $n_{0}:=n(\theta_0)=n(0,\theta_0)$. Such a coboundary, and therefore the whole functor under consideration, is parameterised by a $\beta\in\frv=\ker(\sft)\subseteq T_{\id_{\id_e}}N$ according to
\begin{equation}
 n_0\ =\ \id_{\id_e}+\beta \theta_0~,
\end{equation}
and we conclude that $m_0=\id_e+\sfs(\beta)\theta_0$. Equivalence relations on such descent data are described by degree-1 Deligne coboundaries according to
\begin{equation}
 q_0\ : \ \tilde m_0\ \Rightarrow \ m_0\eand \tilde n_{01}\ =\ q_0^{-1}\circ n_{01}\circ q_1~,
\end{equation}
where 
\begin{equation}
 q_0\ =\ q-\dd_{\rm K} q\theta~,~~~q\in N\ewith \sfs(q)\ =\ \sft(q)\ =\ \id_e~.
\end{equation}
The trivialising coboundary between $(\{\tilde n_{01}\},\{\tilde m_0\})$ and the trivial cocycle $(\{\id_{\id_e}\},\{\id_e\})$ is then given by the composition
\begin{equation}
 n'_0\ :=\ n_0\circ q_0~.
\end{equation}
To compare this coboundary with $n_0$, we have to bring it to the form $\tilde n_0=\tilde n(0,\theta_0)$ by a modification transformation. Note that the coboundary relation
\begin{equation}
 \tilde n_{01}\ =\ q_0^{-1}\circ n_0^{-1}\circ n_1\circ q_1\ =\ (n'_0)^{-1}\circ n'_1
\end{equation}
is invariant under the modification transformation
\begin{equation}
 n'_0\ \rightarrow\ \tilde n_0\ =\ o\circ n'_0
\end{equation}
for some $o\in N$. The modification we need here is given by $o=q^{-1}$. Then
\begin{equation}\label{eq:modification-1-Deligne}
 \tilde n_{0}\ =\ q^{-1}\circ n_0\circ q_0\ =\ q^{-1}\circ(\id_{\id_e}+\beta \theta)\circ(q-\dd_{\rm K} q\theta)\ =\ \id_{\id_e}+\tilde \beta \theta~.
\end{equation}
To evaluate the concatenation, we introduce the following linearised forms:
\begin{subequations}\label{eq:new_circ}
\begin{equation}
 q\circ(\id_{\id_e}+\rho \theta)\ :=\ q\circ (\id_{\id_e}+\rho\theta)\eand (\id_{\id_e}+\rho \theta)\circ q\theta\ :=\ (\id_{\id_e}+\rho \theta)\circ q~,
\end{equation}
which implies
\begin{equation}
 (q_1+\rho_1\theta)\circ(q_2+\rho_2\theta)\ =\ q_1\circ q_2+(\rho_1\circ q_2)\circ(q_1\circ \rho_2)\theta~.
\end{equation}
\end{subequations}
for all $q,q_{1,2}\in N$ and $\rho,\rho_{1,2}\in \frn[-1]$. With this notation, equation \eqref{eq:modification-1-Deligne} simplifies to
\begin{equation}
 \tilde n_{0}\ =\ \id_{\id_e}+\tilde \beta \theta\ =\ \id_{\id_e}+(q^{-1}\circ \beta\circ q)\circ (q^{-1}\circ(-\dd_{\rm K} q))\theta~.
\end{equation}
We can now readily read off the cocycle conditions and coboundary relations for the $\{B_a\in C^{0,1}(\frU,\frv)\}$ contained in the degree-1 Deligne cochain with values in a semistrict Lie 2-group.

%\bibliographystyle{latexeu}

%\bibliography{references}

\end{document}